\newtheorem{assumption}{Assumption}
\newtheorem{theorem}{Theorem}
\newtheorem{proposition}{Proposition}
\newtheorem{lemma}{Lemma}
\newtheorem{remark}{Remark}
\newtheorem{corollary}{Corollary}
\newcommand{\bs}{\boldsymbol}
\def\Var{\textnormal{Var}}
\def\E{\operatorname{E}}
\def\bsw{\backslash \bs{W}}
\def\Cov{\textnormal{Cov}}
\def\op{o_{\mathbb{P}}}
\def\Op{O_{\mathbb{P}}}
\def\opnorm{\textnormal{op}}
\newcommand{\ind}{\scalebox{0.6}{\textrm{IND}}}
\newcommand{\dir}{\scalebox{0.6}{\textrm{DIR}}}
\newcommand{\tot}{\scalebox{0.6}{\textrm{TOT}}}
\newcommand{\ev}{\scalebox{0.6}{\textrm{EV}}}
\def\rhon{\rho_{\scaleobj{0.8}{N}}}
\def\Lambdan{\Lambda_{\scaleobj{0.8}{N}}}
\def\Deltan{\Delta_{\scaleobj{0.8}{N}}}
\def\tldDeltan{\tilde{\Delta}_{\scaleobj{0.8}{N}}}
\newcommand{\ora}{\textrm{ora}}
\DeclareMathOperator*{\argmin}{arg\,min}
\def\diag{\textrm{diag}}
\def\sumi{\sum_{i=1}^{N}}
\def\sumj {\sum_{j=1}^{N}}
\def\sumk{\sum_{k=1}^{N}}
\def\maxi{\max_{i=1,\ldots,N}}
\def\mini{\min_{i=1,\ldots,N}}
\def\tr{\operatorname{tr}}
\def\oprtnorm{\operatorname{op}}
\def\sd{\sigma_{\dir}}
\def\si{\sigma_{\ind}}
\def\spacingset#1{\renewcommand{\baselinestretch}
	{#1}\small\normalsize} \spacingset{1}
\begin{document}
	
	\def\spacingset#1{\renewcommand{\baselinestretch}
		{#1}\small\normalsize} \spacingset{1}
		
	\title{\bf Estimation and inference of average treatment effects under heterogeneous additive treatment effect model}
	\author{Xin Lu\thanks{These authors contributed equally to this work.}, Hongzi Li\footnotemark[1]~ and Hanzhong Liu\thanks{
			Corresponding author: lhz2016@tsinghua.edu.cn. Dr. Liu was supported by \textit{the National Natural Science Foundation of China (12071242).}}\hspace{.2cm}\\
		Center for Statistical Science, Department of Industrial Engineering,\\ Tsinghua University, Beijing, China}
	\date{}
	\maketitle

\bigskip
\begin{abstract}
Randomized experiments are the gold standard for estimating treatment effects, yet network interference challenges the validity of traditional estimators by violating the stable unit treatment value assumption and introducing bias. While cluster randomized experiments mitigate this bias, they encounter limitations in handling network complexity and fail to distinguish between direct and indirect effects. To address these challenges, we develop a design-based asymptotic theory for the existing Horvitz--Thompson estimators of the direct, indirect, and global average treatment effects under Bernoulli trials. We assume the heterogeneous additive treatment effect model with a hidden network that drives interference. Observing that these estimators are inconsistent in dense networks, we introduce novel eigenvector-based regression adjustment estimators to ensure consistency. We establish the asymptotic normality of the proposed estimators and provide conservative variance estimators under the design-based inference framework, offering robust conclusions independent of the underlying stochastic processes of the network and model parameters. Our method's adaptability is demonstrated across various interference structures, including partial interference and local interference in a two-sided marketplace. Numerical studies further illustrate the efficacy of the proposed estimators, offering practical insights into handling network interference.\\

\end{abstract}

\noindent
{\it Keywords:}   Design-based inference; Direct effect; Global average treatment effect; Indirect effect; Interference; Network interference

\newpage
\spacingset{1.9}

\section{Introduction}

There has been a growing interest in estimating the global average treatment effect (GATE) within randomized experiments \citep{yu2022estimating, harshaw2023design}. The GATE represents the difference in the average outcomes between treating all units versus none in the target population. Under the Stable Unit Treatment Value Assumption (SUTVA) \citep{Rubin1980}, the difference in sample means of outcomes under treatment and control is an unbiased estimator of the GATE. However, in many applications, experimental units are interconnected within networks, such as social networks of family, community, or friendship \citep{bond201261, aral2017exercise, jiang2023auto}. In such scenarios, empirical evidence has demonstrated the existence of interference, i.e., the treatment assignment of one unit affects the outcomes of others \citep{bond201261,cai2015social,holtz2020interdependence,leung2022causal}. This interference violates SUTVA and may introduce bias to classic Horvitz--Thompson estimator of the GATE. Cluster randomized experiments offer a means to mitigate this bias under appropriate interference assumptions \citep{viviano2023causal, chen2024optimized, holtz2024reducing}.

However, cluster randomized experiments encounter limitations in addressing network interference. Firstly, they necessitate obtaining high-quality graph cuts of the network before experimentation, which is often hindered by the complex structure of real-world networks. For example, the optimal clustering strategy from \cite{saveski2017detecting} accounts for only $35.59\%$ of total edges, suggesting that bias remains even after the clustering. Secondly, while cluster randomized experiments are prevalent in social sciences, platforms like eBay, Facebook, and Twitter primarily conduct standard simple randomized experiments, or Bernoulli trials \citep{chin2019regression}. Lastly, cluster randomized experiments fail to identify direct and indirect average treatment effects, crucial for various applications such as understanding how policies in one region affect behavior in others \citep{holtz2020interdependence}. Thus, there is an urgent need to develop data-adaptive methods for estimating and inferring direct and indirect effects under Bernoulli trials with network interference.

To fill the gap, we develop a novel approach for estimating and inferring the direct average treatment effect (DATE), indirect average treatment effect (IATE), and GATE under the heterogeneous additive treatment effect (HATE) model \citep{sussman2017elements}. The HATE model posits that each unit's outcome is a linear combination of its own treatment assignment and those of its network neighbors, allowing the heterogeneity of indirect effects from different neighbors. This model has gained popularity for incorporating interference \citep{toulis2013estimation,pouget2018optimizing,hu2022average,yu2022estimating, harshaw2023design, chen2024optimized, holtz2024reducing} within the Neyman--Rubin potential outcomes framework \citep{Neyman1923, rubin1974}. We allow the HATE model to accommodate a latent or hidden network that drives interference \citep{yu2022estimating}. For example, in the context of a tech company, users' social media accounts may form a visible network, while interference is driven by the unobserved network of offline friendships, which only encompasses a fraction of the observed network edges. \cite{yu2022estimating} developed efficient randomized designs to produce an unbiased estimator of GATE when average baseline measurements of outcomes prior to the experiment are available. However, the asymptotic distribution of this GATE estimator remains unknown. Meanwhile, \cite{hu2022average} proposed unbiased Horvitz--Thompson estimators for the DATE, IATE, and GATE under a broader interference setting. Nevertheless, the asymptotic properties of these estimators are yet to be explored.

Our contributions are four-fold. Firstly, we formally establish the design-based central limit theory (CLT) for the Horvitz--Thompson estimators of DATE, IATE, and GATE. The design-based inference framework is prevalent in causal inference literature with interference \citep{leung2022causal,aronow2017estimating,hudgens2008toward}. This framework relies solely on the randomness of treatment assignment, conditioned on the network and parameters underlying the HATE model. Consequently, conclusions drawn from this framework do not depend on the stochastic process underlying the network or model parameters. For achieving this goal, we use the newly developed CLT for homogeneous sums \citep{koike2022high} instead of the traditional CLT that relies on the approach of dependency graph \citep{aronow2017estimating,chin2018central,harshaw2023design,ogburn2024causal}. Equipped with this tool, our CLT requires that the average node degree is of magnitude $o(N)$ ($N$ is the number of experimental units), a notable improvement over the dependency graph approach, which requires a maximum node degree of magnitude $o(N^{1/2})$ \citep{ogburn2024causal}.

Secondly, we obtain conservative variance estimators for the Horvitz--Thompson estimators of DATE, IATE, and GATE, facilitating the construction of asymptotically conservative confidence intervals. Compared to the method proposed by \cite{aronow2017estimating}, our variance estimators offer enhanced tractability and stability, particularly as the average node degree tends to infinity.

Thirdly, we find that the Horvitz--Thompson estimators of IATE and GATE become inconsistent in dense networks (in the sense that the average network degree divided by $\sqrt{N}$ does not tend to zero). To address this issue, we propose novel eigenvector-adjusted (EV-adjusted) estimators, ensuring consistency even in significantly dense networks. We establish the asymptotic normality of the EV-adjusted IATE and GATE estimators under the design-based inference framework. The idea of adjusting the eigenvectors was originally proposed by \cite{li2022random} to improve the convergence rates of IATE and GATE estimators in a graphon model. Our eigenvector-based regression adjustment is a model-assisted approach, allowing the graphon model to be misspecified.

Finally, we discuss the application of our method to various interference structures, such as partial interference \citep{sobel2006randomized,imai2021causal} and local interference in a two-sided marketplace \citep{masoero2024multiple}. In the context of partial interference, we demonstrate that the EV-adjusted GATE estimator achieves the same convergence rate as the classic unbiased estimator used in cluster-randomized experiments. Regarding local interference in a two-sided marketplace, we observe that the Horvitz--Thompson estimators lack consistency, while the EV-adjusted estimators maintain consistency and asymptotic normality. These concrete examples highlight the flexibility of our method in handling complex interference patterns.

While several papers have explored GATE estimation in experiments with interference, our work distinguishes itself in several aspects. For example, \cite{chin2019regression} focused on the super-population framework, while we consider the finite-population or design-based inference. \cite{harshaw2023design} proposed a bipartite experimental design, where the units receiving treatment are distinct from those whose outcomes are measured. They studied the estimation of the GATE under a linear exposure model, which requires that the strength of interference is captured by a known weight matrix. In contrast, we consider the Bernoulli trials and the more general HATE model, which allows the strength of interference between units unknown. 
Additionally, these studies did not explore the estimation of DATE and IATE. Another line of research employed exposure mapping to define average treatment effects under arbitrary interference, but often relying on stronger assumptions of network structure \citep{aronow2017estimating, leung2022causal, hoshino2023causal, savje2024causal}. For example, both \cite{aronow2017estimating} and \cite{hoshino2023causal} assumed that the network possesses bounded node degrees.

The remainder of this paper is organized as follows. In \Cref{sec:notation}, we introduce the framework and notation including the HATE model. In \Cref{sec:estimating ATE}, we derive unbiased estimators of the DATE, IATE, and GATE and discuss their consistency. In \Cref{sec:CLT-and-variance-estimator}, we establish the design-based central limit theory for these estimators and derive conservative variance estimators. In \Cref{sec:eigen-value-adjustment}, we introduce the eigenvector-based regression adjustment method to improve the convergence rates of the IATE and GATE estimators. In \Cref{sec:application-in-some-networks}, we provide concrete examples illustrating the application and benefits of the regression adjustment methods in three commonly studied interference networks. In Section~\ref{sec:sim}, we conduct a simulation and real data analysis to evaluate the finite-sample performance of the proposed methods. Lastly, we conclude the paper with a discussion of future research in Section~\ref{sec:con}. Proofs are relegated to the Supplementary Material.

\section{Framework and notation}
\label{sec:notation}
Consider a randomized experiment with $N$ units where each unit $i$ ($i=1,\ldots,N$) is independently assigned a binary treatment $Z_i\in \{0,1\}$. Denote $Z_i=1$ if unit $i$ is assigned to the treatment group and $Z_i=0$ if assigned to the control group. We assume that  $Z_i\sim \operatorname{Bernoulli}(r_1)$ for a constant $r_1 \in (0,1)$ determined by the experimenter with $r_0=1-r_1$. Let $\bs{Z} = (Z_1,\ldots,Z_N)\in \{0,1\}^N$ be the vector of treatment assignments for all units.  For $\bs{z}\in \{0,1\}^N$, let $Y_i(\bs{z})$ be the potential outcome 
of unit $i$ under treatment $\bs{z}$. The observe outcome of unit $i$ satisfies $Y_i = Y_i(\bs{Z}) = \sum_{\bs{z} \in \{0,1\}^{N}} I ( \bs{Z}= \bs{z} )Y_i(\bs{z})$, where $I(\cdot)$ is the indicator function. Let $\bs{\mathcal{E}} = \{\bs{E}=(E_{ij})_{1\leq i,j\leq N}: E_{ii}=0,E_{ij}\in\{0,1\},1\leq i,j\leq N\}$ denote the set of adjacency matrices of directed network with no self-loop. Suppose that the experimental units interact with each other through an unobserved directed network with a no self-loop adjacency matrix $\tilde{\bs{E}}\in \bs{\mathcal{{E}}}$. $\tilde{\bs{E}}$ is a sub-network of an observed directed network $\bs{E}\in \bs{\mathcal{E}}$, i.e., $\tilde{E}_{ij}\leq E_{ij}$.  For example, $\bs{E}$ may represent the social network on Twitter, and $E_{ij}=1$ if $i$ follows $j$ on Twitter. The interference between units is driven by whether $i$ and $j$ are friends offline, which is coded by $\tilde{\bs{E}}$. Or $\bs{E}$ may represent the joint network of several candidate networks which might drive the interference and $\tilde{\bs{E}}$ is one of the candidate networks.

 We are interested in the global average treatment effect (GATE), defined as the difference between the average of outcomes if all units are in the treatment group and that if all units are in the control group:
\[
\tau_{\tot} = \frac{1}{N}\sumi \bigl\{Y_i(\bs{1}_N)-Y_i(\bs{0}_N)\bigr\},
\]
where  $\bs{1}_N$ and $\bs{0}_N$ are vectors of all $1$'s and $0$'s of length $N$, respectively.

Let $\bs{e}_{i,M}$ be the standard basis of $\mathbb R^M$ with $1$ in the $i$th coordinate and $0$ elsewhere. For tractable inference of the estimand, we assume throughout that the outcome follows a heterogeneous additive treatment effect (HATE) model \citep{sussman2017elements} with a hidden network  $\tilde{\bs{E}}$, that is, 
\begin{equation}
\label{eq:HATE-model}
    Y_i = \alpha_i + \theta_i Z_i + \sum_{j=1}^N \tilde{E}_{ij}\gamma_{ij} Z_j,
\end{equation}
where $\alpha_i = Y_i(\bs{0}_N)$ is the baseline potential outcome if all units are in the control group, $\theta_i = Y_i(\bs{e}_{i,N})-Y_i(\bs{0}_N)$ is the individual level direct effect of unit $i$'s treatment, and $\gamma_{ij} = Y_i(\bs{e}_{j,N})-Y_i(\bs{0}_{N})$, $j \neq i$, is the indirect effect by his friend $j$'s treatment. For $\tilde{E}_{ij}=0$, we have ${\gamma}_{ij}=0$. Here, $\alpha_i$, $\theta_i$ and $\gamma_{ij}$ are unobserved latent parameters.

When $E_{ij}=\tilde{E}_{ij}$ and $\gamma_{ij} = \beta_i v_{ij}$ with $v_{ij}$ known, the HATE model simplifies to the linear exposure model introduced by \cite{harshaw2023design}. Importantly, inferring the GATE under the HATE model does not require consistent estimation of the model parameters such as $\alpha_i$, $\theta_i$, and $\gamma_{ij}$; see the next section for details.

\begin{remark}
In \cite{yu2022estimating}, the authors addressed the estimation of GATE under the HATE model without access to any information regarding $\tilde{\bs{E}}$. They opted to define $\bs{E}$ as the most extensive network possible: the complete graph. Their approach necessitates knowledge of the baseline average potential outcome, $N^{-1}\sumi \bs{Y}_i(\bs{0}_N)$, serving as a surrogate to derive an unbiased estimator for GATE.
\end{remark}

\begin{remark}
There are generally two approaches to formalizing interference in the potential outcome framework. The first approach involves constraining the function class of potential outcomes, which often permits a broader range of functions but imposes stringent restrictions on the network structure and interference magnitude (see, for instance, Assumption 6 in \cite{leung2022causal} and Assumption 4 in \cite{viviano2023causal}). The second approach entails structural assumptions on the potential outcome model, which incorporate specific domain knowledge regarding the interference pattern. For instance, \cite{aronow2017estimating} and \cite{harshaw2023design} assumed that potential outcomes depend solely on a low-dimensional exposure mapping vector. \cite{manski1993} introduced the linear-in-means model, which suggests a framework where people compare their performance to their peers. \cite{munro2021treatment} explored the scenario of a rival market, where interference is mediated through market prices.  Structural assumptions impose tighter constraints on potential outcome models as a trade-off for more intricate network representations or magnitude of interference. The HATE model aligns with the latter approach, offering flexibility to accommodate the complexities of real-world network structures.
\end{remark}

Under the HATE model, the GATE is $$\tau_{\tot} = \frac{1}{N}\sum_{i=1}^N \theta_i + \frac{1}{N}\sum_{i=1}^N \sum_{j=1}^N \tilde{E}_{ij}\gamma_{ij}.$$
Let $\tilde{\gamma}_{ij} = \tilde{E}_{ij}\gamma_{ij}$. Recall that $\tilde E_{ii} = E_{ii} = 0$, then $\tilde \gamma_{ii} = 0$. Let $\tau_{\dir} $ and $\tau_{\ind} $ be the average direct and indirect effects, defined as 
\[
\tau_{\dir}  = \frac{1}{N}\sum_{i=1}^N \theta_i,\quad \tau_{\ind}  =  \frac{1}{N}\sum_{i=1}^N\sum_{j=1}^N \tilde{\gamma}_{ij}\quad \text{with}\quad\tau_{\tot}  = \tau_{\dir} +\tau_{\ind} .
\]

Under the HATE model, $\tau_{\dir} $ can be interpreted as the average effect of changing ego treatment from $0$ to $1$ while keeping alter treatments fixed, while $\tau_{\ind} $ can be interpreted as the average effect of changing alter treatments from zero treated to all treated while keeping ego treatment fixed.

{\bf Notation.} For a vector $\bs{a} = (a_1,\ldots,a_n)$, let $\diag(\bs{a})$ be the diagonal matrix with $a_i$'s being its diagonal elements. Let $\bs{1}_S$ and $\bs{0}_S$ be the vectors of all $1$'s and $0$'s of length $S$, respectively. Let $Y_i\sim \bs{X}_i$ represent the ordinary least squares fit of $Y_i$ on $\bs{X}_i$ and we use ``$+$" to add more covariate in the regression. Note that we use the regression formula solely to generate the treatment effect estimators. We will evaluate their properties under the design-based inference framework, without imposing linear regression model assumptions.

 \section{Estimating the average treatment effect}
 \label{sec:estimating ATE}
We can estimate the DATE by 
\[
\hat{\tau}_{\dir}   = \frac{1}{N}\sumi \Bigl\{\frac{Y_iZ_i}{r_1} - \frac{Y_i(1-Z_i)}{r_0}\Bigr\}.
\]
Note that $\hat{\tau}_{\dir} $ is the classic Horvitz--Thompson estimator of the average treatment effect in the no-interference setting \citep{li2022random}.
Let $Y_{Z_i=z} = \operatorname{E}(Y_i|Z_i=z)$. Define
\[
\sigma_{\dir,1}^2=\frac{1}{N^2r_1r_0} \sumi (r_0Y_{Z_i=1}+r_1Y_{Z_i=0})^2,\quad \sigma_{\dir,2}^2=\frac{1}{N^2}  \sum_{1\leq i\ne j \leq N} \tilde{\gamma}_{ij}^2 + \frac{1}{N^2}  \sum_{1\leq i\ne j \leq N} \tilde{\gamma}_{ij}\tilde{\gamma}_{ji}.
\]

 \begin{proposition}
 \label{prop:E-and-Var-DIR}
     $\operatorname{E} (\hat{\tau}_{\dir} ) =\tau_{\dir} $ and $\Var(\hat{\tau}_{\dir} ) = \sigma_{\dir,1}^2 + \sigma_{\dir,2}^2$.
 \end{proposition}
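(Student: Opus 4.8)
The plan is to put the summand of $\hat{\tau}_{\dir} $ in the compact form $Y_iW_i$ with $W_i := (Z_i-r_1)/(r_1r_0)$, so that $\hat{\tau}_{\dir} = N^{-1}\sumi Y_iW_i$, and to record the elementary facts $\E[W_i]=0$ and $\E[W_i^2]=1/(r_1r_0)$. Under the HATE model \eqref{eq:HATE-model} I would then split each outcome into its mean given $Z_i$ and an interference residual, $Y_i=\mu_i(Z_i)+R_i$, where $\mu_i(Z_i):=\E[Y_i\mid Z_i]=Y_{Z_i=0}+\theta_iZ_i$ (using $Y_{Z_i=1}-Y_{Z_i=0}=\theta_i$, which follows from the model) and $R_i:=\sum_{k\neq i}\tilde{\gamma}_{ik}(Z_k-r_1)$. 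The two structural observations driving everything are that $R_i$ is measurable with respect to $\{Z_k:k\neq i\}$ only --- hence independent of $Z_i$ and of $W_i$ --- and that $\E[R_i]=0$. For the mean, independence gives $\E[Y_iW_i]=\E[\mu_i(Z_i)W_i]+\E[R_i]\E[W_i]=\E[\mu_i(Z_i)W_i]$; since $\mu_i(Z_i)W_i$ equals $Y_{Z_i=1}/r_1$ on $\{Z_i=1\}$ and $-Y_{Z_i=0}/r_0$ on $\{Z_i=0\}$, its expectation is $Y_{Z_i=1}-Y_{Z_i=0}=\theta_i$, and averaging over $i$ yields $\E[\hat{\tau}_{\dir} ]=\tau_{\dir} $.

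For the variance, write $D_i:=Y_iW_i=\mu_i(Z_i)W_i+R_iW_i$ and expand $\Var(\sumi D_i)$ by bilinearity into (i) $\Var(\sumi\mu_i(Z_i)W_i)$, (ii) $2\sum_{i,j}\Cov(\mu_i(Z_i)W_i,R_jW_j)$, and (iii) $\Var(\sumi R_iW_i)$. In (i), $\mu_i(Z_i)W_i$ is a function of $Z_i$ alone, so independence of the $Z_i$ kills all cross terms and the sum is $\sumi\Var(\mu_i(Z_i)W_i)$; from the two-point law above, $\Var(\mu_i(Z_i)W_i)=Y_{Z_i=1}^2/r_1+Y_{Z_i=0}^2/r_0-\theta_i^2$, which rearranges to $(r_0Y_{Z_i=1}+r_1Y_{Z_i=0})^2/(r_1r_0)$, so (i) $=N^2\sigma_{\dir,1}^2$. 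In (ii), the $i=j$ terms vanish because $R_i\perp Z_i$ and $\E[R_i]=0$, while for $i\neq j$ the factor $W_j$ multiplies $R_j$, is independent of the product $\mu_i(Z_i)W_iR_j$, and has mean zero; hence (ii) $=0$. For (iii), setting $\xi_k:=Z_k-r_1$ (so the $\xi_k$ are independent with $\E[\xi_k]=0$, $\E[\xi_k^2]=r_1r_0$) gives $R_iW_i=(r_1r_0)^{-1}\sum_{k\neq i}\tilde{\gamma}_{ik}\xi_i\xi_k$ and therefore $\sumi R_iW_i=(r_1r_0)^{-1}\sum_{i\neq k}\tilde{\gamma}_{ik}\xi_i\xi_k$, a mean-zero degree-two homogeneous sum; in $\E\bigl[(\sum_{i\neq k}\tilde{\gamma}_{ik}\xi_i\xi_k)^2\bigr]=\sum_{i\neq k}\sum_{l\neq m}\tilde{\gamma}_{ik}\tilde{\gamma}_{lm}\E[\xi_i\xi_k\xi_l\xi_m]$ only the index patterns $(l,m)\in\{(i,k),(k,i)\}$ survive, leaving $(r_1r_0)^2\sum_{i\neq k}(\tilde{\gamma}_{ik}^2+\tilde{\gamma}_{ik}\tilde{\gamma}_{ki})$, so (iii) $=N^2\sigma_{\dir,2}^2$. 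Combining and dividing by $N^2$ gives $\Var(\hat{\tau}_{\dir} )=\sigma_{\dir,1}^2+\sigma_{\dir,2}^2$.

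There is no genuinely hard step here; the proof is pure second-moment bookkeeping. The one place to be careful is term (ii), where one must treat $i=j$ and $i\neq j$ separately and check that in each case an independent mean-zero factor can be pulled out of the covariance. After that, the only other spot needing attention is the fourth-moment tally in (iii): one must verify that index patterns with three distinct values among $i,k,l,m$ contribute nothing, because they always leave an isolated mean-zero $\xi$.
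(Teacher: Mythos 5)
Your proof is correct and follows essentially the same route as the paper: your split $Y_iW_i=\mu_i(Z_i)W_i+R_iW_i$ is, after expanding $Z_i(Z_i-r_1)=r_0(Z_i-r_1)+r_0r_1$, exactly the paper's decomposition of $\hat{\tau}_{\dir}-\tau_{\dir}$ into the linear form $B_{\dir,1}$ and the off-diagonal quadratic form $B_{\dir,2}$ in the centered treatments, with the same orthogonality and second-moment bookkeeping. The only difference is presentational (conditional-expectation packaging versus direct algebraic expansion), and your identity $Y_{Z_i=1}^2/r_1+Y_{Z_i=0}^2/r_0-\theta_i^2=(r_0Y_{Z_i=1}+r_1Y_{Z_i=0})^2/(r_1r_0)$ checks out.
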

 
We now discuss the magnitude of the variance of $\hat{\tau}_{\dir} $. Let $\tilde{N}_i = \sumj \tilde{E}_{ij}$ and ${N}_i = \sumj {E}_{ij}$ be the out-degrees of unit $i$ in the latent network and observed network, respectively.  We posit the following assumption on the magnitude of $\theta_i$ and $\tilde{\gamma}_{ij}$.
\begin{assumption}
\label{a:bounded-parameter}
    There exists a constant $C$ independent of $N$ such that \[\max \Big\{\maxi |\alpha_i|,\maxi |\theta_i|, \max_{1\leq i,j \leq N}\tilde{N}_i|\tilde{\gamma}_{ij}| \Big\} \leq C.\]
\end{assumption}

\Cref{a:bounded-parameter} bounds the magnitude of the model parameters. Assuming anonymous interference, i.e., $\tilde{\gamma}_{ij} = \tilde{\gamma}_{i1}$ for all $j$ with $\tilde{E}_{ij} =1$, \Cref{a:bounded-parameter} simplifies to the assumption that $\maxi\max_{z\in\{0,1\}^N}|Y_i(z)|$ is bounded, aligning with the standard bounded outcome assumption \citep{aronow2017estimating,leung2022causal,li2022random}. Furthermore, \Cref{a:bounded-parameter} restricts $\max_{1\leq i,j \leq N}\tilde{N}_i|\tilde{\gamma}_{ij}|\leq C$, bounding the magnitude of the indirect effect of unit $j$ on unit $i$. This constraint is a relaxation of Assumption 4 in \cite{viviano2023causal}, where they required that $\max_{1\leq i,j \leq N} {N}_i|\tilde{\gamma}_{ij}|\min\{N,$ $\maxi N_i^2\} = o(1)$, translated into our framework.

Define the normalized adjacency matrix of the latent network by ${\bs{Q}} = (Q_{ij})_{1 \leq i,j\leq N}$ with $Q_{ij}= \tilde{E}_{ij}/\tilde{N}_i$ if $\tilde E_{ij}=1$ and $0$ otherwise. Under Assumption~\ref{a:bounded-parameter}, we have $\tilde{\gamma}_{ij}\leq CQ_{ij}$. Define a density parameter by $\rhon = \sumi N_i/N^2$. Then, $N\rhon$ is the average number of neighbors (out-degrees) for the units in the observed network. Let $\|\cdot\|_{\opnorm}$ be the operator norm of a squared matrix. We consider networks satisfying the following assumptions.
\begin{assumption}
\label{a:density-rho_N}
 $\rhon=o(1)$ and there exists a constant $C_{+}>0$, such that $N\rhon\geq C_{+}$.
\end{assumption}

Assumption \ref{a:density-rho_N} excludes extremely dense networks such as the complete graph and extremely sparse networks such as the edgeless graph. Assumption \ref{a:opnorm-EE^T} below imposes regularity conditions on $\bs{E}$ and $\bs{Q}$.  

\begin{assumption}
\label{a:opnorm-EE^T}
     (i) $\|\bs{E}\|_{\textnormal{op}} = O(N\rhon)$ and (ii) $\|\bs{Q}\|_{\textnormal{op}} = O(1)$.
\end{assumption}

\begin{remark}
\label{rmk:some-sufficient-condition-for-a-3}
    When the adjacency matrix is symmetric, i.e., $\bs{E}=\bs{E}^\top$ and $\tilde{\bs{E}}=\tilde{\bs{E}}^\top$, a sufficient condition for Assumption \ref{a:opnorm-EE^T}(i) is that there exists a constant $C_{+}>0$, such that $\maxi N_i \leq C_+ N\rhon$, i.e., the maximum degree is bounded by the average degree times a constant.  A sufficient condition for Assumption \ref{a:opnorm-EE^T}(ii) is that $\max_i \tilde{N}_i / \min_i \tilde{N}_i \leq C_+$, a uniform degree condition on the latent network; see the Supplementary Material for details.
\end{remark}
 
 Proposition~\ref{prop:order-of-tau-dir} below demonstrates the magnitude of the variance of $\hat{\tau}_{\dir} $.

\begin{proposition}
\label{prop:order-of-tau-dir}
   Under Assumptions \ref{a:bounded-parameter} and \ref{a:opnorm-EE^T}, we have $\Var(\hat{\tau}_{\dir} ) = O(N^{-1})$. 
\end{proposition}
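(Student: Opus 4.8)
The plan is to start from the exact variance decomposition $\Var(\hat{\tau}_{\dir}) = \sigma_{\dir,1}^2 + \sigma_{\dir,2}^2$ supplied by Proposition~\ref{prop:E-and-Var-DIR} and to bound each of the two pieces separately by $O(N^{-1})$, using throughout that $r_1$ and $r_0$ are fixed constants bounded away from $0$ and $1$, so that $1/(r_0 r_1)$ is $O(1)$.

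First I would compute $Y_{Z_i=z} = \operatorname{E}(Y_i\mid Z_i=z)$ in closed form under the HATE model \eqref{eq:HATE-model}. Since $\tilde E_{ii}=0$ and the $Z_j$ are mutually independent $\operatorname{Bernoulli}(r_1)$, one obtains $Y_{Z_i=1} = \alpha_i + \theta_i + r_1\sum_j \tilde{\gamma}_{ij}$ and $Y_{Z_i=0} = \alpha_i + r_1\sum_j \tilde{\gamma}_{ij}$, whence $r_0 Y_{Z_i=1} + r_1 Y_{Z_i=0} = \alpha_i + r_0\theta_i + r_1\sum_j \tilde{\gamma}_{ij}$. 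The only term needing care is $\sum_j \tilde{\gamma}_{ij}$: Assumption~\ref{a:bounded-parameter} gives $\tilde N_i|\tilde{\gamma}_{ij}|\le C$, and $\tilde{\gamma}_{ij}=0$ whenever $\tilde E_{ij}=0$, so summing over the exactly $\tilde N_i$ nonzero terms (and treating the trivial case $\tilde N_i = 0$, where the sum vanishes, separately to avoid division by zero) yields $\bigl|\sum_j \tilde{\gamma}_{ij}\bigr|\le C$. Together with $|\alpha_i|,|\theta_i|\le C$, this makes each summand $r_0 Y_{Z_i=1}+r_1 Y_{Z_i=0}$ uniformly bounded, so $\sigma_{\dir,1}^2 \le \frac{N\cdot(3C)^2}{N^2 r_0 r_1} = O(N^{-1})$.

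Next I would bound $\sigma_{\dir,2}^2 = N^{-2}\sum_{i\ne j}\tilde{\gamma}_{ij}^2 + N^{-2}\sum_{i\ne j}\tilde{\gamma}_{ij}\tilde{\gamma}_{ji}$. Using $|\tilde{\gamma}_{ij}|\le C Q_{ij}$ (a consequence of Assumption~\ref{a:bounded-parameter} and the definition of $\bs{Q}$) together with $0\le Q_{ij}\le 1$ and $\sum_j Q_{ij}\le 1$, I get $\sum_{i\ne j}\tilde{\gamma}_{ij}^2 \le C^2\sum_{i,j}Q_{ij}^2 \le C^2\sum_{i,j}Q_{ij}\le C^2 N$; alternatively one may route this through Assumption~\ref{a:opnorm-EE^T}(ii) via $\sum_{i,j}Q_{ij}^2 = \|\bs{Q}\|_F^2 \le N\|\bs{Q}\|_{\opnorm}^2 = O(N)$, which matches how operator-norm bounds are used elsewhere. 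The cross term is dominated by the same quantity, since $|\tilde{\gamma}_{ij}\tilde{\gamma}_{ji}|\le \tfrac12(\tilde{\gamma}_{ij}^2 + \tilde{\gamma}_{ji}^2)$ gives $\bigl|\sum_{i\ne j}\tilde{\gamma}_{ij}\tilde{\gamma}_{ji}\bigr| \le \sum_{i\ne j}\tilde{\gamma}_{ij}^2 \le C^2 N$. Hence $\sigma_{\dir,2}^2 = O(N^{-1})$, and adding the two bounds completes the proof. The argument is essentially a constant-chasing exercise; the one point I would be most careful about is making the passage from the per-pair bound $\tilde N_i|\tilde{\gamma}_{ij}|\le C$ to the aggregate bounds on $\sum_j\tilde{\gamma}_{ij}$ and $\sum_{i\ne j}\tilde{\gamma}_{ij}^2$ fully rigorous, in particular handling isolated units so no division by $\tilde N_i = 0$ occurs.
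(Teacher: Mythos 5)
Your proposal is correct and follows essentially the same route as the paper: bound $\sigma_{\dir,1}^2$ by showing the summands $r_0Y_{Z_i=1}+r_1Y_{Z_i=0}$ are uniformly bounded under Assumption~\ref{a:bounded-parameter} (the paper phrases this as $\max_i\max_{\bs{z}}|Y_i(\bs{z})|\le C_Y$), and bound $\sigma_{\dir,2}^2$ via $|\tilde{\gamma}_{ij}|\le CQ_{ij}$ together with the Cauchy--Schwarz step $|\tilde{\gamma}_{ij}\tilde{\gamma}_{ji}|\le\tfrac12(\tilde{\gamma}_{ij}^2+\tilde{\gamma}_{ji}^2)$ and $\sum_{i,j}Q_{ij}^2=O(N)$. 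Your elementary row-sum bound $\sum_{i,j}Q_{ij}^2\le\sum_{i,j}Q_{ij}\le N$ is a valid (and slightly weaker-hypothesis) alternative to the paper's operator-norm bound $\bs{1}^\top\bs{Q}\bs{Q}^\top\bs{1}\le N\|\bs{Q}\|_{\opnorm}^2$, but the argument is the same in substance.
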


We use the following modified Horvitz--Thompson estimator from \cite{hu2022average} for estimating $\tau_{\ind} $:
\[
\hat{\tau}_{\ind}  = \frac{1}{N}\sumi\sumj  E_{ij} \Bigl\{\frac{Y_iZ_j}{r_1} - \frac{Y_i(1-Z_j)}{r_0}\Bigr\}.
\]
 Let $Y_{Z_i=z_i}^{Z_j=z_j} = \operatorname{E}(Y_i|Z_i=z_i,Z_j=z_j)$, for $1\leq i,j\leq N$ and $(z_i,z_j)\in\{0,1\}^2$. Define
\begin{align*}
     &\sigma_{\ind,1}^2= \frac{1}{N^2r_1r_0}\sumj \Bigl\{\sumi E_{ij}(r_1r_0 Y_{Z_i=1}^{Z_j=1} +r_0^2Y_{Z_i=0}^{Z_j=1}+r_1^2Y_{Z_i=1}^{Z_j=0}+r_1r_0Y_{Z_i=0}^{Z_j=0})\Bigr\}^2,\\
     &\sigma_{\ind,2}^2=  \frac{1}{N^2}  \sum_{1\leq i\ne j \leq N} \Big(E_{ij}\theta_{i}+\sumk E_{kj}\tilde{\gamma}_{ki} \Big)^2 \\
     &\qquad \qquad + \frac{1}{N^2}  \sum_{1\leq i\ne j \leq N}  \Big(E_{ji}\theta_{j}+\sumk E_{ki}\tilde{\gamma}_{kj}\Big)\Big(E_{ij}\theta_{i}+\sumk E_{kj}\tilde{\gamma}_{ki}\Big).
\end{align*}

\begin{proposition}
\label{prop:order-and-formula-of-tau-ind}
    $\operatorname{E}(\hat{\tau}_{\ind} )=\tau_{\ind} $ and $\Var(\hat{\tau}_{\ind} ) =  \sigma_{\ind,1}^2+\sigma_{\ind,2}^2$. Under Assumptions \ref{a:bounded-parameter} and \ref{a:opnorm-EE^T}, we have $\sigma_{\ind,1}^2 =  O(N\rhon^2)$ and $\sigma_{\ind,2}^2 =  O(\rhon)$.
\end{proposition}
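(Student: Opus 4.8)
The plan is to first rewrite $\hat{\tau}_{\ind}$ as a degree-two polynomial in independent, centred Bernoulli variables. Set $\xi_j := Z_j - r_1$, so that the $\xi_j$ are independent with $\E(\xi_j)=0$ and $\E(\xi_j\xi_k)$ equal to $r_1 r_0$ if $j=k$ and $0$ otherwise; a one-line identity gives $Z_j/r_1 - (1-Z_j)/r_0 = \xi_j/(r_1 r_0)$, hence
\[
\hat{\tau}_{\ind} = \frac{1}{N r_1 r_0}\sumj \xi_j S_j, \qquad S_j := \sumi E_{ij} Y_i .
\]
Substituting the HATE model \eqref{eq:HATE-model} into $S_j$ and collecting the coefficient of each $\xi_k$ yields $S_j = A_j + \sumk b_{jk}\xi_k$, where $A_j := \sumi E_{ij}\,\E(Y_i)$ and $b_{jk} := E_{kj}\theta_k + \sumi E_{ij}\tilde{\gamma}_{ik}$; for $j\neq k$ this $b_{jk}$ is precisely the bracketed expression inside $\sigma_{\ind,2}^2$ (after relabelling), and $b_{jj} = \sumi \tilde{\gamma}_{ij}$ since $E_{jj}=0$ and $E_{ij}\tilde{\gamma}_{ij}=\tilde{\gamma}_{ij}$. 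Therefore $\hat{\tau}_{\ind} - \tau_{\ind} = (N r_1 r_0)^{-1}(P_1 + P_2 + P_3)$ with $P_1 := \sumj A_j\xi_j$, $P_2 := \sumj b_{jj}(\xi_j^2 - r_1 r_0)$ and $P_3 := \sum_{j\neq k} b_{jk}\xi_j\xi_k$; taking expectations and using $\E(\xi_j^2)=r_1r_0$ gives $\E(\hat{\tau}_{\ind}) = N^{-1}\sumj\sumi \tilde{\gamma}_{ij} = \tau_{\ind}$.

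For the variance I would exploit the orthogonality of this chaos decomposition. All of $P_1,P_2,P_3$ are mean zero, and a short moment computation using independence of the $\xi_j$ and $\E(\xi_j)=0$ gives $\E(P_1P_3)=\E(P_2P_3)=0$, so $\Var(P_1+P_2+P_3)=\Var(P_1+P_2)+\Var(P_3)$. Now $P_1+P_2 = \sumj f_j(Z_j)$ is a sum over $j$ of functions of the individual, mutually independent $Z_j$, so $\Var(P_1+P_2)=\sumj\Var(f_j(Z_j))$; a two-point Bernoulli variance calculation gives $\Var(f_j(Z_j)) = r_1r_0\,(A_j + (r_0-r_1)b_{jj})^2$, and expanding $A_j,b_{jj}$ through the conditional means $Y_{Z_i=z_i}^{Z_j=z_j}$ identifies $A_j + (r_0-r_1)b_{jj}$ with $\sumi E_{ij}\bigl(r_1r_0 Y_{Z_i=1}^{Z_j=1}+r_0^2 Y_{Z_i=0}^{Z_j=1}+r_1^2 Y_{Z_i=1}^{Z_j=0}+r_1r_0 Y_{Z_i=0}^{Z_j=0}\bigr)$, so that $(Nr_1r_0)^{-2}\Var(P_1+P_2)=\sigma_{\ind,1}^2$. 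For $P_3$, expanding $\E(P_3^2)=\sum_{j\neq k}\sum_{j'\neq k'}b_{jk}b_{j'k'}\E(\xi_j\xi_k\xi_{j'}\xi_{k'})$, only the patterns $(j',k')\in\{(j,k),(k,j)\}$ survive, leaving $\Var(P_3)=(r_1r_0)^2\bigl(\sum_{j\neq k}b_{jk}^2 + \sum_{j\neq k}b_{jk}b_{kj}\bigr)=(Nr_1r_0)^2\sigma_{\ind,2}^2$. Dividing by $(Nr_1r_0)^2$ gives $\Var(\hat{\tau}_{\ind})=\sigma_{\ind,1}^2+\sigma_{\ind,2}^2$.

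For the order statements I would pass to matrix notation: with $\tilde{\bs{\Gamma}} := (\tilde{\gamma}_{ij})$, $\bs{\Theta} := \diag(\theta_1,\dots,\theta_N)$ and $\bs{m} := (\E(Y_1),\dots,\E(Y_N))^\top$, the vector $(A_j)_j$ equals $\bs{E}^\top\bs{m}$ and $(b_{jk})=\bs{E}^\top(\bs{\Theta}+\tilde{\bs{\Gamma}})$. \Cref{a:bounded-parameter} together with $\sumk Q_{ik}\le 1$ yields $\|\bs{m}\|_\infty=O(1)$, $\|\bs{\Theta}\|_{\opnorm}\le C$, and $|\tilde{\gamma}_{ij}|\le C Q_{ij}$. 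Then $\sumj A_j^2=\|\bs{E}^\top\bs{m}\|_2^2\le\|\bs{E}\|_{\opnorm}^2\|\bs{m}\|_2^2=O(N^3\rhon^2)$ by \Cref{a:opnorm-EE^T}(i), while $\sumj b_{jj}^2\le C^2\|\bs{Q}^\top\bs{1}_N\|_2^2\le C^2\|\bs{Q}\|_{\opnorm}^2 N=O(N)$ by \Cref{a:opnorm-EE^T}(ii); since $N\rhon\ge C_+$ forces $N^{-1}=O(N\rhon^2)$, we get $\sigma_{\ind,1}^2=O\!\bigl(N^{-2}\sumj(A_j^2+b_{jj}^2)\bigr)=O(N\rhon^2)$. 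For $\sigma_{\ind,2}^2$, the symmetrisation $N^2\sigma_{\ind,2}^2=\tfrac12\sum_{j\neq k}(b_{jk}+b_{kj})^2\le 2\sum_{j,k}b_{jk}^2$, combined with $|b_{jk}|\le C E_{kj}+C(\bs{E}^\top\bs{Q})_{jk}$, the inequality $\|\bs{E}^\top\bs{Q}\|_F\le\|\bs{E}\|_F\|\bs{Q}\|_{\opnorm}$, and $\|\bs{E}\|_F^2=\sumi\sumj E_{ij}=N^2\rhon$, gives $\sum_{j,k}b_{jk}^2=O(N^2\rhon)$, hence $\sigma_{\ind,2}^2=O(\rhon)$.

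The hard part is the bound $\sigma_{\ind,2}^2=O(\rhon)$. The naive estimate $\|\bs{E}^\top\bs{Q}\|_F\le\|\bs{E}\|_{\opnorm}\|\bs{Q}\|_F$ only gives $O(N\rhon^2)$ for $\sigma_{\ind,2}^2$ --- the same order as $\sigma_{\ind,1}^2$, not $\rhon$ --- because it does not exploit the sparsity of $\bs{E}$; one must keep the Frobenius norm on $\bs{E}$ (so that $\|\bs{E}\|_F^2=N^2\rhon$ carries the density) and the operator norm on $\bs{Q}$ (bounded by \Cref{a:opnorm-EE^T}(ii)), after using $|\tilde{\gamma}_{ij}|\le C Q_{ij}$ to reduce $\tilde{\bs{\Gamma}}$ to $\bs{Q}$. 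A secondary, purely computational, difficulty is the algebra identifying the polynomial coefficients $A_j,b_{jj}$ with the conditional-mean expression that defines $\sigma_{\ind,1}^2$.
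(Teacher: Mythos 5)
Your proof is correct and follows essentially the same route as the paper's: both decompose $\hat{\tau}_{\ind}-\tau_{\ind}$ into an uncorrelated linear part and an off-diagonal quadratic part in $Z_j-r_1$ (your absorption of the diagonal terms via $\xi_j^2-r_1r_0=(r_0-r_1)\xi_j$ is exactly the paper's algebraic expansion of $\tilde{\gamma}_{ij}Z_j(Z_j-r_1)$), identify the linear coefficients with the conditional-mean bracket through the same identity, and bound the quadratic coefficients by keeping one ``counting'' factor of $\bs{E}$ and controlling the rest by operator norms; your $\|\bs{E}^\top\bs{Q}\|_F\le\|\bs{E}\|_F\|\bs{Q}\|_{\opnorm}$ with $\|\bs{E}\|_F^2=N^2\rhon$ is an equivalent variant of the paper's bound $\bs{1}^\top\bs{Q}\bs{Q}^\top\bs{E}\bs{1}\le N\|\bs{Q}\|_{\opnorm}^2\|\bs{E}\|_{\opnorm}$. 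One small point: your argument for $\sigma_{\ind,1}^2=O(N\rhon^2)$ invokes $N\rhon\ge C_+$, which is part of \Cref{a:density-rho_N} and not among the proposition's stated hypotheses; this is harmless since \Cref{a:opnorm-EE^T}(i) already forces $N\rhon\gtrsim\|\bs{E}\|_{\opnorm}\ge 1$ for any nonempty graph, and it can be avoided entirely by bounding $|b_{jj}|\le C\sum_i E_{ij}$ (using $Q_{ij}\le E_{ij}$) so that $\sum_j b_{jj}^2=O(N^3\rhon^2)$ alongside $\sum_j A_j^2$, which is in effect what the paper does by bounding the whole conditional-mean bracket by a constant.
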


\Cref{prop:order-and-formula-of-tau-ind} shows the unbiasedness of the IATE estimator $\hat{\tau}_{\ind} $ and presents the magnitudes of the two components of the variance of $\hat{\tau}_{\ind} $. When $N\rhon\rightarrow \infty$, $\sigma_{\ind,1}^2$ is the dominant source of the variance.

We can estimate $\tau_{\tot} $ by a plug-in method: $\hat{\tau}_{\tot}  = \hat{\tau}_{\dir} +\hat{\tau}_{\ind} $. 
Define
\begin{align*}
\sigma_{\tot,1}^2 = & \frac{1}{N^2r_1r_0}\sumi\Bigl\{r_0Y_{Z_i=1}+r_1Y_{Z_i=0}\\
&+\sumj  E_{ji}(r_1r_0 Y_{Z_j=1}^{Z_i=1} +r_0^2Y_{Z_j=0}^{Z_i=1}+r_1^2Y_{Z_j=1}^{Z_i=0}+r_1r_0Y_{Z_j=0}^{Z_i=0})\Bigr\}^2,
\\
\sigma_{\tot,2}^2 = &  \frac{1}{N^2}  \sum_{1\leq i\ne j \leq N}  \Big(\tilde{\gamma}_{ij}+E_{ji}\theta_{j}+\sumk E_{ki}\tilde{\gamma}_{kj}\Big)\Big(\tilde{\gamma}_{ji}+E_{ij}\theta_{i}+\sumk E_{kj}\tilde{\gamma}_{ki}\Big) \\
&+ \frac{1}{N^2}  \sum_{1\leq i\ne j \leq N} \Big(\tilde{\gamma}_{ji}+E_{ij}\theta_{i}+\sumk E_{kj}\tilde{\gamma}_{ki}\Big)^2.
\end{align*}

\begin{proposition}
\label{prop:order-of-tau-tot}
 $\operatorname{E}(\hat{\tau}_{\tot} )=\tau_{\tot} $ and $ \Var(\hat{\tau}_{\tot} ) =  \sigma_{\tot,1}^2 + \sigma_{\tot,2}^2.$
    Under Assumptions \ref{a:bounded-parameter}--\ref{a:opnorm-EE^T}, we have $\sigma^2_{\tot,1} = O(N\rhon^2)$, and $\sigma^2_{\tot,2} = O(\rhon).$
\end{proposition}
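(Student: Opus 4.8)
The plan is to treat $\hat\tau_{\tot} = \hat\tau_{\dir} + \hat\tau_{\ind}$ as a single Horvitz--Thompson-type estimator and compute its mean and variance directly, then read off the orders from the already-established orders for the pieces. For unbiasedness, I would simply note $\E(\hat\tau_{\tot}) = \E(\hat\tau_{\dir}) + \E(\hat\tau_{\ind}) = \tau_{\dir} + \tau_{\ind} = \tau_{\tot}$ by Propositions~\ref{prop:E-and-Var-DIR} and \ref{prop:order-and-formula-of-tau-ind}. For the variance, the cleanest route is to rewrite $\hat\tau_{\tot}$ as $N^{-1}\sum_i f_i(\bs Z)$ where each summand, after plugging in the HATE model \eqref{eq:HATE-model}, is an affine-plus-quadratic function of the independent Bernoulli coordinates $Z_1,\ldots,Z_N$; equivalently, collect for each index $\ell$ the coefficient of the centered variable $Z_\ell - r_1$ (the ``linear'' part) and for each pair $(\ell,m)$ the coefficient of $(Z_\ell-r_1)(Z_m-r_1)$ (the ``quadratic'' part arising from the $Y_iZ_j/r_1$ terms interacting with the $\tilde\gamma$ terms in $Y_i$). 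Because the $Z_\ell$ are independent, the variance decomposes orthogonally into a sum over the linear coefficients (yielding $\sigma_{\tot,1}^2$, the ``$Y$-on-$Z$ projection'' part) and a sum over the quadratic coefficients (yielding $\sigma_{\tot,2}^2$); the conditional-expectation notation $Y_{Z_i=z_i}^{Z_j=z_j}$ is exactly what bookkeeps the linear part. This is the same mechanism already used to derive $\Var(\hat\tau_{\dir})$ and $\Var(\hat\tau_{\ind})$, so I would organize the computation to mirror those proofs and then combine.

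Concretely, I would write $\hat\tau_{\tot} = N^{-1}\sum_j \sum_i (E_{ij}+\mathbf 1\{i=j\})\{Y_iZ_j/r_1 - Y_i(1-Z_j)/r_0\}$ — that is, $\hat\tau_{\dir}$ is the ``diagonal'' $j=i$ contribution and $\hat\tau_{\ind}$ is the off-diagonal $E_{ij}$ contribution — so the same algebra that produced $\sigma_{\ind,1}^2,\sigma_{\ind,2}^2$ applies with $E_{ij}$ replaced by $E_{ij}+\mathbf 1\{i=j\}$. Expanding the square of the linear coefficient of $Z_i$ gives the $\sigma_{\tot,1}^2$ expression (the self-term $r_0 Y_{Z_i=1}+r_1 Y_{Z_i=0}$ coming from the diagonal, the sum-over-$j$ term from the off-diagonal, as in $\sigma_{\ind,1}^2$). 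The cross-covariance bookkeeping over pairs $(i,j)$ with $i\ne j$ — coming from $\Cov(Y_iZ_j, Y_jZ_i)$ and from the quadratic terms — produces the two sums in $\sigma_{\tot,2}^2$, where now each bracket carries an extra $\tilde\gamma_{ij}$ (resp.\ $\tilde\gamma_{ji}$) relative to $\sigma_{\ind,2}^2$ because the diagonal $j=i$ contribution to $Y_i$ via $\tilde E_{ij}\gamma_{ij}Z_j$ is picked up when we also vary $Z_j$ in the off-diagonal term. I would verify these identities by a careful term-by-term expansion rather than by a slick argument; this is routine but must be done with care about which $Z_\ell$ indices coincide.

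For the orders, under Assumptions~\ref{a:bounded-parameter}--\ref{a:opnorm-EE^T}: $\sigma_{\tot,1}^2 = O(N\rhon^2)$ follows exactly as for $\sigma_{\ind,1}^2$ — the inner bracket for index $i$ is a sum of $O(N_i)$ bounded terms plus an $O(1)$ self-term, so it is $O(N_i)$, its square is $O(N_i^2)$, and $N^{-2}\sum_i O(N_i^2) = O(N^{-2}\sum_i N_i^2)$; Assumption~\ref{a:opnorm-EE^T}(i) controls $\sum_i N_i^2 \le \|\bs E\|_{\opnorm}^2 \cdot (\text{something})$, or more simply $\sum_i N_i^2 = \|\bs E\bs 1\|^2 \le \|\bs E\|_{\opnorm}^2 N = O(N^3\rhon^2)$, giving $O(N\rhon^2)$. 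Similarly $\sigma_{\tot,2}^2 = O(\rhon)$: each bracket such as $\tilde\gamma_{ji}+E_{ij}\theta_i+\sum_k E_{kj}\tilde\gamma_{ki}$, when summed in squared/product form over $i\ne j$, reduces via Assumption~\ref{a:bounded-parameter} (so $|\tilde\gamma_{ki}|\le C Q_{ki}$) and Assumption~\ref{a:opnorm-EE^T}(ii) ($\|\bs Q\|_{\opnorm}=O(1)$) to bounding quantities like $N^{-2}\sum_{i,j}E_{ij} = \rhon$ and $N^{-2}\|\bs E^\top \bs Q\|_F^2$-type sums, all $O(\rhon)$; this is precisely the bound already invoked for $\sigma_{\ind,2}^2$ in Proposition~\ref{prop:order-and-formula-of-tau-ind}, so I would cite that estimate and note the extra $\tilde\gamma_{ij}$ terms are dominated. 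The main obstacle is purely combinatorial bookkeeping: correctly identifying every source of quadratic dependence in $\hat\tau_{\tot}$ (the interaction of the $Z_j/r_1$ weights with the hidden-network terms $\tilde E_{ij}\gamma_{ij}Z_j$ inside $Y_i$, including the case $j=i$) so that the claimed $\sigma_{\tot,2}^2$ formula is reproduced exactly rather than up to unaccounted-for terms; everything after that is a direct application of the norm bounds in Assumptions~\ref{a:opnorm-EE^T}.
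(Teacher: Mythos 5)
Your proposal is correct and follows essentially the same route as the paper: unbiasedness by linearity, the orthogonal linear/quadratic decomposition in the centered $Z_i$'s to identify $\sigma_{\tot,1}^2$ and $\sigma_{\tot,2}^2$, and the operator-norm bounds from Assumption~\ref{a:opnorm-EE^T} for the orders. The only difference is organizational: the paper simply writes $B_{\tot,k}=B_{\dir,k}+B_{\ind,k}$ using the decompositions already derived for $\hat{\tau}_{\dir}$ and $\hat{\tau}_{\ind}$ and gets the orders by Cauchy--Schwarz from $\sigma_{\dir,k}^2$ and $\sigma_{\ind,k}^2$, rather than re-expanding the combined kernel $E_{ij}+\mathbf{1}\{i=j\}$ from scratch as you propose.
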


Proposition~\ref{prop:order-of-tau-tot} implies the unbiasedness of the plug-in estimator $\hat{\tau}_{\tot} $ and provides the magnitudes of the two components of its variance. By \Cref{prop:order-of-tau-dir}, $\hat{\tau}_{\dir} $ is always  $N^{1/2}$-consistent. In other words, one can consistently estimate the DATE regardless of the network density. However, the IATE and GATE estimators may be inconsistent for dense networks. Propositions~\ref{prop:order-and-formula-of-tau-ind} and \ref{prop:order-of-tau-tot} implies that
a sufficient condition for the estimators $\hat{\tau}_{\ind} $ and $\hat{\tau}_{\tot} $ to 
be consistent is $\rhon = o(N^{-1/2})$. Those results echo the findings of \cite{li2022random}.

\section{Asymptotic normality and confidence interval}
\label{sec:CLT-and-variance-estimator}
\subsection{Central limit theorem}
We require the following assumptions to establish the asymptotic normality of $\hat{\tau}_{\dir} $, $\hat{\tau}_{\ind} $, and $\hat{\tau}_{\tot} $. Let $M_i = \sumj E_{ji}$ be the in-degree of unit $i$.
\begin{assumption} 
\label{a:Lindberg-condition-unadj}
    $\sumj Q_{ji} = o(N^{1/2})$, $\maxi N_i^2 = o(N^3\rhon^2)$, $\maxi M_i^2 = o(N^3\rhon^2)$. 
\end{assumption}

By Assumption~\ref{a:opnorm-EE^T}, we have $\sumi (\sumj Q_{ji})^2 = \bs{1}^\top \bs{Q}\bs{Q}^\top \bs{1} \leq N\|\bs{Q}\|_\opnorm^2 = O(N)$,
$
\sumi N_i^2 = \sumi (\sumj E_{ij})^2  \leq N\|\bs{E}\|_\opnorm^2 = O(N^3\rhon^2) $ and $ \sumi M_i^2 = \sumi (\sumj E_{ji})^2$ $\leq N\|\bs{E}\|_\opnorm^2 = O(N^3\rhon^2).
$ \Cref{a:Lindberg-condition-unadj} is therefore the Lindberg-type condition, which is a standard condition in the finite-population or design-based inference framework.

\begin{assumption}
\label{a:assumption-CLT}
    \begin{align*}
       &(i)  \text{ Either}\quad \liminf_N N\sigma^2_{\dir,1} >0\quad \text{or}\quad \liminf_N N\sigma^2_{\dir,2} >0.\\
        &(ii) \text{ Either}\quad \liminf_N \frac{\sigma^2_{\ind,1}}{N\rhon^2} >0\quad \text{or}\quad \liminf_N \frac{\sigma^2_{\ind,2}}{N\rhon^2} >0.\\
       &(iii) \text{ Either}\quad \liminf_N \frac{\sigma^2_{\tot,1}}{N\rhon^2} >0\quad \text{or}\quad \liminf_N \frac{\sigma^2_{\tot,2}}{N\rhon^2} >0.
    \end{align*}
\end{assumption}

\Cref{a:assumption-CLT} includes three parallel assumptions for deriving the asymptotic normality of the DATE, IATE, and GATE estimators, respectively.  \Cref{a:assumption-CLT}(i) requires that at least one component of $\sigma^2_{\dir} $ does not vanish asymptotically. Since $\sigma^2_{\dir}  = O(N^{-1})$, \Cref{a:assumption-CLT}(i) implies that $\liminf_N N\sigma^2_{\dir} >0$, and then the asymptotic distribution is not degenerated. \Cref{a:assumption-CLT}(ii) and (iii) have similar interpretations, but for the indirect and global effects, respectively.

\begin{theorem}
\label{thm:CLT-no-adjust}
    Under Assumptions~\ref{a:bounded-parameter}--\ref{a:assumption-CLT}, we have
    \[
    \frac{\hat{\tau}_{\dir} -\tau_{\dir} }{\Var(\hat{\tau}_{\dir} )^{1/2}} \xrightarrow{d} \mathcal{N}(0,1),~~\frac{\hat{\tau}_{\ind} -\tau_{\ind} }{\Var(\hat{\tau}_{\ind} )^{1/2}}\xrightarrow{d} \mathcal{N}(0,1),~~\frac{\hat{\tau}_{\tot} -\tau_{\tot} }{\Var(\hat{\tau}_{\tot} )^{1/2}}\xrightarrow{d} \mathcal{N}(0,1).
    \]
\end{theorem}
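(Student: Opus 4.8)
## Proof proposal for Theorem \ref{thm:CLT-no-adjust}

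The plan is to express each of the three centered statistics $\hat\tau_{\dir}-\tau_{\dir}$, $\hat\tau_{\ind}-\tau_{\ind}$, $\hat\tau_{\tot}-\tau_{\tot}$ as a polynomial (of degree at most two) in the independent Bernoulli variables $Z_1,\ldots,Z_N$, and then invoke the quantitative central limit theorem for homogeneous sums of \citet{koike2022high}. First I would center the treatment indicators, writing $\xi_i = (Z_i - r_1)/\sqrt{r_1 r_0}$ so that $\xi_1,\ldots,\xi_N$ are i.i.d., mean-zero, unit-variance, with bounded third and fourth moments (uniformly in $N$, since $r_1$ is fixed). Substituting $Z_i = r_1 + \sqrt{r_1 r_0}\,\xi_i$ into the HATE model \eqref{eq:HATE-model} and into each estimator, the constant and linear terms in $\xi$ collapse: the constant reproduces the estimand (by the unbiasedness in Propositions \ref{prop:E-and-Var-DIR}, \ref{prop:order-and-formula-of-tau-ind}, \ref{prop:order-of-tau-tot}), and what remains is a sum of a \emph{linear form} $W_1 = \sum_i a_i \xi_i$ and a \emph{symmetric bilinear form} $W_2 = \sum_{i\neq j} b_{ij}\,\xi_i \xi_j$ for explicit coefficient arrays $a=a^{(N)}$, $b=b^{(N)}$ determined by $\alpha_i,\theta_i,\tilde\gamma_{ij}, E_{ij}$ and $r_1$. (For $\hat\tau_{\dir}$ the quadratic part comes only from the cross terms $Z_i\tilde\gamma_{ij}Z_j$ in $Y_i$ multiplied by $Z_i/r_1 - (1-Z_i)/r_0$; for $\hat\tau_{\ind}$ and $\hat\tau_{\tot}$ one additionally gets $Z_iZ_j$ cross terms from $\theta_i Z_i$ against the $Z_j$ factor and from the $\tilde\gamma$ terms against $Z_j$.) The key bookkeeping is to check that $\mathrm{Var}(W_1 + W_2) = \mathrm{Var}(W_1) + \mathrm{Var}(W_2)$ equals exactly $\sigma_{\bullet,1}^2 + \sigma_{\bullet,2}^2$ as given before each proposition — the variance of the linear part matching the ``$,1$'' term and the bilinear part the ``$,2$'' term, with the cross-product covariance vanishing because odd moments of $\xi$ appear.

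Next I would verify the hypotheses of the homogeneous-sum CLT. For a normalized sum $W = W_1/\sigma + W_2/\sigma$ with $\sigma^2 = \mathrm{Var}(W_1+W_2)$, the Koike bound controls the Kolmogorov (or Wasserstein) distance to $\mathcal N(0,1)$ by a sum of terms: an \emph{influence/contraction} term for the quadratic part, essentially $\|b\|_{\mathrm{op}}^2 / \sigma^4$ times the squared Frobenius norm, together with a fourth-moment-type term $\sum_i (\sum_j b_{ij}^2)^2/\sigma^4$ controlling maximal influence, plus a Lindeberg term $\sum_i a_i^4/\sigma^4$ for the linear part. I would bound each of these using Assumptions \ref{a:bounded-parameter}--\ref{a:opnorm-EE^T}: $\|b\|_{\mathrm{op}}$ is controlled through $\|\bs E\|_{\mathrm{op}} = O(N\rhon)$ and $\|\bs Q\|_{\mathrm{op}}=O(1)$ together with Assumption \ref{a:bounded-parameter} giving $|\tilde\gamma_{ij}| \le C Q_{ij}$; the Frobenius norms of $a$ and $b$ are exactly the order of $\sigma^2$; the maximal-influence and Lindeberg terms are controlled by the three conditions in Assumption \ref{a:Lindberg-condition-unadj} ($\sum_j Q_{ji} = o(N^{1/2})$ and $\max_i N_i^2, \max_i M_i^2 = o(N^3\rhon^2)$), after noting via the display following Assumption \ref{a:Lindberg-condition-unadj} that $\sum_i N_i^2 = O(N^3\rhon^2)$ etc. Finally, Assumption \ref{a:assumption-CLT} guarantees the non-degeneracy lower bound $\sigma^2 \gtrsim N^{-1}$ (for DATE) or $\sigma^2 \gtrsim N\rhon^2$ (for IATE and GATE), which is what turns all the $o(\cdot)$ numerator bounds into a vanishing ratio. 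Putting these together, each of the three Berry--Esseen-type bounds tends to zero, giving the claimed convergence in distribution.

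I expect the main obstacle to be the quadratic term of $\hat\tau_{\ind}$ and $\hat\tau_{\tot}$: unlike the DATE case, its coefficient array $b_{ij}$ is not symmetric a priori and involves the ``double-neighbor'' sums $\sum_k E_{kj}\tilde\gamma_{ki}$, so I must symmetrize carefully ($b_{ij}\mapsto (b_{ij}+b_{ji})/2$) and then show that the operator norm and the row-sum-of-squares of the symmetrized array are still $O(N\rhon)$ and $o(N^3\rhon^2/\,\sigma^{0})$-controlled — this is where Assumption \ref{a:opnorm-EE^T}(i) applied to $\bs E$ and the degree bounds in Assumption \ref{a:Lindberg-condition-unadj} do the real work, and where one must be careful that $\|\bs E \bs Q^\top\|_{\mathrm{op}} \le \|\bs E\|_{\mathrm{op}}\|\bs Q\|_{\mathrm{op}} = O(N\rhon)$. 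A secondary technical point is confirming that the cross-covariance between the linear and bilinear parts is exactly zero (not merely negligible), which relies on $\E[\xi_i^3]$ appearing only with coefficient arrays that vanish on the diagonal — this is guaranteed by $\tilde E_{ii}=E_{ii}=0$. Once the coefficient arrays are pinned down and these norm bounds are in hand, the rest is a mechanical application of the cited CLT term by term.
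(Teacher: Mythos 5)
Your overall strategy coincides with the paper's: center the treatments via $W_i=(Z_i-r_1)/\sqrt{r_1r_0}$, split each centered estimator into a linear form $B_{\star,1}$ plus an off-diagonal quadratic form $B_{\star,2}$ whose variances are exactly $\sigma^2_{\star,1}$ and $\sigma^2_{\star,2}$ (with zero cross-covariance because the quadratic coefficient array vanishes on the diagonal), and then apply the quantitative CLT for homogeneous sums of \cite{koike2022high}, bounding the fourth cumulants, influences, and Lindeberg quantities via $\|\bs{E}\|_{\opnorm}=O(N\rhon)$, $\|\bs{Q}\|_{\opnorm}=O(1)$, $|\tilde\gamma_{ij}|\le CQ_{ij}$, and Assumption~\ref{a:Lindberg-condition-unadj}. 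The bookkeeping you describe for the coefficient arrays, including the symmetrization of $b_{ij}$ and the bound $\|\bs{Q}^\top\bs{E}\|_{\opnorm}=O(N\rhon)$ for the double-neighbor sums, is exactly what the paper does.

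There is, however, one genuine gap. Theorem~2.1 of \cite{koike2022high} is a \emph{multivariate} bound for the vector $(\tilde B_{\star,1},\tilde B_{\star,2})$, and its right-hand side carries a factor $1+1/\min\{\Var(\tilde B_{\star,1})^{1/2},\Var(\tilde B_{\star,2})^{1/2}\}$. Assumption~\ref{a:assumption-CLT} is deliberately an ``either/or'' condition: it only guarantees that \emph{one} of the two components is non-degenerate after normalization, so the other may have variance tending to zero relative to the total, in which case this factor blows up and the bound you describe is vacuous. Your proposal only invokes non-degeneracy of the total variance $\sigma^2$, which does not resolve this. The paper handles it with a subsequence argument: along subsequences where the smaller component's normalized variance falls below a suitable power of $N$ (or of the Berry--Esseen rate $\delta_N$), that component is $\op(\sigma)$ and one concludes by Slutsky's theorem combined with a \emph{marginal} CLT for the dominant component (Propositions~\ref{prop:marginal-CLTs-for-direct-effect-components} and \ref{prop:marginal-CLTs-for-indirect-effect-components}, proved via de Jong's theorem); along the complementary subsequences the $1/\min$ factor is at most a small negative power of the rate and the Koike bound still vanishes. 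Without this two-case argument (and without separately establishing the marginal CLTs it relies on), your proof only covers the case where both components are simultaneously non-degenerate, which is strictly weaker than the stated theorem.
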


\Cref{thm:CLT-no-adjust} establishes the central limit theorem for $\hat{\tau}_{\dir} $, $\hat{\tau}_{\ind} $, and $\hat{\tau}_{\tot} $. Noteworthy, the CLT holds if $\rhon = o(1)$ (by \Cref{a:density-rho_N}), including the regime where $N^{1/2}\rhon$ tends to infinity such that $\hat{\tau}_{\ind} $ and $\hat{\tau}_{\tot} $ are inconsistent. Our proof is based on the newly developed central limit theory for homogeneous sums \citep{koike2022high}, while the traditional central limit theory relies on the approach of dependency graph \citep{aronow2017estimating,ogburn2024causal,harshaw2023design,chin2018central}. As far as we know, the best result obtained by the dependency graph approach requires a strong assumption that the maximum node degree is of magnitude $o(N^{1/2})$, as mentioned by \cite{ogburn2024causal}. 

\subsection{Confidence interval}
Based on \Cref{thm:CLT-no-adjust}, we can construct Wald-type confidence intervals for the average treatment effects as long as we can consistently or conservatively estimate the asymptotic variances. First, we try to estimate $\sigma_{\dir,1}^2$, which has a nice representation of conditional expectation of outcomes. We ignore the second term for the moment because it involves $N^2$ nonidentifiable terms and is bounded by or a smaller order of the first term under certain conditions.  Note that $Y_{Z_i=1}$ and $Y_{Z_i=0}$ are the expected outcomes of unit $i$ conditional on it being assigned to the treatment and control groups, respectively. Since we can only observe the outcome of unit $i$ under one treatment arm, we cannot consistently estimate the asymptotic variances. Instead, by Cauchy--Schwarz inequality, $(r_1 a_i + r_0 b_i)^2 \leq r_1 a_i^2 + r_0 b_i^2$, we obtain an upper bound of the first term of $\Var(\hat{\tau}_{\dir} )$,  which is given by 
\[
\sigma^2_{\dir,1}\leq \sumi \frac{Y_{Z_i=1}^2}{N^2r_1}  + \sumi   \frac{Y_{Z_i=0}^2}{N^2r_0}.
\]
Naturally, we use $Y_i^2I(Z_i=z)/\operatorname{E}\{I(Z_i=z)\}$ to estimate  $Y_{Z_i=z}^2$, which leads to a conservative estimator of $\Var(\hat{\tau}_{\dir} )$:
\[
\hat{V}_{\dir}  = \sumi \frac{Z_i Y_i^2}{N^2r_1^2}  + \sumi   \frac{(1-Z_i)Y_i^2}{N^2r_0^2}.
\]
Noteworthy, $\hat{V}_{\dir} $ is also the variance estimator for the average treatment effect estimator in a no-interference setting.
\begin{theorem} 
\label{thm:variance-estimator-dir}
(i) $\operatorname{E} (\hat{V}_{\dir} ) - \Var(\hat{\tau}_{\dir} )  = N^{-2}\sumi \theta_i^2 - N^{-2}\sumi\sumj  \tilde{\gamma}_{ij}\tilde{\gamma}_{ji},\\ \operatorname{E} (2\hat{V}_{\dir} )- \Var(\hat{\tau}_{\dir} ) \geq 0,$ 
and (ii) if Assumptions \ref{a:bounded-parameter} and \ref{a:opnorm-EE^T} holds, then $\hat{V}_{\dir} -\operatorname{E} \hat{V}_{\dir} = \op(N^{-1})$.
\end{theorem}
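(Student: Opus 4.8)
The plan is to prove the two parts of \Cref{thm:variance-estimator-dir} separately, starting with the exact bias computation in part (i) and then establishing the concentration statement in part (ii).

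For part (i), I would first compute $\operatorname{E}(\hat{V}_{\dir})$ directly. Since $\operatorname{E}(Z_i Y_i^2) = r_1 Y_{Z_i=1}^{(2)}$ where $Y_{Z_i=z}^{(2)} = \operatorname{E}(Y_i^2 \mid Z_i = z)$, we get $\operatorname{E}(\hat{V}_{\dir}) = N^{-2}\sumi\{ r_1^{-1} Y_{Z_i=1}^{(2)} + r_0^{-1} Y_{Z_i=0}^{(2)}\}$. The key identity is the decomposition $Y_{Z_i=z}^{(2)} = (Y_{Z_i=z})^2 + \Var(Y_i \mid Z_i = z)$. Under the HATE model, conditional on $Z_i$, the remaining randomness in $Y_i$ comes only from $\{Z_j\}_{j\neq i}$ through the term $\sum_{j\neq i}\tilde\gamma_{ij} Z_j$, which is independent of $Z_i$; hence $\Var(Y_i\mid Z_i=z) = \sum_{j\neq i}\tilde\gamma_{ij}^2 r_1 r_0$ for both $z$. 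Plugging in, the conditional-variance contribution to $\operatorname{E}(\hat{V}_{\dir})$ is $N^{-2}\sumi (r_1^{-1}+r_0^{-1}) r_1 r_0 \sum_{j\neq i}\tilde\gamma_{ij}^2 = N^{-2}\sum_{i\neq j}\tilde\gamma_{ij}^2$, while the squared-mean contribution must be matched against $\sigma_{\dir,1}^2$. A short algebraic manipulation — expanding $(r_0 Y_{Z_i=1}+r_1 Y_{Z_i=0})^2$ and comparing with $r_0 Y_{Z_i=1}^2 + r_1 Y_{Z_i=0}^2$ after dividing by $r_1 r_0$ — shows the difference equals $N^{-2}\sumi(Y_{Z_i=1}-Y_{Z_i=0})^2 = N^{-2}\sumi\theta_i^2$, using $Y_{Z_i=1}-Y_{Z_i=0} = \theta_i$ under the HATE model. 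Combining with $\Var(\hat\tau_{\dir}) = \sigma_{\dir,1}^2+\sigma_{\dir,2}^2$ from \Cref{prop:E-and-Var-DIR} and recalling $\sigma_{\dir,2}^2 = N^{-2}\sum_{i\neq j}\tilde\gamma_{ij}^2 + N^{-2}\sum_{i\neq j}\tilde\gamma_{ij}\tilde\gamma_{ji}$, the $\sum_{i\neq j}\tilde\gamma_{ij}^2$ terms cancel and we are left with $\operatorname{E}(\hat{V}_{\dir}) - \Var(\hat\tau_{\dir}) = N^{-2}\sumi\theta_i^2 - N^{-2}\sum_{i\neq j}\tilde\gamma_{ij}\tilde\gamma_{ji}$. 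For the $2\hat{V}_{\dir}$ claim, I would note $\operatorname{E}(2\hat{V}_{\dir}) - \Var(\hat\tau_{\dir}) = \operatorname{E}(\hat{V}_{\dir}) + \{N^{-2}\sumi\theta_i^2 - N^{-2}\sum_{i\neq j}\tilde\gamma_{ij}\tilde\gamma_{ji}\}$, so it suffices to show $\operatorname{E}(\hat{V}_{\dir}) \geq N^{-2}\sum_{i\neq j}\tilde\gamma_{ij}\tilde\gamma_{ji} - N^{-2}\sumi\theta_i^2$; since $\operatorname{E}(\hat{V}_{\dir})\geq \sigma_{\dir,1}^2 + N^{-2}\sum_{i\neq j}\tilde\gamma_{ij}^2$ (from the decomposition above, the squared-mean part is at least $\sigma_{\dir,1}^2$ by Cauchy–Schwarz) and $\sum_{i\neq j}\tilde\gamma_{ij}^2 \geq |\sum_{i\neq j}\tilde\gamma_{ij}\tilde\gamma_{ji}|$ by AM–GM, non-negativity follows.

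For part (ii), the goal is $\hat{V}_{\dir} - \operatorname{E}(\hat{V}_{\dir}) = \op(N^{-1})$, i.e.\ $N\hat{V}_{\dir} - N\operatorname{E}(\hat{V}_{\dir}) \xrightarrow{\mathbb{P}} 0$. Writing $\hat{V}_{\dir} = N^{-2}\sumi W_i$ with $W_i = Z_i Y_i^2/r_1^2 + (1-Z_i)Y_i^2/r_0^2$, it suffices by Chebyshev to show $\Var(N\hat{V}_{\dir}) = N^{-2}\Var(\sumi W_i) = o(1)$. I would bound $\Var(\sumi W_i) = \sumi\sumj \Cov(W_i, W_j)$ by splitting into diagonal and off-diagonal terms. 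Each $W_i$ is a bounded random variable: under \Cref{a:bounded-parameter}, $|Y_i| = |\alpha_i + \theta_i Z_i + \sum_j \tilde\gamma_{ij} Z_j| \leq C + C + \sum_j \tilde N_i^{-1} C \cdot \tilde E_{ij} = O(1)$ (since $\tilde\gamma_{ij} \leq C Q_{ij}$ and $\sumj Q_{ij} = 1$), so $|W_i| = O(1)$, giving $\sumi \Var(W_i) = O(N)$ and hence a diagonal contribution of $O(N)$ to $\Var(\sumi W_i)$, which is $o(N^2)$. For the off-diagonal terms, $\Cov(W_i, W_j)$ is nonzero only when $W_i$ and $W_j$ share a common source of randomness — that is, when there is some $k$ with $\tilde E_{ik} = 1$ and $\tilde E_{jk} = 1$, or $i \in \{\text{neighbors of } j\}$ etc.\ — so I would count the number of such pairs. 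Using $|Y_i^2 - \operatorname{E}(Y_i^2 \mid \text{relevant } Z\text{'s})|$-type arguments, $|\Cov(W_i, W_j)| = O(1)$ for each such pair, and the number of pairs $(i,j)$ with overlapping neighbor sets is controlled by $\sumk \tilde N_k^2$ or more precisely by $\tr(\tilde{\bs E}\tilde{\bs E}^\top) $-type quantities bounded via $\|\bs E\|_\opnorm$; under \Cref{a:opnorm-EE^T}(i) this is $O(N^3\rhon^2) = o(N^2)$ once we use $\rhon = o(1)$ — wait, more carefully, $\sumi N_i^2 \le N\|\bs E\|_\opnorm^2 = O(N^3\rhon^2)$, and since we also need the count of dependent pairs to be $o(N^2)$, I would instead bound the number of dependent $(i,j)$ pairs by $\sumk M_k^2 \le N\|\bs E\|_\opnorm^2$ and combine with the $O(1)$ covariance bound, then divide by $N^2$; the resulting rate is $O(N\rhon^2) = o(1)$ since $\rhon = o(1)$. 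The main obstacle I anticipate is the careful bookkeeping in part (ii): correctly identifying which pairs $(i,j)$ give nonzero covariance (this requires tracking both the direct neighbor relation and the common-neighbor relation in the directed latent network, including terms like $\tilde\gamma_{ij}\tilde\gamma_{ji}$ cross-products inside $Y_i^2 Y_j^2$), and showing that the count of such pairs, weighted by the $O(1)$ per-pair covariance bound, is genuinely $o(N^2)$ using only Assumptions \ref{a:bounded-parameter} and \ref{a:opnorm-EE^T} — in particular leveraging $\|\bs E\|_\opnorm = O(N\rhon)$ together with $\rhon = o(1)$ to get the spare factor of $\rhon^2$ (or at least $\rhon$) needed for the $o(1)$ conclusion.
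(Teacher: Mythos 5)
Your part (i) is correct and follows essentially the same route as the paper: the conditional-variance decomposition $\operatorname{E}(Y_i^2\mid Z_i=z)=(Y_{Z_i=z})^2+r_1r_0\sum_j\tilde\gamma_{ij}^2$, the identity $r_1^{-1}Y_{Z_i=1}^2+r_0^{-1}Y_{Z_i=0}^2=(r_0Y_{Z_i=1}+r_1Y_{Z_i=0})^2/(r_0r_1)+\theta_i^2$, the cancellation of the $\sum\tilde\gamma_{ij}^2$ terms against $\sigma_{\dir,2}^2$, and $\operatorname{E}\hat V_{\dir}\geq N^{-2}\sum_{i,j}\tilde\gamma_{ij}^2\geq N^{-2}\sum_{i,j}\tilde\gamma_{ij}\tilde\gamma_{ji}$ for the conservativeness of $2\hat V_{\dir}$.

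Part (ii), however, has a genuine gap. Your bookkeeping bounds the off-diagonal contribution to $\Var(\sum_i W_i)$ by (number of dependent pairs) $\times$ ($O(1)$ per-pair covariance), with the pair count controlled by $\sum_k M_k^2\leq N\|\bs E\|_{\opnorm}^2=O(N^3\rhon^2)$. Dividing by $N^2$ gives $\Var(N\hat V_{\dir})=O(N\rhon^2)$, and your final step asserts $N\rhon^2=o(1)$ "since $\rhon=o(1)$." That implication is false: $N\rhon^2\to 0$ requires $\rhon=o(N^{-1/2})$, whereas the theorem is meant to cover, e.g., $\rhon=N^{-1/3}$, for which $N\rhon^2=N^{1/3}\to\infty$. (Worse, part (ii) is stated under Assumptions \ref{a:bounded-parameter} and \ref{a:opnorm-EE^T} only, so you cannot even invoke $\rhon=o(1)$.) The crude $O(1)$ per-pair bound throws away exactly what makes the argument work: when $W_i$ and $W_j$ are dependent only through a shared latent neighbor $k$, the shared randomness $Z_k$ enters $Y_i$ and $Y_j$ with coefficients $\tilde\gamma_{ik}\lesssim Q_{ik}$ and $\tilde\gamma_{jk}\lesssim Q_{jk}$, so $|\Cov(W_i,W_j)|\lesssim\sum_kQ_{ik}Q_{jk}$ (plus analogous damped terms when one unit is a latent neighbor of the other). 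Summing over $i\neq j$ then gives $\bs 1^\top\bs Q\bs Q^\top\bs 1\leq N\|\bs Q\|_{\opnorm}^2=O(N)$ by Assumption \ref{a:opnorm-EE^T}(ii) — it is the operator-norm bound on $\bs Q$, not on $\bs E$, that drives this part. The paper avoids the pair-by-pair accounting entirely via the Efron--Stein inequality: flipping $Z_k$ changes $\sum_iZ_iY_i^2$ by at most $O(1+\sum_iQ_{ik})$, and $\sum_k(1+\sum_iQ_{ik})^2\lesssim N+\bs 1^\top\bs Q\bs Q^\top\bs 1=O(N)$, giving $\Var(N^{-1}\sum_iZ_iY_i^2)=O(N^{-1})$ directly. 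You should either adopt that route or redo your covariance bound keeping the $Q_{ik}Q_{jk}$ factors.
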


Theorem \ref{thm:variance-estimator-dir}(i) characterizes the bias of the variance estimator $\hat{V}_{\dir} $ with respect to $\Var(\hat{\tau}_{\dir} )$. The estimator $2\hat{V}_{\dir} $, i.e., the classic variance estimator of the Horvitz-Thompson estimator in the no-interference setting multiplied by a factor $2$, serves as a conservative estimator of $\Var(\hat{\tau}_{\dir} )$. Theorem \ref{thm:variance-estimator-dir}(ii) shows the consistency of $\hat{V}_{\dir} $ with respect to its expectation $\operatorname{E}\hat{V}_{\dir} $. Equipped with a central limit theorem and a variance estimator, we can construct a Wald-type confidence interval.
\begin{remark}
\label{rem:conservative-variance-estimator}
Under the condition discussed below, the second term in $\operatorname{E} (\hat{V}_{\dir} )-\Var(\hat{\tau}_{\dir} )$ is negligible compared to the first term, and thus $\hat{V}_{\dir} $ itself is already a conservative estimator. Specifically, if $\tilde{\bs{E}}$ is dense such that each unit is indirectly affected by its sufficiently many neighbors, i.e., $\mini \tilde{N}_i \rightarrow \infty$, then
 \[
 \operatorname{E} (\hat{V}_{\dir} ) - \Var(\hat{\tau}_{\dir} )  = \frac{1}{N^2}\sumi \theta_i^2  + o(N^{-1}).
 \]
 Thus, $\hat{V}_{\dir} $ is a conservative estimator of $\Var(\hat{\tau}_{\dir} )$. Moreover, since the first term of the bias of $\hat{V}_{\dir}$ is always positive, as shown in our numerical studies, $\hat{V}_{\dir}$ is overall conservative even when the assumption $\mini \tilde{N}_i \rightarrow \infty$ is violated. Therefore, we recommend $\hat{V}_{\dir}$ as the variance estimator to improve the power.
\end{remark}

Since $\sigma^2_{\ind,1}$ is the main source of variation and it is difficult to estimate $\sigma^2_{\ind,2}$, we derive an estimator for $\sigma^2_{\ind,1}$ and ignore $\sigma^2_{\ind,2}$ for the moment. By applying Cauchy--Schwarz inequality twice, we derive an upper bound for $\sigma^2_{\ind,1}$,
\begin{align*}
   \sigma_{\ind,1}^2 &\leq 
    \frac{1}{N^2 r_1}\sumi\Bigl\{\sumj  E_{ji}(r_1 Y_{Z_j=1}^{Z_i=1} +r_0Y_{Z_j=0}^{Z_i=1})\Bigr\}^2\\
    & \qquad \qquad \qquad\qquad + \frac{1}{N^2r_0}\sumi\Bigl\{\sumj  E_{ji}(r_1Y_{Z_j=1}^{Z_i=0}+r_0Y_{Z_j=0}^{Z_i=0})\Bigr\}^2\\
    & \leq  \frac{1}{N^2}\sumi\biggl(\sumj  E_{ji}Y_{Z_j=1}^{Z_i=1} \biggr)^2+\frac{r_0}{N^2r_1}\sumi\biggl(\sumj  E_{ji}Y_{Z_j=0}^{Z_i=1}\biggr)^2  + \\
    &  \qquad\qquad\qquad\qquad\frac{r_1}{N^2r_0}\sumi\biggl(\sumj  E_{ji}Y_{Z_j=1}^{Z_i=0}\biggr)^2  + \frac{1}{N^2}\sumi\biggl(\sumj  E_{ji}Y_{Z_j=0}^{Z_i=0}\biggr)^2.
\end{align*}
Then, we use
$$
\quad\sumi\biggl\{\sumj  E_{ji}Y_{j}\frac{I(Z_j=z)}{\operatorname{E}I(Z_j=z)} \biggr\}^2 \frac{I(Z_i=z^\prime)}{\operatorname{E }I(Z_i=z^\prime)}
$$
to estimate $ \sumi(\sumj  E_{ji}Y_{Z_j=z}^{Z_i=z^\prime})^2,$ which leads to an estimator of $\Var(\hat{\tau}_{\ind} )$,
\begin{align*}
    \hat{V}_{\ind} = &\sumi\biggl(\sumj  E_{ji}Y_{j}\frac{Z_j}{r_1}\biggr)^2 \frac{Z_i}{N^2r_1}+\sumi\biggl(\sumj  E_{ji}Y_{j} \frac{1-Z_j}{r_0}\biggr)^2 \frac{r_0Z_i}{N^2r_1^2}  + \\
        &  \qquad\qquad\sumi\biggl(\sumj  E_{ji}Y_{j} 
\frac{Z_j}{r_1}\biggr)^2 \frac{r_1(1-Z_i)}{N^2r_0^2} + \sumi\biggl(\sumj  E_{ji}Y_{j}\frac{1-Z_j}{r_0}\biggr)^2 \frac{1-Z_i}{N^2r_0}.
\end{align*}
Note that, $\hat{V}_{\ind} $ is the weighted sum of squares of treated/control neighbors' outcome totals. 

\begin{theorem}
\label{thm:variance-estimator-ind}
(i) $\operatorname{E} (2\hat{V}_{\ind} ) -\Var(\hat{\tau}_{\ind} ) \geq 0$.
If further assume Assumptions~\ref{a:bounded-parameter}--\ref{a:opnorm-EE^T} holds and $N\rhon \rightarrow \infty$, then, 
\begin{align*}
    \operatorname{E} \hat{V}_{\ind} - \Var(\hat{\tau}_{\ind} )  = \frac{1}{N^2} \sumi\left(\sumj  E_{ji}\theta_{j}\right)^2  + o(N\rhon^2).
\end{align*}
(ii) Under Assumptions~\ref{a:bounded-parameter}--\ref{a:Lindberg-condition-unadj}, we have $\hat{V}_{\ind} -\operatorname{E} (\hat{V}_{\ind} )  = \op(N\rhon^2)$.
\end{theorem}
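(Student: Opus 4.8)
The plan is to obtain an exact formula for $\operatorname{E}\hat{V}_{\ind}$ by conditioning on each $Z_i$ separately, then read off part (i) by comparing this formula with $\Var(\hat{\tau}_{\ind})$, and finally establish part (ii) by bounding $\Var(\hat{V}_{\ind})$.

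\emph{Step 1: the exact mean.} For each $i$ I would condition the $i$th summand of $\hat{V}_{\ind}$ on $Z_i$, using $\operatorname{E}[g(\bs{Z})Z_i]=r_1\operatorname{E}[g(\bs{Z})\mid Z_i=1]$ and $\operatorname{E}[g(\bs{Z})(1-Z_i)]=r_0\operatorname{E}[g(\bs{Z})\mid Z_i=0]$. Each of the four pieces then becomes a weighted conditional second moment of $\sumj E_{ji}Y_jZ_j/r_1$ or $\sumj E_{ji}Y_j(1-Z_j)/r_0$ given $Z_i=z'$; decomposing each such moment as $(\text{conditional mean})^2+(\text{conditional variance})$ and using $\operatorname{E}[Y_jZ_j/r_1\mid Z_i=z']=Y_{Z_j=1}^{Z_i=z'}$, $\operatorname{E}[Y_j(1-Z_j)/r_0\mid Z_i=z']=Y_{Z_j=0}^{Z_i=z'}$, the squared-mean parts reassemble \emph{exactly} into the Cauchy--Schwarz upper bound for $\sigma_{\ind,1}^2$ displayed right before the theorem, which I denote $\mathrm{UB}_1$. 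Thus $\operatorname{E}\hat{V}_{\ind}=\mathrm{UB}_1+R$, where $R$ is the sum over $i$ of four nonnegative conditional variances, so $R\ge 0$; this identity is exact and uses only the Bernoulli design.

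\emph{Step 2: part (i).} Plugging the HATE-model identity $Y_{Z_j=z}^{Z_i=z'}=\alpha_j+\theta_j z+r_1\sum_\ell\tilde{\gamma}_{j\ell}+\tilde{\gamma}_{ji}(z'-r_1)$ into both $\mathrm{UB}_1$ and $\sigma_{\ind,1}^2$, the $\alpha_j$- and $\sum_\ell\tilde{\gamma}_{j\ell}$-dependent parts enter identically and cancel in the difference, and a short algebraic identity yields
\[
\mathrm{UB}_1-\sigma_{\ind,1}^2=\frac{1}{N^2}\sumi\Bigl[\bigl(\sumj E_{ji}\theta_j\bigr)^2+\bigl(\sumj E_{ji}\tilde{\gamma}_{ji}\bigr)^2\Bigr]\geq 0 .
\]
Hence $\operatorname{E}(2\hat{V}_{\ind})-\Var(\hat{\tau}_{\ind})=\mathrm{UB}_1+\frac{1}{N^2}\sumi[(\sumj E_{ji}\theta_j)^2+(\sumj E_{ji}\tilde{\gamma}_{ji})^2]+2R-\sigma_{\ind,2}^2$. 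For the first assertion it then remains to check this is nonnegative; since $\mathrm{UB}_1\ge 0$ and $R\ge 0$, I would bound the cross term of $\sigma_{\ind,2}^2$ by Cauchy--Schwarz, $\sigma_{\ind,2}^2\le\frac{2}{N^2}\sum_{i\ne j}(E_{ij}\theta_i+\sumk E_{kj}\tilde{\gamma}_{ki})^2$, expand $2R$ into its $\theta$- and $\tilde{\gamma}$-components, and match the resulting sums of squares termwise, giving $\operatorname{E}(2\hat{V}_{\ind})\ge\Var(\hat{\tau}_{\ind})$. For the second assertion, add Assumptions~\ref{a:bounded-parameter}--\ref{a:opnorm-EE^T} and $N\rhon\to\infty$: by \Cref{a:bounded-parameter} and \Cref{a:opnorm-EE^T}(ii), $\frac{1}{N^2}\sumi(\sumj E_{ji}\tilde{\gamma}_{ji})^2\le\frac{C^2}{N^2}\sumi(\sumj Q_{ji})^2=O(N^{-1})$; conditionally on $Z_i$, $\sumj E_{ji}Y_jZ_j/r_1$ is a sum whose linear-in-$Z_j$ part has variance $O(M_i)$ and whose quadratic remainder also contributes $O(M_i)$, so $R\le\frac{\mathrm{const}}{N^2}\sumi M_i=O(\rhon)$; and $\sigma_{\ind,2}^2=O(\rhon)$ by \Cref{prop:order-and-formula-of-tau-ind}. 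All three of these are $o(N\rhon^2)$ when $N\rhon\to\infty$, so $\operatorname{E}\hat{V}_{\ind}-\Var(\hat{\tau}_{\ind})=\frac{1}{N^2}\sumi(\sumj E_{ji}\theta_j)^2+o(N\rhon^2)$.

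\emph{Step 3: part (ii) and the main obstacle.} $\hat{V}_{\ind}$ is a polynomial of bounded degree in the independent Bernoulli variables $Z_1,\dots,Z_N$, and by \Cref{a:bounded-parameter} its $i$th summand $T_i$ satisfies $|T_i|\le\mathrm{const}\cdot M_i^2$ almost surely, so it suffices to prove $\Var(\hat{V}_{\ind})=o(N^2\rhon^4)$ and apply Chebyshev's inequality. I would expand $\hat{V}_{\ind}-\operatorname{E}\hat{V}_{\ind}=\frac{1}{N^2}\sumi(T_i-\operatorname{E}T_i)$ into its multilinear (Hoeffding) components; by orthogonality $\Cov(T_i,T_{i'})$ is a sum over index sets contained in the intersection of the dependency sets $\{i\}\cup\{j:E_{ji}=1\}\cup\{\ell:\exists\,j,\ E_{ji}=\tilde{E}_{j\ell}=1\}$ of $T_i$ and $T_{i'}$, and I would bound the surviving terms by products of network sums. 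Summing, I would invoke \Cref{a:opnorm-EE^T} and \Cref{a:Lindberg-condition-unadj} --- in particular $\sumi N_i^2=\sumi M_i^2=O(N^3\rhon^2)$, $\maxi N_i^2=\maxi M_i^2=o(N^3\rhon^2)$, $\sumj Q_{ji}=o(N^{1/2})$, and the operator-norm bounds --- the $\maxi$-conditions forcing the ``diagonal'' contributions to be of strictly smaller order than $N^2\rhon^4$. This $\Var(\hat{V}_{\ind})$ bound is the main obstacle: controlling the many network sums generated by the degree-four structure of $\hat{V}_{\ind}$ and showing that the maximum-degree pieces are lower order than $(\operatorname{E}\hat{V}_{\ind})^2\asymp N^2\rhon^4$ requires the Lindeberg-type conditions of \Cref{a:Lindberg-condition-unadj} together with careful bookkeeping; by contrast, the matching argument in Step 2 and the algebraic cancellations above are delicate but amount to exact computation once the HATE formulas are substituted.
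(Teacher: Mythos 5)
Your overall strategy is sound and, at the level of the mean computation, it is essentially the paper's argument with a cleaner entry point: conditioning each summand on $Z_i$ and splitting the conditional second moment into $(\text{conditional mean})^2$ plus conditional variance does reproduce the identity $\operatorname{E}\hat{V}_{\ind}=\mathrm{UB}_1+R$ with $R\ge 0$, and your identity $\mathrm{UB}_1-\sigma_{\ind,1}^2=N^{-2}\sumi\{(\sumj E_{ji}\theta_j)^2+(\sumj E_{ji}\tilde{\gamma}_{ji})^2\}$ is exactly the paper's relation, obtained by applying $r_1a^2+r_0b^2=(r_1a+r_0b)^2+r_1r_0(a-b)^2$ twice. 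Your order bounds $R=O(\rhon)$, $N^{-2}\sumi(\sumj E_{ji}\tilde{\gamma}_{ji})^2=O(N^{-1})$, and $\sigma^2_{\ind,2}=O(\rhon)$, all of which follow from \Cref{a:bounded-parameter} and \Cref{a:opnorm-EE^T} as you indicate, then give the displayed expansion when $N\rhon\rightarrow\infty$, matching the paper.

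Two steps, however, are left as sketches precisely where the work lies. First, for $\operatorname{E}(2\hat{V}_{\ind})\ge\Var(\hat{\tau}_{\ind})$ you need more than $R\ge 0$: after your reduction the required inequality is that $2R$ dominates $\sigma^2_{\ind,2}$ up to manifestly nonnegative terms, and since $\sigma^2_{\ind,2}$ contains the square term $\tilde{S}_5:=N^{-2}\sum_{i\ne j}(E_{ij}\theta_i+\sumk E_{kj}\tilde{\gamma}_{ki})^2$, you must show that $R$ decomposes as $\tilde{S}_5$ plus a nonnegative remainder. This is true, but it is exactly the content of the paper's explicit covariance computation (its $S_{11}+S_{21},\ldots$ bookkeeping), which in particular requires a Cauchy--Schwarz step to dominate the genuinely sign-indefinite piece $N^{-2}\sum_i\sum_{j'\ne j}E_{ji}E_{j'i}\tilde{\gamma}_{jj'}\tilde{\gamma}_{j'j}$ that survives in the remainder; ``match the resulting sums of squares termwise'' correctly names what must be done but is not yet a proof. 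Second, part (ii) is the bulk of the paper's proof and your plan only names the tools. The paper splits the $i$th summand into its diagonal ($j=j'$) part, handled by the Efron--Stein inequality, and its off-diagonal quadratic part, which it expands into six sub-sums and controls by an exhaustive case analysis over coincidence patterns of the index tuples, invoking \Cref{a:opnorm-EE^T} and each clause of \Cref{a:Lindberg-condition-unadj} at specific places (e.g.\ $\maxi M_i^2=o(N^3\rhon^2)$ to kill the diagonal contributions and $\sumj Q_{ji}=o(N^{1/2})$ for the terms carrying two $\tilde{\gamma}$ factors). Your Hoeffding-decomposition framing is a reasonable alternative organization of the same computation, but until those covariance sums are actually bounded the claim $\Var(\hat{V}_{\ind})=o(N^2\rhon^4)$ is unestablished, and this is the one place where the proof could still fail if a term were mis-estimated.
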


Theorem \ref{thm:variance-estimator-ind}(i) characterizes the bias of $\hat{V}_{\ind} $ with respect to $\Var(\hat{\tau}_{\ind} )$ and demonstrates the conservativeness of $2\hat{V}_{\ind} $. Moreover, when $N\rhon\rightarrow \infty$,  $\hat{V}_{\ind} $ is also a conservative estimator of $\Var(\hat{\tau}_{\ind} )$. \Cref{thm:variance-estimator-ind}(ii) shows the consistency of $\hat{V}_{\ind} $ with respect to its expectation $\operatorname{E}\hat{V}_{\ind} $. By \Cref{thm:CLT-no-adjust} and \Cref{thm:variance-estimator-ind}, the Wald-type confidence intervals based on $2\hat{V}_{\ind}$ and $\hat{V}_{\ind}$ are asymptotically valid even when $N^{1/2}\rhon \rightarrow \infty$ and $\rhon = o(1)$, the asymptotic regime that $\hat{\tau}_{\ind}$ is inconsistent. However, the confidence intervals may be too wide and lose precision in this case.

Our estimator of $\Var(\hat{\tau}_{\tot} )$ is motivated by the expression of $\sigma_{\tot,1}^2$. Analogously, we derive an upper bound for $\sigma_{\tot,1}^2$,
\begin{align*}
     &  \frac{1}{N^2}\sumi\biggl(Y_{Z_i=1}+\sumj  E_{ji}Y_{Z_j=1}^{Z_i=1} \biggr)^2+\frac{r_0}{N^2r_1}\sumi\biggl(Y_{Z_i=1}+\sumj  E_{ji}Y_{Z_j=0}^{Z_i=1}\biggr)^2  + \\
    &  \qquad\qquad\qquad\frac{r_1}{N^2r_0}\sumi\biggl(Y_{Z_i=0}+\sumj  E_{ji}Y_{Z_j=1}^{Z_i=0}\biggr)^2  + \frac{1}{N^2}\sumi\biggl(Y_{Z_i=0}+\sumj  E_{ji}Y_{Z_j=0}^{Z_i=0}\biggr)^2,
\end{align*}
and an estimator for this upper bound is
\begin{align*}
    \hat{V}_{\tot} = &\sumi\biggl(Y_i+\sumj  E_{ji}Y_{j}\frac{Z_j}{r_1}\biggr)^2 \frac{Z_i}{N^2r_1}+\sumi\biggl(Y_i+\sumj  E_{ji}Y_{j} \frac{1-Z_j}{r_0}\biggr)^2 \frac{r_0Z_i}{N^2r_1^2}  + \\
        &  \qquad\sumi\biggl(Y_i+\sumj  E_{ji}Y_{j} 
\frac{Z_j}{r_1}\biggr)^2 \frac{r_1(1-Z_i)}{N^2r_0^2} + \sumi\biggl(Y_i+\sumj  E_{ji}Y_{j}\frac{1-Z_j}{r_0}\biggr)^2 \frac{1-Z_i}{N^2r_0}.
\end{align*}
Note that, $\hat{V}_{\tot} $ is the weighted sum of squares of treated/control neighbors' outcome totals plus ego outcomes of treated/control units.

\begin{theorem}
\label{thm:variance-estimator-tot}
(i) $\operatorname{E}(2\hat{V}_{\tot} )-\Var(\hat{\tau}_{\tot} ) \geq 0$.
If further assume Assumptions~\ref{a:bounded-parameter}--\ref{a:opnorm-EE^T} hold and $N\rhon \rightarrow \infty$, then, 
\begin{align*}
    \operatorname{E} (\hat{V}_{\tot} ) - \Var(\hat{\tau}_{\tot} )  = \frac{1}{N^2} \sumi\left(\theta_i +\sumj  E_{ji}\theta_{j}\right)^2  + o(N\rhon^2).
\end{align*}
(ii) Under Assumptions~\ref{a:bounded-parameter}--\ref{a:Lindberg-condition-unadj}, we have $\hat{V}_{\tot} -\operatorname{E} \hat{V}_{\tot}   = \op(N\rhon^2)$.
\end{theorem}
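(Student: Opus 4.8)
\emph{Overview.} The plan is to treat \Cref{thm:variance-estimator-tot} as the global-effect counterpart of \Cref{thm:variance-estimator-dir,thm:variance-estimator-ind} and to run the same two-step argument: first obtain $\operatorname{E}\hat{V}_{\tot}$ in closed form and subtract $\Var(\hat{\tau}_{\tot})=\sigma^2_{\tot,1}+\sigma^2_{\tot,2}$ from \Cref{prop:order-of-tau-tot}; then show $\hat{V}_{\tot}-\operatorname{E}\hat{V}_{\tot}=\op(N\rhon^2)$ by a second-moment bound. I would expand each of the four quadratic blocks of $\hat{V}_{\tot}$ and take expectations by conditioning on $Z_i$, using that the HATE model makes every $Y_j$ affine in $\bs Z$ with independent coordinates; since $\operatorname{E}[Y_jZ_j\mid Z_i=z]=r_1Y^{Z_i=z}_{Z_j=1}$ and $\operatorname{E}[Y_j(1-Z_j)\mid Z_i=z]=r_0Y^{Z_i=z}_{Z_j=0}$, the conditional mean of the $i$-th bracket in the four blocks equals exactly one of $A_i=Y_{Z_i=1}+\sumj E_{ji}Y^{Z_i=1}_{Z_j=1}$, $A_i'=Y_{Z_i=1}+\sumj E_{ji}Y^{Z_i=1}_{Z_j=0}$, $B_i=Y_{Z_i=0}+\sumj E_{ji}Y^{Z_i=0}_{Z_j=1}$, $B_i'=Y_{Z_i=0}+\sumj E_{ji}Y^{Z_i=0}_{Z_j=0}$, the four vectors appearing in the Cauchy--Schwarz upper bound of $\sigma^2_{\tot,1}$ displayed just before the theorem. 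Hence $\operatorname{E}\hat{V}_{\tot}=\frac{1}{N^2}\sumi(A_i^2+\frac{r_0}{r_1}A_i'^2+\frac{r_1}{r_0}B_i^2+B_i'^2)+\frac{1}{N^2}\sumi R_i$, where $R_i$ is the (nonnegative) weighted sum of the four conditional variances of the brackets.

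\emph{Step 1 (bias).} Under \Cref{a:bounded-parameter,a:opnorm-EE^T} each bracket is a degree-two multilinear polynomial in $O(M_i)$ coordinates with bounded coefficients, so $\sumi R_i=O(\sumi M_i)=O(N^2\rhon)$, i.e. $\frac1{N^2}\sumi R_i=O(\rhon)=o(N\rhon^2)$ once $N\rhon\to\infty$. Subtracting $\sigma^2_{\tot,1}$, two applications of the identity $xa^2+yb^2-(xa+yb)^2/(x+y)=\frac{xy}{x+y}(a-b)^2$, combined with $Y_{Z_i=1}-Y_{Z_i=0}=\theta_i$ and $Y^{Z_i=z}_{Z_j=1}-Y^{Z_i=z}_{Z_j=0}=\theta_j$ (both immediate from the HATE model), show that the Cauchy--Schwarz upper bound exceeds $\sigma^2_{\tot,1}$ by exactly $\frac1{N^2}\sumi(\theta_i+\sumj E_{ji}\tilde{\gamma}_{ji})^2+\frac1{N^2}\sumi(\sumj E_{ji}\theta_j)^2$. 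The first of these is $O(N^{-1})$ since $\sumi(\sumj Q_{ji})^2=\bs1^\top\bs Q\bs Q^\top\bs1=O(N)$ by \Cref{a:opnorm-EE^T}(ii), while $\sigma^2_{\tot,2}=O(\rhon)$ by \Cref{prop:order-of-tau-tot}; both are $o(N\rhon^2)$ when $N\rhon\to\infty$. Collecting terms gives $\operatorname{E}\hat{V}_{\tot}-\Var(\hat{\tau}_{\tot})=\frac1{N^2}\sumi(\sumj E_{ji}\theta_j)^2+o(N\rhon^2)$, and since $\frac1{N^2}\sumi\theta_i^2+\frac2{N^2}\sumi\theta_i\sumj E_{ji}\theta_j=O(N^{-1})+O(\rhon)=o(N\rhon^2)$, this equals the claimed $\frac1{N^2}\sumi(\theta_i+\sumj E_{ji}\theta_j)^2+o(N\rhon^2)$.

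\emph{Conservativeness and concentration.} The inequality $\operatorname{E}(2\hat{V}_{\tot})-\Var(\hat{\tau}_{\tot})\ge0$ must hold exactly (no asymptotics), so here I would exhibit it as a sum of nonnegative quantities rather than invoke the order estimates above. One copy of $\operatorname{E}\hat{V}_{\tot}$ already dominates $\sigma^2_{\tot,1}$ (the Cauchy--Schwarz gap and every $R_i$ are nonnegative), so the point is to cover $\sigma^2_{\tot,2}$; the clean organizing device is the decomposition $\hat{V}_{\tot}=\hat{V}_{\dir}+\hat{V}_{\ind}+2\hat{C}$ visible from the four blocks (the ego-squared parts reassemble into $\hat{V}_{\dir}$, the neighbour-squared parts into $\hat{V}_{\ind}$, and $\hat{C}$ is the ego--neighbour cross term), after which $\operatorname{E}(2\hat{V}_{\tot})-\Var(\hat{\tau}_{\tot})$ splits as $[\operatorname{E}(2\hat{V}_{\dir})-\Var(\hat{\tau}_{\dir})]+[\operatorname{E}(2\hat{V}_{\ind})-\Var(\hat{\tau}_{\ind})]+[4\operatorname{E}\hat{C}-2\Cov(\hat{\tau}_{\dir},\hat{\tau}_{\ind})]$, the first two brackets being nonnegative by \Cref{thm:variance-estimator-dir}(i) and \Cref{thm:variance-estimator-ind}(i) and their slack (as for $\hat V_{\dir}$, one uses $\operatorname{E}\hat V_{\dir}\ge\frac1{N^2}\sum_{i,l}\tilde\gamma_{il}^2\ge|\frac1{N^2}\sum_{i,j}\tilde\gamma_{ij}\tilde\gamma_{ji}|$) absorbing the cross discrepancy through $|\Cov(\hat{\tau}_{\dir},\hat{\tau}_{\ind})|\le\sqrt{\Var(\hat{\tau}_{\dir})\Var(\hat{\tau}_{\ind})}$; alternatively one bounds $\sumi R_i$ below by its degree-two Hoeffding component and matches it against $\sigma^2_{\tot,2}$. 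For part (ii), write $\hat{V}_{\tot}=\sumi g_i(\bs Z)$, with $g_i$ a degree-five multilinear polynomial in the coordinates of $\bs Z$ indexed by $i$, its in-neighbours $\{j:E_{ji}=1\}$, and the latent out-neighbours of those $j$; by Chebyshev it suffices to show $\Var(\hat{V}_{\tot})=o(N^2\rhon^4)$, and expanding $\Var(\hat{V}_{\tot})=\sum_{i,i'}\sum_{\emptyset\ne S}\hat g_i(S)\hat g_{i'}(S)$ (a term survives only when $S$ lies in the common support) and summing the bounds over the possible overlap patterns reduces everything to $\sumi M_i^2,\sumi N_i^2=O(N^3\rhon^2)$ (\Cref{a:opnorm-EE^T}) and $\maxi M_i^2,\maxi N_i^2=o(N^3\rhon^2)$, $\sumj Q_{ji}=o(N^{1/2})$ (\Cref{a:Lindberg-condition-unadj}), which are exactly calibrated to make each contribution $o(N^2\rhon^4)$, just as in \Cref{thm:variance-estimator-ind}(ii).

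\emph{Main obstacle.} The bottleneck is the combinatorial bookkeeping. In Step 1, the expectation of the square of a degree-two form in $\bs Z$ produces a long list of covariance terms that must be sorted into the genuinely contributing $\frac1{N^2}\sumi(\sumj E_{ji}\theta_j)^2$ piece and lower-order residuals; in Step 2, the variance of the degree-five form $\hat{V}_{\tot}$ generates a still larger taxonomy of overlap patterns, each of which has to be shown $o(N^2\rhon^4)$ under precisely \Cref{a:opnorm-EE^T,a:Lindberg-condition-unadj}. A secondary delicate point is keeping the conservativeness inequality exact rather than asymptotic, which is why the $\hat{V}_{\dir}+\hat{V}_{\ind}+2\hat{C}$ accounting (or an explicit Hoeffding lower bound on $\sumi R_i$) is needed in place of the crude magnitude estimates used for the leading bias term.
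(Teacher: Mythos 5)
Your bias computation and your plan for part (ii) are essentially the paper's own route: expand the four blocks, identify the conditional means with the four vectors in the Cauchy--Schwarz bound of $\sigma^2_{\tot,1}$, apply the identity $xa^2+yb^2=(xa+yb)^2/(x+y)+\tfrac{xy}{x+y}(a-b)^2$ twice, and show everything else is $O(\rhon)=o(N\rhon^2)$. Your exact gap $\frac1{N^2}\sumi(\sumj E_{ji}\theta_j)^2+\frac1{N^2}\sumi(\theta_i+\sumj E_{ji}\tilde\gamma_{ji})^2$ is the correct algebra, and your conversion to the stated $\frac1{N^2}\sumi(\theta_i+\sumj E_{ji}\theta_j)^2$ at a cost of $O(N^{-1}+\rhon)=o(N\rhon^2)$ is fine. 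Two caveats. First, the one-line justification $\sumi R_i=O(\sumi M_i)$ is too glib: $R_i$ contains cross-covariances $\sum_{j\ne j'}E_{ji}E_{j'i}\sum_k\tilde\gamma_{jk}\tilde\gamma_{j'k}$ over pairs of in-neighbours, and a degree-two form in $M_i$ coordinates generically has variance $O(M_i^2)$; you need the row-normalization of $\tilde\gamma$ together with the trick $E_{kj}E_{k'j}\le E_{k'j}$ and $\|\bs{Q}\|_{\opnorm}=O(1)$, $\|\bs{E}\|_{\opnorm}=O(N\rhon)$ to get $\sumi R_i=O(N^2\rhon)$ (the conclusion is correct, the reason is not "bounded coefficients, $O(M_i)$ coordinates"). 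Second, for (ii) the paper is more economical: it writes $\hat V_{\tot}-\hat V_{\ind}-\hat V_{\dir}$ as four ego--neighbour cross sums, kills each with the Efron--Stein inequality (each is $\op(\rhon)$), and reuses \Cref{thm:variance-estimator-dir}(ii) and \Cref{thm:variance-estimator-ind}(ii); your from-scratch Hoeffding expansion of the degree-five form $\hat V_{\tot}$ would duplicate the entire overlap taxonomy already needed for $\hat V_{\ind}$.

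The genuine gap is in the exact conservativeness inequality $\operatorname{E}(2\hat V_{\tot})\ge\Var(\hat\tau_{\tot})$. Your primary route --- split $\hat V_{\tot}=\hat V_{\dir}+\hat V_{\ind}+2\hat C$ and absorb $4\operatorname{E}\hat C-2\Cov(\hat\tau_{\dir},\hat\tau_{\ind})$ using $|\Cov|\le\sqrt{\Var(\hat\tau_{\dir})\Var(\hat\tau_{\ind})}$ --- does not close. Bounding $-2\Cov\ge-\Var(\hat\tau_{\dir})-\Var(\hat\tau_{\ind})$ leaves you needing $\operatorname{E}\hat V_{\dir}\ge\Var(\hat\tau_{\dir})$ and $\operatorname{E}\hat V_{\ind}\ge\Var(\hat\tau_{\ind})$ \emph{without} the factor $2$, which Theorems \ref{thm:variance-estimator-dir}(i) and \ref{thm:variance-estimator-ind}(i) do not provide (and which fail in general; see \Cref{rem:conservative-variance-estimator}), plus $\operatorname{E}\hat C\ge0$, which is false in general since $\operatorname{E}\hat C$ is a sum of products such as $Y_{Z_i=1}Y_{Z_j=1}^{Z_i=1}$ that can be negative. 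The alternative you mention only in passing is the one that works and is what the paper does: show $\operatorname{E}\hat V_{\tot}\ge S_{\tot,1}+S_{\tot,2}$ exactly, where $S_{\tot,2}=\frac1{N^2}\sum_{i\ne j}u_{ij}^2$ with $u_{ij}=\tilde\gamma_{ji}+E_{ij}\theta_i+\sum_k E_{kj}\tilde\gamma_{ki}$ (the nonnegative part of the degree-two Hoeffding component), note $S_{\tot,1}\ge\sigma^2_{\tot,1}$ by the Cauchy--Schwarz gaps, and use the off-diagonal bound $|\frac1{N^2}\sum_{i\ne j}u_{ij}u_{ji}|\le S_{\tot,2}$ to get $\sigma^2_{\tot,2}\le 2S_{\tot,2}$, whence $2\operatorname{E}\hat V_{\tot}\ge 2\sigma^2_{\tot,1}+2S_{\tot,2}\ge\Var(\hat\tau_{\tot})$. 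As written, your proof of part (i)'s inequality would not go through without replacing the three-bracket argument by this one.
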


Theorem \ref{thm:variance-estimator-tot}(i) characterizes the bias of $\hat{V}_{\tot} $ with respect to $\Var(\hat{\tau}_{\tot} )$ and the conservativeness of $2\hat{V}_{\tot} $. If $N\rho_N$, the average number of neighbors, tends to infinity, $\hat{V}_{\tot}$ is also a conservative estimator. Theorem \ref{thm:variance-estimator-tot}(ii) shows the consistency of $\hat{V}_{\tot} $ with respect to its expectation $\operatorname{E}\hat{V}_{\tot} $. Similar to IATE, the Wald-type confidence interval based on $2\hat{V}_{\tot}$ and $\hat{V}_{\tot}$  is asymptotically valid even in the regime where  $\hat{\tau}_{\tot} $ is inconsistent.

\begin{remark}
\label{rmk:comparing-with-variance-estimator-in-the-literature}

Our variance estimators rely on the HATE model. \cite{aronow2017estimating} developed a general approach for estimating variance, which has been adopted by \cite{cortez2023exploiting,bojinov2023design}, among others. Specifically, let $\mathcal{N}_i$ be the set of neighbors of unit $i$, $\bs{z}_{\mathcal{N}_i}$ and $\bs{Z}_{\mathcal{N}_i}$ represent the subvectors of assignment vectors $\bs{z}$ and $\bs{Z}$ corresponding to $i$'s neighbors. If we assume one's potential outcomes only depend on its neighbors' treatment vector, their approach involves expressing the estimand and estimators as:
    \[
    N^{-1}\sumi\sum_{\bs{Z}_{\mathcal{N}_i}\in \{0,1\}^{|\mathcal{N}_i|}} Y_i(\bs{Z}_{\mathcal{N}_i})w_i(\bs{Z}_{\mathcal{N}_i})
    \]
and expanding the variance as:
$$
   \begin{aligned}
       &\frac{1}{N^2}\sum_{1 \leq i,j\leq N}\sum_{\bs{z}_i\in\{0,1\}^{|\mathcal{N}_i|}} \sum_{\bs{z}_j\in\{0,1\}^{|\mathcal{N}_j|}} Y_i(\bs{z}_i)w_i(\bs{z}_i)\:Y_j(\bs{z}_j)w_j(\bs{z}_j) \cdot \\     &\qquad \qquad \qquad \qquad \qquad \qquad \qquad \qquad \qquad   \mathrm{Cov}\big(I(\bs{Z}_{\mathcal{N}_i}=\bs{z}_{\mathcal{N}_i}),I(\bs{Z}_{\mathcal{N}_j}=\bs{z}_{\mathcal{N}_j})\big),
   \end{aligned}
   $$
where $w_i(\bs{Z}_{\mathcal{N}_i})$ is the sample weight  of unit $i$ with neighbor assignment $\bs{Z}_{\mathcal{N}_i}$ used in the estimand. They then apply the Cauchy--Schwarz inequality and inverse probability weighting to derive a conservative variance estimator. Unlike ours, their estimator lacks a closed-form expression and is computationally inefficient due to the summation of $N^2 2^{2N\rhon}$ terms. Moreover, their estimator may exhibit poor performance since the inverse probability weighting relies on the probability $\operatorname{E}(I(\bs{Z}_{\mathcal{N}_j}=\bs{z}_{\mathcal{N}_j})I(\bs{Z}_{\mathcal{N}_i}=\bs{z}_{\mathcal{N}_i}))$, which tends to $0$ as the average degree approaches infinity. \cite{harshaw2023design} derived an unbiased variance estimator for the linear exposure model by solving $N^2$ linear equations. However, their method does not extend to the HATE model and is not applicable when there exists linear equations with no solution.
\end{remark}

\section{Eigenvector-adjusted estimators of indirect and \\global average treatment effect}
\label{sec:eigen-value-adjustment}

The consistency of $\hat{\tau}_{\dir} $ and $\hat{\tau}_{\ind} $ requires $\rhon = o(N^{-1/2})$, i.e., the network is not too dense. To address this issue, we propose an eigenvector regression adjustment method for $\hat{\tau}_{\ind} $ and $\hat{\tau}_{\tot} $. 
Under appropriate conditions, the proposed method ensures consistency even when $\rhon = o(1)$.

Let's first find the reason for the inconsistency of $\hat{\tau}_{\ind} $ and $\hat{\tau}_{\tot} $ when $N\rhon\rightarrow\infty$. Let $\bs{c} = (r_1Y_{Z_1=1}+r_0Y_{Z_1=0},\ldots,r_1Y_{Z_N=1}+r_0Y_{Z_N=0})^\top$. When $N\rhon\rightarrow\infty$ and Assumption~\ref{a:bounded-parameter} holds, we can prove that (see Supplementary Materials for details)
$$
\begin{aligned}
    \Var(\hat{\tau}_{\ind} ) &= \frac{1}{N^2r_1r_0}\sumi\Bigl\{\sumj  E_{ji}(r_1Y_{Z_j=1}+r_0Y_{Z_j=0})\Bigr\}^2+O(\rhon)\\
&= \frac{1}{N^2r_1r_0} \bs{c}^\top\bs{E}\bs{E}^\top \bs{c} +O(\rhon).
\end{aligned}
$$
Let $\lambda_1\geq \ldots \geq \lambda_N$ be the ordered eigenvalues of $\bs{E}\bs{E}^\top$ with corresponding eigenvectors $\bs{V} = (\bs{V}_1,\ldots,\bs{V}_N)$, that is, 
 \[
 \bs{E}\bs{E}^\top = (\bs{V}_1,\ldots,\bs{V}_N)\diag(\lambda_1,\ldots,\lambda_N)(\bs{V}_1,\ldots,\bs{V}_N)^\top.
 \]

If the eigenvalues of $\bs{E}\bs{E}^\top$ are spiked, then the variance explosion is primarily caused by $\bs{c}$ that is correlated with the dominant eigenvectors of $\bs{E}\bs{E}^\top$, i.e., the eigenvectors corresponding to large eigenvalues. Meanwhile, note that $\bs{1}^\top \bs{E}\bs{E}^\top \bs{1}/N^2 = N\rhon^2$. Strong correlation between $\bs{c}$ and $\bs{1}$ also induces a large variance. Similar discussion applies to $\Var(\hat{\tau}_{\tot} )$.
Based on the above observation, we propose to adjust the correlation between $Y_{Z_i=z}$ and the dominant eigenvectors of $\bs{E}\bs{E}^\top$ as well as $\bs{1}$ to reduce the variance of the IATE estimator. Specifically, let $(\bs{V}_1,\ldots\bs{V}_K)\in \mathbb{R}^{N\times K}$ be the $K$ dominant eigenvectors of $\bs{E}\bs{E}^\top$. We regress $Y_i$ on  $(\bs{1},\bs{V}_1,\ldots,\bs{V}_K)$ for $\{i:Z_i=1\}$ and $\{i:Z_i=0\}$, respectively. Let $V_{ik}$ be the $i$th element of $\bs{V}_{k}$ for $k=1,\ldots,K$, and let $\bs{W}_i = (1,V_{i1},\ldots,V_{iK})^\top$. The Ordinary Least Squares (OLS) regression coefficient is  
\[
\hat{\bs{\beta}}_z= \argmin_{\bs{\beta}} \sumi (Y_i-\bs{\beta}^\top \bs{W}_i)^2I(Z_i=z).
\]
Let $\hat{e}_i = Y_i-\sum_{z\in \{0,1\}}I(Z_i=z)\hat{\bs{\beta}}_z^\top \bs{W}_i $ be the residual of the OLS regression. The EV-adjusted IATE and GATE estimators are defined as
\[\hat{\tau}_{\ind} ^{\ev} = \frac{1}{N}\sumi\sumj  E_{ij} \Bigl\{\frac{\hat{e}_iZ_j}{r_1} - \frac{\hat{e}_i(1-Z_j)}{r_0}\Bigr\},\quad \hat{\tau}_{\tot} ^{\ev} = \hat{\tau}_{\dir} +\hat{\tau}_{\ind} ^{\ev}.\]
\begin{remark}
\label{rmk:use-strata-indicator-as-regressor}
$\bs{V}_k$ represents the estimated membership indicator vector for community $k$ in the stochastic block model and the estimated eigencomponent of a graphon model \citep{li2022random}.  We can interpret the regression with $\bs{W}_i$ as adjusting the outcome for the main effect and the community effect. Importantly, we adopt the design-based inference framework, and thus, the theoretic results allow the stochastic block model to be misspecified. Let $\mathcal{N} = \{1,\ldots,N\}$ be the population set. If the network has several components with no between-cluster edges and $\cup_{h=1}^H \mathcal{S}_h = \mathcal{N}$ with $\mathcal{S}_h\cap \mathcal{S}_{h^\prime} = \emptyset$, for all $h\ne h^\prime$. We can also adjust the effect of the components by using $(I(i\in \mathcal{S}_1),\ldots, I(i\in \mathcal{S}_H))_{i=1}^N$ in the regression.
\end{remark}

Since linear regression yields the same output $\hat{e}_i$ under linear transformation of $\bs{W}_i$, $i=1,\ldots,N$, without loss of generality, we normalize $\bs{W}_i$ such that $N^{-1}\sum_{i=1}^{N} \bs{W}_i\bs{W}_i^\top = \bs{I}_{K+1}$. Next, we derive the asymptotic distributions of $\hat{\tau}_{\ind} ^{\ev}$ and $\hat{\tau}_{\tot} ^{\ev}$. Define the oracle regression coefficient for $z\in\{0,1\}$ as $\bs{\beta}_{z,\ora} = \sumi \bs{W}_i Y_{Z_i=z}/N,$ with the oracle residual 
\begin{align*}
    e_i &= Y_i-Z_i\bs{\beta}_{1,\ora}^\top \bs{W}_i-(1-Z_i)\bs{\beta}_{0,\ora}^\top \bs{W}_i\\
        &=(\alpha_i-\bs{\beta}_{0,\ora}^\top \bs{W}_i) + (\theta_i-\bs{\beta}_{1,\ora}^\top \bs{W}_i+\bs{\beta}_{0,\ora}^\top \bs{W}_i) Z_i + \sumj  \tilde{\gamma}_{ij}Z_j.
\end{align*}
Note that
\begin{align*}
    \bs{\beta}_{1,\ora}-\bs{\beta}_{0,\ora} &= \frac{1}{N}\sumi \bs{W}_i (Y_{Z_i=1}-Y_{Z_i=0})= \frac{1}{N}\sumi \bs{W}_i \theta_i.
\end{align*}
Therefore, $\theta_i-\bs{\beta}_{1,\ora}^\top \bs{W}_i+\bs{\beta}_{0,\ora}^\top \bs{W}_i$ is the OLS residual of regressing $\theta_i$ on $\bs{W}_i$. Let $\theta_{\backslash \bs{W},i} = \theta_i-\bs{\beta}_{1,\ora}^\top \bs{W}_i+\bs{\beta}_{0,\ora}^\top \bs{W}_i$, $e_{Z_i=z_i}^{Z_j=z_j} = \operatorname{E}(e_i\mathrel{|}Z_i=z_i,Z_j=z_j)$, and $e_{Z_i=z_i} = \operatorname{E}(e_i\mathrel{|}Z_i=z_i)$.
The oracle EV-adjusted IATE and GATE are defined as
\[
\tilde{\tau}_{\ind} ^{\ev} = \frac{1}{N}\sumi\sumj  E_{ij} \Bigl\{\frac{{e}_iZ_j}{r_1} - \frac{{e}_i(1-Z_j)}{r_0}\Bigr\},\quad \tilde{\tau}_{\tot} ^{\ev} = \hat{\tau}_{\dir} +\tilde{\tau}_{\ind} ^{\ev}.
\]
It's easy to see that $\Var(\tilde{\tau}_{\ind} ^{\ev})$ is the same as $\Var(\hat{\tau}_{\ind} )$ appeared in \Cref{prop:order-and-formula-of-tau-ind} except that $\theta_i$ and $Y_{Z_j=z_j}^{Z_i=z_i}$ are replaced with $\theta_{\backslash \bs{W},i}$ and $e_{Z_j=z_j}^{Z_i=z_i}$, respectively. The same is true for $\tilde{\tau}_{\tot} ^{\ev}$. Let $\bs{c}_e =(r_1 e_{Z_1=1}+r_0e_{Z_1=0},\ldots,r_1e_{Z_N=1}+r_0e_{Z_N=0})^\top$ and $\alpha_{\backslash \bs{W},i}$ be the OLS residual of regressing $\alpha_i$ on $\bs{W}_i$, 
\begin{align*}
    \alpha_{\backslash \bs{W},i} = \alpha_i - \bs{\beta}_\alpha^\top \bs{W}_i,\quad \text{where} \quad \bs{\beta}_\alpha = \frac{1}{N}\sumi \bs{W}_i \alpha_i.
\end{align*}
We define $h_{\backslash \bs{W},i }$ analogously as the OLS residual of regressing $h_i = \sumj  {\tilde{\gamma}}_{ij}$ on $\bs{W}_i$. Let
\[
\Deltan = N^{-1} \max\Big\{\sumi\Big(\sumj E_{ji}e_{Z_j=1}\Big)^2, \sumi\Big(\sumj E_{ji}e_{Z_j=0}\Big)^2\Big\}.
\]
Here, $\Deltan$ plays an important role in the variance of EV-adjusted estimators. We can show that $\Deltan=O(N^2\rhon^2)$ under \Cref{a:opnorm-EE^T}.

\begin{assumption}
\label{a:bounded-parameter-adjusted}
There exists a constant $C$ independent of $N$, such that $$\max\big\{ \max_{1 \leq i \leq N}|\alpha_{\backslash \bs{W},i}|, \max_{1 \leq i \leq N}|\theta_{\backslash \bs{W},i}|, \max_{1 \leq i \leq N}|h_{\backslash \bs{W},i}| \big\} \leq C.$$
\end{assumption}

\Cref{a:bounded-parameter-adjusted} is the analog of \Cref{a:bounded-parameter} extended to the adjusted parameters.
Define $(\sigma^{\ev}_{\ind,1})^2$ and $(\sigma^{\ev}_{\tot,1})^2$ similarly to $\sigma^2_{\ind,1}$ and $\sigma^2_{\tot,1}$ with  $Y_{Z_j=z}^{Z_i=z^\prime}$ ($(z,z^\prime)\in \{0,1\}^2$) replaced by $e_{Z_j=z}^{Z_i=z^\prime}$. Similarly, define $(\sigma^{\ev}_{\ind,2})^2$ and $(\sigma^{\ev}_{\tot,2})^2$ by replacing  $\theta_i$ with $\theta_{\bsw,i}$ in $\sigma^2_{\ind,2}$ and $\sigma^2_{\tot,2}$, respectively.
Proposition~\ref{prop:order-of-adjusted-estimators} below gives the formulas and magnitudes of the variances of EV-adjusted IATE and GATE estimators.
\begin{proposition}
\label{prop:order-of-adjusted-estimators}
Under Assumptions~\ref{a:bounded-parameter}--\ref{a:opnorm-EE^T} and \ref{a:bounded-parameter-adjusted}, we have
\begin{align*}
        &\Var(\tilde{\tau}_{\ind} ^{\ev}) =(\sigma_{\ind,1}^{\ev})^2+(\sigma_{\ind,2}^{\ev})^2  = O(N^{-1}\Deltan + \rhon),\\
        &\Var(\tilde{\tau}_{\tot} ^{\ev}) = (\sigma_{\tot,1}^{\ev})^2 + (\sigma_{\tot,2}^{\ev})^2 = O(N^{-1}\Deltan + \rhon).
    \end{align*}
\end{proposition}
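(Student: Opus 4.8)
The plan is to reduce the analysis to the \emph{pseudo-outcome} $e_i = Y_i - Z_i\bs{\beta}_{1,\ora}^\top\bs{W}_i - (1-Z_i)\bs{\beta}_{0,\ora}^\top\bs{W}_i$. Plugging in the HATE model~\eqref{eq:HATE-model} gives $e_i = (\alpha_i-\bs{\beta}_{0,\ora}^\top\bs{W}_i) + \theta_{\bsw,i}Z_i + \sumj\tilde{\gamma}_{ij}Z_j$, so $e_i$ itself obeys a HATE model with the same hidden network $\tilde{\bs{E}}$ and the same interference coefficients $\tilde{\gamma}_{ij}$, only with a modified baseline and direct effect $\theta_{\bsw,i}$. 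Since $\tilde{\tau}_{\ind}^{\ev}$ is $\hat{\tau}_{\ind}$ with $Y_i$ replaced by $e_i$, and $\tilde{\tau}_{\tot}^{\ev}=\frac1N\sumi\big(\tfrac{Z_i}{r_1}-\tfrac{1-Z_i}{r_0}\big)\big(Y_i+\sumj E_{ji}e_j\big)$ is $\hat{\tau}_{\tot}$ with the neighbour outcomes $Y_j$ replaced by $e_j$, the decompositions $\Var(\tilde{\tau}_{\ind}^{\ev})=(\sigma_{\ind,1}^{\ev})^2+(\sigma_{\ind,2}^{\ev})^2$ and $\Var(\tilde{\tau}_{\tot}^{\ev})=(\sigma_{\tot,1}^{\ev})^2+(\sigma_{\tot,2}^{\ev})^2$ follow from \Cref{prop:order-and-formula-of-tau-ind,prop:order-of-tau-tot} (the ego part of $(\sigma_{\tot,1}^{\ev})^2$ keeps $Y$ because $\hat{\tau}_{\dir}$ is unadjusted). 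It then remains to bound the four components; for the GATE one may alternatively just use $\Var(\tilde{\tau}_{\tot}^{\ev})\le 2\Var(\hat{\tau}_{\dir})+2\Var(\tilde{\tau}_{\ind}^{\ev})$ together with $\Var(\hat{\tau}_{\dir})=O(N^{-1})$ from \Cref{prop:order-of-tau-dir}.

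I would handle the two remainder components next. Averaging $e_i$ over $Z_i=z$ (integrating out the other coordinates, $\tilde{\gamma}_{ii}=0$) and using $\bs{\beta}_{0,\ora}=\bs{\beta}_\alpha+r_1\bs{\beta}_h$, the un-adjusted terms $r_1h_i$ cancel and one obtains $e_{Z_i=z}=\alpha_{\bsw,i}+r_1h_{\bsw,i}+\theta_{\bsw,i}z$; hence $|e_{Z_i=z}|\le 3C$ by \Cref{a:bounded-parameter-adjusted} alone, so $\|\bs{c}_e\|^2=\sumi(r_1e_{Z_i=1}+r_0e_{Z_i=0})^2=O(N)$ (and the ego vector appearing in $(\sigma_{\tot,1}^{\ev})^2$ has $\ell^2$-norm $O(N^{1/2})$ by \Cref{a:bounded-parameter}). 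Since $(\sigma_{\ind,2}^{\ev})^2$ and $(\sigma_{\tot,2}^{\ev})^2$ are just $\sigma_{\ind,2}^2$ and $\sigma_{\tot,2}^2$ with each $\theta_i$ replaced by $\theta_{\bsw,i}$, and $|\theta_{\bsw,i}|\le C$, $|\tilde{\gamma}_{ij}|\le CQ_{ij}$ still hold (Assumptions~\ref{a:bounded-parameter} and \ref{a:bounded-parameter-adjusted}), the proofs of \Cref{prop:order-and-formula-of-tau-ind,prop:order-of-tau-tot} carry over verbatim to give $(\sigma_{\ind,2}^{\ev})^2=O(\rhon)$ and $(\sigma_{\tot,2}^{\ev})^2=O(\rhon)$.

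The substance is the bound on $(\sigma_{\ind,1}^{\ev})^2$ (and its $\tot$ analogue). By the same algebra used in \Cref{prop:order-and-formula-of-tau-ind}, the $(z_i,z_j)$-weighted combination $r_1r_0 e_{Z_i=1}^{Z_j=1}+r_0^2 e_{Z_i=0}^{Z_j=1}+r_1^2 e_{Z_i=1}^{Z_j=0}+r_1r_0 e_{Z_i=0}^{Z_j=0}$ collapses to $(r_1e_{Z_i=1}+r_0e_{Z_i=0})+(r_0-r_1)\tilde{\gamma}_{ij}$, so, writing $(\bs{c}_e)_i=r_1e_{Z_i=1}+r_0e_{Z_i=0}$,
\begin{align*}
(\sigma_{\ind,1}^{\ev})^2
&= \frac{1}{N^2r_1r_0}\sumj\Big(\sumi E_{ij}\big[(\bs{c}_e)_i+(r_0-r_1)\tilde{\gamma}_{ij}\big]\Big)^2\\
&\le \frac{2}{N^2r_1r_0}\,\bs{c}_e^\top\bs{E}\bs{E}^\top\bs{c}_e + \frac{2(r_0-r_1)^2}{N^2r_1r_0}\sumj\Big(\sumi\tilde{\gamma}_{ij}\Big)^2.
\end{align*}
Writing $\bs{c}_e=r_1\bs{e}^{(1)}+r_0\bs{e}^{(0)}$ with $\bs{e}^{(z)}=(e_{Z_1=z},\ldots,e_{Z_N=z})^\top$, the first summand equals $\tfrac{2}{N^2r_1r_0}\|\bs{E}^\top\bs{c}_e\|^2$, and $\|\bs{E}^\top\bs{c}_e\|\le r_1\|\bs{E}^\top\bs{e}^{(1)}\|+r_0\|\bs{E}^\top\bs{e}^{(0)}\|\le(N\Deltan)^{1/2}$ directly from the definition of $\Deltan$, so it is $O(N^{-1}\Deltan)$. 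For the second summand, $|\sumi\tilde{\gamma}_{ij}|\le C\sumi Q_{ij}$, so $\sumj(\sumi\tilde{\gamma}_{ij})^2\le C^2\|\bs{Q}^\top\bs{1}\|^2\le C^2N\|\bs{Q}\|_{\opnorm}^2=O(N)$ by \Cref{a:opnorm-EE^T}(ii), which is $O(N^{-1})=O(\rhon)$ since $N\rhon\ge C_+$. Hence $(\sigma_{\ind,1}^{\ev})^2=O(N^{-1}\Deltan+\rhon)$. For $(\sigma_{\tot,1}^{\ev})^2$ the same collapse makes its $i$-th summand a bounded ego scalar plus $(\bs{E}^\top\bs{c}_e)_i$ plus $(r_0-r_1)\sumj E_{ji}\tilde{\gamma}_{ji}$; expanding the square via $(a+b+c)^2\le 3(a^2+b^2+c^2)$ and using $\sumi(\sumj E_{ji}\tilde{\gamma}_{ji})^2\le C^2\|\bs{Q}^\top\bs{1}\|^2=O(N)$ again yields $O(N^{-1}\Deltan+\rhon)$. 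Summing the four components proves the proposition.

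The main obstacle is the step just described: one must see that the variance-inflating term is \emph{exactly} the quadratic form $\bs{c}_e^\top\bs{E}\bs{E}^\top\bs{c}_e$ (this requires the telescoping of the weighted conditional-expectation combination), and that this form is at most $N\Deltan$ essentially by how $\Deltan$ is defined --- which is precisely where the eigenvector adjustment pays off, since after the OLS regression $\bs{c}_e$ is (approximately) orthogonal to $\bs{1}$ and to the leading eigenvectors of $\bs{E}\bs{E}^\top$, so $\Deltan$ can be far smaller than the bound $N^2\rhon^2$ it obeys in general. The only other point requiring care is keeping the adjusted quantities uniformly $O(1)$; this is the content of \Cref{a:bounded-parameter-adjusted}, made usable by the cancellation identity $e_{Z_i=z}=\alpha_{\bsw,i}+r_1h_{\bsw,i}+\theta_{\bsw,i}z$, which removes the un-adjusted $h_i$. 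Everything else transcribes the proofs of \Cref{prop:order-and-formula-of-tau-ind,prop:order-of-tau-tot}.
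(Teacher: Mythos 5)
Your proposal is correct and follows essentially the same route as the paper's proof: the same cancellation identity $e_{Z_i=z}=\alpha_{\backslash\bs{W},i}+r_1h_{\backslash\bs{W},i}+\theta_{\backslash\bs{W},i}z$, the same collapse of the weighted conditional-expectation combination to $(\bs{c}_e)_i+(r_0-r_1)\tilde{\gamma}_{ij}$, the same bound of the quadratic form $\bs{c}_e^\top\bs{E}\bs{E}^\top\bs{c}_e$ by $N\Deltan$, and the same observation that the second variance components inherit the $O(\rhon)$ bound once $\theta_i$ is replaced by the bounded $\theta_{\backslash\bs{W},i}$. The only cosmetic difference is in assembling the GATE bound (you expand the three-term square or use $\Var(A+B)\le 2\Var(A)+2\Var(B)$, whereas the paper bounds $(\sigma_{\tot,k}^{\ev})^2\lesssim\sigma_{\dir,k}^2+(\sigma_{\ind,k}^{\ev})^2$), which is immaterial.
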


\begin{remark} 
\label{rmk:estimate-of-deltan-eigenvalue}
Given that $e_{Z_i=z}$ are orthogonal to the first $K$ eigenvectors of $\bs{E}\bs{E}^\top$, and $\max_i\max_{z\in{0,1}}|e_{Z_i=z}|\leq C_e$ as implied by \Cref{a:bounded-parameter-adjusted}, where $C_e$ is a constant independent of $N$,  we can derive $N^{-1}\Deltan = O(\lambda_{K+1}/N)$.
\end{remark}

\begin{assumption}
\label{a:control-the-order-of-two-component}
\[
(i) \text{ Either}\quad \liminf \frac{(\sigma_{\ind,1}^{\ev})^2}{N^{-1}\Deltan +\rhon} >0\quad \text{or} \quad \liminf \frac{(\sigma_{\ind,2}^{\ev})^2 }{N^{-1}\Deltan +\rhon} >0.
\]
\[
(ii) \text{ Either}\quad \liminf \frac{(\sigma_{\tot,1}^{\ev})^2}{N^{-1}\Deltan +\rhon} >0\quad \text{or} \quad \liminf \frac{(\sigma_{\tot,2}^{\ev})^2 }{N^{-1}\Deltan +\rhon} >0.
\]
\end{assumption}
 
\begin{assumption}
\label{a:lind-berg-condition-for-adjusted-estimator}
$$N^{-1}  \max\{\maxi(\sumj E_{ji}e_{Z_j=1})^2, \maxi(\sumj E_{ji} e_{Z_j=0})^2\} = o(\Deltan ).$$
\end{assumption}

\begin{assumption}
\label{a:bounded-eigenvectors}
There exists a constant $C_{\bs{W}}$ independent of $N$, such that $ \|\bs{W}_i\|_\infty \leq C_{\bs{W}}$.
\end{assumption}

Assumption~\ref{a:control-the-order-of-two-component} is analogous to \Cref{a:assumption-CLT} on the variances of the EV-adjusted estimators. Assumption~\ref{a:lind-berg-condition-for-adjusted-estimator} is a Lindeberg-type condition for deriving the asymptotic normality of the EV-adjusted estimators. \Cref{a:bounded-eigenvectors} requires $\bs{W}_i$ to be uniformly bounded. Informally, if we view $\bs{W}_i$ as the estimated community membership or the component membership, \Cref{a:bounded-eigenvectors} requires the size of each estimated community or component used in the regression to be comparable to the population size $N$. For example, as in \Cref{rmk:use-strata-indicator-as-regressor}, if we have a component $\mathcal{S}_h$ of size $o(N)$, we should not directly use $(I(i\in\mathcal{S}_h))_{i=1}^N$ in the regression.

\begin{theorem}
\label{thm:CLT-EV-adj}
    Under Assumptions~\ref{a:bounded-parameter}--\ref{a:Lindberg-condition-unadj} and \ref{a:bounded-parameter-adjusted}--\ref{a:bounded-eigenvectors}, we have $(\hat{\tau}_{\ind} ^{\ev}-\tau_{\ind} )/\Var(\tilde{\tau}_{\ind} ^{\ev})^{1/2}\xrightarrow{d}\mathcal{N}(0,1)$ and $(\hat{\tau}_{\tot} ^{\ev}-\tau_{\tot} )/\Var(\tilde{\tau}_{\tot} ^{\ev})^{1/2}\xrightarrow{d}\mathcal{N}(0,1)$.
\end{theorem}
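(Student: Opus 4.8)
The plan is to decompose each EV-adjusted estimator into its oracle counterpart plus a remainder coming from replacing the oracle regression coefficients $\bs{\beta}_{z,\ora}$ with the fitted $\hat{\bs{\beta}}_z$, establish asymptotic normality of the oracle version via the homogeneous-sums CLT of \cite{koike2022high}, and show that the remainder is $\op$ of the oracle standard deviation so that Slutsky's theorem applies. For the IATE estimator, write $\hat{\tau}_{\ind}^{\ev} = \tilde{\tau}_{\ind}^{\ev} + R_N$, where $R_N = N^{-1}\sumi\sumj E_{ij}(\bs{\beta}_{z,\ora}-\hat{\bs{\beta}}_z)^\top\bs{W}_i\{Z_j/r_1 - (1-Z_j)/r_0\}$ with $z$ determined by $Z_i$. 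First I would analyze $\tilde{\tau}_{\ind}^{\ev}$: since its variance equals the expression in \Cref{prop:order-and-formula-of-tau-ind} with $\theta_i, Y_{Z_j=z_j}^{Z_i=z_i}$ replaced by $\theta_{\bsw,i}, e_{Z_j=z_j}^{Z_i=z_i}$, and $e_i$ has the same HATE-type structure as $Y_i$ (an additive function of $Z_i$ and the $Z_j$'s) with coefficients bounded by \Cref{a:bounded-parameter-adjusted}, the CLT argument used to prove \Cref{thm:CLT-no-adjust} for $\hat\tau_\ind$ carries over essentially verbatim. One rewrites the centered statistic as a sum of a linear form in $\bs{Z}-r_1\bs{1}$ and a degree-two homogeneous sum (the $Z_iZ_j$ cross terms), verifies the fourth-moment / spectral-norm conditions of \cite{koike2022high} using Assumptions~\ref{a:opnorm-EE^T} and \ref{a:Lindberg-condition-unadj} (adapted through $\Deltan$) together with the non-degeneracy Assumption~\ref{a:control-the-order-of-two-component} and the Lindeberg-type Assumption~\ref{a:lind-berg-condition-for-adjusted-estimator}, concluding $(\tilde{\tau}_{\ind}^{\ev}-\tau_\ind)/\Var(\tilde{\tau}_{\ind}^{\ev})^{1/2}\xrightarrow{d}\mathcal N(0,1)$.

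The second and harder step is controlling $R_N$. The key is to bound $\|\hat{\bs{\beta}}_z - \bs{\beta}_{z,\ora}\|$. With the normalization $N^{-1}\sumi\bs{W}_i\bs{W}_i^\top = \bs{I}_{K+1}$, the fitted coefficient satisfies $\hat{\bs{\beta}}_z = (\sumi Z_i^{(z)}\bs{W}_i\bs{W}_i^\top)^{-1}\sumi Z_i^{(z)}\bs{W}_iY_i$ (writing $Z_i^{(1)}=Z_i$, $Z_i^{(0)}=1-Z_i$), and by a matrix-concentration/Chebyshev argument the sample Gram matrix $N^{-1}\sumi Z_i^{(z)}\bs{W}_i\bs{W}_i^\top$ concentrates around $r_z\bs{I}_{K+1}$; here \Cref{a:bounded-eigenvectors} (uniform boundedness of $\bs{W}_i$, hence of entries of $\bs{W}_i\bs{W}_i^\top$) is exactly what is needed, and $K$ being fixed keeps the dimension harmless. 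Similarly $N^{-1}\sumi Z_i^{(z)}\bs{W}_iY_i$ concentrates around $r_z\,\bs{\beta}_{z,\ora}$ up to fluctuations controlled by \Cref{a:bounded-parameter} and the network structure (the neighbor sums $\sumj\tilde\gamma_{ij}Z_j$ contribute terms of order $\rhon^{1/2}$ after the $N^{-1/2}$ averaging, since $\sum_i(\sum_j\tilde\gamma_{ij})^2 = O(1)$ by \Cref{a:bounded-parameter} and \Cref{a:opnorm-EE^T}(ii)). This yields $\|\hat{\bs{\beta}}_z-\bs{\beta}_{z,\ora}\| = \Op(N^{-1/2})$ (up to lower-order network terms).

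Finally I would bound $R_N$ itself using this rate. Write $R_N = (\bs{\beta}_{1,\ora}-\hat{\bs{\beta}}_1)^\top \bs{a}_1 - (\bs{\beta}_{0,\ora}-\hat{\bs{\beta}}_0)^\top\bs{a}_0$ where $\bs{a}_z = N^{-1}\sumi Z_i^{(z)}\bs{W}_i\sumj E_{ij}\{Z_j/r_1-(1-Z_j)/r_0\}$; each coordinate of $\bs{a}_z$ is a (shifted) bilinear form in $\bs{Z}$ with coefficient matrix built from $\diag(\bs{W}_{\cdot k})\bs{E}$, so $\E\|\bs{a}_z\|^2 = O(N^{-2}\,\bs{1}^\top\bs{E}\bs{E}^\top\bs{1} + \text{l.o.t.}) = O(\rhon^2 + N^{-1}\Deltan\cdot\text{something})$; more carefully, orthogonality of $\bs{W}_i$ to the top-$K$ eigenvectors of $\bs{E}\bs{E}^\top$ (by construction of $e_i$ and \Cref{rmk:estimate-of-deltan-eigenvalue}) forces the dominant part of $\bs{a}_z$ into the complement, giving $\|\bs{a}_z\| = \Op((N^{-1}\Deltan)^{1/2} + \rhon^{1/2})$. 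Combining, $|R_N| = \Op(N^{-1/2}\{(N^{-1}\Deltan)^{1/2}+\rhon^{1/2}\}) = \op((N^{-1}\Deltan+\rhon)^{1/2}) = \op(\Var(\tilde\tau_\ind^{\ev})^{1/2})$ by \Cref{prop:order-of-adjusted-estimators} and \Cref{a:control-the-order-of-two-component}. Slutsky then gives the IATE claim, and the GATE claim follows identically since $\hat\tau_\tot^{\ev}-\tau_\tot = (\hat\tau_\dir-\tau_\dir)+(\hat\tau_\ind^{\ev}-\tau_\ind)$ and the same decomposition applies (the $\hat\tau_\dir$ part is $\Op(N^{-1/2})$ and is jointly handled within the same homogeneous-sum representation). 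The main obstacle I anticipate is the joint verification of the \cite{koike2022high} conditions for the oracle statistic — in particular showing the fourth cumulant of the degree-two homogeneous sum is negligible relative to its variance squared under only the $\Deltan$-based Lindeberg condition — and making the cross-term bookkeeping between the linear and quadratic parts (and, for GATE, the extra $\hat\tau_\dir$ piece) precise enough that the non-degeneracy assumption is actually used correctly.
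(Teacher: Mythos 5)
Your proposal follows the paper's proof essentially step for step: an oracle CLT for $\tilde{\tau}_{\ind}^{\ev}$ and $\tilde{\tau}_{\tot}^{\ev}$ obtained by rerunning the Koike homogeneous-sums argument with $Y_i$ replaced by $e_i$ and the rate sequence $N^2\rhon^2$ replaced by $\Deltan$ (the paper packages this as \Cref{thm:CLT-general} applied with $Y_i'\equiv e_i$, $\Lambdan\equiv\Deltan$), a lemma giving $\hat{\bs{\beta}}_z-\bs{\beta}_{z,\ora}=\Op(N^{-1/2})$, and then Slutsky after showing the plug-in remainder is $\op$ of the oracle standard deviation. One sub-step of your remainder bound is wrong as stated: the vector $\bs{a}_z$ is built from $\bs{W}_i$ itself, which \emph{is} the span of the top-$K$ eigenvectors of $\bs{E}\bs{E}^\top$ (plus the intercept), so no orthogonality to that eigenspace is available and your claimed rate $\|\bs{a}_z\|=\Op((N^{-1}\Deltan)^{1/2}+\rhon^{1/2})$ does not follow; it is the residuals $e_{Z_i=z}$, not $\bs{W}_i$, that are orthogonal to the top eigenvectors. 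This is harmless, however, because the cruder bound the paper uses, $\|\bs{a}_z\|=\Op(N^{1/2}\rhon)$ from $\|\bs{E}\|_{\opnorm}=O(N\rhon)$, already gives $|R_N|=\Op(\rhon)=\op(\rhon^{1/2})=\op(\Var(\tilde{\tau}_{\ind}^{\ev})^{1/2})$ under Assumptions~\ref{a:density-rho_N} and \ref{a:control-the-order-of-two-component}, which is all that is needed.
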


\Cref{thm:CLT-EV-adj} establishes the asymptotic normality of the EV-adjusted estimators $\hat{\tau}_{\ind} ^{\ev}$ and $\hat{\tau}_{\tot} ^{\ev}$. Their asymptotic variances are of order $N^{-1}\Deltan+\rhon$ (see \Cref{prop:order-of-adjusted-estimators}), representing an improvement in convergence rates over their unadjusted counterparts. In cases where $N^{-1}\Deltan=o(1)$, as demonstrated in the examples in the subsequent section, we can ensure the consistency of $\hat{\tau}_{\ind} ^{\ev}$ and $\hat{\tau}_{\tot} ^{\ev}$ provided that $\rhon = o(1)$.  Consequently, the EV-adjusted estimators relax the assumptions on the network structure compared to the unadjusted Horvitz--Thompson estimators.

To estimate the asymptotic variances and construct confidence intervals, we modify $\hat{V}_{\ind} $ and $\hat{V}_{\tot} $ by replacing the treated/control neighbors' outcome ($Y_i$) totals with their residual 
($\hat{e}_i$) totals, and we denote the resulting variance estimators as $\hat{V}_{\ind}^{\ev}$ and $\hat{V}_{\tot}^{\ev}$.

\begin{theorem}
\label{thm:asymptotic-variance-estimator-adjusted}
    Under Assumptions~\ref{a:bounded-parameter}--\ref{a:Lindberg-condition-unadj}, \ref{a:bounded-parameter-adjusted}, and \ref{a:lind-berg-condition-for-adjusted-estimator}--\ref{a:bounded-eigenvectors}, we have (i) $\hat{V}_{\ind} ^{\ev} = \Op(N^{-1}\Deltan + \rhon)$ and $\hat{V}_{\tot} ^{\ev} = \Op(N^{-1}\Deltan + \rhon)$, (ii) $ 2\hat{V}_{\ind} ^{\ev} -\Var(\tilde{\tau}_{\ind} ^{\ev}) = R_N^{(1)} + \op(N^{-1}\Deltan +\rhon)$ and $2\hat{V}_{\tot} ^{\ev}-\Var(\tilde{\tau}_{\tot} ^{\ev}) = R_N^{(2)}+\op(N^{-1}\Deltan +\rhon)$, 
 where $R_N^{(1)} \geq 0$ and $R_N^{(2)} \geq 0$,
 and (ii) if further assume that $\liminf N^{-1}\Deltan/\rhon \rightarrow \infty$, then  
\begin{align*}
   &\hat{V}_{\ind} ^{\ev}- \Var(\tilde{\tau}_{\ind} ^{\ev})  = \frac{1}{N^2} \sumi\left(\sumj  E_{ji}\theta_{\backslash \bs{W},j}\right)^2  + \op\Big(N^{-1}\Deltan\Big),\\
   &\hat{V}_{\tot} ^{\ev}-\Var(\tilde{\tau}_{\tot} ^{\ev})  = \frac{1}{N^2} \sumi\left(\theta_i +\sumj  E_{ji}\theta_{\backslash \bs{W},j}\right)^2  + \op\Big(N^{-1}\Deltan\Big).
\end{align*}
\end{theorem}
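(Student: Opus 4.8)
\emph{Strategy.} The plan is to pass from $\hat V_{\ind}^{\ev},\hat V_{\tot}^{\ev}$ to their oracle analogs $\tilde V_{\ind}^{\ev},\tilde V_{\tot}^{\ev}$, obtained by replacing each $\hat e_i$ by the oracle residual $e_i$, and then to rerun --- with $Y_i\mapsto e_i$, $\theta_i\mapsto\theta_{\bsw,i}$, $Y_{Z_j=z}^{Z_i=z'}\mapsto e_{Z_j=z}^{Z_i=z'}$ --- the arguments behind Theorem~\ref{thm:variance-estimator-ind} and Theorem~\ref{thm:variance-estimator-tot}. This substitution is legitimate: $e_i=(\alpha_i-\bs\beta_{0,\ora}^\top\bs W_i)+\theta_{\bsw,i}Z_i+\sumj\tilde\gamma_{ij}Z_j$ is again of HATE form, and since $\bs\beta_{0,\ora}=\bs\beta_\alpha+r_1\bs\beta_h$ and $|h_i|\le C$ one gets $e_{Z_i=1}=\alpha_{\bsw,i}+\theta_{\bsw,i}+r_1h_{\bsw,i}$ and $e_{Z_i=0}=\alpha_{\bsw,i}+r_1h_{\bsw,i}$, both bounded by Assumption~\ref{a:bounded-parameter-adjusted}, so the residual ``outcomes'' obey the analog of Assumption~\ref{a:bounded-parameter}; moreover $e_i-Y_i$ is non-random given $Z_i$, so $\Var(e_i|Z_i=z)=\Var(Y_i|Z_i=z)$ and all the second-moment structure exploited there is preserved. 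Throughout I treat the number $K$ of adjustment directions as fixed, so $K$-dependent constants are harmless.

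\emph{Step 1: feasible-to-oracle reduction.} Set $\bs\delta_z=\hat{\bs\beta}_z-\bs\beta_{z,\ora}$, so $\hat e_i-e_i=-\bs\delta_{Z_i}^\top\bs W_i$. Since every summand of $\hat V_{\ind}^{\ev}$ carries a factor $Z_j$ or $1-Z_j$, each neighbor total splits, e.g.\ $\sumj E_{ji}\hat e_jZ_j=\sumj E_{ji}e_jZ_j-\bs\delta_1^\top\sumj E_{ji}\bs W_jZ_j$, and expanding the squares gives $\hat V_{\ind}^{\ev}=\tilde V_{\ind}^{\ev}+(\text{bilinear in }\bs\delta_z)+(\text{quadratic in }\bs\delta_z)$. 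With $\bs W$ the $N\times(K+1)$ matrix of regressors ($\bs W^\top\bs W=N\bs I_{K+1}$, $\bs W\bs W^\top=N\bs P$ for the column projection $\bs P$), $\sumi\|\sumj E_{ji}\bs W_j\|^2=\tr(\bs W^\top\bs E\bs E^\top\bs W)=N\tr(\bs P\bs E\bs E^\top)\le N\sum_{l=1}^{K+1}\lambda_l=O(N^{3}\rhon^{2})$ by Assumption~\ref{a:opnorm-EE^T}(i); hence the quadratic piece is $\Op\big((\|\bs\delta_0\|^{2}+\|\bs\delta_1\|^{2})N\rhon^{2}\big)$ and, by Cauchy--Schwarz, the bilinear piece is $\Op\big((\|\bs\delta_0\|+\|\bs\delta_1\|)(\tilde V_{\ind}^{\ev})^{1/2}N^{1/2}\rhon\big)$. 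It remains to prove $\|\bs\delta_z\|=\Op(N^{-1/2})$: write $\bs\delta_z=\hat{\bs M}_z^{-1}\{(\hat{\bs b}_z-r_z\bs\beta_{z,\ora})-(\hat{\bs M}_z-r_z\bs I_{K+1})\bs\beta_{z,\ora}\}$ with $\hat{\bs M}_z=N^{-1}\sumi\bs W_i\bs W_i^\top I(Z_i=z)$, $\hat{\bs b}_z=N^{-1}\sumi\bs W_iY_iI(Z_i=z)$; on the event of probability $1-o(1)$ that $\hat{\bs M}_z$ is invertible, $\|\hat{\bs M}_z^{-1}\|_{\opnorm}=\Op(1)$, $\|\bs\beta_{z,\ora}\|=O(1)$ (since $\|\bs\beta_{1,\ora}\|^{2}\le N^{-2}\|\bs W\bs W^\top\|_{\opnorm}\sumi Y_{Z_i=1}^{2}=O(1)$), $\|\hat{\bs M}_z-r_z\bs I_{K+1}\|_{\opnorm}=\Op(N^{-1/2})$ by entrywise concentration (Assumption~\ref{a:bounded-eigenvectors}), and $\hat{\bs b}_z-r_z\bs\beta_{z,\ora}=\Op(N^{-1/2})$ --- its linear part is a centered sum of $N$ independent bounded vectors, and its interference part $N^{-1}\sumi\bs W_i\{(\sumj\tilde\gamma_{ij}Z_j)Z_i-\E[(\sumj\tilde\gamma_{ij}Z_j)Z_i]\}$ has variance $O(N^{-1})$ by a second-moment bound whose covariance sits on pairs of units within network distance two and is tamed by $\sum_{i,j}Q_{ij}^{2}=O(N)$, $\tr(\bs Q^{2})=O(N)$, $\sumi(\sumj Q_{ji})^{2}=O(N)$ (from Assumption~\ref{a:opnorm-EE^T}(ii) and $|\tilde\gamma_{ij}|\le CQ_{ij}$). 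Plugging $\|\bs\delta_z\|=\Op(N^{-1/2})$ into the two bounds above and using $N^{-1}\Deltan=O(N\rhon^{2})$ and $\rhon=o(1)$ gives $\hat V_{\ind}^{\ev}-\tilde V_{\ind}^{\ev}=\op(N^{-1}\Deltan+\rhon)$, and identically for the total effect.

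\emph{Step 2: oracle analysis and assembly.} (a) The Cauchy--Schwarz and inverse-probability-weighting construction of $\hat V_{\ind},\hat V_{\tot}$ carries over verbatim to give $\E(2\tilde V_{\ind}^{\ev})\ge\Var(\tilde\tau_{\ind}^{\ev})$ and $\E(2\tilde V_{\tot}^{\ev})\ge\Var(\tilde\tau_{\tot}^{\ev})$; I take $R_N^{(1)}=2\E(\tilde V_{\ind}^{\ev})-\Var(\tilde\tau_{\ind}^{\ev})\ge0$, $R_N^{(2)}=2\E(\tilde V_{\tot}^{\ev})-\Var(\tilde\tau_{\tot}^{\ev})\ge0$. (b) The second-moment/Lindeberg computation behind Theorem~\ref{thm:variance-estimator-ind}(ii) and Theorem~\ref{thm:variance-estimator-tot}(ii) yields $\tilde V_{\ind}^{\ev}-\E(\tilde V_{\ind}^{\ev})=\op(N^{-1}\Deltan+\rhon)$ (same for the total effect), using Assumption~\ref{a:lind-berg-condition-for-adjusted-estimator} in place of the unadjusted Lindeberg condition and Assumption~\ref{a:Lindberg-condition-unadj} for the $O(\rhon)$ remainders. (c) Since $e_{Z_j=z}^{Z_i=z'}=e_{Z_j=z}+\tilde\gamma_{ji}(z'-r_1)$ for $j\ne i$, the Cauchy--Schwarz upper bound that $\tilde V_{\ind}^{\ev}$ estimates reduces to terms $N^{-2}\sumi(\sumj E_{ji}e_{Z_j=z})^{2}+O(\rhon)\le N^{-1}\Deltan+O(\rhon)$ (by the definition of $\Deltan$ and $N^{-2}\sumi(\sumj E_{ji}\tilde\gamma_{ji})^{2}\le C^{2}N^{-2}\sumi(\sumj Q_{ji})^{2}=O(N^{-1})$), and the conditional-variance inflation from inverse-probability weighting adds a further $O(\rhon)$, so $\E(\tilde V_{\ind}^{\ev})=O(N^{-1}\Deltan+\rhon)$, and likewise for the total effect. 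Part~(i) then follows from (a)--(c) and Step~1; part~(ii) from $2\hat V_{\ind}^{\ev}-\Var(\tilde\tau_{\ind}^{\ev})=\{2\E(\tilde V_{\ind}^{\ev})-\Var(\tilde\tau_{\ind}^{\ev})\}+2\{\tilde V_{\ind}^{\ev}-\E(\tilde V_{\ind}^{\ev})\}+2\{\hat V_{\ind}^{\ev}-\tilde V_{\ind}^{\ev}\}=R_N^{(1)}+\op(N^{-1}\Deltan+\rhon)$ (and analogously for $R_N^{(2)}$). For part~(iii), $\liminf N^{-1}\Deltan/\rhon\to\infty$ forces $\rhon=o(N^{-1}\Deltan)$, so every $\op(N^{-1}\Deltan+\rhon)$ becomes $\op(N^{-1}\Deltan)$; rerunning the refined bias expansions of Theorem~\ref{thm:variance-estimator-ind}(ii) and Theorem~\ref{thm:variance-estimator-tot}(ii) for the residual outcomes isolates the leading Cauchy--Schwarz slack as $N^{-2}\sumi(\sumj E_{ji}\theta_{\bsw,j})^{2}$ (resp.\ $N^{-2}\sumi(\theta_i+\sumj E_{ji}\theta_{\bsw,j})^{2}$), the remaining slack and the $(\sigma_{\ind,2}^{\ev})^{2}$-type terms being $\op(N^{-1}\Deltan)$, which yields the stated expansions of $\hat V_{\ind}^{\ev}$ and $\hat V_{\tot}^{\ev}$.

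\emph{Main obstacle.} The crux is the perturbation bound $\|\hat{\bs\beta}_z-\bs\beta_{z,\ora}\|=\Op(N^{-1/2})$ and its propagation in Step~1: once squared and amplified by the $O(N\rhon^{2})$ factor the eigenvector regressors acquire through $\bs E\bs E^\top$, the feasible-to-oracle gap has to stay of smaller order than $N^{-1}\Deltan+\rhon$. The delicate ingredient is the interference component of $\hat{\bs b}_z$, whose fluctuations must be controlled via the operator-norm conditions $\|\bs Q\|_{\opnorm}=O(1)$, $\|\bs E\|_{\opnorm}=O(N\rhon)$ rather than crude degree counts --- precisely the role of Assumption~\ref{a:opnorm-EE^T} --- while Assumptions~\ref{a:bounded-eigenvectors} and \ref{a:lind-berg-condition-for-adjusted-estimator} are what make, respectively, the OLS concentration and the $\Deltan$-scale concentration of $\tilde V_{\ind}^{\ev}$ go through.
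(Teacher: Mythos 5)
Your proposal is correct and follows essentially the same route as the paper's proof: a feasible-to-oracle reduction via $\hat{\bs\beta}_z-\bs\beta_{z,\ora}=\Op(N^{-1/2})$ (the paper's Lemma on the consistency of $\hat{\bs\beta}_z$, which you re-derive with the same decomposition into Gram-matrix and interference-term concentration), followed by invoking the general oracle variance-estimator results (Theorems~\ref{thm:consistency-of-oracle-variance-estimator-general-ind} and \ref{thm:consistency-of-oracle-variance-estimator-general-tot} with $Y_i'\equiv e_i$, $\Lambdan\equiv\Deltan$) and the bias expansions of Theorems~\ref{thm:variance-estimator-ind}--\ref{thm:variance-estimator-tot} with $\theta_i\mapsto\theta_{\bsw,i}$, with $R_N^{(\cdot)}$ defined as the same nonnegative slack. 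The bounds on the quadratic and cross terms in your Step~1 match the paper's $S_2=\Op(\rhon^2)$ and $S_3\lesssim(S_1S_2)^{1/2}$ estimates.
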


\Cref{thm:asymptotic-variance-estimator-adjusted}(i) presents the magnitude of the variance estimators.
Theorem \ref{thm:asymptotic-variance-estimator-adjusted}(ii) demonstrates that $2\hat{V}_{\ind} ^{\ev}$ and $2\hat{V}_{\tot} ^{\ev}$ are conservative variance estimators. Moreover, $\hat{V}_\star^{\ev}$ ($\star\in \{\ind,\tot\}$) exhibits a smaller size of $\Op(N^{-1}\Deltan +\rhon)$ compared to the $\Op(N\rhon^2)$ of $\hat{V}_\star$, thereby enhancing the precision of inference. Theorem \ref{thm:asymptotic-variance-estimator-adjusted}(iii) implies that the second term in the variances $\Var(\tilde{\tau}_{\ind} ^{\ev})$ and $\Var(\tilde{\tau}_{\tot} ^{\ev})$ becomes negligible compared to the first term if $N^{-1}\Deltan$ tends to $0$ slower than $\rhon$. In such cases, $\hat{V}_{\ind} ^{\ev}$ and $\hat{V}_{\tot} ^{\ev}$ already serve as conservative variance estimators.

\section{Application example}
\label{sec:application-in-some-networks}
In this section, we illustrate, by three concrete examples, that the EV-adjusted estimators can significantly improve the estimation precision and the inference efficiency, demonstrating their flexibility to handle various interference structures.

\subsection{Partial interference}
\label{sec:partial}
Partial interference refers to interference occurring solely among units within the same group \citep{sobel2006randomized,imai2021causal,baird2018optimal}. Under partial interference, saturation designs, an extension of cluster-randomized experiments, prove useful for estimating both direct and spillover effects \citep{baird2018optimal,imai2021causal}. This involves randomly assigning clusters with varying treatment proportions. We explore the use of Bernoulli trials for estimating treatment effects and demonstrate their efficiency, showing that the EV-adjusted GATE estimator under Bernoulli trials can achieve the same convergence rate as the Horvitz--Thompson estimator of the GATE under cluster-randomized experiments.

Suppose the target population is divided into $M$ non-overlapping, equally-sized groups (clusters), each containing $N_M$ units, resulting in a total population size of $N = MN_M$. For ease of representation, let $\{(m-1)N_M+1,\ldots,mN_M\}$ denote group $m$. We denote by $C_i\in \{1,\ldots,M\}$ the group indicator of unit $i$, where $C_i=m$ if $i\in\{(m-1)N_M+1,\ldots,mN_M\}$. Each group forms a complete graph, and thus the whole network is composed of components of complete graphs, each of size $N_M$, with no between-group edges. Partial interference aligns with Assumption~\ref{a:stratified-interference} below, where each group corresponds to a complete graph.

\begin{assumption}\label{a:stratified-interference}
    For $i\ne j\in\{1,\ldots,N\}$, $E_{ij}=1$ if $C_i=C_j$.
\end{assumption}

Let $\bs{J}_{N_M}$ denote the square matrix of all ones with dimensions $N_M \times N_M$. Under partial interference, the adjacency matrix takes the form of a block diagonal matrix $\diag(\{\bs{J}_{N_M}-\bs{I}_{N_M}\}_{m=1}^M)$. The density of the network is $\rhon = M\binom{N_M}{2}/\binom{N}{2}= O(M^{-1})$. Corollaries \ref{cor:variance-estimator-adj-stratified} and \ref{cor:variance-estimator-adj-stratified2} below provide the convergence rates for the unadjusted and EV-adjusted IATE and GATE estimators.

\begin{corollary}
\label{cor:variance-estimator-adj-stratified}
    Under Assumptions~\ref{a:bounded-parameter}--\ref{a:opnorm-EE^T} and \ref{a:stratified-interference}, we have $\hat{\tau}_{\ind}-\tau_{\ind}  = \Op(N^{1/2}M^{-1})$ and $\hat{\tau}_{\tot}-\tau_{\tot}  = \Op(N^{1/2}M^{-1})$.
\end{corollary}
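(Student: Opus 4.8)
The plan is to read the corollary off directly from the variance orders already established in Propositions~\ref{prop:order-and-formula-of-tau-ind} and~\ref{prop:order-of-tau-tot}; the only genuinely new input is the value of the density parameter $\rhon$ in the partial interference regime.

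First I would compute $\rhon$ under Assumption~\ref{a:stratified-interference}. There the network decomposes as $\bs{E}=\diag\bigl(\{\bs{J}_{N_M}-\bs{I}_{N_M}\}_{m=1}^{M}\bigr)$, so every unit is linked to the other $N_M-1$ members of its group; hence $N_i=M_i=N_M-1$ for all $i$ and
\[
\rhon=\frac{1}{N^{2}}\sumi N_i=\frac{N_M-1}{N}\asymp\frac{1}{M},
\]
using $N=MN_M$. (This is also a setting where Assumption~\ref{a:opnorm-EE^T} holds automatically, the maximum degree coinciding with the average degree; cf.\ Remark~\ref{rmk:some-sufficient-condition-for-a-3}.)

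Next I would assemble the variance bounds. By Proposition~\ref{prop:order-and-formula-of-tau-ind}, under Assumptions~\ref{a:bounded-parameter} and~\ref{a:opnorm-EE^T} one has $\E(\hat{\tau}_{\ind})=\tau_{\ind}$ and $\Var(\hat{\tau}_{\ind})=\sigma_{\ind,1}^{2}+\sigma_{\ind,2}^{2}=O(N\rhon^{2})+O(\rhon)$; Proposition~\ref{prop:order-of-tau-tot} gives the analogous statement for $\hat{\tau}_{\tot}$. Substituting $\rhon\asymp M^{-1}$ turns this into $O(NM^{-2})+O(M^{-1})$, and since $N/M=N_M\ge 1$ the second term is dominated by the first, so $\Var(\hat{\tau}_{\ind})=O(NM^{-2})$ and $\Var(\hat{\tau}_{\tot})=O(NM^{-2})$. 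Both estimators being unbiased, Chebyshev's inequality then yields $\hat{\tau}_{\ind}-\tau_{\ind}=\Op(\Var(\hat{\tau}_{\ind})^{1/2})=\Op(N^{1/2}M^{-1})$ and, identically, $\hat{\tau}_{\tot}-\tau_{\tot}=\Op(N^{1/2}M^{-1})$.

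There is no real obstacle here: essentially all the content sits in the earlier propositions. The one point needing (trivial) care is that the $\sigma_{\star,2}^{2}=O(\rhon)$ part of each variance is absorbed into the $\sigma_{\star,1}^{2}=O(N\rhon^{2})$ part, which is exactly the inequality $N_M\ge 1$, i.e.\ that the groups are not singletons.
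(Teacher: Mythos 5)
Your proposal is correct and follows essentially the same route as the paper: compute $\rhon=O(M^{-1})$ under Assumption~\ref{a:stratified-interference} and substitute into the variance orders of Propositions~\ref{prop:order-and-formula-of-tau-ind} and~\ref{prop:order-of-tau-tot}, then apply Chebyshev via unbiasedness. The only (harmless) overstatement is the parenthetical claim that Assumption~\ref{a:opnorm-EE^T} holds automatically; part~(ii) of that assumption concerns the latent sub-network $\tilde{\bs{E}}$ and is not implied by the block structure of $\bs{E}$ alone, but since it is already among the corollary's hypotheses nothing in your argument relies on this.
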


Recall from Theorem~\ref{prop:order-of-tau-dir} that the convergence rate of $\hat{\tau}_{\dir} $ is $N^{-1/2}$, regardless of the network structure. Therefore, to consistently estimate $\tau_{\ind} $ and $\tau_{\tot} $ using $\hat{\tau}_{\ind} $ and $\hat{\tau}_{\tot} $, we require $N_M = o(N^{1/2})$. An alternative approach to estimating GATE under partial interference involves conducting cluster-randomized experiments, where each group (cluster) receives the same treatment. The difference in means of outcomes under the treatment and control is an unbiased estimator of GATE, with a convergence rate of $\Op(M^{-1/2})$. This rate is faster than that of $\hat{\tau}_{\tot} $ unless $N_M = O(1)$.

It is well known that the spectrum of a complete graph is $(N_M-1)^1$ and $(-1)^{N_M-1}$, where the exponent denotes the multiplicity. Therefore, the spectrum of this block diagonal matrix is $(N_M-1)^M$ and $(-1)^{N-M}$. Let $\bs{e}_{m, M}$ be the $m$-th canonical basis of $\mathbb R^M$. The $M$ dominant eigenvectors are $\{\bs{e}_{m,M}\otimes\bs{1}_{N_M}\}_{m=1}^M$, where $\otimes$ is the Kronecker product. Here, $\bs{e}_{m,M}\otimes\bs{1}_{N_M}$ is the vector of the group membership indicator of group $m$. The variances of $\hat{\tau}_{\ind} $ and $\hat{\tau}_{\tot} $ are primarily attributed to the correlation between $Y_{i}$ and these eigenvectors. Therefore, it is intuitive to use $\bs{W}_i = (I(C_i=1),\ldots,I(C_i=M))^\top$ in the regression.

We cannot adjust for all the $M$ eigenvectors because our theory restricts us to using a fixed number of eigenvectors. Moreover, adjusting the effect of all groups violates \Cref{a:bounded-eigenvectors}, which requires the size of each group to be comparable with $N$, i.e., $N_M \geq C N$ for some constant $C > 0$. To address this issue, we assume the between-groups homogeneity (see \Cref{a:no-heterogeneity-between-groups} below) and then we can merge $M$ group indicators into a single indicator, leading to $\hat{e}_i$ obtained from
\begin{align}
\label{eq:regression-reduced-group-indicator}
    Y_{i}\sim  I(Z_{i}=0) + I(Z_{i}=1).
\end{align}

\begin{assumption}
\label{a:no-heterogeneity-between-groups}
   $N_M^{-1} \sum_{i:C_i=m} Y_{Z_i=z} = N_M^{-1}\sum_{i:C_i=1} Y_{Z_i=z}$, for all $m=1,\ldots,M$, $z=0,1$.
\end{assumption}

\begin{remark}
\label{rmk:stratify-the-group-partial-interference}
    We can further relax \Cref{a:no-heterogeneity-between-groups}. Specifically, if we can partition the $M$ groups into several strata and assume between-group homogeneity within each stratum, we can merge the cluster indicators into a stratum indicator and perform regression using the stratum indicator; see the Supplementary Material for details. 
\end{remark}

Using the residual of regression \eqref{eq:regression-reduced-group-indicator} to construct $\hat{\tau}^{\ev}_{\ind} $ and $\hat{\tau}^{\ev}_{\tot} $, we obtain their convergence rates, as shown in \Cref{cor:variance-estimator-adj-stratified2} below.
\begin{corollary}
\label{cor:variance-estimator-adj-stratified2}
       Under Assumptions~\ref{a:bounded-parameter}--\ref{a:Lindberg-condition-unadj} and \ref{a:bounded-parameter-adjusted}--\ref{a:no-heterogeneity-between-groups}, we have $\hat{\tau}_{\ind} ^{\ev}-\tau_{\ind} = \Op(M^{-1/2})$ and $\hat{\tau}_{\tot} ^{\ev} - \tau_{\tot} = \Op(M^{-1/2})$.
\end{corollary}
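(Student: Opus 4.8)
The plan is to reduce the corollary to a single spectral fact about the design: under the partial--interference structure together with \Cref{a:no-heterogeneity-between-groups}, the quantity $\Deltan$ driving the variance of the EV--adjusted estimators is $O(1)$ instead of its generic size $O(N^2\rhon^2)$; once this is in place, the rate follows by substitution into \Cref{prop:order-of-adjusted-estimators} and \Cref{thm:CLT-EV-adj}. A preliminary observation is that regression~\eqref{eq:regression-reduced-group-indicator} is exactly the EV--adjustment construction with $\bs{W}_i\equiv 1$ (the fitted value of unit $i$ being the mean of $Y$ over its own treatment arm), so \Cref{a:bounded-eigenvectors} holds trivially; this is precisely why one replaces the $M$ group indicators, whose normalized versions have $\ell_\infty$--norm of order $\sqrt{M}\to\infty$, by a single constant, at the price of the homogeneity hypothesis \Cref{a:no-heterogeneity-between-groups}.

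With $\bs{W}_i\equiv 1$ the oracle coefficients are $\bs{\beta}_{z,\ora}=\bar{Y}_{z}:=N^{-1}\sumi Y_{Z_i=z}$ and the oracle conditional residuals are $e_{Z_i=z}=Y_{Z_i=z}-\bar{Y}_{z}$. \Cref{a:no-heterogeneity-between-groups} forces the common group mean $N_M^{-1}\sum_{i:C_i=m}Y_{Z_i=z}$ to equal $\bar{Y}_{z}$ for every $m$, hence $\sum_{i:C_i=m}e_{Z_i=z}=0$ for all $m$ and $z\in\{0,1\}$. Under \Cref{a:stratified-interference} the adjacency matrix is symmetric with $E_{ji}=I(C_i=C_j)$ for $i\ne j$, so for each $i$,
\[
\sumj E_{ji}e_{Z_j=z}=\sum_{j:\,C_j=C_i}e_{Z_j=z}-e_{Z_i=z}=-e_{Z_i=z},
\]
and therefore
\[
\Deltan=N^{-1}\max_{z\in\{0,1\}}\sumi\Big(\sumj E_{ji}e_{Z_j=z}\Big)^2=N^{-1}\max_{z\in\{0,1\}}\sumi e_{Z_i=z}^2\le C_e^2=O(1),
\]
where $\max_{i,z}|e_{Z_i=z}|\le C_e$ by \Cref{rmk:estimate-of-deltan-eigenvalue}. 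Equivalently, the vanishing within--group sums say that $(e_{Z_1=z},\dots,e_{Z_N=z})^\top$ is orthogonal to every group--indicator vector, and these span the leading eigenspace of $\bs{E}\bs{E}^\top$ (eigenvalue $(N_M-1)^2$, multiplicity $M$), so the residual vector lies in the eigenspace of eigenvalue one. Since $\rhon=O(M^{-1})$ and $N=MN_M$ gives $N^{-1}\Deltan\le C_e^2/(MN_M)=O(M^{-1})$, we obtain $N^{-1}\Deltan+\rhon=O(M^{-1})$.

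By \Cref{prop:order-of-adjusted-estimators} this yields $\Var(\tilde{\tau}_{\ind}^{\ev})=O(N^{-1}\Deltan+\rhon)=O(M^{-1})$ and, likewise, $\Var(\tilde{\tau}_{\tot}^{\ev})=O(M^{-1})$. All hypotheses of \Cref{thm:CLT-EV-adj}, namely Assumptions~\ref{a:bounded-parameter}--\ref{a:Lindberg-condition-unadj} and \ref{a:bounded-parameter-adjusted}--\ref{a:bounded-eigenvectors}, are among those assumed here, so $(\hat{\tau}_{\ind}^{\ev}-\tau_{\ind})/\Var(\tilde{\tau}_{\ind}^{\ev})^{1/2}\xrightarrow{d}\mathcal{N}(0,1)$ and the analogous statement holds for the global effect; hence $\hat{\tau}_{\ind}^{\ev}-\tau_{\ind}=\Op(\Var(\tilde{\tau}_{\ind}^{\ev})^{1/2})=\Op(M^{-1/2})$ and $\hat{\tau}_{\tot}^{\ev}-\tau_{\tot}=\Op(M^{-1/2})$, which is the claim.

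The main obstacle, and essentially the only nontrivial point, is the middle step: recognizing that the reduced regression, which carries out no spectral adjustment on its face, nonetheless annihilates the \emph{entire} leading eigenspace of $\bs{E}\bs{E}^\top$ once \Cref{a:no-heterogeneity-between-groups} is imposed, because that eigenspace is the span of the group indicators and the homogeneity hypothesis is exactly orthogonality of the oracle residuals to all of them. Establishing $\sumj E_{ji}e_{Z_j=z}=-e_{Z_i=z}$, and thus the boundedness of $\Deltan$ where one would generically only have $O(N^2\rhon^2)$, is the heart of the argument; the remainder is bookkeeping with \Cref{prop:order-of-adjusted-estimators} and \Cref{thm:CLT-EV-adj}.
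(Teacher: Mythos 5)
Your proof is correct. The overall skeleton matches the paper's: reduce everything to showing $\Deltan=O(1)$, combine with $\rhon=O(M^{-1})$ and $N\geq M$, and then invoke \Cref{prop:order-of-adjusted-estimators} and \Cref{thm:CLT-EV-adj}. Where you differ is in how the key bound on $\Deltan$ is obtained. The paper first shows that, under \Cref{a:no-heterogeneity-between-groups}, the oracle residuals of the intercept-only regression coincide with those of the full $M$-group-indicator regression, and then appeals to the spectral structure of $\bs{E}=\diag(\{\bs{J}_{N_M}-\bs{I}_{N_M}\})$ (the indicators span the top eigenspace and $\lambda_{M+1}(\bs{E})=-1$) via \Cref{rmk:estimate-of-deltan-eigenvalue} to get $\Deltan=O(1)$. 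You instead derive the exact algebraic identity $\sumj E_{ji}e_{Z_j=z}=-e_{Z_i=z}$ from the vanishing within-group sums of residuals, which gives $\Deltan\leq\max_{i,z}e_{Z_i=z}^2=O(1)$ directly and with an explicit constant; your parenthetical eigenspace interpretation recovers the paper's viewpoint but is not needed for the bound. Your route is more elementary and self-contained for this specific block structure; the paper's residual-equivalence-plus-spectrum argument is the one that transfers verbatim to the merged-strata generalization in its supplementary material. One minor attribution slip: the uniform bound $\max_{i,z}|e_{Z_i=z}|\leq C_e$ is a consequence of \Cref{a:bounded-parameter-adjusted} (as \Cref{rmk:estimate-of-deltan-eigenvalue} notes), not of the remark itself, but this does not affect the argument.
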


By \Cref{cor:variance-estimator-adj-stratified2}, $\hat{\tau}_{\tot} ^{\ev}$ achieves the same convergence rates as the difference in means estimator under cluster-randomized experiments. Meanwhile, we can estimate the DATE and IATE, which are not identifiable under cluster-randomized experiments.

Under \Cref{a:no-heterogeneity-between-groups}, the oracle residual $e_i$ of $\bs{W}_i \equiv 1$ is the same as that of $\bs{W}_i \equiv (I(C_i=1),\ldots,I(C_i=M))^\top$. Thereby, we actually adjust the first $M$ eigenvectors of the graph using \Cref{eq:regression-reduced-group-indicator} (See Supplementary Materials for more details). 
Subsequently, following the argument of \Cref{rmk:estimate-of-deltan-eigenvalue}, we find that $\Deltan = O(\lambda_{M+1}) = O(1)  = o(N_M^{-1})= o(N\rhon)$.
By \Cref{prop:order-of-adjusted-estimators}, $\hat{\tau}_{\star}^{\ev}$, $\star\in\{\ind,\tot\}$ improves the convergence rate of $\hat{\tau}_{\star}$, from $N^{1/2}\rhon$ to $\rhon^{1/2}$.

\subsection{Local interference in a two-sided marketplace}
\label{sec:local}
There is a growing interest in estimating GATE in experiments carried out in a two-sided marketplace \citep{masoero2024multiple,holtz2024reducing,harshaw2023design}. The target population has a matrix form. Consider a $N_R \times N_C$ matrix with $N=N_RN_C$ where each row represents a buyer and each column represents a seller. We observe outcomes for the $(r,c)$-th pair under a treatment targeted to that specific pair, such as the amount of money the $r$-th buyer spends with the $c$-th seller, with or without the implementation of an encouragement policy. Equipped with the EV-adjusted estimators, we aim for valid inferences in Bernoulli trials under local interference assumption in a two-sided marketplace, where the interference exists within rows and columns.

Let $C_i$ and $R_i$ be the column and row numbers of the $i$-th pair, where $C_i=c$ and $R_i=r$ if $i=(r-1)N_R+c$, $1\leq i\leq N$. Local interference assumes that interference occurs solely within units in the same column (same seller) or row (same buyer). We then translate this assumption into the language of networks.
\begin{assumption}
\label{a:local-interference-for-marketplace}
    For $i\ne j\in\{1,\ldots,N\}$, $E_{ij}=1$ if $C_i=C_j$ or $R_i=R_j$.
\end{assumption}

Under \Cref{a:local-interference-for-marketplace}, $\rhon = \big\{N_R\binom{N_C}{2}+ N_C\binom{N_R}{2}\big\}/N^2 = O(N_R^{-1}+N_C^{-1})$. Hence, Assumption~\ref{a:density-rho_N} requires that $N_R,N_C\rightarrow \infty$. \Cref{cor:local-interference-for-marketplace} below presents the convergence rates of $\hat{\tau}_{\ind} $ and $\hat{\tau}_{\tot} $.

\begin{corollary}
\label{cor:local-interference-for-marketplace}
    Under Assumptions~\ref{a:bounded-parameter}, \ref{a:opnorm-EE^T}, and \ref{a:local-interference-for-marketplace}, we have $\hat{\tau}_{\ind} - \tau_{\ind}  = \Op(N^{1/2}N_R^{-1}+N^{1/2}N_C^{-1})$ and $\hat{\tau}_{\tot} - \tau_{\tot}  = \Op(N^{1/2}N_R^{-1}+N^{1/2}N_C^{-1})$.
\end{corollary}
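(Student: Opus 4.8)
The plan is to verify that Assumption~\ref{a:local-interference-for-marketplace} implies Assumptions~\ref{a:density-rho_N} and~\ref{a:opnorm-EE^T}, and then to read off the convergence rates from Propositions~\ref{prop:order-and-formula-of-tau-ind} and~\ref{prop:order-of-tau-tot}. The density computation is elementary: each row contributes $\binom{N_C}{2}$ unordered within-row pairs and each column contributes $\binom{N_R}{2}$ within-column pairs, and since the two types of pairs are disjoint (a pair cannot share both a row and a column unless the units coincide), summing and dividing by $N^2$ gives $\rhon = \{N_R\binom{N_C}{2}+N_C\binom{N_R}{2}\}/N^2 = \Theta(N_R^{-1}+N_C^{-1})$. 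Under Assumption~\ref{a:density-rho_N} this forces $N_R,N_C\to\infty$, so $\rhon=o(1)$ and $N\rhon\to\infty$, as already noted in the text. This step is routine.

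The substantive step is checking Assumption~\ref{a:opnorm-EE^T}. Here I would exploit the algebraic structure of $\bs{E}$: under Assumption~\ref{a:local-interference-for-marketplace}, indexing units by $(r,c)\in[N_R]\times[N_C]$, the adjacency matrix is
\[
\bs{E} = \bs{I}_{N_R}\otimes(\bs{J}_{N_C}-\bs{I}_{N_C}) + (\bs{J}_{N_R}-\bs{I}_{N_R})\otimes\bs{I}_{N_C},
\]
a sum of two commuting Kronecker-structured matrices (this is the adjacency matrix of the Cartesian product of two complete graphs, i.e.\ a rook's graph). Its eigenvalues are therefore $\lambda_{kl} = (N_C-1)\mathbb{1}\{k=1\}' \dots$ — more precisely, the eigenvalues of $\bs{J}_{N_C}-\bs{I}_{N_C}$ are $N_C-1$ (once) and $-1$ (with multiplicity $N_C-1$), similarly for the row factor, and the eigenvalues of $\bs{E}$ are all sums of one eigenvalue from each factor. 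The largest is $(N_C-1)+(N_R-1) = O(N\rhon)$, giving Assumption~\ref{a:opnorm-EE^T}(i). Since $\tilde{\bs{E}}\le\bs{E}$ entrywise, the out-degrees satisfy $\tilde N_i \le N_R+N_C-2$; I would use Remark~\ref{rmk:some-sufficient-condition-for-a-3} together with a mild uniform lower bound on $\tilde N_i$ (or argue directly via $\|\bs{Q}\|_\opnorm\le 1$ when $\bs{Q}$ is a stochastic matrix that is also symmetric-dominated, noting $\bs{Q}$ has row sums $1$) to obtain Assumption~\ref{a:opnorm-EE^T}(ii); the cleanest route is that any matrix with nonnegative entries and all row sums equal to $1$ has operator norm at most... no — that bounds $\|\cdot\|_\infty$, not $\|\cdot\|_\opnorm$ — so instead I would invoke $\|\bs{Q}\|_\opnorm \le \sqrt{\|\bs{Q}\|_1\|\bs{Q}\|_\infty}$ with $\|\bs{Q}\|_\infty=1$ and $\|\bs{Q}\|_1 = \max_j\sum_i Q_{ij} = \max_j\sum_i \tilde E_{ij}/\tilde N_i$, which is $O(1)$ under the uniform-degree condition of Remark~\ref{rmk:some-sufficient-condition-for-a-3} applied to the (symmetric) latent rook-subgraph.

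With Assumptions~\ref{a:bounded-parameter}, \ref{a:opnorm-EE^T}, and~\ref{a:local-interference-for-marketplace} in force, Proposition~\ref{prop:order-and-formula-of-tau-ind} gives $\Var(\hat\tau_\ind) = \sigma_{\ind,1}^2 + \sigma_{\ind,2}^2 = O(N\rhon^2) + O(\rhon) = O(N\rhon^2)$ since $N\rhon\to\infty$, and Chebyshev's inequality combined with the unbiasedness $\E\hat\tau_\ind=\tau_\ind$ yields $\hat\tau_\ind-\tau_\ind = \Op((N\rhon^2)^{1/2}) = \Op(N^{1/2}\rhon) = \Op(N^{1/2}N_R^{-1}+N^{1/2}N_C^{-1})$, using $\rhon = \Theta(N_R^{-1}+N_C^{-1})$ and subadditivity of the square root. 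The identical argument with Proposition~\ref{prop:order-of-tau-tot} gives the rate for $\hat\tau_\tot-\tau_\tot$. The main obstacle is purely the spectral bookkeeping in the second paragraph — verifying Assumption~\ref{a:opnorm-EE^T}(ii) for the \emph{latent} subgraph $\tilde{\bs{E}}$, since $\tilde{\bs{E}}$ need not inherit the nice product structure of $\bs{E}$ and one must fall back on the uniform-degree sufficient condition; everything after that is a mechanical substitution into results already proved.
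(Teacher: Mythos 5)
Your final paragraph is exactly the paper's proof: substitute $\rhon = \Theta(N_R^{-1}+N_C^{-1})$ (which follows from Assumption~\ref{a:local-interference-for-marketplace}) into Propositions~\ref{prop:order-and-formula-of-tau-ind} and~\ref{prop:order-of-tau-tot} and apply Chebyshev's inequality together with unbiasedness. Your entire second paragraph is superfluous, since Assumption~\ref{a:opnorm-EE^T} is a hypothesis of the corollary rather than something to be derived from Assumption~\ref{a:local-interference-for-marketplace} --- in particular the difficulty you flag about verifying Assumption~\ref{a:opnorm-EE^T}(ii) for the latent subgraph $\tilde{\bs{E}}$ does not arise, because that condition is assumed outright.
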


The consistency of $\hat{\tau}_{\ind} $ and $\hat{\tau}_{\tot} $ requires that $\max\{N_R,N_C\}=o(N^{1/2})$, which is impossible. Therefore, we need to use the EV-adjusted estimators. It is not difficult to show that
\[
\bs{E} = \sum_{r=1}^{N_R}(\bs{e}_{r,N_R}\otimes\bs{1}_{N_C})^\top (\bs{e}_{r,N_R}\otimes\bs{1}_{N_C})+ \sum_{c=1}^{N_C} (\bs{1}_{N_R}\otimes \bs{e}_{c,N_C})^\top(\bs{1}_{N_R}\otimes \bs{e}_{c,N_C}) -2\bs{I}_N.
\]
Therefore, the dominant eigenvectors of $\bs{E}$ are in the space spanned by $\{\bs{e}_{r,N_R}\otimes\bs{1}_{N_C}\}_{r=1}^{N_R}\cup\{\bs{1}_{N_R}\otimes \bs{e}_{c,N_C}\}_{c=1}^{N_c}$, which are the column and row indicator vectors. Choosing the first row and first column as the reference level, it is intuitive to use $\bs{W}_i = (1,I(C_i=2),\ldots, I(C_i=N_C),I(R_i=2),\ldots,I(R_i=N_C))^\top$ in the regression adjustment. This regression contains $2(N_R+N_C-1)$ parameters with $2(N_R+N_C-1)\rightarrow \infty$ and the corresponding eigenvectors violate \Cref{a:bounded-eigenvectors}. To reduce the number of regressors and fulfill \Cref{a:bounded-eigenvectors}, we assume homogeneity between columns and rows and perform regression \eqref{eq:regression-reduced-group-indicator}. Then, we use the regression residual to construct $\hat{\tau}^{\ev}_{\ind} $ and $\hat{\tau}^{\ev}_{\tot} $ with convergence rates being presented in \Cref{cor:local-interference-for-marketplace-adj} below.

\begin{assumption}
    \label{a:no-heterogeneity-between-columns-and-rows}
    For all $c=1,\ldots,N_C$ and $r=1,\ldots,N_R$,
    \begin{align*}
        N_R^{-1} \sum_{i:C_i=c} Y_{Z_i=z} = N_R^{-1}\sum_{i:C_i=1} Y_{Z_i=z},\quad N_C^{-1} \sum_{i:R_i=r} Y_{Z_i=z} = N_C^{-1}\sum_{i:R_i=1} Y_{Z_i=z}.
    \end{align*}
\end{assumption}

\begin{remark}
Similar to \Cref{rmk:stratify-the-group-partial-interference}, we can relax \Cref{a:no-heterogeneity-between-columns-and-rows}, if we can group sellers and buyers, and assume both between-group buyer-homogeneity and between-group seller-homogeneity. The corresponding regression is a 
two way effect linear regression; see the Supplementary Materials for more details. 
\end{remark}

\begin{corollary}
\label{cor:local-interference-for-marketplace-adj}
    Under Assumptions~\ref{a:bounded-parameter}--\ref{a:Lindberg-condition-unadj}, \ref{a:bounded-parameter-adjusted}--\ref{a:bounded-eigenvectors}, and \ref{a:local-interference-for-marketplace}--\ref{a:no-heterogeneity-between-columns-and-rows}, we have $\hat{\tau}_{\ind} ^{\ev} - \tau_{\ind} = \Op(N_R^{-1/2}+N_C^{-1/2})$ and $\hat{\tau}_{\tot} ^{\ev} - \tau_{\tot} = \Op(N_R^{-1/2}+N_C^{-1/2})$. 
\end{corollary}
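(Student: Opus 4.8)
\emph{Proof plan.} The strategy is to reduce \Cref{cor:local-interference-for-marketplace-adj} to \Cref{prop:order-of-adjusted-estimators} and \Cref{thm:CLT-EV-adj} by controlling the two quantities, $\rhon$ and $N^{-1}\Deltan$, that jointly govern $\Var(\tilde{\tau}^{\ev}_{\ind})$ and $\Var(\tilde{\tau}^{\ev}_{\tot})$. The density is already known: under \Cref{a:local-interference-for-marketplace}, $\rhon=\{N_R\binom{N_C}{2}+N_C\binom{N_R}{2}\}/N^2=O(N_R^{-1}+N_C^{-1})$. Since $N^{-1}=(N_RN_C)^{-1}\le\min\{N_R,N_C\}^{-1}$, it therefore suffices to prove $N^{-1}\Deltan=O(N^{-1})$, i.e.\ $\Deltan=O(1)$, after which $N^{-1}\Deltan+\rhon=O(N_R^{-1}+N_C^{-1})$ and the rates follow from \Cref{prop:order-of-adjusted-estimators} and \Cref{thm:CLT-EV-adj}.

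First I would determine the spectrum of $\bs{E}\bs{E}^\top=\bs{E}^2$. From the identity $\bs{E}=\bs{I}_{N_R}\otimes\bs{J}_{N_C}+\bs{J}_{N_R}\otimes\bs{I}_{N_C}-2\bs{I}_N$ in the text, the two Kronecker summands commute and are hence simultaneously diagonalized by tensors $\bs{u}\otimes\bs{v}$ with $\bs{u},\bs{v}$ eigenvectors of $\bs{J}_{N_R},\bs{J}_{N_C}$. This yields four $\bs{E}$-eigenvalues: $N_R+N_C-2$ (eigenvector $\bs{1}_N$), $N_C-2$ (multiplicity $N_R-1$), $N_R-2$ (multiplicity $N_C-1$), and $-2$ (multiplicity $(N_R-1)(N_C-1)$). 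Equivalently, the subspace $\mathcal{V}:=\mathrm{span}\bigl(\{\bs{e}_{r,N_R}\otimes\bs{1}_{N_C}\}_{r=1}^{N_R}\cup\{\bs{1}_{N_R}\otimes\bs{e}_{c,N_C}\}_{c=1}^{N_C}\bigr)$, the span of the row and column indicator vectors, is $\bs{E}$-invariant of dimension $N_R+N_C-1$ and carries exactly the eigenvalues largest in magnitude, while $\bs{E}$ restricted to $\mathcal{V}^\perp$ equals $-2\bs{I}$. Hence the $N_R+N_C-1$ largest eigenvalues of $\bs{E}\bs{E}^\top$ have eigenvectors spanning $\mathcal{V}$, and $\lambda_{N_R+N_C}=4$ once $N_R,N_C$ are large, which holds since \Cref{a:density-rho_N} forces $N_R,N_C\to\infty$.

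Next I would show the oracle residual vectors $\bs{e}_{(z)}:=(e_{Z_1=z},\ldots,e_{Z_N=z})^\top$, $z\in\{0,1\}$, produced by the reduced regression \eqref{eq:regression-reduced-group-indicator} (i.e.\ $\bs{W}_i\equiv1$) lie in $\mathcal{V}^\perp$. Because the residual operator is linear, $e_{Z_i=0}=Y_{Z_i=0}-\bs{\beta}_{0,\ora}^\top\bs{W}_i$ and $e_{Z_i=1}=e_{Z_i=0}+\theta_{\bsw,i}$, so orthogonality of $\bs{e}_{(0)}$ and $\bs{e}_{(1)}$ to $\mathcal{V}$ reduces to the within-row and within-column averages of $Y_{Z_i=0}$ and of $\theta_i=Y_{Z_i=1}-Y_{Z_i=0}$ being constant across rows and columns; this is precisely \Cref{a:no-heterogeneity-between-columns-and-rows}, which (just as in the discussion following \Cref{cor:variance-estimator-adj-stratified2}) makes the oracle residual of $\bs{W}_i\equiv1$ coincide with that of the full regression on all row and column indicators. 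Since \Cref{a:bounded-parameter-adjusted} bounds $|e_{Z_i=z}|$ uniformly in $i$, \Cref{rmk:estimate-of-deltan-eigenvalue} applied with $K=N_R+N_C-1$ gives $N^{-1}\Deltan=O(\lambda_{N_R+N_C}/N)=O(N^{-1})$, i.e.\ $\Deltan=O(1)$; this reduction is exactly what lets \Cref{a:bounded-eigenvectors} hold while still effectively adjusting all $N_R+N_C-1$ dominant eigenvectors.

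Assembling the pieces, $N^{-1}\Deltan+\rhon=O(N_R^{-1}+N_C^{-1})$, so \Cref{prop:order-of-adjusted-estimators} gives $\Var(\tilde{\tau}^{\ev}_{\ind})=\Var(\tilde{\tau}^{\ev}_{\tot})=O(N_R^{-1}+N_C^{-1})$, and \Cref{thm:CLT-EV-adj} (whose hypotheses are all among those assumed in the corollary) yields $\hat{\tau}^{\ev}_\star-\tau_\star=\Op(\Var(\tilde{\tau}^{\ev}_\star)^{1/2})=\Op((N_R^{-1}+N_C^{-1})^{1/2})=\Op(N_R^{-1/2}+N_C^{-1/2})$ for $\star\in\{\ind,\tot\}$, using $(a+b)^{1/2}\le a^{1/2}+b^{1/2}$. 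The main obstacle is the orthogonality step: one must carefully argue, via the two-sided homogeneity assumption, that the \emph{oracle} residual of the reduced regression equals that of the full row-and-column-indicator regression and hence lands in the bottom eigenspace of $\bs{E}\bs{E}^\top$; it also relies on the eigenvalue count $\lambda_{N_R+N_C}=O(1)$, which genuinely uses $N_R,N_C\to\infty$ (for very small $N_R$ or $N_C$ the value $(N_R-2)^2$ or $(N_C-2)^2$ could drop below $4$ and permute the ordering). Everything else is bookkeeping with rates already established in \Cref{prop:order-of-adjusted-estimators} and \Cref{thm:CLT-EV-adj}.
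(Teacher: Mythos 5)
Your proposal is correct and follows essentially the same route as the paper's proof: identify the dominant eigenspace of $\bs{E}\bs{E}^\top$ with the span of the row and column indicator vectors via the Kronecker decomposition of $\bs{E}$, use \Cref{a:no-heterogeneity-between-columns-and-rows} to show the oracle residual of the reduced regression coincides with that of the full row-and-column-indicator regression (hence lies in the bottom eigenspace), conclude $\Deltan=O(\lambda_{N_R+N_C})=O(1)$, and combine with $\rhon=O(N_R^{-1}+N_C^{-1})$ via \Cref{prop:order-of-adjusted-estimators} and \Cref{thm:CLT-EV-adj}. Your explicit listing of the spectrum and the caveat about small $N_R,N_C$ are slightly more detailed than the paper's treatment but do not change the argument.
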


\Cref{cor:local-interference-for-marketplace-adj} indicates that the EV-adjusted estimators $\hat{\tau}_{\ind} ^{\ev}$ and $\hat{\tau}_{\tot} ^{\ev}$ are still consistent if $N_R, N_C \rightarrow \infty$. The oracle residual $e_i$ of $\bs{W}_i \equiv 1$ is the same as that of $\bs{W}_i \equiv (1,I(C_i=2),\ldots, I(C_i=N_C),I(R_i=2),\ldots,I(R_i=N_C))^\top$. Therefore, regression \eqref{eq:regression-reduced-group-indicator} adjusts the first $N_R+N_C-1$ top eigenvectors of $\bs{E} \bs{E}^\top$ and $\lambda_{N_R+N_C}=(-2)^2=O(1)$, and we have $\Deltan = O(\lambda_{N_R+N_C}) = O(1) = o(N\rhon)$. Therefore, $\hat{\tau}_{\star}^{\ev}$, $\star\in\{\ind,\tot\}$ improves the convergence rate of $\hat{\tau}_{\star}$, from $N^{1/2}\rhon$ to $\rhon^{1/2}$.

\subsection{Random graph drawn from a graphon model}
\cite{li2022random} considered random graphs generated from a graphon model. To see the magnitude of $\Deltan$ and the convergence rates of the IATE and GATE estimators, we assume that the observed network comes from a graphon model in this subsection.

In the graphon model, let $U_i\stackrel{\textnormal{iid}}{\sim}\textnormal{Uniform}[0,1]$ be a latent random variable associated with the $i$-th unit. Assume that $E_{ij} = E_{ji}$ and conditional on $U_i$'s, $E_{ij}$ is generated through $E_{ij}\sim \textnormal{Bernoulli}(G_N(U_i,U_j))$ for $i<j$, where $G_N:[0,1]^2\rightarrow [0,1]$ is called the graphon function. We consider $G_N(U_i,U_j) = \min\{1,\rhon^\star G(U_i,U_j)\}$ and focus on a setting where $G(U_i,U_j)$ has a low rank representation:
\[
G(U_i,U_j) = \sum_{k=1}^r \lambda_k \psi_k(U_i)\psi_k(U_j).
\]

\cite{li2022random} proposed the following regularity assumptions on the graphon function.
\begin{assumption}
\label{a:regularity-conditons-for-graphon-model}
   (i) The function $g_1(u)=\int_0^1 \min(1,G(u,t))dt$ is bounded away from $0$, i.e., $g_1(u)\geq C_{+}$ for a constant $C_{+}>0$ and any $u \in [0,1]$; 
   (ii) $|\lambda_1|\geq |\lambda_2|\geq\cdots\geq |\lambda_r|>0$, $\operatorname{E}\{\psi_k(U_1)^2\}=1$ and $\operatorname{E}\{\psi_k(U_1)\psi_l(U_1)\}=0$ for $k\ne l$;  (iii) $\psi_k(U_1)$ satisfies the Bernstein condition with parameter $b$, i.e., for $k=2,3,4,\ldots$,
   \[
   |\operatorname{E}\{\psi_k(U_1)-\operatorname{E}\psi_k(U_1)\}^k| \leq \frac{1}{2}k!\Var\{\psi_k(U_1)\}b^{k-2};
   \]
   and (iv) $\liminf \log \rhon^\star/\log N > -1/2$ and $\liminf \log \rhon^\star/\log N < 0$.
\end{assumption}

Assume that the rank $r$ of the graphon is known. Then we use the dominant $r$ eigenvectors of $\bs{E}$ as $\bs{W}_i$ in the regression adjustment.
Then, we use the residual of the above regression to construct $\hat{\tau}_{\ind} $ and $\hat{\tau}_{\tot} $.

\begin{proposition}
\label{prop:graphon-model}
    Under Assumptions \ref{a:bounded-parameter}, \ref{a:bounded-parameter-adjusted}, and \ref{a:regularity-conditons-for-graphon-model}, with probability tending to one, we have, for any $\delta>0$,
    $
    N^{-1}\Deltan \leq C (\log N)^{4+\delta}\rhon^\star$ and $\underline{C}\rhon^\star \leq \rhon\leq \bar{C}\rhon^\star
    $ for constant $C,\underline{C},\bar{C}>0$.
\end{proposition}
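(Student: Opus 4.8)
The plan is to establish the two bounds separately, relying on concentration results for the spectrum and eigenvectors of sparse random graphs drawn from a low-rank graphon, most of which are available from \cite{li2022random} or standard random matrix theory. First I would handle the density bound $\underline{C}\rhon^\star \leq \rhon \leq \bar{C}\rhon^\star$. Recall $\rhon = \sumi N_i/N^2$, so $N^2\rhon = \sum_{i\neq j} E_{ij}$, and conditional on the $U_i$'s, $\operatorname{E}(N^2\rhon \mid \bs{U}) = \sum_{i\neq j} G_N(U_i,U_j)$. Using $G_N(u,t)=\min\{1,\rhon^\star G(u,t)\}$ together with Assumption~\ref{a:regularity-conditons-for-graphon-model}(i) (which gives $\int \min(1,G(u,t))dt \geq C_+$, hence $\operatorname{E}(N^2\rhon \mid \bs U)\gtrsim N^2\rhon^\star$) and the boundedness of $G$ in $L^1$ (which gives the matching upper bound, using $\rhon^\star=o(1)$ from Assumption~\ref{a:regularity-conditons-for-graphon-model}(iv) so that the truncation is asymptotically inactive on average), the conditional expectation is of exact order $N^2\rhon^\star$ with high probability over $\bs U$; a Bernstein/Chernoff bound for the sum of independent Bernoulli $E_{ij}$'s (degrees are $\Omega(N\rhon^\star)$ and $N\rhon^\star \to \infty$ by Assumption~\ref{a:regularity-conditons-for-graphon-model}(iv)) then concentrates $N^2\rhon$ around its conditional mean, giving the claim with probability tending to one.

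Next I would bound $N^{-1}\Deltan$. By \Cref{rmk:estimate-of-deltan-eigenvalue}, since $e_{Z_i=z}$ is orthogonal to the top $K$ (here $K=r$) eigenvectors of $\bs E\bs E^\top$ and is uniformly bounded by \Cref{a:bounded-parameter-adjusted}, we have $N^{-1}\Deltan = O(\lambda_{r+1}(\bs E \bs E^\top)/N) = O(\lambda_{r+1}(\bs E)^2/N)$ where $\lambda_{r+1}(\bs E)$ is the $(r+1)$-th largest singular value of $\bs E$ (using symmetry of $\bs E$). So it suffices to show $\lambda_{r+1}(\bs E)^2 \leq C N (\log N)^{4+\delta}\rhon^\star$ with high probability, i.e. $\lambda_{r+1}(\bs E) \lesssim (N\rhon^\star)^{1/2}(\log N)^{2+\delta/2}$. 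I would obtain this by writing $\bs E = \bs P + (\bs E - \bs P)$ where $P_{ij} = G_N(U_i,U_j)$ for $i\neq j$ (and $0$ on the diagonal). The signal matrix $\bs P$ has rank at most $r+1$ (rank $r$ from the graphon expansion plus a rank-one correction for the zeroed diagonal and the truncation being inactive except on a negligible set), so by Weyl's inequality $\lambda_{r+1}(\bs E) \leq \lambda_{r+2}(\bs P) + \|\bs E - \bs P\|_{\opnorm} = O(1) + \|\bs E - \bs P\|_{\opnorm}$ — more carefully, I would track that the small truncation-induced and diagonal-induced perturbations are $O(1)$ or lower order — and then invoke a concentration bound for the spectral norm of the centered sparse adjacency matrix, $\|\bs E - \bs P\|_{\opnorm} = O_{\mathbb P}\big((N\rhon^\star)^{1/2}(\log N)^{C'}\big)$, which holds for inhomogeneous Erdős–Rényi graphs with expected degree $N\rhon^\star$ (this is exactly the kind of bound used in \cite{li2022random}; the polylog factor is what produces the $(\log N)^{4+\delta}$ after squaring and dividing by $N$). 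Combining, $\lambda_{r+1}(\bs E)^2/N \lesssim (\rhon^\star)(\log N)^{C''} + N^{-1}$, and since $N^{-1} = o(\rhon^\star)$ by Assumption~\ref{a:regularity-conditons-for-graphon-model}(iv) (which forces $\rhon^\star \gg N^{-1/2}$), the bound $N^{-1}\Deltan \leq C(\log N)^{4+\delta}\rhon^\star$ follows.

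The main obstacle I anticipate is making the rank-plus-perturbation decomposition of $\bs E$ fully rigorous in the sparse regime: one must verify that the truncation $\min\{1,\rhon^\star G\}$ is inactive except on an event of negligible measure (so that $\bs P$ genuinely has rank $O(r)$ up to a lower-order correction — here Assumption~\ref{a:regularity-conditons-for-graphon-model}(iv), $\rhon^\star = o(1)$, and boundedness of $G$ on a set of full measure are what is needed), control the effect of zeroing the diagonal (a rank-one, $O(1)$-norm correction, harmless), and cite or reproduce the correct high-probability spectral-norm bound for centered sparse random matrices with the sharp polylogarithmic dependence — the Bernstein condition in Assumption~\ref{a:regularity-conditons-for-graphon-model}(iii) on the $\psi_k(U_1)$ is presumably what \cite{li2022random} use to get the entrywise/row-wise control feeding into such a bound. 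The role of \Cref{a:bounded-parameter-adjusted} is purely to supply the uniform bound on $e_{Z_i=z}$ needed in \Cref{rmk:estimate-of-deltan-eigenvalue}; it does not interact with the random-graph analysis. Everything else is routine concentration, so I would organize the write-up as: (1) density concentration via Bernstein; (2) spectral decomposition $\bs E = \bs P + (\bs E-\bs P)$ and control of $\lambda_{r+1}(\bs E)$ via Weyl plus a cited spectral-norm bound; (3) assembling $N^{-1}\Deltan = O(\lambda_{r+1}(\bs E)^2/N)$ and absorbing the $N^{-1}$ term using Assumption~\ref{a:regularity-conditons-for-graphon-model}(iv).
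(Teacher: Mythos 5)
Your proposal follows essentially the same route as the paper: both bound $N^{-1}\Deltan$ via \Cref{rmk:estimate-of-deltan-eigenvalue} by showing the $(r+1)$-th largest absolute eigenvalue of $\bs{E}$ is $\Op((\log N)^2\sqrt{N\rhon^\star})$, using Weyl's inequality against the rank-$r$ signal $\rhon^\star\bs{G}$ together with the operator-norm concentration $\|\bs{E}-\rhon^\star\bs{G}\|_{\opnorm}=\Op((\log N)^2\sqrt{N\rhon^\star})$ (Lemma~25 of \cite{li2022random}, which already absorbs the truncation and diagonal effects you propose to track separately --- note that zeroing the diagonal is an $O(\rhon^\star)$-norm diagonal perturbation rather than a rank-one correction, but your operator-norm fallback handles it correctly), and both establish $\rhon\asymp\rhon^\star$ by a moment/concentration argument (the paper computes unconditional first and second moments via Lemma~16 of \cite{li2022random} rather than conditioning on $\bs{U}$ and applying Chernoff). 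The differences are bookkeeping, not ideas.
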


Recall that conditional on the graph the EV-adjusted estimators have the magnitude of $\Op(\rhon^{1/2}+N^{-1/2}\Deltan^{1/2})$. Therefore, 
by \Cref{prop:graphon-model}, under such a graphon model, $\hat{\tau}^{\ev}_{\ind} $ and $\hat{\tau}^{\ev}_{\tot} $ improve the convergence rates of $\hat{\tau}_{\ind} $ and $\hat{\tau}_{\tot} $, from $N^{1/2}\rhon^\star$ to $(\rhon^\star)^{1/2}$ (up to a log factor) if we use the top $r$ eigenvectors of $\bs{E}$.

\section{Numerical study}
\label{sec:sim}

\subsection{Numerical experiments}
In this section, we conduct numerical simulations to evaluate the performance of the proposed estimators for DATE, IATE, and GATE, and to verify the improvement in estimation accuracy and inference efficiency of the EV-adjusted estimators. We consider three different scenarios: partial interference, local interference, and interference based on a real-world network. For comparison, we also present the results of methods from existing literature under these scenarios.

Consider a Bernoulli trial with treatment probability $r_1=0.5$. The outcomes are generated by the heterogeneous additive treatment effect model (\ref{eq:HATE-model}), that is,
$$Y_i = \alpha_i + \theta_i Z_i + \sum_{j=1}^N \tilde{E}_{ij}\gamma_{ij} Z_j.$$
The parameters are generated from normal distribution and $t$-distribution. Formally, we generate $\alpha_i \stackrel{\text { i.i.d. }}{\sim} \mathcal{N}(\mu_i,1)$, $\theta_i \stackrel{\text { i.i.d. }}{\sim} 0.8 \cdot t_3(\mu_i,0.5)$ and $\gamma_{i j}^{\prime} \stackrel{\text { i.i.d. }}{\sim}  1.8 \cdot t_3((\mu_i+\mu_j)/2,0.5)$, where $t_3(\mu, \sigma)$ denotes the $t$-distribution with 3 degrees of freedom, location parameter $\mu$ and scaling parameter $\sigma$. To meet Assumption \ref{a:bounded-parameter}, we normalize the indirect effect parameters by $\gamma_{i j}=\gamma_{i j}^{\prime} / \sum_{j=1}^N \tilde{E}_{i j}$. The sample size $N$, the network structure, and the parameters $\mu_i$'s are specified differently in the following three scenarios:

\textbf{Scenario 1} (Partial interference): Consider the network structure described in Section \ref{sec:partial}. The outcomes are generated through a hidden network. Specifically, we generate the observed network $\bs{E}$ by Assumption \ref{a:stratified-interference}, and then independently remove each edge in $\bs{E}$ with a probability of 0.75 to obtain $\tilde{\bs{E}}$. We set the cluster size $N_M$ and the number of clusters $M$ to be $(10,45)$ and $(20,180)$. The numbers of observed neighbors for each unit are 9 and 19, and the average numbers of true neighbors for each unit are 2.25 and 4.75, respectively. The observed network densities $\rhon$ are approximately 0.020 and 0.005. As mentioned in Remark \ref{rmk:stratify-the-group-partial-interference}, suppose that all clusters can be equally divided into 3 strata, where the parameter $\mu_i$ is set to be the same within each stratum, and takes values $0, 1, 2$ for different strata. In this setting, $\mu_i$ actually represents the stratum effect. We use the cluster stratum indicators as the eigenvectors in regression adjustment. As mentioned in Remark \ref{rem:conservative-variance-estimator}, we employ the variance estimators without multiplication by 2 for inference (the same in the other two scenarios).

In partial interference case, a common method to estimate the GATE is to perform cluster randomization, which means a Bernoulli trial at the cluster level. The simple difference in means between treated and control clusters (denoted as $\hat{\tau}_{\tot}^{\textnormal{CL}}=\sumi Z_iY_i / (Nr_1) - \sumi (1-Z_i)Y_i / (Nr_0)$) can be used as the estimator \citep{viviano2023causal}. If we consider clusters as aggregated units and the cluster-averaged outcomes as corresponding outcomes, then this estimator is equivalent to our estimator for the direct effect (defined on aggregated units). Therefore, similar normal approximation and Neyman-type variance estimators can be used for inference. We set the treatment probability of cluster randomization to 0.5 as well. To fit the finite-population framework, we generate the observed network, the hidden network, and the parameters $(\alpha_i, \theta_i, \gamma_{ij})$ once, then separately repeat the Bernoulli trial and cluster randomization 10000 times to evaluate the bias, standard deviation (SD), root mean squared error (RMSE) of the point estimators, and empirical coverage probability (CP) and mean confidence interval length (Length) of $95\%$ confidence intervals.

\begin{table}[ht]
	\centering
	\caption{Simulation results for scenario 1}
	\begin{threeparttable}
		\resizebox{\textwidth}{!}{
			\begin{tabular}{lrrrrrrrrrrrr}
				\toprule
				& \multicolumn{6}{c}{$(N_M,M)=(10,45)$}                     & \multicolumn{6}{c}{$(N_M,M)=(20,180)$} \\
				\cmidrule{2-13}          & \multicolumn{1}{l}{True} & \multicolumn{1}{l}{Bias} & \multicolumn{1}{l}{SD} & \multicolumn{1}{l}{RMSE} & \multicolumn{1}{l}{CP} & \multicolumn{1}{l}{Length} & \multicolumn{1}{l}{True} & \multicolumn{1}{l}{Bias} & \multicolumn{1}{l}{SD} & \multicolumn{1}{l}{RMSE} & \multicolumn{1}{l}{CP} & \multicolumn{1}{l}{Length} \\
				\cmidrule{2-13}    $\hat{\tau}_{\dir} $   & 0.831  & 0.003  & 0.319  & 0.319  & 0.953  & 1.266  & 0.796  & -0.001  & 0.108  & 0.108  & 0.953  & 0.431  \\
				$\hat{\tau}_{\ind} $   & 1.687  & 0.008  & 2.621  & 2.621  & 0.965  & 10.974  & 1.797  & -0.006  & 1.919  & 1.919  & 0.957  & 7.768  \\
				$\hat{\tau}_{\ind} ^{\ev}$ & 1.687  & -0.115  & 0.561  & 0.573  & 0.980  & 2.690  & 1.797  & -0.029  & 0.296  & 0.297  & 0.967  & 1.269  \\
				$\hat{\tau}_{\tot} $   & 2.518  & 0.011  & 2.904  & 2.904  & 0.963  & 12.010  & 2.593  & -0.007  & 2.018  & 2.018  & 0.956  & 8.141  \\
				$\hat{\tau}_{\tot} ^{\ev}$ & 2.518  & -0.112  & 0.672  & 0.681  & 0.954  & 2.793  & 2.593  & -0.030  & 0.326  & 0.327  & 0.952  & 1.298  \\
				$\hat{\tau}_{\tot} ^{\textnormal{CL}}$ & 2.518  & 0.008  & 0.885  & 0.885  & 0.967  & 3.970  & 2.593  & 0.005  & 0.444  & 0.444  & 0.974  & 1.987  \\
				\bottomrule
			\end{tabular}
		}
		\begin{tablenotes}
			\footnotesize
			\item Note: True, true value of estimand; SD, standard deviation; RMSE, root mean square error; CP, \\coverage probability; Length, mean length of confidence interval.
		\end{tablenotes}
	\end{threeparttable}
	\label{tab:partial}
\end{table}

\textbf{Scenario 2} (Local interference): Consider the network structure described in Section \ref{sec:local}. The hidden network generation mechanism is similar to that in Scenario 1, except that the probability of removing edges in the observed network is changed to 0.95. The numbers of buyers and sellers $(N_R,N_C)$ are set as $(20,20)$ and $(60,60)$, to simulate $N_R$ and $N_C$ tending to infinity at the same rate as $N$ increases. The numbers of observed neighbors for each unit are 38 and 118, and the average numbers of true neighbors for each unit are 1.9 and 5.9, respectively. The network densities $\rhon$ are approximately 0.095 and 0.033. Similarly, suppose that both buyers and sellers can be equally divided into 2 strata (totally 4 strata for transaction pairs), where the parameter $\mu_i = 1 + \delta_{i,R} + \delta_{i,C}$, with $\delta_{i, R}$ and $\delta_{i, C}$ sharing the same value for transaction pairs within the same buyer stratum and the same seller stratum, respectively. Let $\delta_{i, R}$, $\delta_{i, C}$ take values in $\{-1,1\}$. In this setting, $\mu_i$ represents a linear combination of the buyer-stratum effect and the seller-stratum effect. We use the buyer stratum indicators and the seller stratum indicators as the eigenvectors in regression adjustment.

We compare our method with the approach proposed by \cite{masoero2024multiple} in this scenario, which advocates for a simple multiple randomization design (SMRD). We choose proper contrast vector to ensure their estimands correspond to the DATE, IATE, and GATE (denote the corresponding estimators as $\hat{\tau}_{\dir} ^{\textnormal{SMRD}}$, $\hat{\tau}_{\ind} ^{\textnormal{SMRD}}$, $\hat{\tau}_{\tot} ^{\textnormal{SMRD}}$) as defined in our paper, and assign half of the buyers and half of the sellers to treatment. Similarly, the generation of networks and parameters is performed once, and the two different designs are repeated 10000 times.

\begin{table}[ht]
	\centering
	\caption{Simulation results for scenario 2}
	\begin{threeparttable}
		\resizebox{\textwidth}{!}{
			\begin{tabular}{lrrrrrrrrrrrr}
				\toprule
				& \multicolumn{6}{c}{$(N_R,N_C)=(20,20)$}                     & \multicolumn{6}{c}{$(N_R,N_C)=(60,60)$} \\
				\cmidrule{2-13}          & \multicolumn{1}{l}{True} & \multicolumn{1}{l}{Bias} & \multicolumn{1}{l}{SD} & \multicolumn{1}{l}{RMSE} & \multicolumn{1}{l}{CP} & \multicolumn{1}{l}{Length} & \multicolumn{1}{l}{True} & \multicolumn{1}{l}{Bias} & \multicolumn{1}{l}{SD} & \multicolumn{1}{l}{RMSE} & \multicolumn{1}{l}{CP} & \multicolumn{1}{l}{Length} \\
				\cmidrule{2-13}    $\hat{\tau}_{\dir}$   & 0.825  & -0.008  & 0.389  & 0.389  & 0.952  & 1.543  & 0.786  & 0.000  & 0.130  & 0.130  & 0.957  & 0.525  \\
				$\hat{\tau}_{\dir} ^{\textnormal{SMRD}}$ & 0.825  & 0.005  & 0.515  & 0.515  & 1.000  & 6.703  & 0.786  & -0.004  & 0.249  & 0.249  & 1.000  & 4.313  \\
				$\hat{\tau}_{\ind}$   & 1.465  & -0.183  & 9.494  & 9.496  & 0.962  & 38.794  & 1.778  & -0.098  & 10.550  & 10.550  & 0.957  & 42.769  \\
				$\hat{\tau}_{\ind}^{\ev}$ & 1.465  & -0.218  & 1.076  & 1.098  & 0.990  & 5.810  & 1.778  & -0.097  & 0.564  & 0.572  & 0.988  & 2.811  \\
				$\hat{\tau}_{\ind} ^{\textnormal{SMRD}}$ & 1.465  & 0.002  & 1.494  & 1.494  & 0.999  & 10.687  & 1.778  & -0.016  & 0.897  & 0.897  & 1.000  & 6.681  \\
				$\hat{\tau}_{\tot}$   & 2.290  & -0.191  & 9.819  & 9.821  & 0.961  & 39.993  & 2.565  & -0.099  & 10.665  & 10.665  & 0.957  & 43.215  \\
				$\hat{\tau}_{\tot}^{\ev}$ & 2.290  & -0.226  & 1.512  & 1.529  & 0.986  & 5.840  & 2.565  & -0.098  & 0.582  & 0.590  & 0.985  & 2.819  \\
				$\hat{\tau}_{\tot} ^{\textnormal{SMRD}}$ & 2.290  & 0.007  & 1.866  & 1.866  & 0.985  & 9.670  & 2.565  & -0.020  & 1.124  & 1.124  & 0.988  & 5.875  \\
				\bottomrule
			\end{tabular}
		}
		\begin{tablenotes}
			\footnotesize
			\item Note: True, true value of estimand; SD, standard deviation; RMSE, root mean square error; CP, \\coverage probability; Length, mean length of confidence interval.
		\end{tablenotes}
	\end{threeparttable}
	\label{tab:twosided}
\end{table}

\textbf{Scenario 3} (A real-world network): Consider a real network generated by email data from a large European research institution. The nodes represent institution members and the edges represent their communication, more specifically, whether one person has sent at least one email to another. Since the act of sending an email is unidirectional, we observe a directed network in this scenario. The network contains 1005 nodes and 25571 edges, with an average number of observed neighbors of approximately 25.4 and an average density $\rhon$ of about 0.025. Similarly, we set a probability of 0.75 to generate the hidden network with an average number of true neighbors of approximately 6.4. Here, we do not consider the stratification structure, so we set all $\mu_i$'s to 1.

An important hyperparameter in the EV-adjusted estimators is the number of eigenvectors used for regression adjustment. We plot the eigenvalue decay of the observed network (see figure \ref{fig:ev-plot}) and find that the eigenvalues decrease rapidly for the largest 10. In detail, the largest eigenvalue is 4098, and the 5th and 10th largest eigenvalues are 634 and 355, respectively. To illustrate the differences, we present the results of $\hat{\tau}_{\ind}^{\ev}$ and $\hat{\tau}_{\tot}^{\ev}$ for varying numbers of eigenvectors ($\hat{\tau}_{\dir}$ exhibits similar performance to the other two scenarios, so we omit it). Let $K$ denote the number of eigenvectors used for adjustment. In particular, we use $K=0$ to indicate adjusting intercept only and $K=-1$ to indicate no adjustment.

\begin{figure}
    \centering
    \includegraphics[width=0.9\textwidth]{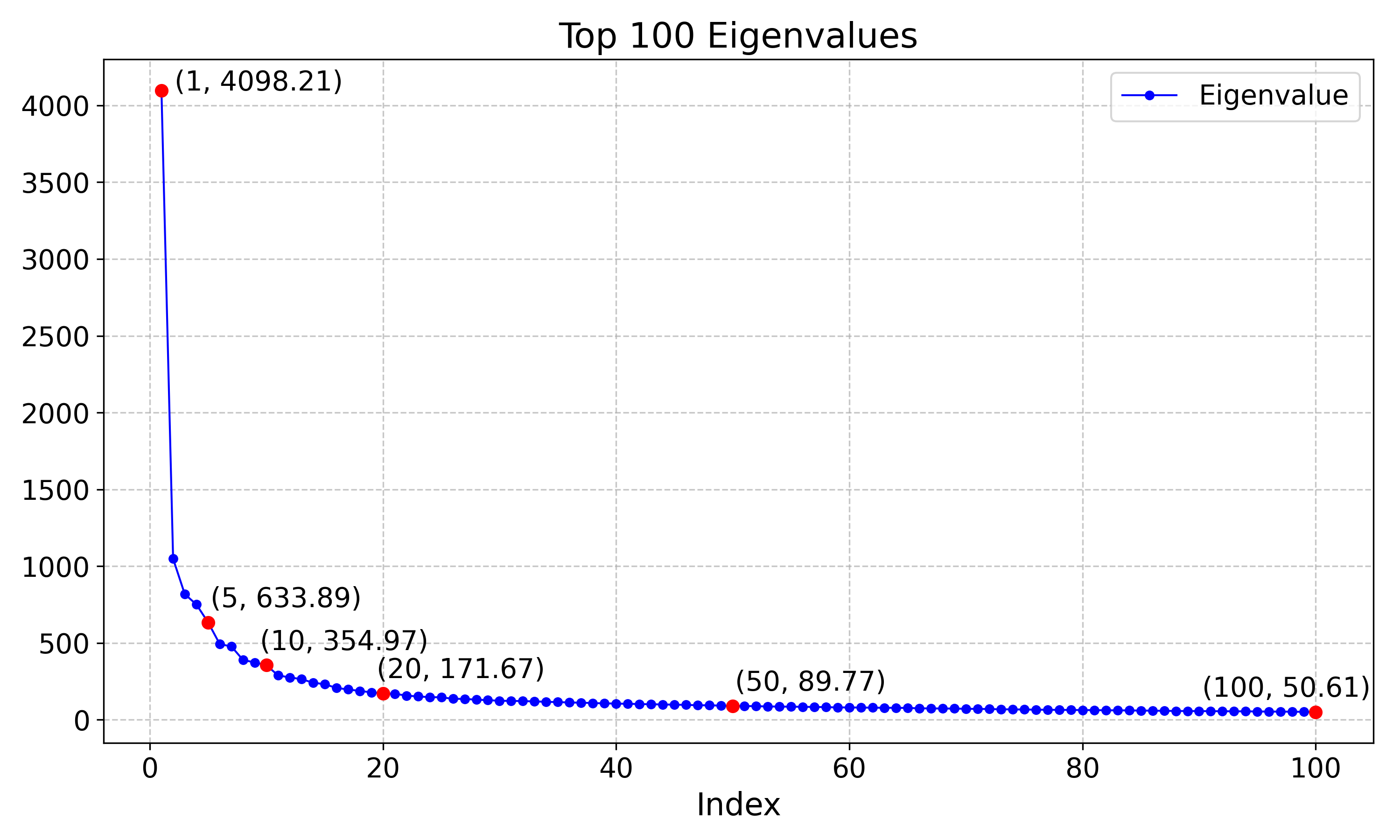}
    \caption{Eigenvalues of the email network}
    \label{fig:ev-plot}
\end{figure}

\begin{table}[ht]
	\centering
	\caption{Simulation results for scenario 3}
	\begin{threeparttable}
		\resizebox{\textwidth}{!}{
			\begin{tabular}{lrrrrrrrrrrrr}
				\toprule
				& \multicolumn{6}{c}{$\hat{\tau}_{\ind}^{\ev}$}                      & \multicolumn{6}{c}{$\hat{\tau}_{\tot}^{\ev}$} \\
				\cmidrule{2-13}          & \multicolumn{1}{l}{True} & \multicolumn{1}{l}{Bias} & \multicolumn{1}{l}{SD} & \multicolumn{1}{l}{RMSE} & \multicolumn{1}{l}{CP} & \multicolumn{1}{l}{Length} & \multicolumn{1}{l}{True} & \multicolumn{1}{l}{Bias} & \multicolumn{1}{l}{SD} & \multicolumn{1}{l}{RMSE} & \multicolumn{1}{l}{CP} & \multicolumn{1}{l}{Length} \\
				\cmidrule{2-13}    $K=-1$  & 1.293  & -0.081  & 5.272  & 5.273  & 0.958  & 21.206  & 2.110  & -0.081  & 5.366  & 5.367  & 0.958  & 21.567  \\
				$K=0$   & 1.293  & -0.077  & 0.606  & 0.611  & 0.974  & 2.571  & 2.110  & -0.077  & 0.697  & 0.701  & 0.954  & 2.792  \\
				$K=1$   & 1.293  & -0.111  & 0.386  & 0.402  & 0.986  & 2.003  & 2.110  & -0.111  & 0.433  & 0.447  & 0.976  & 2.039  \\
				$K=5$   & 1.293  & -0.252  & 0.344  & 0.426  & 0.969  & 1.824  & 2.110  & -0.252  & 0.387  & 0.462  & 0.953  & 1.856  \\
				$K=10$  & 1.293  & -0.349  & 0.297  & 0.458  & 0.939  & 1.640  & 2.110  & -0.349  & 0.337  & 0.485  & 0.922  & 1.666  \\
				\bottomrule
			\end{tabular}
		}
		\begin{tablenotes}
			\footnotesize
			\item Note: True, true value of estimand; SD, standard deviation; RMSE, root mean square error; CP, \\coverage probability; Length, mean length of confidence interval.
		\end{tablenotes}
	\end{threeparttable}
	\label{tab:email}
\end{table}

Tables \ref{tab:partial}-\ref{tab:email} show the results for the three scenarios. Overall, the biases of the proposed estimators are much smaller compared to the standard deviations in all scenarios, except when a considerably large number of eigenvectors are used for adjustment. Meanwhile, we observe that the ratio of bias to standard deviation decreases as the sample size increases in scenarios 1 and 2, which confirms our theoretical conclusion that bias can be negligible if the sample size is sufficiently large. Moreover, the EV-adjusted estimators exhibit significantly smaller standard deviations and shorter confidence intervals compared to the original ones. This demonstrates a substantial improvement in estimation precision. In addition, even though the average degrees of the hidden networks are relatively small, all empirical coverage probabilities of variance estimators without multiplication by 2 are still around or over 95\%, implying that the Type I error rates are well controlled. The variance estimators of the EV-adjusted estimators seem to be conservative in some cases but still have sufficient power to reject the null hypothesis of no effects when the sample size is large. 

In more detail, in scenarios 1 and 2, our method exhibits lower standard deviations compared to existing methods, along with narrower confidence intervals. This confirms the conclusion in Corollaries \ref{cor:variance-estimator-adj-stratified2} and \ref{cor:local-interference-for-marketplace-adj} that $\hat{\tau}_{\ind} ^{\ev}$ and $\hat{\tau}_{\tot} ^{\ev}$ achieve the same convergence rates as the existing estimators, with actually higher precision and greater efficiency. One interesting phenomenon observed in additional simulations is that both the cluster randomization method in scenario 1 and the SMRD method in scenario 2 are more sensitive to the setting of $\mu_i$. The standard deviations increase rapidly as the variation of $\mu_i$ becomes larger. Therefore, if researchers have prior knowledge that the population consists of several groups with significant differences, our method will work much better than existing methods.

For a real-world network, especially irregular ones like in scenario 3, we find that bias may be non-negligible when we use too many eigenvectors for adjustment. In fact, as we theorized, bias compared to the standard deviation is only off by an order of $\rhon^{1/2}$, which can be not significantly small for networks that are not too sparse in finite samples. Figure \ref{fig:ev-estimators-plot} show the results of $\hat{\tau}_{\ind} ^{\ev}$ and $\hat{\tau}_{\tot} ^{\ev}$ with different numbers of eigenvectors for adjustment. Although the standard deviation decreases with increasing number of eigenvectors, the absolute bias increases. This is because the largest eigenvalue accounts for a substantial portion of the network information, but subsequent eigenvalues contribute minimally (as shown in Figure \ref{fig:ev-plot}), which results in using additional eigenvectors for adjustment not significantly reducing the mean squared error. Therefore, for practical networks, we recommend using a limited number of eigenvectors for adjustment, unless there is a clear natural stratified structure.

\begin{figure}
    \centering
    \includegraphics[width=0.9\textwidth]{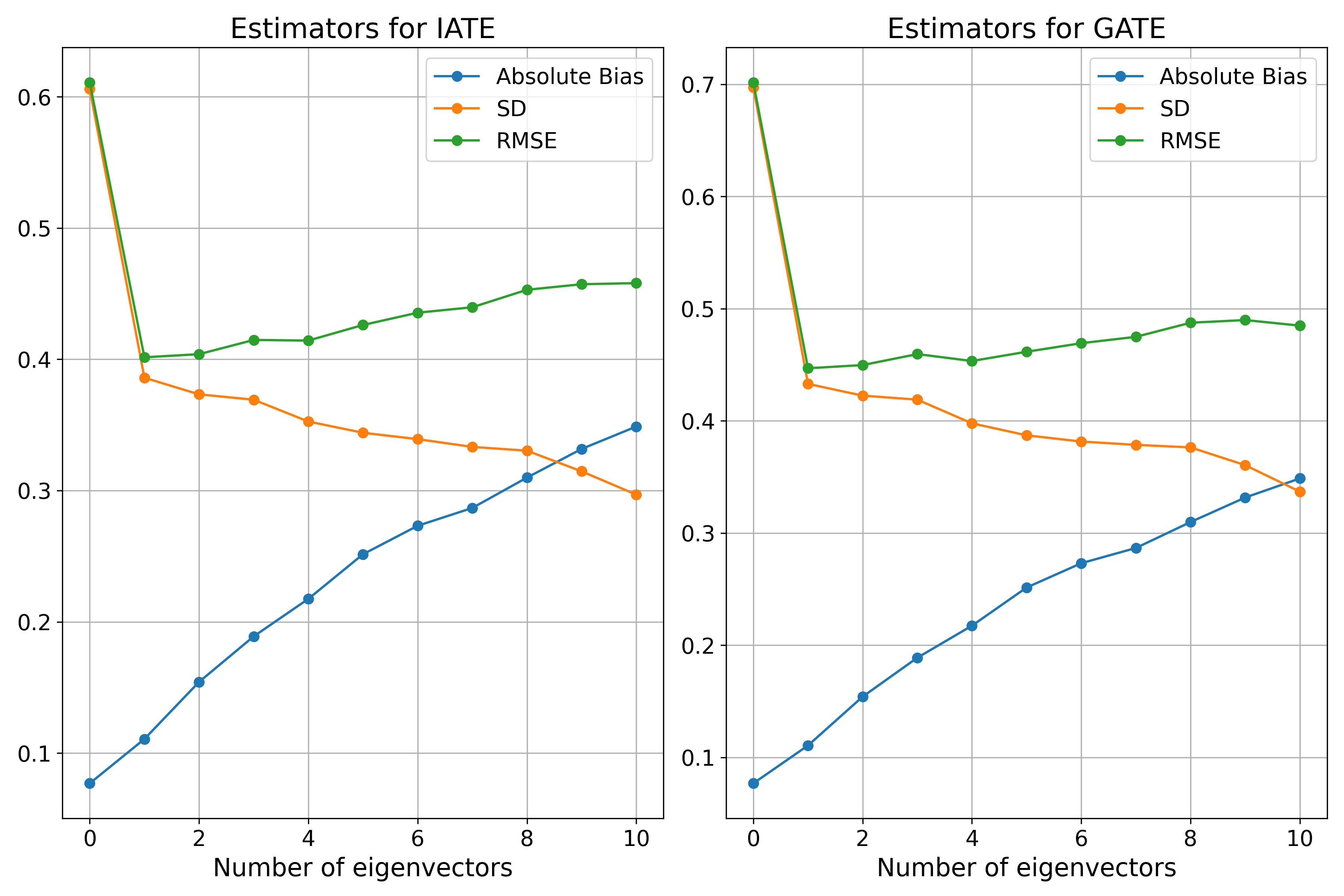}
    \caption{Eigenvector-adjusted estimators with different numbers of adjusted eigenvectors}
    \label{fig:ev-estimators-plot}
\end{figure}

\subsection{Real data analysis}
An experiment on the effect of anti-conflict intervention was conducted in the 2012-2013 school year across 56 New Jersey public middle schools, with 24191 students \citep{paluck2016changing}. The primary objective of the experiment was to investigate how altering social norms within school environments can reduce conflict and bullying behaviors. The social network was measured by asking students which 10 students they chose to spend time with in the last few weeks. We set $E_{ij} = 1$ if student $i$ listed student $j$ on their friendship questionnaire, and thus, we observe a directed graph. In this experiment, 28 of 56 schools were randomly assigned to receive the intervention in the first stage. The researchers selected 64 seed-eligible students using a deterministic algorithm in each treated school, and then randomly assigned half of them to be invited to participate in the anti-conflict program.

We focus on the five largest treated schools and consider the experiment to approximate a Bernoulli trial on the seed-eligible students, as in \cite{leung2022causal} and \cite{gao2023causal}. We choose whether or not students wore the anti-conflict wristband, which is the reward for good behavior, as the outcome of interest. After removing the missing data, we finally obtain a dataset of 265 students with 144 assigned to the treatment group. The average degree of network is 0.544. Although it is relatively small, based on our simulated data findings, we have reason to believe that the variance estimators without multiplication by 2 can be used for inference. Since the inherent stratified structure of students, we simply use the school indicators for eigenvector adjustment. 

Table \ref{tab:real-school} presents the results. As we can see, the direct effect and the total effect exhibit significance, which indicates that anti-conflict interventions, whether on individual students or all students, contribute to raising students' awareness. This conclusion coincides with the findings in \cite{aronow2017estimating}. The indirect effect is not statistically significant, but still positive, which means the primary effect on a student originates from the anti-conflict intervention applied to himself. One noteworthy observation is that despite the average number of neighbors being modest, the adjusted estimators for IATE and GATE have narrower confidence intervals, approximately 39.4\% and 38.9\% shorter than those of the unadjusted estimators, respectively. For efficiency considerations, we prefer the EV-adjusted estimators.

\begin{table}
\caption{\centering \label{tab:real-school} Results for anti-conflict intervention}
  \centering  
  \begin{threeparttable}
    \begin{tabular}{lccc}    
    \toprule
          & \multicolumn{1}{c}{Estimator} & \multicolumn{1}{c}{95\% CI} & \multicolumn{1}{c}{Length} \\
    \cmidrule{2-4}
    $\hat{\tau}_{\dir} $ & 0.215  & (0.054, 0.377) & 0.321  \\
    $\hat{\tau}_{\ind} $ & 0.062  & (-0.133, 0.256) & 0.388  \\
    $\hat{\tau}_{\ind} ^{\ev}$ & 0.036  & (-0.080, 0.153) & 0.235  \\
    $\hat{\tau}_{\tot} $ & 0.277  & (-0.015, 0.569) & 0.584  \\
    $\hat{\tau}_{\tot} ^{\ev}$ & 0.252  & (0.074, 0.429) & 0.357  \\
    \bottomrule
    \end{tabular}
    \begin{tablenotes}[flushleft]
      \footnotesize
      \item Note: $95\%$ CI, 95\% confidence interval; length, length of the confidence interval.
    \end{tablenotes}
  \end{threeparttable}
\end{table}

\section{Conclusion}
\label{sec:con}

We provide a comprehensive framework for analyzing the DATE, IATE, and GATE in a Bernoulli trial under the HATE model. Our framework is built upon the design-based inference framework and the conclusions are drawn without making additional assumptions on the data-generating process of the network and outcome-related parameters. It accommodates complex real-world network structures, allowing the average node degree to tend to infinity. Our framework exhibits the ingredient of a hidden network, permitting the existence of sparse network interference where among many neighbors of a unit, only a few have an indirect effect on that unit. Moreover, we can significantly improve the estimation and inference efficiencies by conducting the eigenvector adjustment on the IATE and GATE, which exploits the clustering structure of the network.

This work suggests several open questions for further exploration. Firstly, in practice, baseline covariates can potentially improve estimation precision, such as the individual-level and cluster-level covariates for clustered data as well as buyer-level and seller-level covariates in marketplace data. It would be intriguing to explore methods for conducting covariate adjustment to improve the estimation and inference. Secondly, many networks feature weighted edges that measure the strength of the connection between nodes. Exploring how to utilize this information would be valuable. Lastly, there are nonstandard Bernoulli trials where the units receiving treatment differ from those whose outcomaes are measured, for example, the bipartite experimental designs. Extending the framework to incorporate these designs would be worthwhile.

\bibliographystyle{agsm}
\bibliography{causal}

\newpage

\appendix

\centerline{ \Large\bf SUPPLEMENTARY MATERIAL}
\vspace{2mm}

\spacingset{1.5}

Section~\ref{sec:A} introduces the notation used throughout the supplementary material.

Section~\ref{sec:B} presents proofs about the unbiasedness and variances of the treatment effect estimators, as detailed in Propositions~\ref{prop:E-and-Var-DIR}--\ref{prop:order-of-adjusted-estimators}.

Section~\ref{sec:C} details the proofs supporting the results on asymptotic normality outlined in Theorem~\ref{thm:CLT-no-adjust}.

Section~\ref{sec:D} covers the proofs validating the consistency of variance estimators, as per Theorems~\ref{thm:variance-estimator-dir}--\ref{thm:variance-estimator-tot}. 

Section~\ref{sec:E} provides proofs for the asymptotic properties of eigenvector-adjusted estimators, presented in Theorems~\ref{thm:CLT-EV-adj}--\ref{thm:asymptotic-variance-estimator-adjusted}.

Section~\ref{sec:F} elaborates on the proofs for the results discussed in \Cref{sec:application-in-some-networks} of the main text, specifically in Corollaries~\ref{cor:variance-estimator-adj-stratified}--\ref{cor:local-interference-for-marketplace-adj} and Proposition~\ref{prop:graphon-model}.

\appendix

\section{Notation}\label{sec:A}
If the summation or maximization symbol is not annotated with a range, it is understood by default to range from $1$ to $N$. For example, we write 
$\sum_{i=1}^N$ and $\max_{i=1}^N$ as $\sum_i$ and $\max_i$ for short. We write $\sum_{i_1=1}^N\cdots\sum_{i_l=1}^N$ as $\sum_{i_1,\ldots,i_l}$ for short. We denote $\sum_{i_1\ne \cdots \ne i_l}$ as the summation over all tuples of $(i_1,\ldots,i_l)$ with mutually different $1\leq i_1,\ldots,i_l\leq N$. We write $a_n\lesssim b_n$ if there exists a constant $C$ independent $N$ such that $|a_n|\leq C b_n$ when $n$ is large enough. We write $\Var(\hat{\tau}_\star) = \sigma^2_{\star}$ and $\Var(\tilde{\tau}^{\ev}_\star) = (\sigma^{\ev}_{\star})^2$, for $\star \in \{\dir,\ind,\tot\}$. Let $\Cov(\cdot)$ denote the covariance. We use $\E$ to denote the expectation and $E_{ij}$ to denote the $(ij)$th element of the adjacency matrix of the network.

\section{Unbiasedness and variance}\label{sec:B}

\subsection{Proof of \Cref{prop:E-and-Var-DIR}}

\begin{proof}

	Recall that
	\begin{align*}
		\hat{\tau}_{\dir}  = \frac{1}{N}\sum_i  \Bigl\{\frac{Y_iZ_i}{r_1} - \frac{Y_i(1-Z_i)}{r_0}\Bigr\}, 
	\end{align*}
	$$
	Y_i = \alpha_i + \theta_i Z_i + \sum_{j=1}^N \tilde{\gamma}_{ij} Z_j.
	$$
	Thus,
	$$
	Y_{Z_i=z} = E(Y_i \mid Z_i  = z) = \alpha_i + \theta_i z + \sum_{j=1}^N \tilde{\gamma}_{ij} r_1.
	$$
	Therefore,
	\[
	\E \Bigl\{\frac{Y_iZ_i}{r_1} - \frac{Y_i(1-Z_i)}{r_0}\Bigr\} = Y_{Z_i=1}-Y_{Z_i=0} = \theta_i,
	\]
	which implies that $\E \hat{\tau}_{\dir} = N^{-1} \sum_{i} \theta_i =  \tau_{\dir}$.

	To derive the formula of $\Var(\hat{\tau}_{\dir})$, we see that 
	\begin{align*}
		Y_i(Z_i-r_1) =& \alpha_i(Z_i-r_1) + \theta_i Z_i(Z_i-r_1) + \sum_j  \tilde{\gamma}_{ij} Z_j(Z_i-r_1)\\
		=&\theta_i r_0r_1 + (\alpha_i+ r_0\theta_i+\sum_j  r_1\tilde{\gamma}_{ij})(Z_i-r_1) + \sum_j  \tilde{\gamma}_{ij} (Z_j-r_1)(Z_i-r_1).
	\end{align*}
	As a consequence, we have
	\begin{align}
		\hat{\tau}_{\dir}-\tau_{\dir} = &\sum_{i}  \frac{Y_i(Z_i-r_1)}{Nr_1r_0}-\tau_{\dir} \nonumber \\
		=& \sum_i  \frac{(Z_i-r_1)}{Nr_1r_0} (\alpha_i+r_0\theta_i+\sum_j   r_1\tilde{\gamma}_{ij}) + \sum_{i,j}  \tilde{\gamma}_{ij}\frac{(Z_i-r_1)(Z_j-r_1)}{Nr_1r_0} \nonumber \\
		= :& B_{\dir,1} + B_{\dir,2}. \label{eq:decompose-direct-effect}
	\end{align}
	Note that 
	\begin{align*}
		& r_1Y_{Z_i=0}+r_0Y_{Z_i=1} = \alpha_i+r_0\theta_i+\sum_j   r_1\tilde{\gamma}_{ij},\quad \Cov(B_{\dir,1},B_{\dir,2}) = 0,\\
		&\Var(Z_i-r_1) = r_1r_0,\quad \Var\{(Z_i-r_1)(Z_j-r_1)\} = (r_1r_0)^2, \quad i \neq j.
	\end{align*}
	Then, we have 
	\begin{align*}
		\Var(\hat{\tau}_{\dir}) = \Var(B_{\dir,1}) + \Var(B_{\dir,2}) = \sigma_{\dir,1}^2 + \sigma_{\dir,2}^2.
	\end{align*}
	
\end{proof}

\subsection{Proof of \Cref{rmk:some-sufficient-condition-for-a-3}}
We show the following proposition mentioned in \Cref{rmk:some-sufficient-condition-for-a-3}.
\begin{proposition}
	\label{prop:rmk-2-proposition}
	When $\bs{E}=\bs{E}^\top$ and $\tilde{\bs{E}}=\tilde{\bs{E}}^\top$, (i) if there exists a constant $C_{+}>0$, such that $\maxi N_i \leq C_+ N\rhon$, then $\|\bs{E}\|_{\oprtnorm} = O(N\rhon)$; (ii) if $\max_i \tilde{N}_i / \min_i \tilde{N}_i \leq C_+$, then $\|\bs{Q}\|_{\oprtnorm} = O(1)$.
\end{proposition}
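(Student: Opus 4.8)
The two parts are independent and each reduces to a standard spectral estimate for a nonnegative symmetric matrix, so the plan is to treat them in turn.

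For part (i), I would use that, since $\bs{E}=\bs{E}^\top$, the operator norm equals the spectral radius, and the spectral radius of any matrix is dominated by any induced matrix norm; in particular $\|\bs{E}\|_{\oprtnorm}=\rho(\bs{E})\leq\|\bs{E}\|_\infty=\maxi\sumj E_{ij}=\maxi N_i$. (Equivalently, one may invoke $\|\bs{E}\|_{\oprtnorm}\leq\sqrt{\|\bs{E}\|_1\|\bs{E}\|_\infty}$ and note $\|\bs{E}\|_1=\|\bs{E}\|_\infty$ by symmetry.) The hypothesis $\maxi N_i\leq C_{+}N\rhon$ then yields $\|\bs{E}\|_{\oprtnorm}\leq C_{+}N\rhon=O(N\rhon)$. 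There is no real obstacle here.

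For part (ii), the key observation is that $\bs{Q}=\bs{D}^{-1}\tilde{\bs{E}}$ with $\bs{D}=\diag(\tilde{N}_1,\ldots,\tilde{N}_N)$, and that the hypothesis $\maxi\tilde{N}_i/\mini\tilde{N}_i\leq C_{+}$ forces $\mini\tilde{N}_i\geq 1$, so $\bs{D}$ is invertible and the row normalization is well defined everywhere. The plan is to symmetrize by writing
\[
\bs{Q}=\bs{D}^{-1/2}\,\bs{A}\,\bs{D}^{1/2},\qquad \bs{A}:=\bs{D}^{-1/2}\tilde{\bs{E}}\bs{D}^{-1/2},
\]
where $\bs{A}$ is symmetric because $\tilde{\bs{E}}$ is. I would then establish $\|\bs{A}\|_{\oprtnorm}\leq 1$ by the elementary quadratic-form bound: for any $\bs{x}$, set $y_i=x_i/\sqrt{\tilde{N}_i}$ and use $2|y_iy_j|\leq y_i^2+y_j^2$ together with $\sumj\tilde{E}_{ij}=\tilde{N}_i$ and the symmetry of $\tilde{\bs{E}}$ to get
\[
|\bs{x}^\top\bs{A}\bs{x}|=\Bigl|\sum_{i,j}\tilde{E}_{ij}y_iy_j\Bigr|\leq\sum_{i,j}\tilde{E}_{ij}\frac{y_i^2+y_j^2}{2}=\sumi\tilde{N}_iy_i^2=\|\bs{x}\|_2^2,
\]
so that $\|\bs{A}\|_{\oprtnorm}=\sup_{\bs{x}\neq 0}|\bs{x}^\top\bs{A}\bs{x}|/\|\bs{x}\|_2^2\leq 1$. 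Submultiplicativity of the operator norm then gives
\[
\|\bs{Q}\|_{\oprtnorm}\leq\|\bs{D}^{-1/2}\|_{\oprtnorm}\,\|\bs{A}\|_{\oprtnorm}\,\|\bs{D}^{1/2}\|_{\oprtnorm}\leq\sqrt{\maxi\tilde{N}_i\big/\mini\tilde{N}_i}\leq\sqrt{C_{+}}=O(1),
\]
using $\|\bs{D}^{1/2}\|_{\oprtnorm}=(\maxi\tilde{N}_i)^{1/2}$ and $\|\bs{D}^{-1/2}\|_{\oprtnorm}=(\mini\tilde{N}_i)^{-1/2}$.

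The only mildly delicate point is the bound $\|\bs{A}\|_{\oprtnorm}\leq 1$ for the symmetrically normalized adjacency matrix (equivalently, nonnegativity of the normalized Laplacian $\bs{I}-\bs{A}$), which I would prove directly via the quadratic-form estimate above rather than quote; the rest is routine linear algebra.
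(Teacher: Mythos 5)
Your proof is correct. Part (i) is essentially the paper's argument: the paper invokes the Gershgorin circle theorem to bound the spectral radius of the symmetric matrix $\bs{E}$ by its maximum row sum $\maxi N_i$, which is exactly your induced-$\infty$-norm bound in different clothing. Part (ii), however, takes a genuinely different route. The paper simply factors $\bs{Q}=\diag(\tilde{N}_i^{-1})\tilde{\bs{E}}$ and applies submultiplicativity to the two factors directly, using Gershgorin again to get $\|\tilde{\bs{E}}\|_{\oprtnorm}\leq \maxi\tilde{N}_i$ and $\|\diag(\tilde{N}_i^{-1})\|_{\oprtnorm}=(\mini\tilde{N}_i)^{-1}$, yielding the bound $C_{+}$. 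You instead conjugate to the symmetrically normalized adjacency matrix $\bs{A}=\bs{D}^{-1/2}\tilde{\bs{E}}\bs{D}^{-1/2}$, prove $\|\bs{A}\|_{\oprtnorm}\leq 1$ by the quadratic-form (normalized Laplacian) estimate, and pay only the square root of the degree ratio in the similarity transform, obtaining $\sqrt{C_{+}}$. Your argument is slightly longer and requires the extra step of verifying $\|\bs{A}\|_{\oprtnorm}\leq 1$, but it is sharper by a square root and isolates the structural fact that the random-walk matrix is similar to a contraction; the paper's version is shorter and needs only the crude row-sum bound. Both deliver $O(1)$, which is all the proposition claims, and your handling of the degenerate case ($\mini\tilde{N}_i\geq 1$ being forced by the hypothesis) is a detail the paper leaves implicit as well.
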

Before proving \Cref{prop:rmk-2-proposition}, we need \Cref{lem:circle-theorem} below, coming from Theorem 0 of \cite{bell1965gershgorin}.

\begin{lemma}[Gershgorin Circle Theorem]
	\label{lem:circle-theorem}
	Let $\bs{A} = (A_{ij})$ be a complex $n \times n$ matrix. For $i \in\{1, \ldots, N\}$, let $R_i$ be the sum of the absolute values of the nondiagonal entries in the $i$th row:
	$$
	R_i=\sum_{j: j \neq i}\left|A_{i j}\right|.
	$$
	Let $D\left(A_{i i}, R_i\right) \subseteq \mathbb{C}$ be a closed disc centered at $A_{i i}$ with radius $R_i$. Then, every eigenvalue of $\bs{A}$ lies within at least one of the Gershgorin discs $D\left(A_{i i}, R_i\right)$, $i=1,\ldots,N$.
\end{lemma}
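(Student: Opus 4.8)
The plan is to use the classical eigenvector/maximum-modulus-coordinate argument. Let $\lambda$ be an arbitrary eigenvalue of $\bs{A}$; since $\bs{A}$ is complex, $\lambda \in \mathbb{C}$. First I would fix a corresponding eigenvector $\bs{x} = (x_1,\ldots,x_n)^\top \neq \bs{0}$, so that $\bs{A}\bs{x} = \lambda \bs{x}$. The goal is then to exhibit a \emph{single} index $i$ for which $|\lambda - A_{ii}| \leq R_i$, which is precisely the assertion that $\lambda$ lies in the Gershgorin disc $D(A_{ii}, R_i)$. Because $\lambda$ was arbitrary, establishing this for each eigenvalue proves the lemma.

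The crucial step is the choice of index: I would select $i$ realizing the maximum modulus of the eigenvector, i.e., $|x_i| = \max_{j}|x_j|$. Since $\bs{x}\neq \bs{0}$, this gives $|x_i|>0$, the nondegeneracy needed to divide later. Writing out the $i$th coordinate of $\bs{A}\bs{x} = \lambda \bs{x}$ yields $\sum_j A_{ij}x_j = \lambda x_i$, and isolating the diagonal term gives $(\lambda - A_{ii})x_i = \sum_{j\neq i} A_{ij}x_j$.

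I would then take absolute values, apply the triangle inequality, and invoke the defining maximality of $i$ (so that $|x_j|\leq |x_i|$ for every $j$):
$$|\lambda - A_{ii}|\,|x_i| \;\leq\; \sum_{j\neq i}|A_{ij}|\,|x_j| \;\leq\; \Bigl(\sum_{j\neq i}|A_{ij}|\Bigr)|x_i| \;=\; R_i\,|x_i|.$$
Dividing both sides by $|x_i|>0$ gives $|\lambda - A_{ii}| \leq R_i$, i.e.\ $\lambda \in D(A_{ii},R_i)$, completing the argument.

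There is no real analytic obstacle here; the proof rests on two elementary tools, the triangle inequality and a legitimate division by a nonzero quantity. The one genuinely substantive (and easily overlooked) step is the selection of the coordinate attaining the maximum modulus of $\bs{x}$: this is exactly what forces each ratio $|x_j|/|x_i|\leq 1$ and lets the off-diagonal sum collapse to the row radius $R_i$. Choosing any other index would not yield the clean bound, so the entire proof hinges on this choice.
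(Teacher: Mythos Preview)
Your proof is correct and is exactly the classical maximum-modulus-coordinate argument for Gershgorin's theorem. The paper itself does not prove this lemma at all; it simply cites it from \cite{bell1965gershgorin}, so there is no approach to compare against.
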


\begin{proof}[Proof of \Cref{prop:rmk-2-proposition}]
	For the first part of \Cref{prop:rmk-2-proposition}, when $\bs{E} = \bs{E}^\top$, we have $\|\bs{E}\|_{\oprtnorm}$ equals to the maximum of absolute eigenvalues of $\bs{E}$. Thus, by \Cref{lem:circle-theorem}, we have
	\begin{align*}
		\|\bs{E}\|_{\oprtnorm} \leq \max_{i} \sum_{j} E_{ij} = \max_i N_i  = O(N\rhon).
	\end{align*}
	For the second part of \Cref{prop:rmk-2-proposition}, when $\tilde{\bs{E}} = \tilde{\bs{E}}^\top$ and $\max_i \tilde{N}_i / \min_i \tilde{N}_i \leq C_+$, we have $\|\diag(\tilde{N}_i^{-1})\|_{\oprtnorm} = O((N\rhon)^{-1})$. Again by \Cref{lem:circle-theorem}, we have
	\begin{align*}
		\|\bs{Q}\|_{\oprtnorm} \leq \|\tilde{\bs{E}}\|_{\oprtnorm}\|\diag(\tilde{N}_i^{-1})\|_{\oprtnorm} \leq \max_i \tilde{N}_i \cdot \big(\min_i \tilde{N}_i\big)^{-1}\leq C_+ = O(1).
	\end{align*}
\end{proof}

\subsection{Proof of \Cref{prop:order-of-tau-dir}}

\begin{proof}

	By Assumption \ref{a:bounded-parameter}, $\max_i\max_{z\in\{0,1\}^N}|Y_i(z)|$ has an upper bound $C_Y$. Consequently, we have
	$$ \Var(B_{\dir,1}) = \frac{1}{N^2r_1r_0} \sum_i   (r_0Y_{Z_i=1}+r_1Y_{Z_i=0})^2\leq \frac{1}{N^2r_1r_0} \sum_i   C_Y^2 = O(N^{-1}).$$
	Recall that ${\bs{Q}} = (Q_{ij})_{1 \leq i,j\leq N}$ is the normalized adjacency matrix of the latent network with $Q_{ij}= \tilde{E}_{ij}/\tilde{N}_i$ if $\tilde E_{ij}=1$ and $0$ otherwise. Under Assumption~\ref{a:bounded-parameter}, we have $\tilde{\gamma}_{ij}\leq CQ_{ij}$. By Assumption \ref{a:opnorm-EE^T} and $\tilde{\gamma}_{ii} = 0$, we have
	\begin{align*}
		\Var(B_{\dir,2}) &= \frac{1}{N^2}\sum_{i,j}  \tilde{\gamma}_{ij}^2 + \frac{1}{N^2}  \sum_{i,j}  \tilde{\gamma}_{ij}\tilde{\gamma}_{ji}\\
		&\leq \frac{2}{N^2}\sum_{i,j}  \tilde{\gamma}_{ij}^2 \lesssim \frac{1}{N^2}\bs{1}^\top \bs{Q}\bs{Q}^\top \bs{1}\\
		&\leq N^{-1}\|\bs{Q}\|_{\oprtnorm}^2 = O(N^{-1}).
	\end{align*}
	
\end{proof}

\subsection{Proof of \Cref{prop:order-and-formula-of-tau-ind}}

\begin{proof}
	Recall that
	\begin{align*}
		\hat{\tau}_{\ind} = \frac{1}{N}\sum_{i,j} E_{ij} \Bigl\{\frac{Y_iZ_j}{r_1} - \frac{Y_i(1-Z_j)}{r_0}\Bigr\}.
	\end{align*}
	Simple calculation gives 
	\begin{align*}
		&Y_i(Z_j-r_{1}) \\
		=& \alpha_i (Z_j-r_{1})+ \theta_i Z_i (Z_j-r_{1})+ \tilde{\gamma}_{ij}Z_j(Z_j-r_{1}) + \sum_{k:k\ne j}\tilde{\gamma}_{ik}Z_k(Z_j-r_{1})\\
		=&\alpha_i (Z_j-r_{1}) + r_{1} \theta_i (Z_j-r_{1}) + \theta_i (Z_i-r_{1}) (Z_j-r_{1})+ \tilde{\gamma}_{ij}Z_j r_{0} + \sum_{k:k\ne j}\tilde{\gamma}_{ik}Z_k(Z_j-r_{1})\\
		=&r_{1}r_{0} \tilde{\gamma}_{ij}+(\alpha_i  + r_{1} \theta_i + \tilde{\gamma}_{ij} r_{0} +  r_{1}\sum_{k:k\ne j}\tilde{\gamma}_{ik})(Z_j-r_{1}) +  \\
		&\sum_{k:k\ne j}\tilde{\gamma}_{ik}(Z_k-r_{1})(Z_j-r_{1}) +\theta_i (Z_i-r_{1}) (Z_j-r_{1}).
	\end{align*}
	Therefore,
	\begin{align}
		&\hat{\tau}_{\ind}-\tau_{\ind} = \sum_{i,j} \frac{E_{ij}Y_i(Z_j-r_{1})}{Nr_{1}r_{0}}-\tau_{\ind} \nonumber\\
		= & \sum_{i,j}\frac{E_{ij}}{Nr_{1}r_{0}} (\alpha_i  + r_{1} \theta_i + \tilde{\gamma}_{ij} r_{0} +  r_{1}\sum_{k:k\ne j}\tilde{\gamma}_{ik})(Z_j-r_1) + \nonumber\\
		&\qquad \qquad \sum_{i,j}\frac{E_{ij}}{Nr_{1}r_{0}}\sum_{k:k\ne j}\tilde{\gamma}_{ik}(Z_k-r_{1})(Z_j-r_{1}) + \sum_{i,j} \frac{E_{ij}}{Nr_{1}r_{0}} \theta_i (Z_i-r_{1}) (Z_j-r_{1}) \nonumber\\
		=& \sum_{i,j}\frac{E_{ij}}{Nr_{1}r_{0}} (\alpha_i  + r_{1} \theta_i + \tilde{\gamma}_{ij} r_{0} +  r_{1}\sum_{k:k\ne j}\tilde{\gamma}_{ik})(Z_j-r_1) +  \nonumber \\
		&\qquad \qquad\sum_{j,k}\frac{E_{kj}}{Nr_{1}r_{0}}\sum_{i:i\ne j}\tilde{\gamma}_{ki}(Z_i-r_1)(Z_j-r_1) + \sum_{i,j} \frac{E_{ij}}{Nr_{1}r_{0}} \theta_i (Z_i-r_1) (Z_j-r_1) \nonumber\\
		= & \sum_{i,j}\frac{E_{ij}}{Nr_{1}r_{0}} (\alpha_i  + r_{1} \theta_i + \tilde{\gamma}_{ij} r_{0} +  r_{1}\sum_{k:k\ne j}\tilde{\gamma}_{ik})(Z_j-r_1) + \nonumber \\
		&\qquad \qquad \qquad \qquad\frac{1}{Nr_{1}r_{0}} \sum_{i\ne j}\big(E_{ij}\theta_i+\sum_{k}E_{kj}\tilde{\gamma}_{ki}\big)(Z_i-r_1)(Z_j-r_1) \nonumber \\
		=: & B_{\ind,1} + B_{\ind,2} \label{eq:decompose-indirect-effect}.
	\end{align}
	
	Noting that $Y_{Z_i=z_i}^{Z_j=z_j} = \operatorname{E}(Y_i|Z_i=z_i,Z_j=z_j)$ and
	\begin{equation}\label{eqn:equality1}
		\alpha_i  + r_{1} \theta_i + r_{0}\tilde{\gamma}_{ij} +  r_{1}\sum_{k:k\ne j}\tilde{\gamma}_{ik} = r_1r_0 Y_{Z_i=1}^{Z_j=1} +r_0^2Y_{Z_i=0}^{Z_j=1}+r_1^2Y_{Z_i=1}^{Z_j=0}+r_1r_0Y_{Z_i=0}^{Z_j=0},
	\end{equation}
	for $i\neq j$, then we have $\Var(\hat{\tau}_{\ind}) =\Var(B_{\ind,1})+\Var(B_{\ind,2}) = \sigma^2_{\ind,1}+\sigma^2_{\ind,2}.$ By \Cref{a:bounded-parameter} and \Cref{a:opnorm-EE^T}, we have
	$$
	\begin{aligned}
		\sigma^2_{\ind,1} &= \frac{1}{N^2r_1r_0}\sum_i  \Bigl\{\sum_j   E_{ji}(r_1r_0 Y_{Z_j=1}^{Z_i=1} +r_0^2Y_{Z_j=0}^{Z_i=1}+r_1^2Y_{Z_j=1}^{Z_i=0}+r_1r_0Y_{Z_j=0}^{Z_i=0})\Bigr\}^2 \\
		& \lesssim \frac{1}{N^2r_1r_0}\sum_i  \Big(\sum_j   E_{ji}\Big)^2 \lesssim N^{-2} \bs{1}^\top\bs{E}\bs{E}^\top\bs{1} \lesssim N^{-1} \|\bs{E}\|_{\oprtnorm}^2 = O(N\rhon^2),
	\end{aligned}$$
	and
	$$
	\begin{aligned}
		\sigma^2_{\ind,2} &\leq \frac{2}{N^2}   \sum_{j \ne i}(E_{ij}\theta_{i}+\sum_{k}E_{kj}\tilde{\gamma}_{ki})^2 \lesssim \frac{2}{N^2}     \sum_{j \ne i}(E_{ij}\theta_{i})^2 + \frac{2}{N^2}  \sum_{j \ne i}(\sum_{k}E_{kj}\tilde{\gamma}_{ki})^2 \\
		&\lesssim \frac{1}{N^2}    \sum_{j \ne i}E_{ij} +  \frac{2}{N^2}  \sum_{j \ne i}(\sum_{k}E_{kj}Q_{ki})^2 = \rhon + \frac{1}{N^2}  \sum_{j \ne i}(\sum_{k}E_{kj}Q_{ki})^2.
	\end{aligned}
	$$
	On the other hand, we have 
	\begin{align*}
		& \sum_{j \ne i}(\sum_{k}E_{kj}Q_{ki})^2\leq  \sum_{i}  \sum_{j} \sum_{k} \sum_{k^\prime} E_{kj}Q_{ki}E_{k^\prime j}Q_{k^\prime i} \leq \sum_{i} \sum_{j} \sum_{k} \sum_{k^\prime} Q_{ki} E_{k^\prime j}Q_{k^\prime i}\\
		= &\bs{1}^\top \bs{Q}\bs{Q}^\top \bs{E} \bs{1} \leq N \|\bs{Q}\|^2_{\oprtnorm}\|\bs{E}\|_{\oprtnorm} = O(N^2\rhon).
	\end{align*}
	Putting together the pieces, we have $ \sigma^2_{\ind,2} = O(\rhon).$
	
\end{proof}

\subsection{Proof of \Cref{prop:order-of-tau-tot}}

\begin{proof}
	It is easy to see that $E(\hat{\tau}_{\tot}) = E( \hat{\tau}_{\dir}) + E(\hat{\tau}_{\ind}) = \tau_{\dir} + \tau_{\ind} =\tau_{\tot}$. Moreover, by the proof of Propositions \ref{prop:E-and-Var-DIR}--\ref{prop:order-and-formula-of-tau-ind}, we can decompose $\hat{\tau}_{\tot}$ as $\hat{\tau}_{\tot} = B_{\tot,1} + B_{\tot,2}$, where $B_{\tot,1} = B_{\dir,1} + B_{\ind,1}$ and $B_{\tot,2} = B_{\dir,2} + B_{\ind,2}$. Simple calculation gives $\sigma_{\tot,1}^2 = \Var(B_{\tot,1})$, $\sigma_{\tot,2}^2 = \Var(B_{\tot,2})$, and $\Cov(B_{\tot,1}, B_{\tot,2})=0$. Therefore, $\Var(\hat{\tau}_{\tot})= \sigma_{\tot,1}^2 + \sigma_{\tot,1}^2$.

	By Assumptions~\ref{a:bounded-parameter}--\ref{a:opnorm-EE^T} and Cauchy--Schwarz inequality, we have
	\begin{align*}
		&\sigma_{\tot,1}^2 \lesssim \sigma_{\dir,1}^2 + \sigma_{\ind,1}^2 = O(N\rhon^2 + N^{-1}) = O(N\rhon^2);\\
		&\sigma_{\tot,2}^2 \lesssim \sigma_{\dir,2}^2 + \sigma_{\ind,2}^2 = O(\rhon + N^{-1}) = O(\rhon).
	\end{align*}

\end{proof}

\subsection{Proof of \Cref{prop:order-of-adjusted-estimators}}

\begin{proof}
	By replacing  $Y_{Z_j=z}^{Z_i=z^\prime}$, $(z,z^\prime)\in \{0,1\}^2$, with $e_{Z_j=z}^{Z_i=z^\prime}$ in $\sigma^2_{\ind,1}$ and $\sigma^2_{\tot,1}$, and  replacing  $\theta_i$ with $\theta_{\bsw,i}$ in $\sigma^2_{\ind,2}$ and $\sigma^2_{\tot,2}$, we obtain that
	\begin{align*}
		&\Var(\tilde{\tau}_{\ind}^{\ev}) =(\sigma_{\ind,1}^{\ev})^2+(\sigma_{\tot,2}^{\ev})^2 ,\quad\Var(\tilde{\tau}_{\tot}^{\ev}) = (\sigma_{\tot,1}^{\ev})^2 + (\sigma_{\tot,2}^{\ev})^2, 
	\end{align*}
	where
	\begin{align*}
		(\sigma^{\ev}_{\ind,1})^2  =&  \frac{1}{N^2r_1r_0}\sum_i \Bigl\{\sum_j   E_{ji}(r_1r_0 e_{Z_j=1}^{Z_i=1} +r_0^2e_{Z_j=0}^{Z_i=1}+r_1^2e_{Z_j=1}^{Z_i=0}+r_1r_0e_{Z_j=0}^{Z_i=0})\Bigr\}^2, \\
		(\sigma_{\ind,2}^{\ev})^2=& \frac{1}{N^2}  \sum_{ i\ne j } \Big(E_{ij}\theta_{\backslash \bs{W},i}+\sum_k  E_{kj}\tilde{\gamma}_{ki}\Big)^2 + \\
		&\frac{1}{N^2}  \sum_{ i\ne j }  \Big(E_{ji}\theta_{\backslash \bs{W},j}+\sum_k  E_{ki}\tilde{\gamma}_{kj}\Big)\Big(E_{ij}\theta_{\backslash \bs{W},i}+\sum_k  E_{kj}\tilde{\gamma}_{ki}\Big),\\
		(\sigma_{\tot,1}^{\ev})^2 =&  \frac{1}{N^2r_1r_0}\sum_i \Bigl\{r_0Y_{Z_i=1}+r_1Y_{Z_i=0}+ \\
		& \quad \sum_j   E_{ji}(r_1r_0 e_{Z_j=1}^{Z_i=1} +r_0^2e_{Z_j=0}^{Z_i=1}+r_1^2e_{Z_j=1}^{Z_i=0}+r_1r_0e_{Z_j=0}^{Z_i=0})\Bigr\}^2,\\
		(\sigma_{\tot,2}^{\ev})^2  =&  \frac{1}{N^2}  \sum_{ i\ne j }  \Big(\tilde{\gamma}_{ij}+E_{ji}\theta_{\backslash \bs{W},j}+\sum_k  E_{ki}\tilde{\gamma}_{kj}\Big)\Big(\tilde{\gamma}_{ji}+E_{ij}\theta_{\backslash \bs{W},i}+\sum_k  E_{kj}\tilde{\gamma}_{ki}\Big) + \\
		& \frac{1}{N^2}  \sum_{ i\ne j } \Big(\tilde{\gamma}_{ji}+E_{ij}\theta_{\backslash \bs{W},i}+\sum_k  E_{kj}\tilde{\gamma}_{ki}\Big)^2.
	\end{align*}
	Recall that $N^{-1}\sum_{i=1}^{N} \bs{W}_i\bs{W}_i^\top = \bs{I}_{K+1}$ and
	\begin{align*}
		e_i =(\alpha_i-\bs{\beta}_{0,\ora}^\top \bs{W}_i) + (\theta_i-\bs{\beta}_{1,\ora}^\top \bs{W}_i+\bs{\beta}_{0,\ora}^\top \bs{W}_i) Z_i + \sum_j   \tilde{\gamma}_{ij}Z_j.
	\end{align*}
	Define $\alpha_{\mid \bs{W},i}$ as the projection of $\alpha_i$ on $\bs{W}_i$,
	\begin{align*}
		\alpha_{\mid \bs{W},i} = \bs{\beta}_\alpha^\top \bs{W}_i,\quad \text{where} \quad \bs{\beta}_\alpha = \frac{1}{N}\sum_i  \bs{W}_i \alpha_i.
	\end{align*}
	Similarly, we define $\theta_{\mid \bs{W},i}$ and $h_{\mid \bs{W},i}$ as the linear projections of $\theta_i$ and $h_i = \sum_j   {\tilde{\gamma}}_{ij}$ on $ \bs{W}_i$. Then,
	\begin{align*}
		&\bs{\beta}_{0,\ora}^\top \bs{W}_i = \bs{W}_i^\top \sum_i  \bs{W}_i Y_{Z_i=0}/N = \alpha_{\mid \bs{W},i} + r_1 h_{\mid \bs{W},i}.
	\end{align*}
	Therefore,
	\[
	\alpha_i-\bs{\beta}_{0,\ora}^\top \bs{W}_i =  \alpha_{\bsw,i} - r_1 h_{\mid \bs{W},i}.
	\]
	Recall that
	\[
	\theta_i-\bs{\beta}_{1,\ora}^\top \bs{W}_i+\bs{\beta}_{0,\ora}^\top \bs{W}_i = \theta_{\bsw,i}.
	\]
	
	By \eqref{eqn:equality1} and replacing $\alpha_i$ with $\alpha_{\bsw,i} - r_1 h_{\mid \bs{W},i}$ and $\theta_i$ with $\theta_{\bsw,i}$ on the right hand side of
	\begin{align*}
		\sigma^2_{\ind,1} = \frac{1}{N^2r_1r_0}\sum_j \Bigl\{\sum_i   E_{ij}( \alpha_i  + r_{1} \theta_i + r_{0}\tilde{\gamma}_{ij} +  r_{1}\sum_{k\ne j}\tilde{\gamma}_{ik})\Bigr\}^2,
	\end{align*}
	we have
	\begin{align*}
		(\sigma^{\ev}_{\ind,1})^2 &= \frac{1}{N^2r_1r_0}\sum_j \Bigl\{\sum_i   E_{ij}( \alpha_{\bsw,i} - r_1 h_{\mid \bs{W},i}  + r_{1} \theta_{\bsw,i} + r_{0}\tilde{\gamma}_{ij} +  r_{1}\sum_{k\ne j}\tilde{\gamma}_{ik})\Bigr\}^2\\
		&= \frac{1}{N^2r_1r_0}\sum_j \Bigl\{\sum_i   E_{ij}( \alpha_{\bsw,i} + r_1 h_{\bsw,i}  + r_{1} \theta_{\bsw,i} + (r_{0}-r_1)\tilde{\gamma}_{ij} )\Bigr\}^2\\
		&= \frac{1}{N^2r_1r_0}\sum_j \Bigl\{\sum_i   E_{ij}( r_1 e_{Z_i=1}+ r_0 e_{Z_i=0}+ (r_{0}-r_1)\tilde{\gamma}_{ij} )\Bigr\}^2\\
		&\lesssim N^{-1}\Deltan + \frac{1}{N^2}\sum_j \Bigl\{\sum_i  Q_{ij}\Bigr\}^2 \\
		&\lesssim N^{-1}\Deltan + \frac{1}{N^2} \bs{1}^\top \bs{Q}\bs{Q}^\top \bs{1} = O(N^{-1}\Deltan + N^{-1}).
	\end{align*}
	Moreover, by \Cref{a:bounded-parameter-adjusted}, similar to the way that we obtain the order of $\sigma^2_{\ind,2}$, we have
	\begin{align*}
		(\sigma^{\ev}_{\ind,2})^2 &\leq \frac{2}{N^2}  \sum_{ i\ne j } \Big(E_{ij}\theta_{\backslash \bs{W},i}+\sum_k  E_{kj}\tilde{\gamma}_{ki}\Big)^2 \lesssim \rhon.
	\end{align*}
	
	Putting together and noting that $N^{-1} = O(\rhon)$ by \Cref{a:density-rho_N}, we have $(\sigma^{\ev}_{\ind})^2 = O(N^{-1}\Deltan + \rhon + N^{-1})=  O(N^{-1}\Deltan + \rhon)$. Moreover, 
	\begin{align*}
		&(\sigma_{\tot,1}^{\ev})^2 \lesssim \sigma_{\dir,1}^2 + (\sigma_{\ind,1}^{\ev})^2 = O(N^{-1}\Deltan + \rhon + N^{-1}),\\
		& (\sigma_{\tot,2}^{\ev})^2 \lesssim \sigma_{\dir,2}^2 + (\sigma_{\ind,2}^{\ev})^2 = O(\rhon + N^{-1}) = O(\rhon).
	\end{align*}
	Therefore, $(\sigma^{\ev}_{\tot})^2 = (\sigma_{\tot,1}^{\ev})^2 + (\sigma_{\tot,2}^{\ev})^2 =  O(N^{-1}\Deltan + \rhon)$.
\end{proof}

\section{Proof of the asymptotic normality}\label{sec:C}

\subsection{Notation and preliminary results}
For a random variable $U$, let $\kappa_4(U) = \E  (U^4) - 3\{\E(  U^2)\}^2$ denote its fourth-order cumulant. 
For one-variable function $f(i)$, let $\operatorname{Inf}_i(f) = f^2(i)$ and $\left\|f\right\|_{\ell_2} = \{\sum_i f^2(i) \}^{1/2}$. For two-variable symmetric function $f(i,j)$ that vanishes on diagonals, i.e., $f(i,j)=f(j,i)$ for $i \ne j$ and $f(i,i)=0$, let $\operatorname{Inf}_i(f) = \sum_j f^2(i,j)$ and $\left\|f\right\|_{\ell_2} = \{\sum_{i \ne j} f^2(i,j)\}^{1/2}$. Denote $\mathcal{M}(f) = \max_i \operatorname{Inf}_i(f)$.

\begin{lemma}
	\label{lem:kappa4-of-some-random-variables}
	Let $\{W_i\}_{i=1}^N$ be i.i.d random variables with $\E W_i = 0$ and $\E W_i^2 = 1$. Let $\{f_i\}_{1\leq i \leq N}$ and $\{f_{ij}\}_{1\leq i\ne j \leq N }$ be fixed arrays with $f_{ij} = f_{ji}$. Then, we have
	\begin{align*}
		\kappa_4\Big(\sum_i W_i f_i\Big) = \kappa_4(W_1) \sum_i f_i^4,\quad  \kappa_4\Big(\sum_{i\ne j} W_i W_j f_{ij}\Big) = O(|M_1|+|M_2|+|M_3|+|M_4|),
	\end{align*}
	where $M_1 = \sum_{i \ne j}f_{ ij}^4$, $M_2 = \sum_{i\ne j \ne k}f_{ ij}^2f_{ ik} f_{jk}$, $M_3 = \sum_{i\ne j \ne k}f_{ ij}^2f_{ ik}^2$, and $M_4  = \sum_{i \ne j \ne k \ne l}f_{ ik}f_{ il}$ $f_{ jk}f_{ jl}$.
\end{lemma}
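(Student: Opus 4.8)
The plan is to handle the two displayed identities separately: the first is immediate from additivity of cumulants, while the second I would establish by a direct moment expansion of $\kappa_4(\cdot)=\E(\cdot)^4-3\{\E(\cdot)^2\}^2$ together with a combinatorial classification of index coincidences.

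For $S:=\sum_i W_i f_i$, I would simply observe that this is a sum of the independent random variables $W_i f_i$; since fourth cumulants are additive over independent summands and satisfy $\kappa_4(cX)=c^4\kappa_4(X)$, this gives $\kappa_4(S)=\sum_i f_i^4\kappa_4(W_i)=\kappa_4(W_1)\sum_i f_i^4$, where the last step uses that the $W_i$ are identically distributed.

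For $T:=\sum_{i\ne j}W_iW_jf_{ij}$, I would expand $\kappa_4(T)=\E T^4-3(\E T^2)^2$ directly. A short computation gives $\E T^2=2\sum_{i\ne j}f_{ij}^2$. For $\E T^4$ one writes the product of four copies of $T$ as a sum over four ``edges'' $e_a=\{i_a,j_a\}$, $a=1,\dots,4$, of $\prod_a f_{i_aj_a}$ times $\E\prod_a W_{i_a}W_{j_a}$; since $\E W_1=0$ the latter expectation equals $\prod_v\E W_1^{m_v}$ and vanishes unless every index value has multiplicity $m_v\ge2$, and since each edge has distinct endpoints every value has multiplicity $m_v\le4$. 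Hence only index assignments whose set partition of the eight slots has block-size profile in $\{(4,4),(4,2,2),(3,3,2),(2,2,2,2)\}$ contribute, subject to the constraint that the two slots of each edge lie in different blocks. I would then enumerate these partitions up to relabelling of the edges: profile $(4,4)$ forces all four edges to coincide, yielding a multiple of $M_1$; profile $(4,2,2)$ forces a common vertex in all four edges with the remaining endpoints pairing up, yielding a multiple of $M_3$; profile $(3,3,2)$ forces a doubled edge completed to a triangle by two further edges, yielding a multiple of $M_2$; and profile $(2,2,2,2)$ either splits the edges into two doubled edges (``disconnected'') or arranges them into a $4$-cycle on four distinct vertices, the latter yielding a multiple of $M_4$. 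One checks that the disconnected configurations contribute exactly $3(\E T^2)^2$, which is cancelled by the $-3(\E T^2)^2$ term, so that $\kappa_4(T)$ reduces to a finite sum of terms, each a bounded combinatorial count times a product of moments $\E W_1^m$ with $m\le4$ (finite, as $\kappa_4$ requires a finite fourth moment) times one of $M_1,\dots,M_4$; this gives $\kappa_4(T)=O(|M_1|+|M_2|+|M_3|+|M_4|)$.

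The main obstacle I anticipate is the bookkeeping in the last step: carefully listing all set partitions of the eight slots compatible with the edge constraints, identifying which of $M_1,\dots,M_4$ each produces, and verifying that precisely the ``two doubled edges'' configurations reproduce $3(\E T^2)^2$. Since only the orders and not the exact constants matter, much of this simplifies; alternatively, one could bypass the enumeration entirely by invoking the diagram/contraction formula for fourth cumulants of degree-two homogeneous sums underlying the homogeneous-sum CLT referenced earlier.
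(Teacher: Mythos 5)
Your proposal is correct and follows essentially the same route as the paper: a direct expansion of $\kappa_4(T)=\E T^4-3(\E T^2)^2$ with a classification of index coincidences, observing that the fully disconnected configuration $\sum_{i\ne j\ne k\ne l}f_{ij}^2f_{kl}^2$ cancels between the two terms while every surviving configuration is a bounded multiple of one of $M_1,\dots,M_4$ (your first identity via cumulant additivity is a clean shortcut for the paper's explicit fourth-moment computation). The only slight imprecision is the claim that the disconnected configurations contribute "exactly" $3(\E T^2)^2$ — the overlapping-index part of $(\E T^2)^2$ is not disconnected but consists of multiples of $M_1$ and $M_3$, which is harmless since only orders of magnitude are needed.
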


\begin{proof}
	Note that
	\begin{align*}
		&\E \big(\sum_i W_i f_i\big)^4 = \E \sum_i f_i^4 W_i^4 + 3\E \sum_{i\ne j}f_i^2f_j^2 W_i^2W_j^2 = \E W_i^4 \sum_i f_i^4 + 3\sum_{i\ne j} f_i^2f_j^2,\\
		&\Big\{\E  \big(\sum_i W_i f_i\big)^2\Big\}^2 = \big(\E  \sum_i f_i^2 W_i^2\big)^2 = \big(\sum_i f_i^2\big)^2 =\sum_i f_i^4 +  \sum_{i\ne j} f_i^2f_j^2.
	\end{align*}
	As a consequence, we have
	\[
	\kappa_4\big(\sum_i W_i f_i\big)  = \kappa_4(W_1) \sum_i f_i^4.
	\]
	
	Moreover, we have
	\begin{align*}
		\E  \big(\sum_{i\ne j} W_i W_j f_{ij}\big)^4 =& C_1\sum_{i \ne j}f_{ ij}^4 + C_2\sum_{i\ne j \ne k}f_{ ij}^2f_{ ik}^2 + C_3\sum_{i\ne j \ne k}f_{ ij}^2f_{ ik}f_{ jk}  +\\
		&12\sum_{i \ne  j \ne k \ne l}f_{ ij}^2f_{ kl}^2 + C_4\sum_{i \ne j \ne k \ne l}f_{ ik}f_{ il}f_{ jk}f_{ jl},\\
		\Big\{\E  \big(\sum_{i\ne j} W_i W_j f_{ij}\big)^2\Big\}^2 =& C_5 \sum_{i \ne j}f_{ ij}^4 + C_6 \sum_{i \ne j \ne k}f_{ ij}^2f_{ ik}^2 + 4\sum_{i\ne j\ne k \ne l}f_{ ij}^2f_{ kl}^2.
	\end{align*}
	Here, $C_k$, $1\leq k \leq 6$, are constants that do not depend on $N$. Their expressions are not crucial for our proof, so we omit them.  As a consequence, we have
	\begin{align*}
		\kappa_4\big(\sum_{i\ne j} W_i W_j f_{ij}\big) = O(|M_1|+|M_2|+|M_3|+|M_4|).
	\end{align*}
\end{proof}

\begin{lemma}
	\label{lem:DeJong}
	Let $W_i = (Z_i-r_1)/\sqrt{r_1r_0}$ with $E(W_i) = 0$ and $E(W_i^2)=1$. Let $\{f_i\}_{1\leq i \leq N}$ and $\{f_{ij}\}_{1\leq i \ne j \leq N }$ be fixed arrays with $f_{ij} = f_{ji}$. Define $B_{1} = \sum_{i}f_{i}W_i$, $B_{2} =\sum_{i \ne j}f_{ij}W_iW_j $, $\sigma^2_1 = \Var(B_{1})$, and $\sigma^2_2 = \Var(B_{2})$. For $k=1,2$, if \\$\liminf_{N\rightarrow \infty}\sigma_{k}^2 >0$, $ \kappa_4(B_{k}) /\sigma_{k}^4 \rightarrow 0$ and $\mathcal{M}\left(f_{k}\right) / \sigma_{k}^2 \rightarrow 0$, then
	$
	B_{k} / \sigma_{k} \xrightarrow{d} \mathcal{N}(0,1).
	$
\end{lemma}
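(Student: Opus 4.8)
The plan is to handle the two cases separately, in each case reducing to a classical central limit theorem for sums of independent random variables. For $k=1$, $B_1=\sum_i f_iW_i$ is a sum of independent mean-zero variables, so I would simply verify the Lindeberg condition. For $k=2$, $B_2=\sum_{i\ne j}f_{ij}W_iW_j$ is a completely degenerate (``clean'') quadratic form in independent bounded variables, since $\E(W_iW_j)=0$ for $i\ne j$; for such forms the relevant tool is the fourth-moment central limit theorem---the classical de Jong theorem, equivalently the one-dimensional specialization of the central limit theorem for homogeneous sums of \citep{koike2022high}---and I would check that its hypotheses coincide with those of the lemma.

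\emph{Case $k=1$.} Because $Z_i\in\{0,1\}$, the $W_i=(Z_i-r_1)/\sqrt{r_1r_0}$ are i.i.d.\ and bounded by $c_W:=\max(r_0,r_1)/\sqrt{r_1r_0}$. Here $\sigma_1^2=\Var(B_1)=\sum_i f_i^2=\|f_1\|_{\ell_2}^2$ and $\mathcal{M}(f_1)=\max_i f_i^2$, so $|f_iW_i|\le c_W\sqrt{\mathcal{M}(f_1)}$ for every $i$. Given $\varepsilon>0$, the hypotheses $\mathcal{M}(f_1)/\sigma_1^2\to0$ and $\liminf_N\sigma_1^2>0$ force $c_W\sqrt{\mathcal{M}(f_1)}<\varepsilon\sigma_1$ for all large $N$; hence the Lindeberg sum $\sigma_1^{-2}\sum_i\E\bigl\{f_i^2W_i^2\,I(|f_iW_i|>\varepsilon\sigma_1)\bigr\}$ is identically zero for large $N$, and the Lindeberg--Feller theorem gives $B_1/\sigma_1\xrightarrow{d}\mathcal{N}(0,1)$. (The $\kappa_4$ hypothesis is redundant in this case: by \Cref{lem:kappa4-of-some-random-variables}, $\kappa_4(B_1)=\kappa_4(W_1)\sum_i f_i^4\le|\kappa_4(W_1)|\,\mathcal{M}(f_1)\,\sigma_1^2$, so $\kappa_4(B_1)/\sigma_1^4\to0$ automatically.)

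\emph{Case $k=2$.} Here $\sigma_2^2=\Var(B_2)=2\sum_{i\ne j}f_{ij}^2=2\|f_2\|_{\ell_2}^2$, the quantity $\mathcal{M}(f_2)=\max_i\operatorname{Inf}_i(f_2)$ is exactly the maximal influence appearing in de Jong's theorem, and $\kappa_4(B_2)/\sigma_2^4\to0$ is equivalent to $\E\{(B_2/\sigma_2)^4\}\to3$ since $\kappa_4(B_2)=\E(B_2^4)-3\sigma_2^4$; together with $\liminf_N\sigma_2^2>0$ these are precisely the hypotheses of the fourth-moment CLT for clean quadratic forms, yielding $B_2/\sigma_2\xrightarrow{d}\mathcal{N}(0,1)$. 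For a self-contained derivation I would instead use a martingale argument: with $\mathcal{F}_i=\sigma(W_1,\dots,W_i)$ and $f_{ij}=f_{ji}$, write $B_2=\sum_{i=2}^N D_i$ with $D_i=2W_i\sum_{j<i}f_{ij}W_j$, observe that $\{D_i,\mathcal{F}_i\}$ is a martingale-difference sequence, and apply a standard martingale central limit theorem. Its Lyapunov-type condition follows from $\E(D_i^4)\lesssim\operatorname{Inf}_i(f_2)^2$, whence $\sum_i\E(D_i^4)\lesssim\mathcal{M}(f_2)\|f_2\|_{\ell_2}^2=o(\sigma_2^4)$ by the $\mathcal{M}$-hypothesis.

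\emph{Main obstacle.} The remaining---and decisive---step for $k=2$ is the conditional-variance convergence $\sigma_2^{-2}\cdot 4\sum_i(\sum_{j<i}f_{ij}W_j)^2\xrightarrow{\mathbb{P}}1$. The conditional quadratic variation has mean exactly $\sigma_2^2$, so one must show its variance is $o(\sigma_2^4)$. Expanding the square, this variance splits into a diagonal contribution controlled by $\mathcal{M}(f_2)\|f_2\|_{\ell_2}^2$ and an off-diagonal contribution that is itself a clean quadratic form whose squared $\ell_2$-norm matches, up to constants, the ``path'' and ``cycle'' sums $M_3$ and $M_4$ of \Cref{lem:kappa4-of-some-random-variables} (which, together with $M_1$ and $M_2$, make up $\kappa_4(B_2)$). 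Thus $\mathcal{M}(f_2)/\sigma_2^2\to0$ and $\kappa_4(B_2)/\sigma_2^4\to0$ are exactly what is needed. This combinatorial bookkeeping is the technical heart of de Jong's theorem, and in the write-up I would either carry it out explicitly or simply invoke the one-dimensional case of \citep{koike2022high}, which already packages this estimate.
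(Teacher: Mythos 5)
Your argument is correct and matches the paper's: the paper proves this lemma in a single line by citing Theorem 1 of \cite{de1990central}, which is exactly the fourth-moment CLT for clean (completely degenerate) multilinear forms that you invoke for $k=2$, and which also covers the linear case $k=1$ that you instead verify directly via Lindeberg--Feller. Your added observations — that the $\kappa_4$ hypothesis is automatic when $k=1$, and that the conditional-variance convergence in the martingale route is the technical heart controlled by $\mathcal{M}(f_2)$ and $\kappa_4(B_2)$ — are accurate elaborations of what that citation packages.
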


Lemma~\ref{lem:DeJong} is a direct result of Theorem 1 of \cite{de1990central}, so we omit its proof.

\subsection{Proof of Theorem~\ref{thm:CLT-no-adjust}}

Firstly, we prove \Cref{prop:marginal-CLTs-for-direct-effect-components} below, which is useful for deriving the asymptotic normality of the DATE estimator.

\begin{proposition}
	\label{prop:marginal-CLTs-for-direct-effect-components}
	Under Assumptions~\ref{a:bounded-parameter}--\ref{a:Lindberg-condition-unadj},
	\begin{itemize}
		\item[(i)] if $\liminf_{N\rightarrow \infty} N\sigma^2_{\dir,1} >0$, then $B_{\dir,1}/\sigma_{\dir,1} \xrightarrow{d} \mathcal{N}(0,1)$;
		\item[(ii)] if $\liminf_{N\rightarrow \infty} N\sigma^2_{\dir,2} >0$, then $B_{\dir,2}/\sigma_{\dir,2} \xrightarrow{d} \mathcal{N}(0,1).$
	\end{itemize}    
\end{proposition}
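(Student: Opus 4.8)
The plan is to apply the de Jong central limit theorem (\Cref{lem:DeJong}) separately to the two pieces of the decomposition $\hat{\tau}_{\dir}-\tau_{\dir}=B_{\dir,1}+B_{\dir,2}$ obtained in the proof of \Cref{prop:E-and-Var-DIR}. Writing $Z_i-r_1=\sqrt{r_1r_0}\,W_i$ with $W_i=(Z_i-r_1)/\sqrt{r_1r_0}$ i.i.d., mean zero, variance one, and using $\tilde{\gamma}_{ii}=0$, these pieces become a first-order and a (symmetrized) second-order homogeneous sum: $B_{\dir,1}=\sumi f_iW_i$ with $f_i=\frac{1}{N\sqrt{r_1r_0}}\bigl(r_1Y_{Z_i=0}+r_0Y_{Z_i=1}\bigr)$, and $B_{\dir,2}=\sum_{i\neq j}f_{ij}W_iW_j$ with $f_{ij}=(\tilde{\gamma}_{ij}+\tilde{\gamma}_{ji})/(2N)$. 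By \Cref{lem:DeJong}, in each case it suffices to verify $\kappa_4(B_{\dir,k})/\sigma_{\dir,k}^4\to0$ and $\mathcal{M}(f_k)/\sigma_{\dir,k}^2\to0$. Since $\sigma_{\dir,k}^2\to0$, I would first rescale the coefficients by $\sqrt{N}$, under which both ratios are invariant while the non-degeneracy hypothesis of \Cref{lem:DeJong} becomes exactly $\liminf_N N\sigma_{\dir,k}^2>0$, the stated assumption. Throughout I use $\sigma_{\dir,k}^2\asymp N^{-1}$: the upper bound is \Cref{prop:order-of-tau-dir} (or, for $k=1$, the bounded-outcome part of \Cref{a:bounded-parameter} directly), and the lower bound is the hypothesis.

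For (i): from $r_1Y_{Z_i=0}+r_0Y_{Z_i=1}=\alpha_i+r_0\theta_i+r_1\sumj\tilde{\gamma}_{ij}$, \Cref{a:bounded-parameter} together with $|\tilde{\gamma}_{ij}|\le CQ_{ij}$ and $\sumj Q_{ij}=1$ give $|f_i|\lesssim N^{-1}$, so $\mathcal{M}(f)=\maxi f_i^2\lesssim N^{-2}$ and $\sumi f_i^4\le(\maxi f_i^2)\sumi f_i^2=O(N^{-2})\,\sigma_{\dir,1}^2=O(N^{-3})$. By \Cref{lem:kappa4-of-some-random-variables}, $\kappa_4(B_{\dir,1})=\kappa_4(W_1)\sumi f_i^4=O(N^{-3})$. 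Dividing by $\sigma_{\dir,1}^2\asymp N^{-1}$ and $\sigma_{\dir,1}^4\asymp N^{-2}$ makes both ratios $O(N^{-1})\to0$, so \Cref{lem:DeJong} yields $B_{\dir,1}/\sigma_{\dir,1}\xrightarrow{d}\mathcal{N}(0,1)$.

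For (ii): $|\tilde{\gamma}_{ij}|\le CQ_{ij}$ gives $|f_{ij}|\lesssim N^{-1}(Q_{ij}+Q_{ji})$. For the maximal influence, $\operatorname{Inf}_i(f)=\sumj f_{ij}^2\lesssim N^{-2}\bigl(\sumj Q_{ij}^2+\sumj Q_{ji}^2\bigr)$ with $\sumj Q_{ij}^2\le(\max_jQ_{ij})\sumj Q_{ij}\le1$ and $\sumj Q_{ji}^2\le(\max_jQ_{ji})\sumj Q_{ji}\le\sumj Q_{ji}=o(N^{1/2})$ by \Cref{a:Lindberg-condition-unadj}, hence $\mathcal{M}(f)=o(N^{-3/2})$ and $\mathcal{M}(f)/\sigma_{\dir,2}^2=o(N^{-1/2})\to0$. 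For the cumulant, \Cref{lem:kappa4-of-some-random-variables} bounds $\kappa_4(B_{\dir,2})$ by $O(|M_1|+|M_2|+|M_3|+|M_4|)$; I would estimate $M_1=\sum_{i\neq j}f_{ij}^4\lesssim N^{-4}\sum_{i\neq j}Q_{ij}^2\lesssim N^{-3}$, $M_3=\sum_{i\neq j\neq k}f_{ij}^2f_{ik}^2\le\sumi(\sumj f_{ij}^2)^2\le\mathcal{M}(f)\sum_{i\neq j}f_{ij}^2=o(N^{-3/2})\cdot O(N^{-1})=o(N^{-5/2})$ with $|M_2|\le M_3$ by Cauchy--Schwarz and the symmetry of $f$, and, writing $|\bs{F}|=(|f_{ij}|)$, $|M_4|=\bigl|\sum_{i\neq j\neq k\neq l}f_{ik}f_{il}f_{jk}f_{jl}\bigr|\le\tr(|\bs{F}|^4)\le\|\,|\bs{F}|\,\|_{\opnorm}^2\sum_{i,j}f_{ij}^2$, where $|\bs{F}|\le(C'/N)(\bs{Q}+\bs{Q}^\top)$ entrywise with both sides nonnegative symmetric, so $\|\,|\bs{F}|\,\|_{\opnorm}\le(2C'/N)\|\bs{Q}\|_{\opnorm}=O(N^{-1})$ by \Cref{a:opnorm-EE^T}(ii) and $\sum_{i,j}f_{ij}^2=\tfrac12\sigma_{\dir,2}^2=O(N^{-1})$, giving $M_4=O(N^{-3})$. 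Hence $\kappa_4(B_{\dir,2})=o(N^{-5/2})$, so $\kappa_4(B_{\dir,2})/\sigma_{\dir,2}^4=o(N^{-1/2})\to0$ and \Cref{lem:DeJong} yields $B_{\dir,2}/\sigma_{\dir,2}\xrightarrow{d}\mathcal{N}(0,1)$.

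The one genuinely delicate point is the fourth-cumulant bound for the quadratic form, and within it the quartic term $M_4$. The key device is to pass to the entrywise absolute-value matrix $|\bs{F}|$, dominate it entrywise by a constant multiple of $\bs{Q}+\bs{Q}^\top$, and use that entrywise domination between nonnegative symmetric matrices transfers to the operator norm (Perron--Frobenius), so that the control $\|\bs{Q}\|_{\opnorm}=O(1)$ from \Cref{a:opnorm-EE^T}(ii) applies once one recognizes $\sum_{i\neq j\neq k\neq l}f_{ik}f_{il}f_{jk}f_{jl}\le\tr(|\bs{F}|^4)$; the remaining estimates are routine bookkeeping with $|\tilde{\gamma}_{ij}|\le CQ_{ij}$, $\sumj Q_{ij}=1$, and \Cref{a:Lindberg-condition-unadj}.
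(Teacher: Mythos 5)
Your proof is correct and follows essentially the same route as the paper's: the same decomposition of $B_{\dir,1}$ and $B_{\dir,2}$ into first- and second-order homogeneous sums in $W_i=(Z_i-r_1)/\sqrt{r_1r_0}$, the same appeal to Lemma~\ref{lem:DeJong}, and the same verification of the maximal-influence and fourth-cumulant conditions via Lemma~\ref{lem:kappa4-of-some-random-variables}. The only differences are minor bookkeeping choices --- you bound $\mathcal{M}(f_{\dir,2})$ using $\sum_j Q_{ji}=o(N^{1/2})$ and $M_4$ via $\tr(|\bs{F}|^4)$ with Perron--Frobenius monotonicity, whereas the paper uses $\|\bs{Q}+\bs{Q}^\top\|_{\textnormal{op}}=O(1)$ throughout to get the sharper $O(N^{-3})$ cumulant bound (both suffice) --- and your explicit rescaling to meet the non-degeneracy hypothesis of the lemma is, if anything, more careful than the paper's.
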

\begin{proof}
	Let $W_i = (Z_i-r_1)/\sqrt{r_1r_0}$ with $E(W_i) = 0$ and $E(W_i^2)=1$. Let  $f_{\dir,1}(i) = (r_0Y_{Z_i=1}+r_1Y_{Z_i=0}) / (N\sqrt{r_1r_0})$ and $f_{\dir,2}(i,j) = (\tilde{\gamma}_{ij}+\tilde{\gamma}_{ji})/(2N)$. Then $B_{\dir,1} = \sum_{i}f_{\dir,1}(i)W_i$ and $B_{\dir,2} =\sum_{i \ne j}f_{\dir,2}(i,j)W_iW_j $.

	{\bf Step 1.} Proof of (i). By \Cref{lem:DeJong}, we have
	$$
	B_{\dir,k} / \sigma_{\dir,k} \xrightarrow{d} \mathcal{N}(0,1), \quad k \in\{1,2\},
	$$
	provided that the following two conditions hold: (1) $  \kappa_4(B_{\dir,k}) /\sigma_{\dir,k}^4 \rightarrow 0$; and (2)\\ $\mathcal{M}\left(f_{\dir,k}\right)/ \sigma_{\dir,k}^2 \rightarrow 0$.
	
	By \Cref{a:bounded-parameter}, we have $\max_i |f_{\dir,1}(i)| =O(N^{-1})$. Moreover, by \Cref{lem:kappa4-of-some-random-variables}, we have $$\kappa_4(B_{\dir,1}) = O\big(\sum_i f_{\dir,1}^4(i)\big) = O(N^{-3}), \quad \mathcal{M}\left(f_{\dir,1}\right) = \max_i f_{\dir,1}^2(i) = O(N^{-2}).$$  If $\liminf_N N\sigma_{\dir,1}^2  >0$, then  $\mathcal{M}(f_{\dir,1})/\sigma_{\dir,1}^2 = O(N^{-1})$ and $\kappa_4(B_{\dir,1})/\sigma_{\dir,1}^4 = O(N^{-1})$. Therefore, $B_{\dir,1}/\sigma_{\dir,1} \xrightarrow{d} \mathcal{N}(0,1)$.

	{\bf Step 2.} Proof of (ii).
	Denote $\bar{E}_{ij} = E_{ij}+E_{ji} \leq 2$, $\bar{Q}_{ij} = Q_{ij}+Q_{ji}\leq 2$, and ${f}_{\dir,ij}: = (\tilde{\gamma}_{ij}+\tilde{\gamma}_{ji})/(2N) \lesssim \bar{Q}_{ij} / N$. We write $f_{\dir,2}(i,j)$ as $f_{\dir,ij}$ for short. Then $\|\bs{\bar{Q}}\|_{\oprtnorm}\leq 2 \|\bs{Q}\|_{\oprtnorm} = O(1)$ and 
	\begin{align*}
		\sum_i \sum_j f_{\dir,ij}^2 &\lesssim N^{-2}\sum_i\sum_j \bar{Q}_{ij}^2 \lesssim N^{-2}\bs{1}^\top \bs{\bar{Q}} \bs{\bar{Q}}^\top \bs{1}= O(N^{-1}),\\
		\max_{i,j} f_{\dir,ij}^2 &\lesssim N^{-2} \max_{i,j} \bar{Q}_{ij}^2 = O(N^{-2}),\\
		\max_{i} \sum_j f_{\dir,ij}^2 &\lesssim N^{-2}\max_i\sum_j \bar{Q}_{ij}^2 \leq N^{-2} \max_i(\bs{\bar{Q}}\bs{\bar{Q}}^\top)_{ii}=O(N^{-2}).
	\end{align*}
	By \Cref{lem:kappa4-of-some-random-variables}, we have
	$$
	\kappa_4\left({B}_{\dir,2}\right)=O\left(\left|M_{\dir,21}\right|+\left|M_{\dir,22}\right|+\left|M_{\dir,23}\right|+\left|M_{\dir,24}\right|\right),
	$$
	where 
	\begin{align*}
		M_{\dir,21}&=\sum_{i_1\ne i_2} f_{\dir,i_1 i_2}^4 \leq \max_{i_1,i_2} f_{\dir,i_1i_2}^2 \cdot \sum_{i_1}\sum_{i_2}f_{\dir,i_1i_2}^2 = O(N^{-3}), \\
		M_{\dir,22}&=\sum_{i_1\ne i_2\ne i_3} f_{\dir,i_1 i_2}^2 f_{\dir,i_2 i_3} f_{\dir,i_3 i_1} \lesssim N^{-4}\max_{i_1,i_2} \bar{Q}_{i_1i_2} \cdot \sum_{i_1 \ne i_2\ne i_3} \bar{Q}_{i_1 i_2} \bar{Q}_{i_2 i_3}\bar{Q}_{i_3 i_1}\\
		&\lesssim N^{-4}\max_{i_1,i_2} \bar{Q}_{i_1i_2} \cdot \sum_{i_1 \ne i_2\ne i_3}\bar{Q}_{i_2 i_3}\bar{Q}_{i_3 i_1} \leq N^{-4}\max_{i_1,i_2} \bar{Q}_{i_1i_2} \cdot \bs{1}^\top\bs{\bar{Q}}\bs{\bar{Q}}\bs{1} = O(N^{-3}),\\
		M_{\dir,23}&=\sum_{i_1 \ne i_2\ne i_3} f_{\dir,i_1 i_2}^2 f_{\dir,i_2 i_3}^2 \leq \max_{i_2} \sum_{i_3} f_{\dir,i_2i_3}^2 \cdot \sum_{i_1}\sum_{i_2}f_{\dir,i_1i_2}^2 = O(N^{-3}) , \\
		M_{\dir,24}&=\sum_{i_1\ne i_2\ne i_3\ne i_4} f_{\dir,i_1 i_2} f_{\dir,i_2 i_3} f_{\dir,i_3 i_4} f_{\dir,i_4 i_1} \\
		&\lesssim N^{-1}\sum_{i_1\ne i_2\ne i_3\ne i_4} f_{\dir,i_1 i_2} f_{\dir,i_2 i_3} f_{\dir,i_3 i_4} \\
		&\leq N^{-4}\bs{1}^\top\bs{\bar{Q}}\bs{\bar{Q}}\bs{\bar{Q}}\bs{1}= O(N^{-3}).
	\end{align*}
	Therefore, if $\liminf_{N\rightarrow\infty}N\sigma_{\dir,2}^2>0$, then $\kappa_4({B}_{\dir,2})/\sigma_{\dir,2}^4 = O(N^{-1}) = o(1)$.
	
	Moreover,
	\begin{align*}
		\mathcal{M}\left(f_{\dir,2}\right) &= \max_i \operatorname{Inf}_i\left(f_{\dir,2}\right) \lesssim N^{-2}\big(\max_i \sum_j Q_{ij}^2 + \max_i \sum_j Q_{ji}^2\big)\\
		& \lesssim N^{-2}\big(\|\bs{Q}\bs{Q}^\top\|_{\oprtnorm} +\|\bs{Q}^\top\bs{Q}\|_{\oprtnorm} \big)=O(N^{-2}).
	\end{align*}
	The conclusion follows. 
\end{proof}

Next, we prove a general result for deriving the asymptotic normality of the unadjusted and EV-adjusted IATE and GATE estimators. Define
$\hat{\tau}^\prime_{\tot} = \hat{\tau}_{\dir} + \hat{\tau}^\prime_{\ind}$ with 
\[
\hat{\tau}^\prime_{\ind}  =  \frac{1}{N}\sum_{i,j} E_{ij} \Bigl\{\frac{Y_i^\prime Z_j}{r_1} - \frac{Y_i^\prime (1-Z_j)}{r_0}\Bigr\},
\]
where $Y_i^\prime = \alpha_i^\prime + \theta_i^\prime Z_i +\sum_j  \tilde{E}_{ij} {\gamma}_{ij}^\prime Z_j$ with $\tilde{\gamma}^\prime_{ij} = \tilde{E}_{ij} {\gamma}_{ij}^\prime$. Here, $\{\alpha_i^\prime,\theta_i^\prime,{\gamma}_{ij}^\prime\}_{1 \leq i,j\leq N}$ are finite-population parameters like $\{\alpha_i,\theta_i,{\gamma}_{ij}\}_{1 \leq i,j\leq N}$. We define analogous quantites with respect to $\{\alpha_i^\prime,\theta_i^\prime,{\gamma}_{ij}^\prime\}_{1 \leq i,j\leq N}$. For example, we define $\tau_{\ind}^\prime$ and $\tau_{\tot}^\prime$ as the IATE and GATE for $\{\alpha_i^\prime,\theta_i^\prime,{\gamma}_{ij}^\prime\}_{1 \leq i,j\leq N}$.

Note that $ \hat{\tau}_{\tot}^\prime \equiv \hat{\tau}_{\tot}$ when $Y_i^\prime \equiv Y_i$, and $ \hat{\tau}_{\tot}^\prime \equiv \tilde{\tau}^{\ev}_{\tot}$ when $Y_i^\prime \equiv e_i$.  Analogously, we define $ \sigma_{\tot}^{\prime~2}:= \Var(\hat{\tau}^\prime_{\tot})$. Let  $\sigma_{\tot,1}^{\prime~2}$ and $\sigma_{\tot,2}^{\prime~2}$ be the variances of linear and quadratic components of $\hat{\tau}^\prime_{\tot}$, respectively, similar to the definitions of $\sigma_{\tot,1}^2$ and $\sigma_{\tot,2}^2$ with $Y_i$ replaced by $Y_i^\prime$. Then, $\sigma_{\tot}^{\prime ~ 2}= \sigma_{\tot,1}^{\prime~2} + \sigma_{\tot,2}^{\prime ~ 2}$. Similarly, define $\sigma_{\ind}^{\prime~2}: = \Var(\hat{\tau}^\prime_{\ind})  = \sigma_{\ind,1}^{\prime~2} + \sigma_{\ind,2}^{\prime~2}$.

\begin{proposition}
	\label{prop:marginal-CLTs-for-indirect-effect-components}
	Suppose that Assumption~\ref{a:bounded-parameter} holds for both $\{\alpha_i,\theta_i,{\gamma}_{ij}\}_{1 \leq i,j\leq N}$ and\\ $\{\alpha_i^\prime,\theta_i^\prime,{\gamma}_{ij}^\prime\}_{1 \leq i,j\leq N}$, Assumption \ref{a:density-rho_N}--\ref{a:Lindberg-condition-unadj} holds, and for a nonrandom sequence $\Lambdan$,
	\begin{align*}
		& N^{-1} \max\Big\{\sum_i \Big(\sum_j  E_{ji}Y^\prime_{Z_j=1}\Big)^2, \sum_i \Big(\sum_j  E_{ji}Y^\prime_{Z_j=0}\Big)^2\Big\} = O(\Lambdan ),\\
		&N^{-1} \max\Big\{\maxi\Big(\sum_j  E_{ji}Y^\prime_{Z_j=1}\Big)^2, \maxi\Big(\sum_j  E_{ji}Y^\prime_{Z_j=0}\Big)^2\Big\} = o(\Lambdan ).
	\end{align*}
	Then, we have
	\begin{itemize}
		\item[(i)] if $\liminf_{N\rightarrow \infty} \sigma^{\prime~2}_{\ind,1}/(N^{-1}\Lambdan+\rhon) >0$, we have $B_{\ind,1}^\prime/\sigma_{\ind,1}^\prime \xrightarrow{d} \mathcal{N}(0,1)$;
		\item[(ii)] if $\liminf_{N\rightarrow \infty} \sigma^{\prime~2}_{\ind,2}/(N^{-1}\Lambdan+\rhon) >0$, we have $B_{\ind,2}^\prime/\sigma_{\ind,2}^\prime \xrightarrow{d} \mathcal{N}(0,1)$;
		\item[(iii)]  if $\liminf_{N\rightarrow \infty} \sigma_{\tot,1}^{\prime~2}/(N^{-1}\Lambdan+\rhon) >0$, we have $B_{\tot,1}^\prime/\sigma_{\tot,1}^\prime \xrightarrow{d} \mathcal{N}(0,1)$;
		\item[(iv)] if $\liminf_{N\rightarrow \infty} \sigma_{\tot,2}^{\prime~2}/(N^{-1}\Lambdan+\rhon) >0$, we have $B_{\tot,2}^\prime/\sigma_{\tot,2}^\prime \xrightarrow{d} \mathcal{N}(0,1)$.
	\end{itemize}
\end{proposition}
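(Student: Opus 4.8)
The plan is to verify the two conditions of \Cref{lem:DeJong} (the De Jong--type CLT) for each of the four statistics, working with the decomposition $\hat{\tau}^\prime_{\ind} - \tau^\prime_{\ind} = B^\prime_{\ind,1} + B^\prime_{\ind,2}$ (and its total-effect analogue) already established in the proof of \Cref{prop:order-and-formula-of-tau-ind}. Writing $W_i = (Z_i-r_1)/\sqrt{r_1r_0}$, the linear part $B^\prime_{\ind,1} = \sum_i f^\prime_{\ind,1}(i)W_i$ has coefficients built from $\sum_j E_{ij}(r_1r_0 Y^{\prime\,Z_j=1}_{Z_i=1}+\cdots)$ scaled by $(N\sqrt{r_1r_0})^{-1}$, which by \eqref{eqn:equality1} is the analogue of $\alpha^\prime_i + r_1\theta^\prime_i + r_0\tilde\gamma^\prime_{ij}+r_1\sum_{k\neq j}\tilde\gamma^\prime_{ik}$ aggregated over $j$; and the quadratic part $B^\prime_{\ind,2} = \sum_{i\neq j} f^\prime_{\ind,2}(i,j)W_iW_j$ has the symmetrized kernel $\tfrac{1}{2N r_1 r_0}\big[(E_{ij}\theta^\prime_i+\sum_k E_{kj}\tilde\gamma^\prime_{ki}) + (E_{ji}\theta^\prime_j+\sum_k E_{ki}\tilde\gamma^\prime_{kj})\big]$. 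The first step is to record uniform bounds on these kernels using Assumption~\ref{a:bounded-parameter} for the primed parameters: $\theta^\prime_i = O(1)$, $\tilde\gamma^\prime_{ij} = O(Q_{ij})$, and $\sum_k E_{kj}\tilde\gamma^\prime_{ki} \lesssim \sum_k E_{kj}Q_{ki} = (\bs{E}^\top\bs{Q}^\top)_{ji}$, so that $\|\cdot\|_{\ell_2}^2$ and the max-influence $\mathcal{M}(\cdot)$ of each kernel can be controlled via $\|\bs{E}\|_\opnorm$, $\|\bs{Q}\|_\opnorm$ and the hypothesized bounds on $N^{-1}\sum_i(\sum_j E_{ji}Y^\prime_{Z_j=z})^2 = O(\Lambdan)$ and its max-version $= o(\Lambdan)$.

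Next I would check the two CLT hypotheses term by term. For the Lindeberg/max-influence condition: for $B^\prime_{\ind,1}$, $\mathcal{M}(f^\prime_{\ind,1}) = \max_i f^\prime_{\ind,1}(i)^2 \lesssim N^{-2}\max_i(\sum_j E_{ji}Y^\prime_{Z_j=z})^2 = o(N^{-1}\Lambdan)$, which is $o(\sigma^{\prime\,2}_{\ind,1})$ under the $\liminf$ hypothesis in (i); for $B^\prime_{\ind,2}$, $\mathcal{M}(f^\prime_{\ind,2}) \lesssim N^{-2}\big(\max_i N_i + \max_i\sum_j(\sum_k E_{kj}Q_{ki})^2\big)$, and using $\max_i N_i = o(N^{3/2}\rhon)$ from Assumption~\ref{a:Lindberg-condition-unadj} together with operator-norm control of $\bs{Q}\bs{Q}^\top\bs{E}^\top$, this is $o(\rhon)$, hence $o(\sigma^{\prime\,2}_{\ind,2})$ under (ii). For the fourth-cumulant condition, I apply \Cref{lem:kappa4-of-some-random-variables}: $\kappa_4(B^\prime_{\ind,1}) = \kappa_4(W_1)\sum_i f^\prime_{\ind,1}(i)^4 \lesssim \mathcal{M}(f^\prime_{\ind,1})\cdot\|f^\prime_{\ind,1}\|_{\ell_2}^2 = o(N^{-1}\Lambdan)\cdot O(N^{-1}\Lambdan + \rhon) = o(\sigma^{\prime\,4}_{\ind,1})$; and $\kappa_4(B^\prime_{\ind,2}) = O(|M_1|+|M_2|+|M_3|+|M_4|)$, where each $M_\ell$ is bounded by $\mathcal{M}(f^\prime_{\ind,2})$ times a "trace-of-a-matrix-product" quantity $\lesssim \|f^\prime_{\ind,2}\|_{\ell_2}^2$ (for $M_1,M_3$) or, for the chain sums $M_2,M_4$, by inserting $\mathcal{M}(f^\prime_{\ind,2})^{1/2}$-type factors and reducing to $\bs{1}^\top(\cdot)(\cdot)^\top\bs{1}$ bounds as in the proof of \Cref{prop:marginal-CLTs-for-direct-effect-components}; all come out $o(\sigma^{\prime\,4}_{\ind,2})$ under (ii). Parts (iii) and (iv) are identical after noting $B^\prime_{\tot,1} = B_{\dir,1} + B^\prime_{\ind,1}$ and $B^\prime_{\tot,2} = B_{\dir,2} + B^\prime_{\ind,2}$, so the kernels just pick up the extra $O(1)$ ego terms $r_0 Y^\prime_{Z_i=1}+r_1 Y^\prime_{Z_i=0}$ and $\tilde\gamma^\prime_{ij}$, which only strengthen all the bounds since $\sigma^{\prime\,2}_{\dir} = O(N^{-1}) = O(N^{-1}\Lambdan + \rhon)$.

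The main obstacle is controlling the two mixed chain-type fourth-cumulant sums $M_2 = \sum_{i\neq j\neq k} f_{ij}^2 f_{ik}f_{jk}$ and $M_4 = \sum_{i\neq j\neq k\neq l} f_{ik}f_{il}f_{jk}f_{jl}$ for the quadratic kernel $f^\prime_{\ind,2}$, because that kernel is not itself a scaled adjacency matrix but a \emph{sum} of a $\theta^\prime$-weighted $\bs{E}$ term and a $\bs{E}^\top\bs{Q}^\top$ term; expanding the product $(\bs{E}\text{-part} + \bs{Q}\bs{E}\text{-part})$ produces cross terms whose operator-norm bookkeeping is delicate. The remedy is to split $f^\prime_{\ind,2} = g^{(1)} + g^{(2)}$ and bound each $M_\ell$ by $2^3$ times the corresponding sums for $g^{(1)}$ and $g^{(2)}$ separately (using $(a+b)^4 \leq 8a^4 + 8b^4$ inside $M_1,M_3$ and Cauchy--Schwarz for the chain sums), then treat each piece exactly as the direct-effect quadratic term was treated via $\|\bs{Q}\|_\opnorm = O(1)$, $\|\bs{E}\|_\opnorm = O(N\rhon)$, and the definition of $\Lambdan$. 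Everything else is the routine verification that $N^{-1}\Lambdan + \rhon$ is the correct variance scale, which is already supplied by \Cref{prop:order-and-formula-of-tau-ind}, \Cref{prop:order-of-tau-tot}, and (for the EV case) \Cref{prop:order-of-adjusted-estimators} together with Remark~\ref{rmk:estimate-of-deltan-eigenvalue}.
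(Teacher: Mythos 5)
Your proposal is correct and follows essentially the same route as the paper: decompose each estimator into a linear and a quadratic homogeneous sum in $W_i=(Z_i-r_1)/\sqrt{r_1r_0}$, verify the max-influence and fourth-cumulant conditions of Lemma~\ref{lem:DeJong} using Lemma~\ref{lem:kappa4-of-some-random-variables}, the operator-norm bounds $\|\bs{E}\|_{\textnormal{op}}=O(N\rhon)$, $\|\bs{Q}\|_{\textnormal{op}}=O(1)$, and the hypothesized $\Lambdan$ bounds. Your splitting $f^\prime_{\ind,2}=g^{(1)}+g^{(2)}$ for the chain-type cumulant sums is exactly the paper's device of bounding $|f_{\ind,ij}|\lesssim N^{-1}(\bar{E}_{ij}+\bar{G}_{ij})$ with $\bs{G}=\bs{Q}^\top\bs{E}$ and expanding the resulting trace term by term, so no substantive difference remains.
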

\begin{proof}
	{\bf Step 1.} We prove (i). With a slight abuse of notation, let 
	$$f_{\ind,1}(i) =\sum_j E_{ji}\{r_1 Y_{Z_j=1}^\prime+ r_0 Y_{Z_j=0}^\prime+ (r_{0}-r_1)\tilde{\gamma}_{ji}^\prime\}/ (N\sqrt{r_1r_0}).$$
	Note that $\liminf_N \sigma_{\ind,1}^{\prime~2} / (N^{-1}\Lambdan+\rhon) >0$, $\sum_i f_{\ind,1}^2(i) = \sigma_{\ind,1}^{\prime~2}$, and
	\begin{align*}
		& N^{-1} \max \Big\{\sum_i \Big(\sum_j  E_{ji}Y_{Z_j=1}^\prime\Big)^2, \sum_i \Big(\sum_j  E_{ji}Y_{Z_j=0}^\prime\Big)^2\Big\} = O(\Lambdan ),\\
		&N^{-1} \max \Big\{\maxi\Big(\sum_j  E_{ji}Y_{Z_j=1}^\prime\Big)^2, \maxi\Big(\sum_j  E_{ji}Y_{Z_j=0}^\prime\Big)^2\Big\} = o(\Lambdan ).
	\end{align*}
	Then, we have
	\begin{align*}
		&\frac{\max_i f_{\ind,1}^2(i)}{\sigma_{\ind,1}^{\prime~2}} \lesssim \frac{N^{-2}\max_i \{ (\sum_{j}E_{ji}Y_{Z_j=1}^\prime)^2, (\sum_{j}E_{ji}Y_{Z_j=0}^\prime)^2,\sum_j Q_{ji}^2\}}{N^{-1}\Lambdan+\rhon} = o(1),\\
		&  \sum_i f_{\ind,1}^4(i) \leq \max_i f_{\ind,1}^2(i) \sum_i f_{\ind,1}^2(i) = o(\sigma_{\ind,1}^{\prime~4}).
	\end{align*}
	It follows that 
	$$
	\begin{aligned}
		\kappa_4(B_{\ind,1}^\prime) /\sigma_{\ind,1}^{\prime~4} \rightarrow 0,\quad \mathcal{M}\left(f_{\ind,1}\right) / \sigma_{\ind,1}^{\prime~2} \rightarrow 0.
	\end{aligned}
	$$
	As a consequence, by \Cref{lem:DeJong}, we have
	$$
	B^\prime_{\ind,1} / \sigma_{\ind,1}^\prime  \xrightarrow{d} \mathcal{N}(0,1).
	$$
	
	{\bf Step 2.} We prove (ii). With a slight abuse of notation, let $f_{\ind,2}(i,i) = 0$ and
	$$f_{\ind,2}(i,j) = (E_{ij}\theta^\prime_{i}+E_{ji}\theta^\prime_{j}+\sum_{k}E_{kj}\tilde{\gamma}^\prime_{ki}+\sum_{k}E_{ki}\tilde{\gamma}^\prime_{kj} ) / (2N), \quad i \neq j.$$
	Write $f_{\ind,2}(i,j)$ as ${f}_{\ind,ij}$ for short. Let $g_i = \sum_j (\sum_k E_{kj}\tilde{\gamma}^\prime_{ki} + \sum_k E_{ki}\tilde{\gamma}^\prime_{kj})^2$. Then, 
	\begin{align*}
		g_i &\lesssim \sum_j (\sum_k E_{kj}|\tilde{\gamma}^\prime_{ki}|)^2 + \sum_j (\sum_k E_{ki}|\tilde{\gamma}^\prime_{kj}|)^2  \\
		&\lesssim \sum_j \sum_k \sum_l E_{ki}E_{li}Q_{kj}Q_{lj} + \sum_j \sum_k \sum_l E_{kj}E_{lj}Q_{ki}Q_{li}\\
		&= (\bs{E}^\top\bs{Q}\bs{Q}^\top\bs{E})_{ii} +(\bs{Q}^\top\bs{E}\bs{E}^\top\bs{Q})_{ii}.
	\end{align*}
	Therefore, 
	\begin{align}
		\label{eq:max-gi}
		\max_i g_i \lesssim  2 \|\bs{Q}^\top\bs{E}\|_{\oprtnorm}^2 = O(N^2\rhon^2) = o(N^2\rhon),
	\end{align}
	where the last equality follows from \Cref{a:density-rho_N} ($\rhon = o(1)$). Moreover, we have
	\begin{align*}
		\sum_i g_i &\lesssim \sum_i \sum_j \sum_k \sum_l E_{ki}E_{li}Q_{kj}Q_{lj} + \sum_i\sum_j \sum_k \sum_l E_{kj}E_{lj}Q_{ki}Q_{li} \\
		&\lesssim \sum_i \sum_j \sum_k \sum_l E_{li}Q_{kj}Q_{lj} + \sum_i \sum_j \sum_k \sum_l E_{lj}Q_{ki}Q_{li} \\
		& \lesssim N\|\bs{E}\|_{\oprtnorm}\|\bs{Q}\|_{\oprtnorm}^2 = O(N^2\rhon).
	\end{align*}
	Consequently, we have
	\begin{align}
		\label{eq:sum-of-square-of-gi}
		\sum_i g_i^2 \leq (\max_i g_i) \Big(\sum_i g_i\Big) = O(N^4\rhon^3) = o(N^4\rhon^2).
	\end{align}
	If follows from \Cref{a:Lindberg-condition-unadj} that
	\begin{align*}
		\mathcal{M}\left({f}_{\ind,2} \right) &= \max_i \operatorname{Inf}_i\left({f}_{\ind,2}\right) \lesssim \max_i N^{-2}\big(\sum_{j}E_{ij}\theta_{ i}^{\prime~2} +\sum_{j}E_{ji}\theta_{ j}^{\prime~2}+ g_i\big)\\
		&\leq N^{-2} \big\{O(\max_i N_i) + O(\max_i M_i) + O(\max_i g_i)\big\}\\
		&\leq N^{-2} \big\{o(N^{3/2}\rhon) + o(N^2\rhon)\big\} = o(\rhon).
	\end{align*}
	Therefore,
	\begin{align*}
		\max_i \sum_j {f}_{\ind,ij}^2 &= \mathcal{M}\left({f}_{\ind,2}\right) = o(\rhon),\quad 
		\max_{i,j} {f}_{\ind,ij}^2 \leq \max_i \sum_j {f}_{\ind,ij}^2  = o(\rhon),\\
		\max_{i,j} \sum_k |{f}_{\ind,jk}||{f}_{\ind,ki}| &\leq \max_{i,j} \sum_k ({f}_{\ind,jk}^2 + {f}_{\ind,ki}^2)/2 \\
		& \leq \max_i \sum_k {f}_{\ind,ki}^2 + \max_j \sum_k {f}_{\ind,jk}^2 = o(\rhon),\\
		\sum_i \sum_j {f}_{\ind,ij}^2 &= \sigma_{\ind,2}^{\prime~2}  = O(\rhon).
	\end{align*}
	
	By \Cref{lem:kappa4-of-some-random-variables}, we have $\kappa_4\left(B_{\ind,2}^\prime\right)=O\left(\left|M_{\ind,21}\right|+\left|M_{\ind,22}\right|+\left|M_{\ind,23}\right|+\left|M_{\ind,24}\right|\right)$, where
	\begin{align*}   
		M_{\ind,21}&=\sum_{i_1\ne i_2} {f}_{\ind,i_1 i_2}^4 \leq \max_{i_1,i_2} {f}_{\ind,i_1i_2}^2 \cdot \sum_{i_1}\sum_{i_2}{f}_{\ind,i_1i_2}^2 = o(\rhon^2), \\
		M_{\ind,22}&=\sum_{i_1\ne i_2\ne i_3} {f}_{\ind,i_1 i_2}^2 {f}_{\ind,i_2 i_3} {f}_{\ind,i_3 i_1} \leq \max_{i_1,i_2} \sum_{i_3} |{f}_{\ind,i_2i_3}| |{f}_{\ind,i_3i_1}| \cdot \sum_{i_1}\sum_{i_2}{f}_{\ind,i_1i_2}^2 \\
		& = o(\rhon^2),\\
		M_{\ind,23}&=\sum_{i_1\ne i_2\ne i_3} {f}_{\ind,i_1 i_2}^2 {f}_{\ind,i_2 i_3}^2 \leq \max_{i_2} \sum_{i_3} {f}_{\ind,i_2i_3}^2 \cdot \sum_{i_1}\sum_{i_2}{f}_{\ind,i_1i_2}^2 = o(\rhon^2) , \\
		M_{\ind,24}&=\sum_{i_1\ne i_2\ne i_3\ne i_4} {f}_{\ind,i_1 i_2} {f}_{\ind,i_2 i_3} {f}_{\ind,i_3 i_4} {f}_{\ind,i_4 i_1}.
	\end{align*}
	
	Next, we derive the magnitude of $M_{\ind,24}$. Let $\bs{G} = \bs{Q}^\top\bs{E}$, $\bar{G}_{ij} = {G}_{ij}+{G}_{ji}$, $\bar{E}_{ij} = {E}_{ij}+{E}_{ji}$, $\bar{\bs{G}} = (\bar{G}_{ij})_{i,j}$ and  $\bar{\bs{E}} = (\bar{E}_{ij})_{i,j}$, then we have $|{f}_{\ind,i j}| \leq CN^{-1}(\bar{G}_{ij} + \bar{E}_{ij})$, for some constant $C$, and
	\begin{align*}   
		M_{\ind,24}&=\sum_{i_1,i_2,i_3, i_4} {f}_{\ind,i_1 i_2} {f}_{\ind,i_2 i_3} {f}_{\ind,i_3 i_4} {f}_{\ind,i_4 i_1} \\
		&\lesssim N^{-4}\tr\big\{(\bs{\bar{E}}+\bs{\bar{G}})(\bs{\bar{E}}+\bs{\bar{G}})(\bs{\bar{E}}+\bs{\bar{G}})(\bs{\bar{E}}+\bs{\bar{G}})\big\}.
	\end{align*}
	The above trace can be decomposed into
	$$\sum_{\bs{D}_1,\bs{D}_2,\bs{D}_3,\bs{D}_4 \in \{\bs{\bar{E}},\bs{\bar{G}}\}^4} \tr(\bs{D}_1\bs{D}_2\bs{D}_3\bs{D}_4).$$
	\begin{itemize}
		\item[(i)] At least one $\bs{D}_i = \bs{\bar{E}}$ for $i=1,2,3,4$. Without loss of generality, supposing $\bs{D}_1 = \bs{\bar{E}}$, then we have
		$$\tr(\bar{\bs{E}}\bs{D}_2\bs{D}_3\bs{D}_4) \lesssim \bs{1}^\top\bs{D}_2\bs{D}_3\bs{D}_4\bs{1} \leq N\|\bs{D}_2\|_{\oprtnorm}\|\bs{D}_3\|_{\oprtnorm}\|\bs{D}_4\|_{\oprtnorm}.$$
		Note that $\|\bs{\bar{E}}\|_{\oprtnorm} \leq 2 \|\bs{{E}}\|_{\oprtnorm} =O(N\rhon)$, $\|\bs{\bar{G}}\|_{\oprtnorm}\leq 2\|\bs{{Q}}\|_{\oprtnorm}\|\bs{{E}}\|_{\oprtnorm}=O(N\rhon)$, then
		$$N\|\bs{D}_2\|_{\oprtnorm}\|\bs{D}_3\|_{\oprtnorm}\|\bs{D}_4\|_{\oprtnorm}= NO(N\rhon)\cdot O(N\rhon) \cdot O(N\rhon)= O(N^4\rhon^3).$$
		\item[(ii)] No term in the summation equals $\bs{\bar{E}}$. In this case,
		\begin{align*}
			\tr(\bs{\bar{G}}\bs{\bar{G}}\bs{\bar{G}}\bs{\bar{G}}) &\lesssim \bs{1}^\top \bs{\bar{G}}\bs{\bar{G}}\bs{\bar{G}}\bs{\bar{Q}}\bs{1}\\
			& \leq N\|\bs{\bar{G}}\|_{\oprtnorm}\|\bs{\bar{G}}\|_{\oprtnorm}\|\bs{\bar{G}}\|_{\oprtnorm}\| \bs{\bar{Q}}\|_{\oprtnorm}= O(N^4\rhon^3).
		\end{align*}
	\end{itemize}
	
	In summary, we obtain that
	$$\tr\big((\bs{\bar{E}}+\bs{\bar{G}})(\bs{\bar{E}}+\bs{\bar{G}})(\bs{\bar{E}}+\bs{\bar{G}})(\bs{\bar{E}}+\bs{\bar{G}})\big) = O(N^4\rhon^3).$$ Therefore, $M_{\ind,24} = O(\rhon^3) = o(\rhon^2)$, and thus, $\kappa_4(B_{\ind,2}^\prime) = o(\rhon^2)$. If $\liminf_N \sigma_{\ind,2}^{\prime~2} / (N^{-1}\Lambdan$ $+\rhon) >0$,  we have $\kappa_4(B_{\ind,2}^\prime)/\sigma_{\ind,2}^{\prime~4} =o(1) $. As a consequence, by \Cref{lem:DeJong}, we have
	$$
	B_{\ind,2}^\prime / \sigma_{\ind,2}^\prime  \xrightarrow{d} \mathcal{N}(0,1).
	$$
	
	{\bf Step 3.} The proofs of (iii) and (iv) are similar to those of (i) and (ii), so we omit them. 
\end{proof}

Based on these preliminary results, we can prove Theorem~\ref{thm:CLT-no-adjust}.

\begin{proof}[Proof of Theorem~\ref{thm:CLT-no-adjust}]
	{\bf Step 1.} We prove the asymptotic normality of $\hat{\tau}_{\dir}$. Recall the decomposition \eqref{eq:decompose-direct-effect}, we have
	\begin{align*}
		\hat{\tau}_{\dir}-\tau_{\dir} = B_{\dir,1} + B_{\dir,2}.
	\end{align*}
	Moreover,
	$ \Var(B_{\dir,1}) = \sigma^2_{\dir,1}= O(N^{-1})$ and $  \Var(B_{\dir,2}) = \sigma^2_{\dir,2}= O(N^{-1}).$

	Consider the decomposition $\sd^{-1}(\hat{\tau}_{\dir}-\tau_{\dir}) = B_{\dir,1} / \sd + B_{\dir,2} / \sd$, where $\sd^2 = \Var(\hat{\tau}_{\dir})$. Let $\tilde{B}_{\dir,1} = B_{\dir,1} / \sd$, $\tilde{B}_{\dir,2} = B_{\dir,2} / \sd$, $\tilde{f}_{\dir,1}(i) = f_{\dir,1}(i) / \sd$, and $\tilde{f}_{\dir,2}(i,j) = f_{\dir,2}(i,j) / \sd$, then we have $\tilde{B}_{\dir,1} = \sum_{i}\tilde{f}_{\dir,1}(i)W_i$, $\tilde{B}_{\dir,2} =\sum_{i \ne j}\tilde{f}_{\dir,2}(i,j)$ $W_iW_j $.
	
	Let $\mathbf{G}_{\dir}= (G_{\dir,1}, G_{\dir,2})$ be a normal approximation of $(\tilde{B}_{\dir,1}, \tilde{B}_{\dir,2})$, i.e., $G_{\dir,1}$ and $G_{\dir,2}$ are two independent  zero-mean normal random variables with variances $\sigma_{\dir,1}^2/ \sd^2$ and $\sigma_{\dir,2}^2/ \sd^2$, respectively.
	
	By Theorem~2.1 of \cite{koike2022high}, we have
	$$
	\begin{aligned}
		& \sup _{\left(x_1, x_2\right) \in \mathbb{R}^2}\left|\textnormal{P}\left(\tilde{B}_{\dir,1} \leq x_1 ; \tilde{B}_{\dir,2} \leq x_2\right)-\textnormal{P}\left(G_{\dir,1} \leq x_1 ; G_{\dir,2} \leq x_2\right)\right| \\
		\leq &C_{\dir,0}\left(\delta_0(\bs{\tilde{B}}_{\dir})^{\frac{1}{3}}+\delta_1(\bs{\tilde{B}}_{\dir})^{\frac{1}{3}}+\max _{k=1,2}\left(\mathcal{M}(\tilde{f}_{\dir,k})\right)^{1 / 2}\right)\\
		&\cdot \left(1+\frac{1}{\min \left\{\Var(\tilde{B}_{\dir,1})^{1/2}, \Var(\tilde{B}_{\dir,2})^{1/2}\right\}}\right),
	\end{aligned}
	$$
	where $C_{\dir,0}$ is a constant that does not depend on $N$ and
	$$
	\begin{aligned}
		\delta_0(\bs{\tilde{B}}_{\dir})= & \|\operatorname{cov}(\bs{\tilde{B}}_{\dir})-\operatorname{cov}(\bs{G}_{\dir})\|_{\infty}, \\
		\delta_1(\bs{\tilde{B}}_{\dir})= & \left(\left|\kappa_4\left(\tilde{B}_{\dir,1}\right)\right|+\sum_i \operatorname{Inf}_i\left(\tilde{f}_{\dir,1}\right)^2\right)^{1 / 2} \\
		& +\left(\left|\kappa_4\left(\tilde{B}_{\dir,2}\right)\right|+\sum_i \operatorname{Inf}_i\left(\tilde{f}_{\dir,2}\right)^2\right)^{1 / 2} \\
		& +\left\|\tilde{f}_{\dir,1}\right\|_{\ell_2}\left(\left|\kappa_4\left(\tilde{B}_{\dir,2}\right)\right|+\sum_i \operatorname{Inf}_i\left(\tilde{f}_{\dir,2}\right)^2\right)^{1 / 4} .
	\end{aligned}
	$$
	
	Note that $\Var(\tilde{B}_{\dir,k}) = \Var(G_{\dir,k})$ for $k=1,2$, and $\operatorname{Cov}(\tilde{B}_{\dir,1},\tilde{B}_{\dir,2}) = 0 = \operatorname{Cov}(G_{\dir,1},G_{\dir,2})$. Therefore,
	\[
	\delta_0(\bs{\tilde{B}}_{\dir})=0.
	\]
	
	We have shown in \Cref{prop:marginal-CLTs-for-direct-effect-components} that $\mathcal{M}({f}_{\dir,2})=O(N^{-2})$, $\mathcal{M}({f}_{\dir,1})=O(N^{-2})$, $\kappa_4(B_{\dir,1})= O(N^{-3})$, and $\kappa_4({B}_{\dir,2})=O(N^{-3})$. Since $\tilde{f}_{\dir,1} = {f}_{\dir,1}/\sigma_{\dir}$, it follows that $\mathcal{M}(\tilde{f}_{\dir,1})  =O(N^{-1})$, $ \mathcal{M}(\tilde{f}_{\dir,2}) =O(N^{-1})$,  $ \kappa_4(\tilde{B}_{\dir,1}) =O(N^{-1})$, and $ \kappa_4(\tilde{B}_{\dir,2}) =O(N^{-1})$.
	
	Moreover, we have
	\begin{align*}
		\sum_i \operatorname{Inf}_i\left(\tilde{f}_{\dir,1}\right)^2 &= \sum_i \tilde{f}_{\dir,1}^4(i) = O\big(\kappa_4(\tilde{B}_{\dir,1})\big)=O(N^{-1}),\\
		\left\|\tilde{f}_{\dir,1}\right\|_{\ell_2} &= \Big(\sum_i \tilde{f}_{\dir,1}^2(i)\Big)^{1/2} = \left(\sigma_{\dir,1}^2/\sd^{2}\right)^{1/2}=O(1),\\
		\sum_i \operatorname{Inf}_i\left(\tilde{f}_{\dir,2}\right)^2 &= \sum_i\Big(\sum_j \tilde{f}_{\dir,2}^2(i,j)\Big)^2 \leq \sum_i\left(\frac{\sum_{j}\tilde{\gamma}_{ij}^2 + \sum_{j}\tilde{\gamma}_{ji}^2 }{2N^2\sd^2}\right)^2 \\
		&\lesssim N^{-4}\sd^{-4} \Big\{\sum_i \big(\sum_j Q_{ij}^2\big)^2 + \sum_i \big(\sum_j Q_{ji}^2\big)^2\Big\}\\
		&\leq N^{-4}\sd^{-4}\Big(\sum_i \sum_j \sum_k Q_{ij} Q_{ik} + \sum_i \sum_j \sum_k Q_{ji} Q_{ki} \Big)\\
		&\leq N^{-4}\sd^{-4}\big(\bs{1}^\top \bs{Q}^\top\bs{Q}\bs{1} +\bs{1}^\top \bs{Q}\bs{Q}^\top\bs{1}\big) =O(N^{-1}).
	\end{align*}
	
	In summary, we have 
	$$\left(\delta_0(\bs{\tilde{B}}_{\dir})^{\frac{1}{3}}+\delta_1(\bs{\tilde{B}}_{\dir})^{\frac{1}{3}}+\max _{k=1,2}\left(\mathcal{M}(\tilde{f}_{\dir,k})\right)^{1 / 2}\right) = O(N^{-1/12}).$$
	
	By Assumption \ref{a:assumption-CLT}, we have either $\liminf_N N\sigma_{\dir,1}^2  >0$ or $\liminf_N N\sigma_{\dir,2}^2  >0$. Without loss of generality, suppose that $\liminf_N N\sigma_{\dir,1}^2  >0$. Then we split $\{\Var(\tilde{B}_{\dir,2})\}_{N=1,2,...}$ into two subsequences. The first subsequence satisfies $\Var(\tilde{B}_{\dir,2}) < N^{-1/7}$, such that $\Var(B_{\dir,2}) = o(\sd^2)=o(N^{-1})$. The CLT, i.e., $\tilde{B}_{\dir,1}+\tilde{B}_{\dir,2} \xrightarrow{d} \mathcal{N}(0,1)$, holds in this case by Slutsky Theorem and \Cref{prop:marginal-CLTs-for-direct-effect-components}. The second subsequence satisfies $\Var(\tilde{B}_{\dir,2}) \geq N^{-1/7}$, and then 
	\begin{align*}
		&\sup _{\left(x_1, x_2\right) \in \mathbb{R}^2}\left|\textnormal{P}\big(\tilde{B}_{\dir,1} \leq x_1 ; \tilde{B}_{\dir,2} \leq x_2\big)-\textnormal{P}\big(G_{\dir,1} \leq x_1 ; G_{\dir,2} \leq x_2\big)\right| \\
		= &O(N^{-1/12}) \cdot O\bigg(1+\frac{1}{N^{-1/14}}\bigg) = o(1),
	\end{align*}
	which implies 
	$$\sup _{x \in \mathbb{R}}\left|\textnormal{P}\big(\tilde{B}_{\dir,1} +\tilde{B}_{\dir,2} \leq x\big)-\textnormal{P}\big(G_{\dir,1}+G_{\dir,2} \leq x\big)\right| = o(1).$$
	Therefore, we always have $$\tilde{B}_{\dir,1}+\tilde{B}_{\dir,2} = \frac{\hat{\tau}_{\dir}-\tau_{\dir}}{\Var(\hat{\tau}_{\dir})^{1/2}} \xrightarrow{d} \mathcal{N}(0,1).$$
	
	\ \\
	{\bf Step 2.}  We prove the asymptotic normality of $\hat{\tau}_{\ind}$ and $\hat{\tau}_{\tot}$. We will prove a more general result stated in \Cref{thm:CLT-general} below.
	\begin{theorem}
		\label{thm:CLT-general}
		Suppose that Assumption~\ref{a:bounded-parameter} holds for both $\{\alpha_i,\theta_i,{\gamma}_{ij}\}_{1 \leq i,j\leq N}$ and\\ $\{\alpha_i^\prime,\theta_i^\prime,{\gamma}_{ij}^\prime\}_{1 \leq i,j\leq N}$, Assumptions \ref{a:density-rho_N}--\ref{a:Lindberg-condition-unadj} hold, and for a nonrandom sequence $\Lambdan$, 
		\begin{align*}
			& N^{-1} \max\Big\{\sum_i \Big(\sum_j  E_{ji}Y^\prime_{Z_j=1}\Big)^2, \sum_i \Big(\sum_j  E_{ji}Y^\prime_{Z_j=0}\Big)^2\Big\} = O(\Lambdan ),\\
			&N^{-1} \max\Big\{\maxi\Big(\sum_j  E_{ji}Y^\prime_{Z_j=1}\Big)^2, \maxi\Big(\sum_j  E_{ji}Y^\prime_{Z_j=0}\Big)^2\Big\} = o(\Lambdan ).
		\end{align*}
		Then, we have
		\begin{itemize}
			\item[(i)] if 
			$
			\text{ either } \liminf {\sigma_{\ind,1}^{\prime~2}}/{(N^{-1}\Lambdan +\rhon)} >0 \text{ or }  \liminf {\sigma_{\ind,2}^{\prime~2} }/{(N^{-1}\Lambdan +\rhon)} >0,
			$
			then  $(\hat{\tau}^\prime_{\ind}-\tau^\prime_{\ind})/\Var(\hat{\tau}^\prime_{\ind})^{1/2}\xrightarrow{d}\mathcal{N}(0,1)$;
			\item[(ii)] if $\text{ either } \liminf {\sigma_{\tot,1}^{\prime~2}}/{(N^{-1}\Lambdan +\rhon)} >0 \text{ or } \liminf {\sigma_{\tot,2}^{\prime~2} }/{(N^{-1}\Lambdan +\rhon)} >0,$ then $(\hat{\tau}^\prime_{\tot}-\tau_{\tot}^\prime)/\Var(\hat{\tau}_{\tot}^\prime)^{1/2}\xrightarrow{d}\mathcal{N}(0,1)$.
		\end{itemize}
	\end{theorem}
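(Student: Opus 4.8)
The plan is to follow Step~1 of the proof of Theorem~\ref{thm:CLT-no-adjust} essentially verbatim, with the order-$N^{-1}$ bounds there replaced by the rate $N^{-1}\Lambdan+\rhon$. First I would invoke the Hoeffding-type decomposition \eqref{eq:decompose-indirect-effect} (with $Y_i$ replaced by $Y_i^\prime$), writing $\hat{\tau}^\prime_{\ind}-\tau^\prime_{\ind}=B^\prime_{\ind,1}+B^\prime_{\ind,2}$, where $B^\prime_{\ind,1}=\sum_i f_{\ind,1}(i)W_i$ is a first-order homogeneous sum and $B^\prime_{\ind,2}=\sum_{i\ne j}f_{\ind,2}(i,j)W_iW_j$ is a second-order homogeneous sum in the i.i.d.\ standardized variables $W_i=(Z_i-r_1)/\sqrt{r_1r_0}$; the two chaoses are uncorrelated, so $\sigma_{\ind}^{\prime~2}=\sigma_{\ind,1}^{\prime~2}+\sigma_{\ind,2}^{\prime~2}$. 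After normalizing by $\sigma^\prime_{\ind}$ I set $\tilde{B}_{\ind,k}=B^\prime_{\ind,k}/\sigma^\prime_{\ind}$, $\tilde{f}_{\ind,k}=f_{\ind,k}/\sigma^\prime_{\ind}$, and let $\mathbf{G}_{\ind}=(G_{\ind,1},G_{\ind,2})$ be independent centered Gaussians with matching variances $\sigma_{\ind,k}^{\prime~2}/\sigma_{\ind}^{\prime~2}$.

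Next I would apply Theorem~2.1 of \cite{koike2022high} to $(\tilde{B}_{\ind,1},\tilde{B}_{\ind,2})$. Since the two chaoses and their Gaussian surrogates share the same (block-diagonal) covariance, the discrepancy term $\delta_0(\bs{\tilde{B}}_{\ind})$ vanishes, and it remains to show that $\delta_1(\bs{\tilde{B}}_{\ind})$ and $\max_{k}\mathcal{M}(\tilde{f}_{\ind,k})$ are $o(1)$. For this I would reuse the cumulant and influence-function estimates already obtained inside the proof of \Cref{prop:marginal-CLTs-for-indirect-effect-components}: $\max_i f_{\ind,1}^2(i)=o(N^{-1}\Lambdan+\rhon)$ and hence $\kappa_4(B^\prime_{\ind,1})=o\bigl((N^{-1}\Lambdan+\rhon)^2\bigr)$ from the two displayed hypotheses on $\sum_j E_{ji}Y^\prime_{Z_j=z}$; $\mathcal{M}(f_{\ind,2})=o(\rhon)$ from \Cref{a:Lindberg-condition-unadj} together with $\max_i g_i=O(N^2\rhon^2)$; $\kappa_4(B^\prime_{\ind,2})=o(\rhon^2)$ via the bound $M_{\ind,24}=O(\rhon^3)$ coming from the trace decomposition over $\{\bs{\bar{E}},\bs{\bar{G}}\}$ with $\bs{G}=\bs{Q}^\top\bs{E}$; and $\sum_i\operatorname{Inf}_i(f_{\ind,2})^2\leq\mathcal{M}(f_{\ind,2})\sum_{i\ne j}f_{\ind,2}^2(i,j)=o(\rhon)\cdot O(\rhon)$. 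Dividing by the appropriate powers of $\sigma_{\ind}^{\prime~2}=O(N^{-1}\Lambdan+\rhon)$ — the orders of $\sigma_{\ind,1}^{\prime~2}$ and $\sigma_{\ind,2}^{\prime~2}$ being the primed analogues of Propositions~\ref{prop:order-and-formula-of-tau-ind} and \ref{prop:order-of-adjusted-estimators} — gives $\delta_0^{1/3}+\delta_1^{1/3}+\max_k\mathcal{M}(\tilde{f}_{\ind,k})^{1/2}=o(1)$, at a rate that, as in the direct-effect case, is polynomial once expressed through $N$ and $\rhon$.

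The only delicate point — the \emph{main obstacle} — is the extra factor $\bigl(1+1/\min\{\Var(\tilde{B}_{\ind,1})^{1/2},\Var(\tilde{B}_{\ind,2})^{1/2}\}\bigr)$ in Koike's bound, which can diverge if one of the two chaoses is asymptotically negligible. I would handle it exactly as in Step~1 of the proof of Theorem~\ref{thm:CLT-no-adjust}: by hypothesis, without loss of generality $\liminf\sigma_{\ind,1}^{\prime~2}/(N^{-1}\Lambdan+\rhon)>0$, so I pass to subsequences according to whether $\Var(\tilde{B}_{\ind,2})$ falls below a slowly vanishing threshold. Along the ``small'' subsequence $B^\prime_{\ind,2}=\op(\sigma^\prime_{\ind})$, so Slutsky's theorem and the marginal CLT of \Cref{prop:marginal-CLTs-for-indirect-effect-components}(i) finish the argument; along the ``bounded-below'' subsequence the divergent factor is dominated by the $o(1)$ rate above once the threshold is chosen sufficiently slowly, so Koike's theorem forces the joint, hence the one-dimensional, Kolmogorov distance between $\tilde{B}_{\ind,1}+\tilde{B}_{\ind,2}$ and $G_{\ind,1}+G_{\ind,2}$ to zero. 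If instead only the second variance is non-negligible, swap the two terms. This proves (i). For (ii), the GATE estimator admits the parallel decomposition $\hat{\tau}^\prime_{\tot}-\tau^\prime_{\tot}=B^\prime_{\tot,1}+B^\prime_{\tot,2}$ with $B^\prime_{\tot,k}=B^\prime_{\dir,k}+B^\prime_{\ind,k}$, whose components were bounded in \Cref{prop:marginal-CLTs-for-indirect-effect-components}(iii)--(iv); the same Koike-plus-subsequence argument applies word for word, so I would merely indicate the substitutions rather than repeat it.
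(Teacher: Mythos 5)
Your proposal is correct and follows essentially the same route as the paper's own proof: the same Hoeffding decomposition into first- and second-order homogeneous sums, application of Theorem~2.1 of \cite{koike2022high} with $\delta_0=0$, reuse of the cumulant and influence bounds from \Cref{prop:marginal-CLTs-for-indirect-effect-components}, and the same subsequence splitting (with a slowly vanishing threshold, the paper's choice being $\delta_{\ind,\scaleobj{0.8}{N}}^{1/3}$) to control the $1/\min_k\Var(\tilde{B}_{\ind,k})^{1/2}$ factor, falling back on Slutsky plus the marginal CLT when one chaos is negligible. The only cosmetic difference is which component you take as bounded below without loss of generality; the argument is symmetric, and part (ii) is handled by the same substitutions the paper also omits.
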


	Recall that $M_i = \sum_j  E_{ji}$. By Assumptions~\ref{a:bounded-parameter}--\ref{a:Lindberg-condition-unadj}, we have, 
	$$
	N^{-1}\sum_i  \Big(\sum_j   E_{ji}Y^\prime_{Z_j=z} \Big)^2 \lesssim  N^{-1} \bs{1}^\top\bs{E}\bs{E}^\top\bs{1} \lesssim  \|\bs{E}\|_{\oprtnorm}^2 = O(N^2\rhon^2), \quad z=0,1,
	$$
	$$
	N^{-1} \Big\{\maxi\Big(\sum_j  E_{ji}Y^\prime_{Z_j=z}\Big)^2 \Big\} \lesssim  N^{-1} \maxi M_i^2 = o(N^2\rhon^2), \quad z=0,1.
	$$
	Thus, the asymptotic normality of $\hat{\tau}_{\ind}$ and $\hat{\tau}_{\tot}$ follows by taking $\Lambdan \equiv N^2\rhon^2$ and $Y^\prime_i \equiv Y_i$ in \Cref{thm:CLT-general}. In the remainder of this section, we will prove \Cref{thm:CLT-general}. We will only prove \Cref{thm:CLT-general}(i), as the proof of \Cref{thm:CLT-general}(ii) is almost the same.

	We have shown in \eqref{eqn:equality1} that (with $i$ and $j$ exchanged)
	\begin{equation}\label{eqn:equality11}
		\alpha_j  + r_{1} \theta_j + r_{0}\tilde{\gamma}_{ji} +  r_{1}\sum_{k\ne i}\tilde{\gamma}_{jk} = r_1r_0 Y_{Z_j=1}^{Z_i=1} +r_0^2Y_{Z_j=0}^{Z_i=1}+r_1^2Y_{Z_j=1}^{Z_i=0}+r_1r_0Y_{Z_j=0}^{Z_i=0}. \nonumber 
	\end{equation}
	Moreover, we have
	\begin{equation}\label{eqn:equality0}
		r_1Y_{Z_j=1}+r_0Y_{Z_j=0} = \alpha_j+r_1 \theta_j+\sum_k   \tilde{\gamma}_{jk}r_1. \nonumber
	\end{equation}
	
	Similar to the decomposition of $\hat{\tau}_{\ind}$ in \eqref{eq:decompose-indirect-effect}, we have
	\begin{align*}
		\hat{\tau}_{\ind}^\prime - \tau_{\ind}^\prime = B_{\ind,1}^\prime +B_{\ind,2}^\prime,
	\end{align*}
	where 
	\begin{align*}
		B_{\ind,1}^\prime &= \sum_i \frac{\left(Z_i-r_1\right)}{N r_1 r_0}\sum_j E_{j i}\big(r_1r_0 Y_{Z_j=1}^{\prime Z_i=1} +r_0^2 Y_{ Z_j=0}^{\prime Z_i=1}+r_1^2Y_{Z_j=1}^{\prime Z_i=0}+r_1r_0Y_{Z_j=0}^{\prime Z_i=0}\big), \\
		& = \sum_i \frac{\left(Z_i-r_1\right)}{N r_1 r_0} \sum_j E_{ji}\big\{r_1 Y^\prime _{Z_j=1}+ r_0 Y^\prime _{Z_j=0}+ (r_{0}-r_1)\tilde{\gamma}^\prime _{ji} \big\},\\
		B^\prime _{\ind,2} &=   \sum_{i \neq j} \frac{\left(Z_i-r_1\right)\left(Z_j-r_1\right)}{N r_1 r_0} \big(E_{ij}\theta^\prime _{i}+\sum_{k}E_{kj}\tilde{\gamma}^\prime _{ki} \big).
	\end{align*}
	
	Mimicking the proof of \Cref{prop:order-of-adjusted-estimators} with $e_i \equiv Y_i^\prime$ and $\Deltan \equiv \Lambdan$, we have 
	$$
	\begin{aligned}
		&\sigma^{\prime~2}_{\ind,1}=\Var(B^\prime_{\ind,1}) = O(N^{-1}\Lambdan+\rhon),\quad \sigma^{\prime~2}_{\ind,2} =\Var(B^\prime_{\ind,2}) =  O(\rhon).
	\end{aligned}$$

	Let $\tilde{B}_{\ind,1}^\prime = B_{\ind,1}^\prime / \si^\prime$, $\tilde{B}_{\ind,2}^\prime = B_{\ind,2}^\prime / \si^\prime$, $\tilde{f}_{\ind,1}(i) = f_{\ind,1}(i) / \si^\prime$, and $\tilde{f}_{\ind,2}(i,j) = f_{\ind,2}(i,j) / \si^\prime$. By the assumption of $\liminf_{N\rightarrow \infty} \sigma^{\prime~2}_{\ind}/(N^{-1}\Lambdan + \rhon) > 0 $ and the proof of \Cref{prop:marginal-CLTs-for-indirect-effect-components}, we have $\mathcal{M}(f_{\ind,1}) = o(\sigma_{\ind}^{\prime~4})$, $\mathcal{M}(f_{\ind,2}) = o(\rhon)$, $\kappa_4(B_{\ind,1}^\prime) = o(\sigma_{\ind}^{\prime~4})$, and $\kappa_4(B_{\ind,2}^\prime) = o(\rhon^2)$. Since $\tilde{B}_{\ind,i}^\prime= B_{\ind,i}^\prime/\sigma_{\ind}^\prime$, then $\mathcal{M}(\tilde{f}_{\ind,i})  =  o(1)$ and $\kappa_4(\tilde{B}_{\ind,i}^\prime) =o(1)$, $i=1,2$. It is easy to verify that 
	\begin{align*}
		\sum_i \operatorname{Inf}_i\left(\tilde{f}_{\ind,1}\right)^2=o(1),~\left\|\tilde{f}_{\ind,1}\right\|_{\ell_2} = O(1). 
	\end{align*}
	
	By Assumption~\ref{a:opnorm-EE^T}, we have 
	$$
	\sum_i  N_i^2 = \sum_i  (\sum_j  E_{ij})^2  \leq N\|\bs{E}\|_{\oprtnorm}^2 = O(N^3\rho_N^2), $$ 
	$$ \sum_i  M_i^2 = \sum_i  (\sum_j  E_{ji})^2  \leq N\|\bs{E}\|_{\oprtnorm}^2 = O(N^3\rho_N^2).
	$$
	Then,
	\begin{align*}
		\sum_i \operatorname{Inf}_i\left(\tilde{f}_{\ind,2} \right)^2 &= \sum_i\Big\{\sum_j \big(\tilde{f}_{\ind,2} (i,j)\big)^2\Big\}^2 \\
		&\leq N^{-4}(\si^\prime)^{-4}\sum_i\Big(\sum_{j}E_{ij}\theta_{ i}^{\prime 2} +\sum_{j}E_{ji}\theta_{ j}^{\prime 2}+ g_i\Big)^2 \\
		&\lesssim N^{-4}(\si^\prime)^{-4}\sum_i \Big\{ \Big(\sum_{j}E_{ij}\theta_{ i}^{\prime 2}\Big)^2 +\Big(\sum_{j}E_{ji}\theta_{ j}^{\prime 2}\Big)^2+g_i^2 \Big\},\\
		&\leq (N^{2}\Lambdan^{2}+N^{4}\rhon^{2})^{-1} \Big\{O\Big(\sum_i N_i^2\Big) + O\Big(\sum_i M_i^2\Big) + O\Big(\sum_i g_i^2\Big)\Big\}\\
		&\leq (N^{2}\Lambdan^{2}+N^{4}\rhon^{2})^{-1} \big\{O(N^3\rhon^2)+O(N^3\rhon^2)+o(N^4\rhon^2)\big\} = o(1),
	\end{align*}
	where the last inequality is due to \eqref{eq:sum-of-square-of-gi}.
	
	Therefore,
	\begin{align*}
		\delta_{\ind,\scaleobj{0.8}{N}} :=& \left(\delta_0(\bs{\tilde{B}}_{\ind}^\prime)^{\frac{1}{3}}+\delta_1(\bs{\tilde{B}}_{\ind}^\prime)^{\frac{1}{3}}+\max _{k=1,2}\left(\mathcal{M}(\tilde{f}_{\ind,k})\right)^{1 / 2}\right) = o(1).
	\end{align*}
	
	We know that either $\liminf_N \sigma^{\prime~2}_{\ind,1}/(N^{-1}\Lambdan+\rhon) >0$ or $\liminf_N \sigma^{\prime~2}_{\ind,2}/(N^{-1}\Lambdan+\rhon) >0$. Without loss of generality, suppose that $\liminf_N \sigma^{\prime~2}_{\ind,2}/(N^{-1}\Lambdan+\rhon) >0$, and then we split $ \sigma_{\ind,1}^\prime$ into two subsequences.

	The first subsequence satisfies $\sigma_{\ind,1}^{\prime ~ 2}\leq (N^{-1}\Lambdan+\rhon)\delta_{\ind,\scaleobj{0.8}{N}}^{1/3}$. Then, $ \sigma_{\ind,1}^{\prime ~ 2}= o(N^{-1}\Lambdan+\rhon) = o(\sigma_{\ind}^{\prime ~ 2})$. Therefore, $B_{\ind,1}^\prime / \si^\prime \xrightarrow{P} 0$ and $\Var(B_{\ind,2}^\prime) / \si^{\prime~2} \to 1$. By \Cref{prop:marginal-CLTs-for-indirect-effect-components} and Slutsky's throrem, we have $(B_{\ind,1}^\prime+B_{\ind,2}^\prime)/\si^\prime \xrightarrow{d} \mathcal{N}(0,1)$.

	The second subsequence satisfies $\sigma_{\ind,1}^{\prime ~ 2} > (N^{-1}\Lambdan+\rhon)\delta_{\ind,\scaleobj{0.8}{N}}^{1/3}$, then we have
	\begin{eqnarray*}
		&& 1+\frac{1}{\min \left\{\Var(\tilde{B}_{\ind,1}^\prime)^{1/2}, \Var(\tilde{B}_{\ind,2}^\prime)^{1/2}\right\}} \\
		& \lesssim & 1 + \frac{(N^{-1}\Lambdan+\rhon)^{1/2}}{\min \left\{(N^{-1}\Lambdan+\rhon)^{1/2}\delta_{\ind,\scaleobj{0.8}{N}}^{1/6} , (N^{-1}\Lambdan+\rhon)^{1/2}\right\}} = O(\delta_{\ind,\scaleobj{0.8}{N}}^{-1/6}).
	\end{eqnarray*}
	By Theorem 2.1 of \cite{koike2022high}, we have
	\begin{align*}
		&\sup _{\left(x_1, x_2\right) \in \mathbb{R}^2}\left|\textnormal{P}\big(\tilde{B}_{\ind,1}^\prime \leq x_1 ; \tilde{B}_{\ind,2}^\prime \leq x_2\big)-\textnormal{P}\big(G_{\ind,1} \leq x_1 ; G_{\ind,2} \leq x_2\big)\right| = O(\delta_{\ind,\scaleobj{0.8}{N}}^{5/6}),
	\end{align*}
	where $(G_{\ind,1}, G_{\ind,2})$ are normal approximation of $(\tilde{B}_{\ind,1}^\prime, \tilde{B}_{\ind,2}^\prime)$, i.e., $G_{\ind,1}$ and $G_{\ind,2}$ are two independent zero-mean normal random variables with variances $\sigma_{\ind,1}^{\prime ~ 2} / \sigma_{\ind}^{\prime ~ 2}  $ and $\sigma_{\ind,2}^{\prime ~ 2} / \sigma_{\ind}^{\prime ~ 2}$, respectively.
	
	Therefore, we always have $\tilde{B}_{\ind,1}^\prime+\tilde{B}_{\ind,2}^\prime \xrightarrow{d} \mathcal{N}(0,1).$

\end{proof}

\section{Consistency of variance estimator}\label{sec:D}

\subsection{Some useful lemmas}
\Cref{lem:efron-stein} below is from \cite{steele1986efron}.
\begin{lemma}[Efron-Stein inequality]
	\label{lem:efron-stein}
	Let $S_i$, $S_i^\prime$, $i=1,\ldots, N$, be $2N$ i.i.d. random variables. Let $\bs{S} = (S_1,\ldots,S_N)$ and $\bs{S}^{(i)}=(S_1,\ldots,S_{i-1},S_i^{\prime},S_{i+1},\ldots,S_N)$. For any measurable function $f$, we have 
	$$
	\Var\{f(\bs{S})\}\leq\frac{1}{2}\sum_{i} \E\{f(\bs{S})-f(\bs{S}^{(i)})\}^2.
	$$
\end{lemma}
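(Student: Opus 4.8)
The plan is to prove the Efron--Stein inequality by the classical Doob martingale (H\'ajek projection) decomposition, combined with a coordinatewise conditional-variance bound and a symmetrization identity that supplies the factor $1/2$. We may assume without loss of generality that $f(\bs{S})\in L^2$, for otherwise the right-hand side is infinite as well (one can reduce to the $L^2$ case by truncation).

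The first step is the martingale decomposition. Write $\E_i[\,\cdot\,]=\E[\,\cdot\mid S_1,\ldots,S_i]$ for $i=0,1,\ldots,N$, so that $\E_0 f(\bs{S})=\E f(\bs{S})$ and $\E_N f(\bs{S})=f(\bs{S})$, and set $\Delta_i=\E_i f(\bs{S})-\E_{i-1}f(\bs{S})$; then $f(\bs{S})-\E f(\bs{S})=\sum_{i=1}^N\Delta_i$. Since $\E_{i-1}\Delta_i=0$, the $\Delta_i$ are orthogonal ($\E[\Delta_i\Delta_j]=\E[\Delta_i\E_{j-1}\Delta_j]=0$ for $i<j$), hence
$$
\Var\{f(\bs{S})\}=\E\Bigl(\sum_{i=1}^N\Delta_i\Bigr)^2=\sum_{i=1}^N\E[\Delta_i^2].
$$

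The crucial step is to bound each $\E[\Delta_i^2]$ by a conditional variance in the $i$-th coordinate. Let $\E^{(i)}[\,\cdot\,]$ denote the expectation over $S_i$ only, i.e.\ conditional on $S_{-i}=(S_1,\ldots,S_{i-1},S_{i+1},\ldots,S_N)$, and let $\Var^{(i)}\{f(\bs{S})\}=\E^{(i)}[f(\bs{S})^2]-(\E^{(i)}f(\bs{S}))^2$ be the associated conditional variance. Because $S_i$ is independent of $S_{-i}$ and $S_i'$ is an i.i.d.\ copy of $S_i$, one checks that $\E_{i-1}f(\bs{S})=\E_i\bigl[\E^{(i)}f(\bs{S})\bigr]$: conditioning the $S_{-i}$-measurable quantity $\E^{(i)}f(\bs{S})$ on $S_1,\ldots,S_i$ is the same as conditioning it on $S_1,\ldots,S_{i-1}$, after which the tower property collapses the two averages to $\E[f(\bs{S})\mid S_1,\ldots,S_{i-1}]$. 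Consequently $\Delta_i=\E_i\bigl[f(\bs{S})-\E^{(i)}f(\bs{S})\bigr]$, and conditional Jensen together with the tower property give
$$
\E[\Delta_i^2]\le\E\bigl[(f(\bs{S})-\E^{(i)}f(\bs{S}))^2\bigr]=\E\bigl[\Var^{(i)}\{f(\bs{S})\}\bigr].
$$
Finally, I would apply the elementary identity $\E[(X-X')^2]=2\Var(X)$, valid for an $L^2$ variable $X$ with i.i.d.\ copy $X'$, conditionally on $S_{-i}$ to $X=f(\bs{S})$ and $X'=f(\bs{S}^{(i)})$ (which, given $S_{-i}$, are conditionally i.i.d.): $\E^{(i)}\bigl[(f(\bs{S})-f(\bs{S}^{(i)}))^2\bigr]=2\Var^{(i)}\{f(\bs{S})\}$. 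Taking expectations and summing over $i$ then yields $\Var\{f(\bs{S})\}=\sum_i\E[\Delta_i^2]\le\sum_i\E[\Var^{(i)}\{f(\bs{S})\}]=\frac{1}{2}\sum_i\E[(f(\bs{S})-f(\bs{S}^{(i)}))^2]$.

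The main obstacle is the bookkeeping in the crucial step: verifying the identity $\E_{i-1}f(\bs{S})=\E_i[\E^{(i)}f(\bs{S})]$ and ensuring the conditional-Jensen bound is applied with respect to the correct $\sigma$-algebra. Everything else --- orthogonality of the martingale differences and the symmetrization identity --- is routine.
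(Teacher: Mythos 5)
Your proof is correct. Note, however, that the paper does not prove this lemma at all: it is imported verbatim from Steele (1986) and the surrounding text explicitly omits any argument, so there is no in-paper proof to compare against. What you have written is the standard modern proof of the Efron--Stein inequality: the Doob martingale decomposition giving $\Var\{f(\bs{S})\}=\sum_i\E[\Delta_i^2]$, the identity $\E_{i-1}f(\bs{S})=\E_i[\E^{(i)}f(\bs{S})]$ (which does hold, by independence of $S_i$ from $S_{-i}$ together with the tower property), conditional Jensen to get $\E[\Delta_i^2]\le\E[\Var^{(i)}\{f(\bs{S})\}]$, and the symmetrization identity $\E^{(i)}[(f(\bs{S})-f(\bs{S}^{(i)}))^2]=2\Var^{(i)}\{f(\bs{S})\}$, which is legitimate because $f(\bs{S})$ and $f(\bs{S}^{(i)})$ are conditionally i.i.d.\ given $S_{-i}$. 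All steps check out. The only loose end is the opening reduction to $f(\bs{S})\in L^2$: the claim that the right-hand side is infinite whenever $f(\bs{S})\notin L^2$ is true but not immediate (it needs a symmetrization/median argument), and the proposed truncation route would require care since variance is not monotone under truncation. This is immaterial here, since every application of the lemma in the paper involves bounded $f$.
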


Lemma~\ref{lem:bound-for-variance-efron-stein} below is a direct result of \Cref{lem:efron-stein}.
\begin{lemma}
	\label{lem:bound-for-variance-efron-stein}
	Let $Z_i$, $Z_i^\prime$, $i=1,\ldots, N$, be $2N$ i.i.d. Bernoulli random variables. Let $c_j = \max_{\bs{z}_{(-j)}\in\{0,1\}^{N-1}}|f(\bs{z})-f(\bs{z}^\prime)|$ where $\bs{z} = (z_1,\ldots,z_j=1,\ldots,z_n)$, $\bs{z}^\prime = (z_1,\ldots,z_j=0,\ldots,z_N)$ and $\bs{z}_{(-j)} =(z_1,\ldots,z_{j-1},z_{j+1}\ldots,z_N) $. Then, we have
	\[
	\Var\{f(\bs{Z})\} \leq \frac{1}{2}\sum_j  c_j^2.
	\]
\end{lemma}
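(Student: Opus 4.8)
The plan is to apply the Efron--Stein inequality (Lemma~\ref{lem:efron-stein}) directly, with the choice $S_i = Z_i$ and $S_i' = Z_i'$; then $\bs{S}^{(j)}$ is the treatment vector obtained from $\bs{Z} = (Z_1,\ldots,Z_N)$ by replacing only its $j$th coordinate with the independent copy $Z_j'$, which I will write as $\bs{Z}^{(j)}$. Lemma~\ref{lem:efron-stein} immediately yields $\Var\{f(\bs{Z})\} \le \tfrac12 \sumj \E\{f(\bs{Z}) - f(\bs{Z}^{(j)})\}^2$, so the entire task reduces to bounding $\E\{f(\bs{Z}) - f(\bs{Z}^{(j)})\}^2$ by $c_j^2$ for each $j$.

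To do this I would condition on whether the swap actually changes coordinate $j$. On the event $\{Z_j = Z_j'\}$ the vectors $\bs{Z}$ and $\bs{Z}^{(j)}$ are identical, so the difference vanishes. On the event $\{Z_j \ne Z_j'\}$ the two vectors agree in every coordinate except the $j$th, where one of them is $1$ and the other is $0$, while the remaining $N-1$ coordinates form some fixed realization $\bs{z}_{(-j)} \in \{0,1\}^{N-1}$; hence, by the definition of $c_j$ as the maximum over $\bs{z}_{(-j)}$ of $|f(\bs{z}) - f(\bs{z}')|$, we get $|f(\bs{Z}) - f(\bs{Z}^{(j)})| \le c_j$ on that event. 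Combining the two cases gives the pointwise bound $\{f(\bs{Z}) - f(\bs{Z}^{(j)})\}^2 \le c_j^2\, I(Z_j \ne Z_j')$.

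Taking expectations and using $\E I(Z_j \ne Z_j') = \textnormal{P}(Z_j \ne Z_j') \le 1$ (in fact this probability equals $2 r_1 r_0 \le \tfrac12$, which would even yield the sharper bound $r_1 r_0 \sumj c_j^2$, but the crude bound is all that is needed) produces $\E\{f(\bs{Z}) - f(\bs{Z}^{(j)})\}^2 \le c_j^2$; summing over $j$ and substituting into the Efron--Stein inequality gives $\Var\{f(\bs{Z})\} \le \tfrac12 \sumj c_j^2$. I do not anticipate any genuine obstacle; the only step requiring a moment's care is the observation that the deterministic quantity $c_j$, a maximum over the frozen coordinates, dominates the random increment $|f(\bs{Z}) - f(\bs{Z}^{(j)})|$ precisely on the event where the $j$th coordinate flips --- which is exactly the bounded-differences structure that the Efron--Stein bound is designed to exploit.
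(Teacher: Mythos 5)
Your proof is correct and takes essentially the same route as the paper, which simply states that the lemma is a direct consequence of the Efron--Stein inequality (Lemma~\ref{lem:efron-stein}); you have merely written out the standard pointwise bound $\{f(\bs{Z})-f(\bs{Z}^{(j)})\}^2 \le c_j^2\, I(Z_j\ne Z_j')$ and the trivial expectation step that the paper leaves implicit.
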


\subsection{Proof of Theorem~\ref{thm:variance-estimator-dir}}
\begin{proof}
	{\bf Step 1.} We prove (i). Note that 
	\begin{align*}
		&\E Y_i^2 Z_i = r_1 \E( Y_i^2\mid Z_i=1)= r_1 Y_{Z_i=1}^2 +r_1\Var(Y_i\mid Z_i=1),\\
		&\E Y_i^2 (1-Z_i) = r_0 \E( Y_i^2\mid Z_i=0)= r_0 Y_{Z_i=0}^2 + r_0\Var(Y_i\mid Z_i=0).
	\end{align*}
	Using the independence of $Z_i$'s and that $Y_i = \alpha_i + \theta_i Z_i + \sum_j  \tilde{\gamma}_{ij} Z_j$,  we have
	\[
	\Var(Y_i\mid Z_i=0)=\Var(Y_i\mid Z_i=1) = \sum_j  \tilde{\gamma}_{ij}^2 \Var(Z_j)  = \sum_j  \tilde{\gamma}_{ij}^2 r_0r_1.  
	\]
	Putting together the pieces, we have
	\begin{align}
		\label{eq:E-V_dir}
		\E\hat{V}_{\dir}= &\frac{1}{N^2} \sum_i   \E Y_i^2 Z_i/r_1^2+ \frac{1}{N^2}\sum_i   \E Y_i^2 (1-Z_i)/r_0^2 
		\nonumber \\
		= & \frac{1}{N^2} \sum_i   Y_{Z_i=1}^2/r_1 + \frac{1}{N^2} \sum_i  Y_{Z_i=0}^2/r_0 + \frac{1}{N^2} \sum_j  \tilde{\gamma}_{ij}^2\\
		=& \frac{1}{N^2} \sum_i   \big\{(r_0 Y_{Z_i=1}+r_1Y_{Z_i=0})^2/r_0r_1 + (Y_{Z_i=0}-Y_{Z_i=1})^2\big\} + \frac{1}{N^2} \sum_j  \tilde{\gamma}_{ij}^2\nonumber\\
		=&\frac{1}{N^2} \sum_i   \big\{(r_0 Y_{Z_i=1}+r_1Y_{Z_i=0})^2/r_0r_1 + \theta_i^2\big\} + \frac{1}{N^2} \sum_{i,j} \tilde{\gamma}_{ij}^2\nonumber.
	\end{align}
	Comparing with the expression  of $\sigma^2_{\dir}$, we have
	\[
	\E\hat{V}_{\dir} - \Var(\hat{\tau}_{\ind}) = \frac{1}{N^2} \sum_i   \theta_i^2 -\frac{1}{N^2} \sum_j \sum_i  \tilde{\gamma}_{ij}\tilde{\gamma}_{ji}.
	\]
	Moreover,
	\begin{align*}
		\E (2\hat{V}_{\dir}) - \Var(\hat{\tau}_{\ind}) \geq 0
	\end{align*}
	follows from
	\[
	\E\hat{V}_{\dir} \geq  \frac{1}{N^2} \sum_{i,j} \tilde{\gamma}_{ij}^2 \geq \frac{1}{N^2} \sum_j \sum_i  \tilde{\gamma}_{ij}\tilde{\gamma}_{ji}.
	\]
	
	{\bf Step 2.} We prove (ii). We first show that 
	\begin{equation*}
		N^{-1}\sum_i  Z_i Y_i^2 = \E \Big( N^{-1}\sum_i  Z_i Y_i^2 \Big)  + \op(1). 
	\end{equation*}
	Let $\bs{Z} = (Z_1,\ldots,Z_n)$. Define $f(\bs{Z}) = f(Z_1,\ldots,Z_n): = \sum_i  Z_i Y_i^2 $. Let $\bs{z} = (z_1,\ldots,z_j=1,\ldots,z_n)$, $\bs{z}^\prime = (z_1,\ldots,z_j=0,\ldots,z_n)$, and $\bs{z}_{(-j)} =(z_1,\ldots,z_{j-1},z_{j+1}\ldots,z_n) $. By Assumption~\ref{a:bounded-parameter}, we have 
	\begin{align*}
		\max_{\bs{z}_{(-j)}\in\{0,1\}^{N-1}}|f(\bs{z})-f(\bs{z}^\prime)| \lesssim |Y_j^2(\bs{z})| + \sum_{i:i\ne j}|Y_i^2(\bs{z})-Y_i^2(\bs{z}^\prime)| \lesssim 1 + \sum_i  Q_{ij}.  
	\end{align*}
	As a consequence, by \Cref{lem:bound-for-variance-efron-stein}, we have
	\begin{align*}
		&\Var(f(\bs{Z})) \lesssim \sum_{j}\Big(1 + \sum_i  Q_{ij}\Big)^2 \lesssim \sum_{j} 1 + \sum_{j}\Big(\sum_i  Q_{ij}\Big)^2\\
		&= N + \bs{1}^\top \bs{Q}\bs{Q}^\top \bs{1} \leq N + N \|\bs{Q}\bs{Q}^\top\|_{\oprtnorm} \lesssim N,
	\end{align*}
	which yields 
	\[
	\Var\Big(N^{-1}\sum_i  Z_i Y_i^2\Big) = O(N^{-1}) = o(1).
	\]
	
	Similarly,
	$$
	N^{-1}\sum_i  (1 - Z_i) Y_i^2 = \E \Big\{ N^{-1}\sum_i  ( 1 - Z_i) Y_i^2 \Big\}  + \op(1). 
	$$
	The conclusion follows.
\end{proof}

\subsection{Proof of Theorem \ref{thm:variance-estimator-ind}}
\begin{proof} 
	{\bf Step 1.} We prove (i). We decompose $\E \hat{V}_{\ind}$ as follows:
	\begin{align*}
		\E\hat{V}_{\ind} =  S_1 +  S_2 +  S_3,
	\end{align*}
	where
	\begin{align*}
		S_1 &= \E\sum_{i}  \sum_{j^\prime\ne j} E_{ji}E_{j^\prime i}Y_{j}Y_{j^\prime} \frac{Z_iZ_j Z_{j^\prime}}{N^2r_1^3}  + \E\sum_{i}  \sum_{j^\prime\ne j} E_{ji}E_{j^\prime i}Y_{j}Y_{j^\prime} \frac{Z_i(1-Z_j)(1-Z_{j^\prime})}{N^2r_1^2r_0}\\
		&+ \E\sum_{i}  \sum_{j^\prime\ne j} E_{ji}E_{j^\prime i}Y_{j}Y_{j^\prime} \frac{(1-Z_i)Z_j Z_{j^\prime}}{N^2r_1r_0^2}  + \E\sum_{i}  \sum_{j^\prime\ne j} E_{ji}E_{j^\prime i}Y_{j}Y_{j^\prime} \frac{(1-Z_i)(1-Z_j)(1-Z_{j^\prime})}{N^2r_0^3}\\
		&=: S_{11}+S_{12}+ S_{13}+S_{14},\\
		S_2 & =  \E\sum_{i,j}  E_{ji}Y_{j}^2 \frac{Z_iZ_j}{N^2r_1^2}+ \E\sum_{i,j}  E_{ji}Y_{j}^2 \frac{Z_i(1-Z_j)}{N^2r_1^2}+ \E\sum_{i,j}  E_{ji}Y_{j}^2 \frac{(1-Z_i)Z_j}{N^2r_0^2} \\
		&+ \E\sum_{i,j}  E_{ji}Y_{j}^2 \frac{(1-Z_i)(1-Z_j)}{N^2r_0^2}\\
		&=: S_{21}+S_{22}+ S_{23}+S_{24},\\
		S_3 & =  \E\sum_{i,j}  E_{ji}Y_{j}^2 \frac{r_0Z_iZ_j}{N^2r_1^3}+ \E\sum_{i,j}  E_{ji}Y_{j}^2 \frac{Z_i(1-Z_j)}{N^2r_1r_0}+\E\sum_{i,j}  E_{ji}Y_{j}^2 \frac{(1-Z_i)Z_j}{N^2r_1r_0} \\
		&+ \E\sum_{i,j}  E_{ji}Y_{j}^2 \frac{r_1(1-Z_i)(1-Z_j)}{N^2r_0^3}\\
		&=: S_{31}+S_{32}+ S_{33}+S_{34}.
	\end{align*} 
	Define 
	\begin{align*}
		&\tilde{S}_1 = \frac{1}{N^2}\sum_i  \biggl(\sum_j   E_{ji}Y_{Z_j=1}^{Z_i=1} \biggr)^2,\quad \tilde{S}_2 = \frac{r_0}{N^2r_1}\sum_i  \biggl(\sum_j   E_{ji}Y_{Z_j=0}^{Z_i=1}\biggr)^2\\
		&\tilde{S}_3=\frac{r_1}{N^2r_0}\sum_i  \biggl(\sum_j   E_{ji}Y_{Z_j=1}^{Z_i=0}\biggr)^2,\quad \tilde{S}_4=\frac{1}{N^2}\sum_i  \biggl(\sum_j   E_{ji}Y_{Z_j=0}^{Z_i=0}\biggr)^2,\\
		&\tilde{S}_5 = \frac{1}{N^2}  \sum_{i\ne j} (E_{ij}\theta_{i}+\sum_{k}E_{kj}\tilde{\gamma}_{ki})^2.
	\end{align*}
	
	For $z,z^\prime \in \{0,1\}$ and $i \ne j \ne j^\prime$, simple calculation yields
	\begin{align*}
		&\E Y_{j}Y_{j^\prime} \frac{{I}(Z_i=z,Z_j=Z_{j^\prime} = z^\prime)}{\E {I}(Z_i=z,Z_j=Z_{j^\prime} = z^\prime)}=\E (Y_{j}Y_{j^\prime}\mid Z_i=z,Z_j=Z_{j^\prime} = z^\prime)\\
		=&\E (Y_{j}\mid Z_i=z,Z_j=Z_{j^\prime} = z^\prime)\E (Y_{j^\prime}\mid Z_i=z,Z_j=Z_{j^\prime} = z^\prime) \\
		& + \Cov(Y_{j},Y_{j^\prime}\mid Z_i=z,Z_j=Z_{j^\prime} = z^\prime), \\
		&\E Y_{j}^2 \frac{{I}(Z_i=z,Z_j= z^\prime)}{\E {I}(Z_i=z,Z_j= z^\prime)}  =\E (Y_{j}^2\mid Z_i=z,Z_j= z^\prime) \\
		=&(Y_{Z_j=z^\prime}^{Z_{i}=z})^2+ \Var(Y_{j}\mid Z_i=z,Z_j = z^\prime), \\
		& \Cov(Y_{j},Y_{j^\prime}\mid Z_i=z,Z_j=Z_{j^\prime} = z^\prime) =   \sum_{k:k\ne i}  \tilde{\gamma}_{jk}\tilde{\gamma}_{j^\prime k} r_{1}r_{0}, \\
		& \Var(Y_{j}\mid Z_i=z,Z_j= z^\prime) = \sum_{k:k\ne i}  \tilde{\gamma}_{jk}^2 r_{1}r_{0}.
	\end{align*}

	Moreover, we have
	\begin{align*}
		&\E (Y_{j}\mid Z_i=z,Z_j=Z_{j^\prime} = z^\prime) = \E (Y_{j}\mid Z_i=z,Z_j= z^\prime) + (z^\prime-r_{1}) \tilde{\gamma}_{jj^\prime}.
	\end{align*}
	
	As a consequence, we have
	\begin{align*}
		S_{11} =& \sum_{i}  \sum_{j^\prime\ne j} E_{ji}E_{j^\prime i} \E \Big( Y_{j}Y_{j^\prime} \frac{Z_iZ_j Z_{j^\prime}}{N^2r_1^3} \Big) \\
		=& \frac{1}{N^2}\sum_{i}  \sum_{j^\prime\ne j} E_{ji}E_{j^\prime i} (Y_{Z_j=1}^{Z_i=1}+r_{0}\tilde{\gamma}_{jj^\prime})(Y_{Z_{j^\prime}=1}^{Z_i=1}+r_{0}\tilde{\gamma}_{j^\prime j}) +  \\
		& \frac{1}{N^2}\sum_{i}  \sum_{j^\prime\ne j} \sum_{k:k\ne i} E_{ji}E_{j^\prime i} \tilde{\gamma}_{jk}\tilde{\gamma}_{j^\prime k}r_{1}r_{0}\\
		= & \frac{1}{N^2}\sum_{i}  \sum_{j^\prime\ne j} E_{ji}E_{j^\prime i} (Y_{Z_j=1}^{Z_i=1}Y_{Z_{j^\prime}=1}^{Z_i=1}+r_{0}\tilde{\gamma}_{j^\prime j}Y_{Z_{j}=1}^{Z_i=1}+r_{0}\tilde{\gamma}_{jj^\prime }Y_{Z_{j^\prime}=1}^{Z_i=1}+ r_{0}r_{0} \tilde{\gamma}_{jj^\prime } \tilde{\gamma}_{j^\prime j}) + \\
		&\frac{1}{N^2} \sum_{i}  \sum_{j^\prime\ne j} \sum_{k:k\ne i} E_{ji}E_{j^\prime i} \tilde{\gamma}_{jk}\tilde{\gamma}_{j^\prime k}r_{1}r_{0},
	\end{align*}
	and
	\begin{align*}
		S_{21} = \sum_{i,j}  E_{ji} \E \Big( Y_{j}^2 \frac{Z_iZ_j}{N^2r_1^2} \Big) =  \frac{1}{N^2}\sum_{i,j}  E_{ji}(Y_{Z_j=1}^{Z_i=1})^2 + \frac{1}{N^2} \sum_{i,j} \sum_{k:k\ne i} {E}_{ji}\tilde{\gamma}_{jk}^2r_{1}r_{0}.
	\end{align*}
	Comparing with $\tilde{S}_1$, we have
	\begin{align}
		S_{11}+S_{21} =& \tilde{S}_1 + \frac{1}{N^2}\sum_{i}  \sum_{j^\prime\ne j} E_{ji}E_{j^\prime i} (r_{ 0}\tilde{\gamma}_{j^\prime j}Y_{Z_{j}=1}^{Z_i=1}+r_{ 0}\tilde{\gamma}_{jj^\prime }Y_{Z_{j^\prime}=1}^{Z_i=1} + r_{0}^2\tilde{\gamma}_{jj^\prime } \tilde{\gamma}_{j^\prime j}) + \nonumber\\
		& \frac{1}{N^2}\sum_{i}  \sum_{j^\prime\ne j} \sum_{k:k\ne i} E_{ji}E_{j^\prime i} \tilde{\gamma}_{jk}\tilde{\gamma}_{j^\prime k}r_{1}r_{0} + \frac{1}{N^2}\sum_{i,j} \sum_{k:k\ne i} E_{ji}\tilde{\gamma}_{jk}^2 r_{1}r_{0} \label{eq:S11+S21}.
	\end{align}
	Similarly, we have
	\begin{align}
		S_{12}+S_{22} =& \tilde{S}_2 + \frac{r_0}{N^2r_1}\sum_{i}  \sum_{j^\prime\ne j} E_{ji}E_{j^\prime i} (-r_{1}\tilde{\gamma}_{j^\prime j}Y_{Z_j=0}^{Z_i=1}-r_{ 1}\tilde{\gamma}_{jj^\prime }Y_{Z_{j^\prime}=0}^{Z_i=1} + r_{1}^2 \tilde{\gamma}_{jj^\prime } \tilde{\gamma}_{j^\prime j}) + \nonumber\\
		& \frac{r_0}{N^2r_1}\sum_{i}  \sum_{j^\prime\ne j} \sum_{k:k\ne i} E_{ji}E_{j^\prime i} \tilde{\gamma}_{jk}\tilde{\gamma}_{j^\prime k}r_{1}r_{0} + \frac{r_0}{N^2r_1}\sum_{i,j} \sum_{k:k\ne i} E_{ji}\tilde{\gamma}_{jk}^2 r_{1}r_{0} \label{eq:S12+S22}.
	\end{align}
	\begin{align}
		S_{13}+S_{23} =& \tilde{S}_3 + \frac{r_1}{N^2r_0}\sum_{i}  \sum_{j^\prime\ne j} E_{ji}E_{j^\prime i} (r_{0}\tilde{\gamma}_{j^\prime j}Y_{Z_j=1}^{Z_i=0}+r_{ 0}\tilde{\gamma}_{jj^\prime }Y_{Z_{j^\prime}=1}^{Z_i=0} + r_{0}^2 \tilde{\gamma}_{jj^\prime } \tilde{\gamma}_{j^\prime j}) + \nonumber\\
		& \frac{r_1}{N^2r_0}\sum_{i}  \sum_{j^\prime\ne j} \sum_{k:k\ne i} E_{ji}E_{j^\prime i} \tilde{\gamma}_{jk}\tilde{\gamma}_{j^\prime k}r_{1}r_{0} + \frac{r_1}{N^2r_0}\sum_{i,j} \sum_{k:k\ne i} E_{ji}\tilde{\gamma}_{jk}^2 r_{1}r_{0} \label{eq:S13+S23}.
	\end{align}
	\begin{align}
		S_{14}+S_{24} = & \tilde{S}_4 + \frac{1}{N^2}\sum_{i}  \sum_{j^\prime\ne j} E_{ji}E_{j^\prime i} (-r_{1}\tilde{\gamma}_{j^\prime j}Y_{Z_j=0}^{Z_i=0}-r_{ 1}\tilde{\gamma}_{jj^\prime }Y_{Z_{j^\prime}=0}^{Z_i=0} + r_{1}^2 \tilde{\gamma}_{jj^\prime } \tilde{\gamma}_{j^\prime j}) + \nonumber\\
		& \frac{1}{N^2}\sum_{i}  \sum_{j^\prime\ne j} \sum_{k:k\ne i} E_{ji}E_{j^\prime i} \tilde{\gamma}_{jk}\tilde{\gamma}_{j^\prime k}r_{1}r_{0} + \frac{1}{N^2}\sum_{i,j} \sum_{k:k\ne i} E_{ji}\tilde{\gamma}_{jk}^2 r_{1}r_{0} \label{eq:S14+S24}.
	\end{align}
	Summating \eqref{eq:S11+S21}--\eqref{eq:S14+S24} and using $Y_{Z_j=1}^{Z_i=z}-Y_{Z_j=0}^{Z_i=z} = \theta_j$ for $z\in\{0,1\}$, we have
	\begin{align}
		\label{eq:S1+S2}
		S_1 + S_2 =& \sum_{q=1}^4 \tilde{S}_{q} + \frac{1}{N^2}\sum_{i}  \sum_{j^\prime\ne j} E_{ji}E_{j^\prime i} (\tilde{\gamma}_{j^\prime j}\theta_j + \tilde{\gamma}_{jj^\prime }\theta_{j^\prime} + \tilde{\gamma}_{jj^\prime } \tilde{\gamma}_{j^\prime j}) + \nonumber\\
		& \frac{1}{N^2}\sum_{i}  \sum_{j^\prime\ne j} \sum_{k:k\ne i} E_{ji}E_{j^\prime i} \tilde{\gamma}_{jk}\tilde{\gamma}_{j^\prime k} + \frac{1}{N^2}\sum_{i,j} \sum_{k:k\ne i} E_{ji}\tilde{\gamma}_{jk}^2.
	\end{align}
	On the other hand, we can verify that
	\begin{align}
		\label{eq:second-term-of-var-ind}
		\tilde{S}_5 = & \frac{1}{N^2}\sum_{i,j}  \theta_j^2 E_{ji} + \frac{2}{N^2}\sum_{i,j}  \sum_{k:k\ne i} E_{ji} E_{ki}\tilde{\gamma}_{kj} \theta_j + \nonumber \frac{1}{N^2}\sum_{i\ne j} \sum_{k}  E_{ki}\tilde{\gamma}_{kj}^2 + \\
		&\frac{1}{N^2} \sum_{i\ne j}\sum_{k^\prime\ne k}  E_{ki} {E}_{ k^\prime i} \tilde{\gamma}_{kj} \tilde{\gamma}_{k^\prime j}.
	\end{align}
	Substituting \eqref{eq:second-term-of-var-ind} into \eqref{eq:S1+S2}, we obtain
	\begin{align}
		\label{eq:S1+S2-final}
		S_1 + S_2 &= \sum_{q=1}^5 \tilde{S}_{q} -\frac{1}{N^2}\sum_{i,j}  \theta_j^2 E_{ji}+\frac{1}{N^2}\sum_{i}  \sum_{j^\prime\ne j} E_{ji}E_{j^\prime i} \tilde{\gamma}_{jj^\prime } \tilde{\gamma}_{j^\prime j}.
	\end{align}
	
	Noting that, for $z\in\{0,1\}$,
	\begin{align*}
		r_1(Y_{Z_j=0}^{Z_i=z})^2+r_0(Y_{Z_j=1}^{Z_i=z})^2 &=  (r_1Y_{Z_j=0}^{Z_i=z}+r_0Y_{Z_j=1}^{Z_i=z})^2 + r_1r_0(Y_{Z_j=0}^{Z_i=z}-Y_{Z_j=1}^{Z_i=z})^2\\
		&= (r_1Y_{Z_j=0}^{Z_i=z}+r_0Y_{Z_j=1}^{Z_i=z})^2 + r_1r_0\theta_j^2,
	\end{align*}
	we have
	\begin{align*}
		S_{31} + S_{33} &=\frac{1}{r_1N^2}\sum_{i,j}  E_{ji}  \left\{ r_1(Y_{Z_j=0}^{Z_i=z})^2+r_0(Y_{Z_j=1}^{Z_i=1})^2 \right\} +\frac{1}{r_0N^2}  \sum_{i,j}  E_{ji} \sum_{k:k\ne i} \tilde{\gamma}_{jk}^2 \\
		&=\frac{1}{N^2r_1} \sum_{i,j}  E_{ji}\left\{r_1r_0 \theta_j^2 + (r_0Y_{Z_j=0}^{Z_i=1}+r_1Y_{Z_j=1}^{Z_i=1})^2\right\}+r_0 \sum_{i,j}  E_{ji} \sum_{k:k\ne i} \tilde{\gamma}_{jk}^2,\\
		S_{32} + S_{34} &=\frac{1}{r_0N^2}\sum_{i,j}  E_{ji}  \left\{ r_1(Y_{Z_j=0}^{Z_i=0})^2+r_0(Y_{Z_j=1}^{0})^2 \right\} +\frac{1}{r_1N^2} \sum_{i,j}  E_{ji} \sum_{k:k\ne i} \tilde{\gamma}_{jk}^2 \\
		&=\frac{1}{r_0N^2} \sum_{i,j}  E_{ji}\left\{r_1r_0 \theta_j^2 + (r_0Y_{Z_j=0}^{Z_i=0}+r_1Y_{Z_j=1}^{Z_i=0})^2\right\}+\frac{1}{r_1N^2} \sum_{i,j}  E_{ji} \sum_{k:k\ne i} \tilde{\gamma}_{jk}^2.
	\end{align*}
	Simple calculation yields
	\begin{align*}
		&r_0(r_1Y_{Z_j=0}^{Z_i=1}+r_0Y_{Z_j=1}^{Z_i=1})^2 + r_1(r_1Y_{Z_j=0}^{Z_i=0}+r_0Y_{Z_j=1}^{Z_i=0})^2 \\
		=&  (r_0r_1Y_{Z_j=0}^{Z_i=1}+r_0^2Y_{Z_j=1}^{Z_i=1} + r_1^2Y_{Z_j=0}^{Z_i=0}+r_0r_1Y_{Z_j=1}^{Z_i=0})^2  +r_1r_0 (r_1Y_{Z_j=0}^{Z_i=1}+r_0Y_{Z_j=1}^{Z_i=1}- \\
		&r_1Y_{Z_j=0}^{Z_i=0}-r_0Y_{Z_j=1}^{Z_i=0})^2\\
		=& (r_0r_1Y_{Z_j=0}^{Z_i=1}+r_0^2Y_{Z_j=1}^{Z_i=1} + r_1^2Y_{Z_j=0}^{Z_i=0}+r_0r_1Y_{Z_j=1}^{Z_i=0})^2 +r_1r_0\gamma_{ji}^2.
	\end{align*}
	Combing these, we have
	\begin{align}
		\label{eq:S3-final}
		S_{3} =& \frac{1}{N^2}\sum_{i,j}  E_{ji}\left\{\theta_j^2 + \gamma_{ji}^2 + (r_0r_1Y_{Z_j=0}^{Z_i=1}+r_0^2Y_{Z_j=1}^{Z_i=1} + r_1^2Y_{Z_j=0}^{Z_i=0}+r_0r_1Y_{Z_j=1}^{Z_i=0})^2/(r_1r_0) \right\}\nonumber\\
		&+\frac{1}{N^2}\sum_{i,j}  E_{ji} \sum_{k:k\ne i} \tilde{\gamma}_{jk}^2\nonumber\\
		=&\frac{1}{r_1r_0N^2}\sum_{i,j}  E_{ji}(r_0r_1Y_{Z_j=0}^{Z_i=1}+r_0^2Y_{Z_j=1}^{Z_i=1}  r_1^2Y_{Z_j=0}^{Z_i=0}+r_0r_1Y_{Z_j=1}^{Z_i=0})^2+\nonumber\\
		&+\frac{1}{N^2}\sum_{i,j}  E_{ji} \big(\theta_j^2+\sum_{k} \tilde{\gamma}_{jk}^2\big).
	\end{align}
	Combining \eqref{eq:S1+S2-final} and \eqref{eq:S3-final}, we have
	\begin{align}
		\label{eq:E-V_ind}
		\E\hat{V}_{\ind} &= \sum_{q=1}^5\tilde{S}_q + \mathcal{B}_{\ind},
	\end{align}
	where 
	\begin{align*}
		\mathcal{B}_{\ind} =& \frac{1}{r_1r_0N^2}\sum_{i,j}  E_{ji}(r_0r_1Y_{Z_j=0}^{Z_i=1}+r_0^2Y_{Z_j=1}^{Z_i=1} + r_1^2Y_{Z_j=0}^{Z_i=0}+r_0r_1Y_{Z_j=1}^{Z_i=0})^2 +\\
		&\frac{1}{N^2}\sum_{i,j}  E_{ji} \sum_{k} \tilde{\gamma}_{jk}^2 +\frac{1}{N^2}\sum_{i}  \sum_{j^\prime\ne j} E_{ji}E_{j^\prime i} \tilde{\gamma}_{jj^\prime } \tilde{\gamma}_{j^\prime j}.
	\end{align*}
	
	Note that
	\begin{align*}
		\Big|\sum_{i}  \sum_{j^\prime\ne j} (E_{ji}\tilde{\gamma}_{jj^\prime })(E_{j^\prime i}\tilde{\gamma}_{j^\prime j})\Big| &\leq \Big\{\sum_{i}  \sum_{j^\prime\ne j} (E_{ji}\tilde{\gamma}_{jj^\prime })^2 \Big\}^{1/2}\Big\{ \sum_{i}  \sum_{j^\prime\ne j} (E_{j^\prime i}\tilde{\gamma}_{j^\prime j})^2\Big\}^{1/2}\\
		&=\sum_{i}  \sum_{j^\prime\ne j} E_{ji} \tilde{\gamma}_{jj^\prime }^2 = \sum_{i,j}  E_{ji} \sum_{k} \tilde{\gamma}_{jk}^2.
	\end{align*}
	As a consequence, we have 
	\begin{equation}\label{eqn:EVlarger}
		\E\hat{V}_{\ind} \geq \sum_{q=1}^5\tilde{S}_q.
	\end{equation}
	Using the equality $r_1a_i^2 + r_0 b_i^2 = (r_1 a_i + r_0 b_i)^2 + r_1r_0(a_i-b_i)^2$, we have
	\begin{align*}
		&\sum_{q=1}^4 \tilde{S}_q  \\
		=&\frac{1}{r_1N^2}\sum_i  \Bigl\{\sum_j   E_{ji}(r_1 Y_{Z_j=1}^{Z_i=1} +r_0Y_{Z_j=0}^{Z_i=1})\Bigr\}^2 + \frac{1}{N^2r_0}\sum_i  \Bigl\{\sum_j   E_{ji}(r_1Y_{Z_j=1}^{Z_i=0}+r_0Y_{Z_j=0}^{Z_i=0})\Bigr\}^2  \\
		& +\frac{r_1r_0}{r_1N^2} \sum_i  \Bigl\{\sum_j   E_{ji}( Y_{Z_j=1}^{Z_i=1} -Y_{Z_j=0}^{Z_i=1})\Bigr\}^2 + \frac{r_1r_0}{r_0N^2} \sum_i  \Bigl\{\sum_j   E_{ji}( Y_{Z_j=1}^{Z_i=0} -Y_{Z_j=0}^{Z_i=0})\Bigr\}^2\\
		= &\frac{1}{r_1N^2}\sum_i  \Bigl\{\sum_j   E_{ji}(r_1 Y_{Z_j=1}^{Z_i=1} +r_0Y_{Z_j=0}^{Z_i=1})\Bigr\}^2 + \frac{1}{N^2r_0}\sum_i  \Bigl\{\sum_j   E_{ji}(r_1Y_{Z_j=1}^{Z_i=0}+r_0Y_{Z_j=0}^{Z_i=0})\Bigr\}^2 \\
		& + \frac{1}{N^2} \sum_i  \Bigl\{\sum_j   E_{ji}\theta_j\Bigr\}^2\\
		= &\frac{1}{N^2r_1r_0}\sum_j  \Bigl\{\sum_i   E_{ij}(r_1r_0 Y_{Z_i=1}^{Z_j=1} +r_0^2Y_{Z_i=0}^{Z_j=1}+r_1^2Y_{Z_i=1}^{Z_j=0}+r_1r_0Y_{Z_i=0}^{Z_j=0})\Bigr\}^2 \\
		& +\frac{1}{N^2} \sum_i  \Bigl(\sum_j   E_{ji}\theta_j\Bigr)^2 + \frac{1}{N^2} \sum_i  \Bigl(\sum_j   E_{ji}\gamma_{ji}\Bigr)^2.  
	\end{align*}
	As a consequence, we have
	\begin{align}
		\label{eq:two-cauchy-schwarz-inequality}
		\sum_{q=1}^4 \tilde{S}_q = \sigma^2_{\ind,1} + \frac{1}{N^2} \sum_i  \Bigl(\sum_j   E_{ji}\theta_j\Bigr)^2 + \frac{1}{N^2} \sum_i  \Bigl(\sum_j   E_{ji}\gamma_{ji}\Bigr)^2.
	\end{align}
	
	Comparing with the formula of $\Var(\hat{\tau}_{\ind})$ and by \eqref{eqn:EVlarger}, we have
	\begin{align}
		\E \hat{V}_{\ind}-\Var(\hat{\tau}_{\ind})  &\geq \sum_{q=1}^5 \tilde{S}_q -\Var(\hat{\tau}_{\ind}) \geq  \Big(\sum_{q=1}^4 \tilde{S}_q-\sigma_{\ind,1}^2\Big) + (\tilde{S}_5-\sigma_{\ind,2}^2) \nonumber\\
		& \geq -\sum_{i\ne j}\frac{1}{N^2} \Big(\theta_{j}E_{ji}+\sum_{k}E_{ki}\tilde{\gamma}_{kj}\Big)\Big(\theta_{i}E_{ij}+\sum_{k}E_{kj}\tilde{\gamma}_{ki}\Big).\label{eq:hat-V-ind-bias}
	\end{align}
	Therefore, 
	\begin{align*}
		\E \hat{V}_{\ind} &\geq \Var(\hat{V}_{\ind}) - \sum_{i\ne j}\frac{1}{N^2} \Big(\theta_{j}E_{ji}+\sum_{k}E_{ki}\tilde{\gamma}_{kj}\Big)\Big(\theta_{i}E_{ij}+\sum_{k}E_{kj}\tilde{\gamma}_{ki}\Big)\\
		& \geq \sum_{i\ne j}\frac{1}{N^2} \Big(\theta_{j}E_{ji}+\sum_{k}E_{ki}\tilde{\gamma}_{kj}\Big)^2\\
		& \geq \sum_{i\ne j}\frac{1}{N^2} \Big(\theta_{j}E_{ji}+\sum_{k}E_{ki}\tilde{\gamma}_{kj}\Big)\Big(\theta_{i}E_{ij}+\sum_{k}E_{kj}\tilde{\gamma}_{ki}\Big).
	\end{align*}
	It follows that $\E (2\hat{V}_{\ind}) -\Var(\hat{\tau}_{\ind}) \geq 0$.

	Note that, by \Cref{a:bounded-parameter} and \Cref{a:opnorm-EE^T}, we have
	\begin{align*}
		&\frac{1}{r_1r_0N^2}\sum_{i,j}  E_{ji}(r_0r_1Y_{Z_j=0}^{Z_i=1}+r_0^2Y_{Z_j=1}^{Z_i=1} + r_1^2Y_{Z_j=0}^{Z_i=0}+r_0r_1Y_{Z_j=1}^{Z_i=0})^2 \\
		\lesssim & \frac{1}{N^2}\sum_{i,j}  E_{ji} = O(\rhon), \\
		&\frac{1}{N^2}\sum_{i,j}  E_{ji} \sum_{k} \tilde{\gamma}_{jk}^2+\frac{1}{N^2}\sum_{i}  \sum_{j^\prime\ne j} E_{ji}E_{j^\prime i} \tilde{\gamma}_{jj^\prime } \tilde{\gamma}_{j^\prime j} \lesssim \frac{1}{N^2}\sum_{i,j}  E_{ji} \sum_{k} \tilde{\gamma}_{jk}^2\\
		\lesssim & \frac{1}{N^2}\sum_{i,j}  E_{ji} \sum_{k} Q_{jk}^2 \lesssim \frac{1}{N^2}\sum_{i,j} \sum_{k}  E_{ji}  Q_{jk} \\
		=& \frac{1}{N^2} \bs{1}^\top \bs{E}^\top \bs{Q}\bs{1} = O(\rhon).
	\end{align*}
	If follows that
	\[
	\mathcal{B}_{\ind} = O(\rhon),
	\]
	and
	$$\E\hat{V}_{\ind} = \sum_{q=1}^5\tilde{S}_q + O(\rhon).$$
	
	Since
	\begin{align*}
		&\frac{1}{N^2} \sum_i  \Bigl(\sum_j   E_{ji}\gamma_{ji}\Bigr)^2  \lesssim \frac{1}{N^2} \sum_i  \Bigl(\sum_j   Q_{ji}\Bigr)^2 = O(N^{-1}),\quad \tilde{S}_5 \lesssim \sigma^2_{\ind,2} = O(\rhon).
	\end{align*}
	then, we have
	\begin{align*}
		\E \hat{V}_{\ind} =& \sum_{q=1}^5 \tilde{S}_q + O(\rhon) = \sigma^2_{\ind,1} +\frac{1}{N^2} \sum_i  \Bigl(\sum_j   E_{ji}\theta_j\Bigr)^2 + O(\rhon)\\
		=& \Var(\hat{\tau}_{\ind}) + \frac{1}{N^2} \sum_i  \Bigl(\sum_j   E_{ji}\theta_j\Bigr)^2+O(\rhon). 
	\end{align*}

	When $N\rhon \rightarrow \infty$, we have $\rhon = o(N\rhon^2)$. Therefore, 
	$$
	\E \hat{V}_{\ind} - \Var(\hat{\tau}_{\ind}) = \frac{1}{N^2} \sum_i  \Bigl(\sum_j   E_{ji}\theta_j\Bigr)^2 + o(N\rhon).
	$$

	{\bf Step 2.} We prove (ii). Recall that
	\[
	\hat{\tau}^\prime_{\ind}  =  \frac{1}{N}\sum_{i,j} E_{ij} \Bigl\{\frac{Y_i^\prime Z_j}{r_1} - \frac{Y_i^\prime (1-Z_j)}{r_0}\Bigr\},
	\]
	where $Y_i^\prime = \alpha_i^\prime + \theta_i^\prime Z_i +\sum_j  \tilde{E}_{ij} {\gamma}_{ij}^\prime Z_j$ with $\tilde{\gamma}^\prime_{ij} = \tilde{E}_{ij} {\gamma}_{ij}^\prime$. We will prove a general result that encompasses \Cref{thm:variance-estimator-ind}(ii) as a special case. Define $\hat{V}_{\ind}^\prime$ similarly as $\hat{V}_{\ind}$ with $Y_i$ replaced by $Y_i^\prime$.
	
	\begin{theorem}
		\label{thm:consistency-of-oracle-variance-estimator-general-ind}
		Suppose that \Cref{a:bounded-parameter} holds for both $\{\alpha_i,\theta_i,{\gamma}_{ij}\}_{1 \leq i,j\leq N}$ and\\ $\{\alpha_i^\prime,\theta_i^\prime,{\gamma}_{ij}^\prime\}_{1 \leq i,j\leq N}$,  Assumptions~\ref{a:density-rho_N}--\ref{a:Lindberg-condition-unadj} hold, and for a nonrandom sequence $\Lambdan$,
		\begin{align*}
			& N^{-1} \max\Big\{\sum_i \Big(\sum_j  E_{ji}Y_{Z_j=1}^\prime\Big)^2, \sum_i \Big(\sum_j  E_{ji}Y_{Z_j=0}^\prime\Big)^2\Big\} = O(\Lambdan ),\\
			&N^{-1} \max\Big\{\maxi\Big(\sum_j  E_{ji}Y_{Z_j=1}^\prime\Big)^2, \maxi\Big(\sum_j  E_{ji}Y_{Z_j=0}^\prime\Big)^2\Big\} = o(\Lambdan ).
		\end{align*}
		Then, we have $\hat{V}_{\ind}^\prime-\operatorname{E} (\hat{V}_{\ind}^\prime)  = \op(N^{-1}\Lambdan + \rhon)$.
	\end{theorem}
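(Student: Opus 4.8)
The plan is to show that $\Var(\hat{V}_{\ind}^{\prime}) = o\big((N^{-1}\Lambdan+\rhon)^2\big)$ and then conclude by Chebyshev's inequality, which immediately yields $\hat{V}_{\ind}^{\prime}-\E(\hat{V}_{\ind}^{\prime})=\op(N^{-1}\Lambdan+\rhon)$. Since $\hat{V}_{\ind}^{\prime}$ is a bounded measurable function $f(\bs{Z})$ of the independent Bernoulli assignments, the natural tool is the Efron--Stein inequality in its i.i.d.-resampling form (\Cref{lem:efron-stein}): writing $\bs{Z}^{(j)}$ for $\bs{Z}$ with $Z_j$ replaced by an independent copy, $\Var(f(\bs{Z}))\le\frac12\sum_j\E\big[(f(\bs{Z})-f(\bs{Z}^{(j)}))^2\big]$. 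A key point is to use this expectation form rather than the deterministic bounded-differences bound of \Cref{lem:bound-for-variance-efron-stein}: resampling $Z_j$ perturbs each inner weighted total $S_i := \sum_\ell E_{\ell i}Y_\ell^{\prime}w_\ell$ (with bounded weights $w_\ell\in\{Z_\ell/r_1,(1-Z_\ell)/r_0\}$) by a controlled amount and enters $\hat{V}_{\ind}^{\prime}$ through $S_i^2$, so the increment of $\hat{V}_{\ind}^{\prime}$ is governed by products $S_i\cdot(\text{increment of }S_i)$; averaging over $\bs{Z}_{(-j)}$ lets us replace $S_i$ by its typical magnitude $\sum_\ell E_{\ell i}Y_{Z_\ell=z}^{\prime}$ instead of the crude $O(M_i)$ bound, which is what is needed to reach the target rate.

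Concretely, $D_j:=f(\bs{Z})-f(\bs{Z}^{(j)})$ vanishes unless $Z_j$ and its independent copy differ, an event of probability $2r_1r_0$, on which $D_j$ equals (up to sign) $\tilde{D}_j:=f(\bs{Z}_{(-j)},Z_j=1)-f(\bs{Z}_{(-j)},Z_j=0)$, a function of $\bs{Z}_{(-j)}$, so $\E[D_j^2]\le 2r_1r_0\,\E[\tilde{D}_j^{2}]$. I would split $\tilde{D}_j$ into the change of the ``self'' summand (the term whose outer index equals $j$), of size $\lesssim N^{-2}\max_z (S_j)^2$, plus the changes of the remaining summands, each driven by the increment of the corresponding $S_i$. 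Using \Cref{a:bounded-parameter} in the form $|Y_\ell^{\prime}|\le C$ and $|\tilde{\gamma}_{\ell j}^{\prime}|\le C Q_{\ell j}$, the increment of $S_i$ under a flip of $Z_j$ is bounded deterministically by $\bar\Delta_i:=C\big(E_{ji}+\sum_\ell E_{\ell i}Q_{\ell j}\big)$; hence the $i$-summand changes by $\lesssim N^{-2}\big(|S_i|\bar\Delta_i+\bar\Delta_i^2\big)$, and Cauchy--Schwarz gives
\[
\E[\tilde{D}_j^{2}]\ \lesssim\ N^{-4}\Big(\E\big[(S_j)^4\big]+\Big(\sum_i\bar\Delta_i^2\Big)\sum_i\E\big[(S_i)^2\big]+\Big(\sum_i\bar\Delta_i^2\Big)^{2}\Big).
\]

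The last step is to sum this bound over $j$ and verify the total is $o\big((N^{-1}\Lambdan+\rhon)^2\big)$. For the self term, writing $\mu_j:=\sum_\ell E_{\ell j}Y_{Z_\ell=z}^{\prime}$, one uses $\E[(S_j)^4]\lesssim\mu_j^4+(\text{lower-degree terms})$ together with the two $\Lambdan$-hypotheses, which give $\sum_j\mu_j^2=O(N\Lambdan)$ and $\max_j\mu_j^2=o(N\Lambdan)$, hence $\sum_j\mu_j^4=o(N^2\Lambdan^2)$, so after dividing by $N^4$ the self contribution is $o(N^{-2}\Lambdan^2)=o\big((N^{-1}\Lambdan+\rhon)^2\big)$, the remaining degree terms being absorbed via $\sum_jM_j^2=O(N^3\rhon^2)$ from \Cref{a:opnorm-EE^T}. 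For the ``cross'' and ``quadratic'' terms one replaces the totals $S_i$ by their typical magnitudes and controls the increment factors $\bar\Delta_i$ using the operator norms of $\bs{E}$ and $\bs{Q}$ (\Cref{a:opnorm-EE^T}), the maximal-degree and $\sum_j Q_{ji}$ bounds (\Cref{a:Lindberg-condition-unadj}), the density bound $\rhon=o(1)$ (\Cref{a:density-rho_N}), and the $\Lambdan$-hypotheses; since the increments are supported on the (small) neighbourhoods of $j$ in $\bs{E}$ and in the hidden network, exploiting this sparsity rather than the crude degree bounds is what keeps these contributions at $o\big((N^{-1}\Lambdan+\rhon)^2\big)$. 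I expect the main obstacle to be precisely this last bookkeeping: a single flip of $Z_j$ propagates through the HATE model and produces a mixture of directly ($E_{ji}$-weighted) and indirectly ($\bs{E}$-by-$\bs{Q}$-weighted) perturbations on top of the self term, each living at its own scale, and the Cauchy--Schwarz splits and support arguments must be arranged so that the hypotheses on $\Lambdan$ — in particular the max-version that controls the dominant linear part — combine to land the sum at exactly the required rate. The statement of \Cref{thm:variance-estimator-ind}(ii) then follows by taking $\Lambdan\equiv N^2\rhon^2$ and $Y_i^{\prime}\equiv Y_i$, using $\sum_i M_i^2=O(N^3\rhon^2)$ to verify the two hypotheses of \Cref{thm:consistency-of-oracle-variance-estimator-general-ind}.
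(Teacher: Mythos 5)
Your overall frame (Efron--Stein in the conditional-variance form, then Chebyshev) is reasonable, and the paper itself uses exactly this device for several of the easier pieces (the diagonal term $\sum_{i\ne j}E_{ji}Y_j^{\prime 2}Z_jZ_i$ and the sub-terms it calls $S_2$, $S_4$, $S_5$). Your treatment of the self term is also fine: $\sum_j\mu_j^4\le(\max_j\mu_j^2)(\sum_j\mu_j^2)=o(N^2\Lambdan^2)$ is precisely how the two $\Lambdan$-hypotheses are meant to be used. The gap is in the displayed Cauchy--Schwarz split $\bigl(\sum_i|S_i|\bar\Delta_i\bigr)^2\le\bigl(\sum_i\bar\Delta_i^2\bigr)\bigl(\sum_iS_i^2\bigr)$. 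Take only the direct-edge part of the increment, $\bar\Delta_{i,k}^2\gtrsim E_{ki}$, and the mean part of $S_i$, $\E S_i^2\gtrsim\mu_i^2$ with $\sum_i\mu_i^2\asymp N\Lambdan$. Your bound then gives $N^{-4}\sum_k\bigl(\sum_iE_{ki}\bigr)\bigl(\sum_i\mu_i^2\bigr)\asymp N^{-4}\cdot N^2\rhon\cdot N\Lambdan=N^{-1}\rhon\Lambdan$, which is $O\bigl((N^{-1}\Lambdan+\rhon)^2\bigr)$ but \emph{not} $o(\cdot)$ (it equals the geometric mean of the two target terms up to constants); similarly the pure $\sum_i\E S_i^2\gtrsim\sum_iM_i$ contribution lands at $\rhon^2$ exactly. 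So even if every step goes through, you prove $\hat V_{\ind}^\prime-\E\hat V_{\ind}^\prime=\Op(N^{-1}\Lambdan+\rhon)$, not $\op$. The correct computation must keep the bilinear form together: $\sum_k\bigl(\sum_iE_{ki}\mu_i\bigr)^2=\|\bs E\bs\mu\|^2\le\|\bs E\|_{\opnorm}^2\|\bs\mu\|^2=N^2\rhon^2\cdot N\Lambdan$, which after dividing by $N^4$ gives $N^{-1}\rhon^2\Lambdan=\rhon\cdot N^{-1}\Lambdan\rhon=o\bigl((N^{-1}\Lambdan+\rhon)^2\bigr)$ --- the operator-norm step is where the extra vanishing factor $\rhon$ comes from, and your decoupling throws it away.

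The quadratic term $N^{-4}\sum_k\bigl(\sum_i\bar\Delta_{i,k}^2\bigr)^2$ is worse: with $\bar\Delta_{i,k}\supset\sum_\ell E_{\ell i}Q_{\ell k}$ the best generic bound is $N^{-4}\sum_k\bigl[(\bs Q^\top\bs E\bs E^\top\bs Q)_{kk}\bigr]^2\lesssim N\rhon^4=(N\rhon^2)\rhon^2$, which is not $o(\rhon^2)$ in the dense regime $N\rhon^2\to\infty$, and is not rescued by $N^{-2}\Lambdan^2$ when $\Lambdan\ll N^2\rhon^2$ --- i.e., exactly the EV-adjusted setting that motivates stating the theorem with a general $\Lambdan$. (For the unadjusted case $\Lambdan\equiv N^2\rhon^2$ your bounds do suffice, which may be why the argument feels like it should close.) The paper avoids this by not passing through Efron--Stein for the dominant term at all: it expands $Y_j^\prime$ via the HATE model, splits the quadratic part into six pieces $S_1,\dots,S_6$, and computes each variance as an explicit sum over index configurations, bounding every configuration by a matrix chain of the form $\bs 1^\top\bs D_1\cdots\bs D_m\bs 1\le N\prod_q\|\bs D_q\|_{\opnorm}$ or $\bs\xi^\top\bs D\bs\xi\le\|\bs D\|_{\opnorm}\|\bs\xi\|_2^2$ with $\bs D_q\in\{\bs E,\bs Q,\dots\}$, so that each term carries an explicit extra factor of $\rhon$, $N^{-1/2}$, or $\maxi\delta_i/N^{1/2}$ that delivers the little-$o$. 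To salvage your route you would have to reorganize $\E[\tilde D_k^2]$ so that the sum over $k$ produces these same chains (e.g.\ $\bs\mu^\top\bigl(\sum_k\bs\nu_{\cdot k}\bs\nu_{\cdot k}^\top\bigr)\bs\mu$ with the middle matrix controlled in operator norm), at which point you have essentially reproduced the paper's bookkeeping.
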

	\Cref{thm:variance-estimator-ind}(ii) follows from \Cref{thm:consistency-of-oracle-variance-estimator-general-ind} with  $\Lambdan \equiv N^2\rhon^2$ and $Y_i^\prime \equiv Y_i$. Thus, in the remainder of this section, we will prove \Cref{thm:consistency-of-oracle-variance-estimator-general-ind}. 
	
\end{proof}

\begin{proof}[Proof of \Cref{thm:consistency-of-oracle-variance-estimator-general-ind}]
	We will only show that
	\[
	N^{-1}\sum_i \biggl(\sum_j  E_{ji}Y_{j}^\prime Z_j\biggr)^2 Z_i = \E N^{-1}\sum_i \biggl(\sum_j  E_{ji}Y_{j}^\prime Z_j\biggr)^2 Z_i + \op(\Lambdan + N\rhon).
	\]
	The proofs for the convergence of other components in $\hat{V}_{\ind}^\prime$ follow similar steps.

	Simple calculation yields
	\begin{align*}
		\sum_i \biggl(\sum_j  E_{ji}Y_{j}^\prime Z_j\biggr)^2 Z_i = \sum_{i\ne j} E_{ji}Y_{j}^{\prime 2} Z_j Z_i+ \sum_{i\ne j \ne j^\prime} E_{ji}E_{j^\prime i}Y_{j}^\prime Y_{j^\prime}^{\prime}Z_jZ_{j^\prime} Z_i.
	\end{align*}
	Define $f(\bs{Z}): = f(Z_1,\ldots,Z_n) : =  \sum_{i\ne j} E_{ji}Y_{j}^{\prime 2} Z_j Z_i$. Let $\bs{z} = (z_1,\ldots,z_k=1,\ldots,z_n)= (z_1,\ldots,z_n)$, $\bs{z}^\prime = (z_1,\ldots,z_k=0,\ldots,z_n) = (z_1^\prime,\ldots,z_n^\prime)$, we have
	\begin{align*}
		|f(\bs{z})-f(\bs{z}^\prime)| = & \Big|\sum_{i\ne j} E_{ji}Y_{j}^\prime (\bs{z})^2 z_j z_i-\sum_{i\ne j} E_{ji}Y_{j}^\prime (\bs{z}^\prime)^2 z_j^\prime z_i^\prime\Big|\\
		\leq & \Big|\sum_{i\ne j} E_{ji}\{Y_{j}^\prime (\bs{z})^2-Y_{j}^\prime (\bs{z}^\prime)^2\} z_j z_iI(i\ne j \ne k)\Big| +\\
		&\Big|\sum_i  E_{ki}\{Y_{k}(\bs{z})^2-Y_{k}(\bs{z}^\prime)^2\} z_i\Big| +  \Big|\sum_j  E_{jk} \{Y_j(\bs{z})^2-Y_j(\bs{z}^\prime)^2\}z_j\Big|\\
		\lesssim & \sum_i   \sum_{j} E_{ji}Q_{jk} + \sum_i  E_{ki} + \sum_j  E_{jk}.
	\end{align*}

	By Lemma~\ref{lem:efron-stein} and \Cref{a:opnorm-EE^T}, we have
	\begin{align*}
		\Var(f(\bs{Z}))\leq & \sum_k \Big(\sum_i   \sum_{j} E_{ji}Q_{jk} + \sum_i  E_{ki} + \sum_j  E_{jk}\Big)^2\\
		\lesssim&\sum_k \Big(\sum_i   \sum_{j} E_{ji}Q_{jk}\Big)^2 + \sum_k \Big(\sum_i  E_{ki}\Big)^2 + \sum_k \Big(\sum_j  E_{jk}\Big)^2\\
		=&\bs{1}^\top\bs{E}^\top\bs{Q}\bs{Q}^\top\bs{E}\bs{1} + \bs{1}^\top\bs{E}^\top\bs{E}\bs{1} + \bs{1}^\top\bs{E}\bs{E}^\top\bs{1} \lesssim N^3\rhon^2.
	\end{align*}
	
	As a consequence, we have
	\[
	N^{-1}\sum_{i,j} E_{ji}Y_{j}^{\prime 2} Z_j Z_i-N^{-1}\E\sum_{i,j} E_{ji}Y_{j}^{\prime 2} Z_j Z_i =  \Op((N\rhon^2)^{1/2}) = \op(N\rhon).
	\]
	
	Let $\tilde{Z}_i = Z_i-r_1$. Then, we have
	\begin{align*}
		&N^{-1}\sum_{i\ne j \ne j^\prime} E_{ji}E_{j^\prime i}Y_{j}^\prime Y_{j^\prime}^\prime Z_jZ_{j^\prime} Z_i\\
		=& N^{-1}\sum_{i\ne j \ne j^\prime} E_{ji}E_{j^\prime i}\Big(Y_{Z_j=1}^\prime + \sum_{k}\tilde{Z}_k\tilde{\gamma}^\prime_{jk}\Big)\Big(Y_{Z_{j^\prime}=1}^\prime  + \sum_{k}\tilde{Z}_k\tilde{\gamma}^\prime_{j^\prime k}\Big)Z_jZ_{j^\prime} Z_i\\
		=&N^{-1}\sum_{i\ne j \ne j^\prime} E_{ji}E_{j^\prime i}\Big(Y_{Z_j=1}^\prime + r_0 \tilde{\gamma}^\prime_{ji} + r_0 \tilde{\gamma}^\prime_{jj^\prime} + \sum_{k:k\notin \{i,j^\prime\}}\tilde{Z}_k\tilde{\gamma}^\prime_{jk}\Big)\\
		& \qquad \qquad \qquad \Big(Y_{Z_{j^\prime}=1}^\prime  + r_0 \tilde{\gamma}^\prime_{j^\prime i} +  r_0 \tilde{\gamma}^\prime_{j^\prime j}  + \sum_{k:k\notin \{i,j\}}\tilde{Z}_k\tilde{\gamma}^\prime_{j^\prime k}\Big)Z_jZ_{j^\prime} Z_i.
	\end{align*}
	Let $F_{ji} = E_{ji}(Y_{Z_j=1}^\prime + r_0 \tilde{\gamma}^\prime_{ji})$ and we decompose 
	\begin{align*}
		N^{-1}\sum_{i\ne j \ne j^\prime} E_{ji}E_{j^\prime i}Y_{j}^\prime Y_{j^\prime}^\prime Z_jZ_{j^\prime} Z_i = \sum_{q=1}^6 S_q,
	\end{align*} 
	where
	\begin{align*}
		&S_1 = N^{-1}\sum_{i\ne j \ne j^\prime} F_{ji}F_{j^\prime i} Z_jZ_{j^\prime} Z_i, \\
		&S_2 = N^{-1}\sum_{i\ne j \ne j^\prime} \Big\{E_{ji}E_{j^\prime i}(Y_{Z_j=1}^\prime  + r_0 \tilde{\gamma}^\prime_{ji} + r_0 \tilde{\gamma}^\prime_{jj^\prime})(Y_{Z_{j^\prime}=1}^\prime  + r_0 \tilde{\gamma}^\prime_{j^\prime i} +  r_0 \tilde{\gamma}^\prime_{j^\prime j})\\
		& \qquad \qquad \qquad \qquad -F_{ji}F_{j^\prime i}\Big\}Z_jZ_{j^\prime} Z_i,\\
		&S_3 = 2N^{-1}\sum_{i\ne j \ne j^\prime} F_{ji}E_{j^\prime i}(\sum_{k:k\ne i}\tilde{Z}_k\tilde{\gamma}^\prime_{j^\prime k}) Z_jZ_{j^\prime} Z_i,\\
		&S_4 = 2N^{-1}\sum_{i\ne j \ne j^\prime} E_{ji}E_{j^\prime i}r_0 \tilde{\gamma}^\prime_{jj^\prime}(\sum_{k:k\ne i}\tilde{Z}_k\tilde{\gamma}^\prime_{j^\prime k}) Z_jZ_{j^\prime} Z_i,\\
		&S_5 = N^{-1}\sum_{i\ne j\ne j^\prime \ne k } E_{ji}E_{j^\prime i}\tilde{\gamma}^\prime_{jk}\tilde{\gamma}^\prime_{j^\prime k}Z_jZ_{j^\prime} Z_i \tilde{Z}_k^2, \\
		&S_6 = N^{-1}\sum_{i\ne j\ne j^\prime \ne k \ne k^\prime} E_{ji}E_{j^\prime i} \tilde{\gamma}^\prime_{jk}\tilde{\gamma}^\prime_{j^\prime k^\prime} \tilde{Z}_k\tilde{Z}_{k^\prime}Z_jZ_{j^\prime} Z_i .
	\end{align*}
	
	Next, we will show that, for $1\leq q \leq 6$,
	\[
	\Var(S_q) = o(\Lambdan^2 + N^2\rhon^2).
	\]
	
	\textbf{Step 1}: For $S_{1}$, we have
	\begin{align*}
		\Var(S_{1}) &= N^{-2} \sum_{i_1\ne j_1 \ne j^\prime_1} \sum_{i_2\ne j_2 \ne j^\prime_2}  F_{j_1i_1}F_{j^\prime_1 i_1} F_{j_2i_2}F_{j^\prime_2 i_2}\Cov( Z_{j_1}Z_{j^\prime_1} Z_{i_1},  Z_{j_2}Z_{j^\prime_2} Z_{i_2}) \\
		&=:C_1 S_{11} + C_2 S_{12} +C_3 S_{13} + C_4 S_{14} + C_5 S_{15} + C_6 S_{16} + C_7 S_{17} +  C_8 S_{18} +  S_{19},
	\end{align*}
	where 
	\begin{align*}
		& S_{11} = N^{-2} \sum_{i\ne j\ne k} F_{ji}^2F_{ki}^2,\quad S_{12} = N^{-2} \sum_{i\ne j\ne k}F_{ji}F_{ki}F_{ij}F_{kj}, \\
		& S_{13} = N^{-2}\sum_{i\ne j\ne k\ne m} F_{ji}F_{ki}F_{jm}F_{km},\quad S_{14} = N^{-2}\sum_{i\ne j\ne k\ne m}F_{ji}F_{ki}F_{jm}F_{im},\\
		&S_{15} = N^{-2}\sum_{i\ne j\ne k\ne m} F_{ji}^2F_{ki}F_{mi},\quad S_{16} = N^{-2}\sum_{i\ne j\ne k\ne m} F_{ji}F_{ki}F_{ij}F_{mj} ,\\
		& S_{17} = N^{-2}\sum_{i\ne j\ne k\ne m\ne l} F_{ji}F_{ki}F_{mi}F_{li},\quad S_{18} =   N^{-2}\sum_{i\ne j\ne k\ne m\ne l} N^{-2} F_{ji}F_{ki}F_{mj}F_{lj},\\
		&S_{19} = N^{-2}\sum_{i\ne j\ne k\ne m\ne l} F_{ji}F_{ki} F_{kl}F_{ml}.
	\end{align*}
	We ignore the values of $C_q$, $1 \leq q \leq 9$, which are all $O(1)$. 
	
	We handle these terms $S_{1q}$, $1 \leq q \leq 9$, one by one. Note that
	\[
	F_{ij} \leq CE_{ij},
	\]
	for some large enough constant $C$. It follows that
	\begin{align*}
		\max\{S_{11},S_{12}\} & \lesssim N^{-2} \sum_{i\ne j\ne k} E_{ji}E_{ki} = N^{-2}\bs{1}^\top\bs{E}\bs{E}^\top\bs{1} = O(N\rhon^2),\\
		S_{13} & \lesssim N^{-2}\sum_{i\ne j\ne k\ne m} E_{ji}E_{ki}E_{jm} \lesssim N^{-2}\bs{1}^\top\bs{E}^\top\bs{E}\bs{E}^\top\bs{1}\lesssim O(N^2\rhon^3).
	\end{align*}
	
	To deal with $S_{14}$, let $\xi_i  = \sum_j   F_{ji}$. Recall that $\delta_i = \sum_j Q_{ji}$. Then,
	\begin{align*}
		\max_i \xi_i^2 \lesssim \max_i \Big\{\big(\delta_i\big)^2,\big(\sum_j   E_{ji}Y_{Z_j=1}^\prime \big)^2\Big\} = o(N + N\Lambdan),
	\end{align*}
	which implies that $\max_i |\xi_i| =  o(N^{1/2} + N^{1/2}\Lambdan^{1/2})$.
	
	We make the following decomposition:
	\begin{align*}
		S_{14} &= N^{-2}\sum_{i\ne j\ne m}F_{ji}\xi_i F_{jm}F_{im}- \sum_{i\ne j\ne m}F_{ji}F_{ji}F_{jm}F_{im}-\sum_{i\ne j\ne m}F_{ji}F_{mi}F_{jm}F_{im} \\
		&=:  S_{141} + S_{142} + S_{143}.
	\end{align*}
	
	We handle these three terms separately. For $S_{141}$, we have
	\begin{align*}
		S_{141} &\lesssim N^{-2} \max_i|\xi_i| \sum_{i\ne j\ne m}E_{ji} E_{jm} = N^{-2}\max_i|\xi_i|\bs{1}^\top\bs{E}^\top\bs{E}\bs{1} \\
		&= N\rhon^{2}o(N^{1/2} + N^{1/2}\Lambdan^{1/2}). 
	\end{align*}
	By Young's inequality, for any $0< \delta < 2 $, 
	$$\Lambdan^\delta (N\rhon)^{2-\delta} = O\Big(\Lambdan^2+N^2\rhon^2\Big),$$
	then, we have
	$$N^{3/2}\Lambdan^{1/2}\rhon^{2} = \Lambdan^{1/2}\cdot N^{3/2}\rhon^{3/2} \cdot \rhon^{1/2} = O\Big(\Lambdan^2+N^2\rhon^2\Big)\cdot o(1) = o\Big(\Lambdan^2+N^2\rhon^2\Big).$$
	If follows that $$S_{141} = o\Big(\Lambdan^2+N^2\rhon^2\Big).$$
	
	For the other two terms, we have
	\begin{align*}
		\max\bigg\{|S_{142}|,|S_{143}|\bigg\}\lesssim N^{-2}\sum_{i\ne j\ne m}E_{ji}E_{jm}E_{im} = N^{-2}\sum_{i\ne j\ne m}E_{ji}E_{jm} = O(N\rhon^2).
	\end{align*}
	As a consequence, we have $$S_{14} = o\Big(\Lambdan^2+N^2\rhon^2\Big).$$ 
	
	Similarly, we have
	\begin{align*}
		S_{15} &=N^{-2}\sum_{i\ne j\ne m} F_{ji}^2\xi_i F_{mi}-N^{-2}\sum_{i\ne j\ne m} F_{ji}^2F_{ji}F_{mi}-N^{-2}\sum_{i\ne j\ne m} F_{ji}^2F_{mi}F_{mi} \\
		&\lesssim N^{-2}(\max_i |\xi_i|)\sum_{i\ne j\ne m} E_{ji}^2 E_{mi}+ N^{-2}\sum_{i\ne j\ne m} E_{ji}^2 E_{mi} =o\Big(\Lambdan^2+N^2\rhon^2\Big),\\
		S_{16}&=N^{-2}\sum_{i\ne j\ne k} F_{ji}F_{ki}F_{ij}\xi_j-N^{-2}\sum_{i\ne j\ne k} F_{ji}F_{ki}F_{ij}F_{ij}-N^{-2}\sum_{i\ne j\ne k} F_{ji}F_{ki}F_{ij}F_{kj} \\
		& \lesssim \max_i |\xi_i| N^{-2}\sum_{i\ne j\ne k} E_{ji}E_{ki}E_{ij} +  N^{-2}\sum_{i\ne j\ne k} E_{ji}E_{ki}E_{ij} = o\Big(\Lambdan^2+N^2\rhon^2\Big). 
	\end{align*}
	
	For $S_{17}$, we make the following decomposition:
	\begin{align*}
		S_{17} & =N^{-2}\sum_{i\ne j\ne k\ne m} F_{ji}F_{ki} F_{mi}\xi_i-3N^{-2}\sum_{i\ne j\ne k\ne m} F_{ji}F_{ki} F_{mi}F_{ji}\\
		& =:  S_{171} - 3S_{172}  + \text{remainder terms},
	\end{align*}
	where 
	\[
	S_{171} = N^{-2}\sum_{i\ne j\ne k\ne m} F_{ji}F_{ki} F_{mi}\xi_i,\quad S_{172} = N^{-2}\sum_{i\ne j\ne m} F_{ji}\xi_i F_{mi}F_{ji},
	\]
	and the remainder terms are bounded by $N^{-2}\sum_{i\ne j\ne m} E_{ji} E_{mi}\lesssim N\rhon^2 = o(N^2\rhon^2)$. 
	
	For $S_{172}$, we have
	\begin{align*}
		S_{172} \lesssim N^{-2}\max_i|\xi_i|\sum_{i\ne j\ne m} E_{mi}E_{ji} =o\Big(\Lambdan^2+N^2\rhon^2\Big).
	\end{align*}
	
	For $S_{171}$, we have
	\begin{align*}
		&S_{171} = N^{-2}\sum_{i\ne j\ne k} F_{ji}F_{ki} \xi_i^2-2N^{-2}\sum_{i\ne j\ne k} F_{ji}F_{ki} F_{ji}\xi_i\\
		&= N^{-2}\sum_{i\ne j\ne k} F_{ji}F_{ki} \xi_i^2 - 2S_{172}.
	\end{align*}
	
	To deal with the term $N^{-2}\sum_{i\ne j\ne k} F_{ji}F_{ki} \xi_i^2$, we have
	\begin{align*}
		&N^{-2}\sum_{i\ne j \ne k} F_{ji}F_{ki} \xi_i^2 = N^{-2}\sum_{i\ne j} F_{ji}\xi_i^3-N^{-2}\sum_{i\ne j} F_{ji}F_{ji} \xi_i^2 = N^{-2}\sum_i  \xi_i^4-N^{-2}\sum_{i\ne j} F_{ji}F_{ji} \xi_i^2.
	\end{align*}
	Note that
	\begin{align*}
		N^{-2}\sum_{i\ne j} F_{ji}F_{ji} \xi_i^2  \lesssim & N^{-2}\sum_{i\ne j} E_{ji} \xi_i^2 \lesssim (\max_i M_i) N^{-2}\sum_i    \xi_i^2  \\
		= & o(N^{1/2}\rhon+N^{1/2} \Lambdan \rhon)=   o\Big(\Lambdan^2+N^2\rhon^2\Big),\\
		N^{-2}\sum_i  \xi_i^4 = & o\Big(\Lambdan^2+N^2\rhon^2\Big).
	\end{align*}
	It follows that $S_{171} = o(\Lambdan^2+N^2\rhon^2)$. Therefore, $$S_{17} =  o\Big(\Lambdan^2+N^2\rhon^2\Big).$$
	
	For $S_{18}$, we have
	\begin{align*}
		S_{18} =& \sum_{i\ne j\ne k\ne m} N^{-2} F_{ji}F_{ki}F_{mj}\xi_{j}- \sum_{i\ne j\ne k\ne m} N^{-2} F_{ji}F_{ki}F_{mj}F_{mj} - \\
		& 
		\sum_{i\ne j\ne k\ne m} N^{-2} F_{ji}F_{ki}F_{mj}F_{kj}-\sum_{i\ne j\ne k\ne m} N^{-2} F_{ji}F_{ki}F_{mj}F_{ij},
	\end{align*}
	where the last three terms are bounded, up to some constant, by 
	\begin{align*}
		\sum_{i\ne j\ne k\ne m} N^{-2} E_{ji}E_{ki} E_{mj}= N^{-2} \bs{1}^\top\bs{E}\bs{E}\bs{E}^\top \bs{1} \lesssim N^2\rhon^3 = o\Big(\Lambdan^2+N^2\rhon^2\Big).
	\end{align*}
	For the first term, we have
	\begin{align*}
		\sum_{i\ne j\ne k\ne m} N^{-2} F_{ji}F_{ki}F_{mj}\xi_{j} =& \sum_{i\ne j\ne k} N^{-2} F_{ji}F_{ki}\xi_{j}^2-\sum_{i\ne j\ne k} N^{-2} F_{ji}F_{ki}F_{kj}\xi_{j} - \\
		&\sum_{i\ne j\ne k} N^{-2} F_{ji}F_{ki}F_{ij}\xi_{j},
	\end{align*}
	where the last two terms are bounded, up to some constant, by 
	\[
	\max_i |\xi_i| \sum_{i\ne j\ne k} N^{-2} E_{ji}E_{ki} =o\Big(\Lambdan^2+N^2\rhon^2\Big).
	\]
	Furthermore, let $(|\xi_i|)_{i} = (|\xi_1|,\ldots, |\xi_N|)$, and we have 
	\begin{align*}
		\sum_{i\ne j\ne k} N^{-2} F_{ji}F_{ki}\xi_{j}^2 = \sum_{i\ne j} N^{-2} F_{ji}\xi_i\xi_{j}^2-\sum_{i\ne j} N^{-2} F_{ji}F_{ji}\xi_{j}^2,
	\end{align*}
	where  
	\begin{align*}
		\sum_{i\ne j} N^{-2} F_{ji}F_{ji}\xi_{j}^2 &\lesssim \sum_{i\ne j} N^{-2} E_{ji}\xi_{j}^2 \leq \max_i \xi_{i}^2 \sum_{i\ne j} N^{-2} E_{ji} =\rho_N \max_i \xi_{i}^2 \\
		& =  \rho_N o(N + N\Lambdan) =o\Big(\Lambdan^2+N^2\rhon^2\Big),\\
		\sum_{i\ne j} N^{-2} F_{ji}\xi_i\xi_{j}^2 & \lesssim \max_j |\xi_j| \sum_{i\ne j} N^{-2} E_{ji}|\xi_{j}||\xi_i| \lesssim  N^{-2}(\max_i |\xi_i|) (|\xi_i|)_i^\top \bs{E} (|\xi_i|)_i \\
		& \lesssim \max_i |\xi_i|  \rhon N^{-1} \sum_i   \xi_{i}^2 \lesssim o(N^{1/2}\rhon+N^{1/2}\rhon\Lambdan^{3/2}) \\
		&= o\Big(\Lambdan^2+N^2\rhon^2\Big).
	\end{align*}
	As a consequence, we have
	\[
	S_{18} = o\Big(\Lambdan^2+N^2\rhon^2\Big).
	\]
	
	Using the same technique, we have
	\begin{align*}
		S_{19} = N^{-2}\sum_{i\ne k\ne l} \xi_i F_{ki} F_{kl}\xi_l + \text{remainder terms},
	\end{align*}
	where the remainder terms are $o(\Lambdan^2+N^2\rhon^2)$. 
	
	As
	\begin{align*}
		N^{-2}\sum_{i\ne k\ne l} \xi_i F_{ki} F_{kl}\xi_l& \lesssim N^{-2}\sum_{i\ne k\ne l} |\xi_i| E_{ki} E_{kl}|\xi_l| \lesssim N^{-2} (|\xi_i|)_i^\top \bs{E}^\top \bs{E} (|\xi_i|)_i \\
		& \lesssim  \rhon^2 \sum_i   \xi_i^2 = o\Big(\Lambdan^2+N^2\rhon^2\Big),
	\end{align*}
	we have $$S_{19} = o\Big(\Lambdan^2+N^2\rhon^2\Big).$$
	It follows that $$\Var(S_1) = o\Big(\Lambdan^2+N^2\rhon^2\Big).$$

	\textbf{Step 2}: For $S_2$, we write
	\begin{align*}
		S_2 =& 2N^{-1}\sum_{i\ne j \ne j^\prime} r_0 F_{ji}E_{j^\prime i}  \tilde{\gamma}^\prime_{j^\prime j}Z_jZ_{j^\prime} Z_i + N^{-1}\sum_{i\ne j \ne j^\prime}r_0^2 E_{ji}E_{j^\prime i}\tilde{\gamma}^\prime_{jj^\prime} \tilde{\gamma}^\prime_{j^\prime j} Z_jZ_{j^\prime} Z_i\\
		=:&2S_{21} + S_{22}.
	\end{align*}
	Define $ g(\bs{Z}) = g(Z_1,\ldots,Z_n) =: N^{-1} \sum_{i,j,j^\prime} F_{ji}E_{j^\prime i}  \tilde{\gamma}^\prime_{j^\prime j}Z_jZ_{j^\prime} Z_i $. Let $\bs{z} = (z_1,\ldots,z_k=1,\ldots,z_n)$, $\bs{z}^\prime = (z_1,\ldots,z_k=0,\ldots,z_n)$ and $\bs{z}_{(-k)} =(z_1,\ldots,z_{k-1},z_{k+1}\ldots,z_n) $. Then, we have
	\begin{align*}
		&\max_{\bs{z}_{(-k)}\in\{0,1\}^{N-1}} |g(\bs{z})-g(\bs{z}^\prime)| \\
		\lesssim &  N^{-1} \sum_{j, j^\prime} |F_{jk}E_{j^\prime k}\tilde{\gamma}^\prime_{j^\prime j}| + N^{-1} \sum_{i, j^\prime} |F_{ki}E_{j^\prime i}\tilde{\gamma}^\prime_{j^\prime k}| + N^{-1} \sum_{i, j} |F_{ji}E_{ki}  \tilde{\gamma}^\prime_{k j}| \\
		\lesssim & N^{-1}\sum_{j, j^\prime} E_{j^\prime k}Q_{j^\prime j} + N^{-1} \sum_{i, j^\prime} E_{j^\prime i}Q_{j^\prime k} + N^{-1} \sum_{i, j} E_{ji}  Q_{kj}.
	\end{align*}
	
	By Assumption~\ref{a:opnorm-EE^T}, we have
	\begin{align*}
		&N^{-2} \sum_k \Big(\sum_{j, j^\prime} E_{j^\prime k}Q_{j^\prime j} \Big)^2 =  N^{-2}\bs{1}^\top\bs{Q}^\top\bs{E}\bs{E}^\top\bs{Q}\bs{1} = O(N\rhon^2),\\
		&N^{-2} \sum_k \Big(\sum_{i, j^\prime} E_{j^\prime i}Q_{j^\prime k} \Big)^2 =  N^{-2}\bs{1}^\top\bs{E}^\top\bs{Q}\bs{Q}^\top\bs{E}\bs{1} = O(N\rhon^2),\\
		&N^{-2} \sum_k \Big(\sum_{i, j^\prime} E_{ji}  Q_{kj} \Big)^2 =  N^{-2}\bs{1}^\top\bs{E}^\top\bs{Q}^\top\bs{Q}\bs{E}\bs{1} = O(N\rhon^2).
	\end{align*}
	Then, by Lemma~\ref{lem:bound-for-variance-efron-stein}, we have
	\[
	\Var\big(S_{21} \big) = O(N\rhon^2) = o((N\rhon)^2).
	\]
	Using $|E_{ji}E_{j^\prime i}\tilde{\gamma}^\prime_{jj^\prime} \tilde{\gamma}^\prime_{j^\prime j}| \lesssim E_{j^\prime i}E_{ji}Q_{j^\prime j}$, we can prove in a similar way that
	\begin{align*}
		\Var\big(S_{22}\big) = o((N\rhon)^2).
	\end{align*}
	It follows that
	\[
	\Var(S_2) = o((N\rhon)^2) = o(\Lambdan^2 + N^2\rhon^2).
	\]

	\textbf{Step 3}: For $S_{3}$, we have
	\begin{align*}
		\Var(S_3) = N^{-2}\sum_{i_1\ne j_1\ne k_1 \ne l_1}\sum_{i_2\ne j_2\ne k_2 \ne l_2} & F_{j_1i_1}E_{k_1i_1}\tilde{\gamma}^\prime_{k_1l_1}F_{j_2i_2}E_{k_2i_2}\tilde{\gamma}^\prime_{k_2l_2}\\
		&\Cov(Z_{i_1} Z_{j_1}Z_{k_1} \tilde{Z}_{l_1},Z_{i_2} Z_{j_2}Z_{k_2} \tilde{Z}_{l_2}).
	\end{align*}
	
	We consider only the nonzero terms in the above summation, i.e., the terms correspond to $\Cov(Z_{i_1} Z_{j_1}Z_{k_1} \tilde{Z}_{l_1},Z_{i_2} Z_{j_2}Z_{k_2} \tilde{Z}_{l_2}) \ne 0.$ For these terms, we must have $l_1\in\{i_2,j_2,k_2,l_2\}$ and $l_2\in\{i_1,j_1,k_1,l_1\}$. We group these terms into $4$ categories according to $|\{i_1, j_1, k_1, l_1\}\cup \{i_2, j_2, k_2, l_2\}| = q$, $4 \leq q\leq 7$. We handle these $4$ categories separately.

	\textbf{Case 1}: $|\{i_1, j_1, k_1, l_1\}\cup \{i_2, j_2, k_2, l_2\}| = 7$.
	There is only one term under this category:
	\[
	N^{-2}\sum_{i_1\ne j_1\ne k_1 \ne l \ne i_2\ne j_2\ne k_2 }F_{j_1i_1}E_{k_1i_1}\tilde{\gamma}^\prime_{k_1l}F_{j_2i_2}E_{k_2i_2}\tilde{\gamma}^\prime_{k_2l} = S_{31} + \text{remainder terms},
	\]
	where
	\[
	S_{31} = N^{-2}\sum_{i_1\ne k_1 \ne l \ne i_2\ne j_2\ne k_2 }\xi_{i_1}E_{k_1i_1}\tilde{\gamma}^\prime_{k_1l}F_{j_2i_2}E_{k_2i_2}\tilde{\gamma}^\prime_{k_2l},
	\]
	and the remainder terms are bounded up to a constant by
	\begin{align*}
		& N^{-2}\sum_{i_1\ne k_1 \ne l \ne i_2\ne j_2\ne k_2 }E_{k_1i_1}Q_{k_1l}E_{j_2i_2}E_{k_2i_2}Q_{k_2l} \\
		\leq & N^{-2}\bs{1}^\top \bs{E}^\top\bs{Q}\bs{Q}^\top\bs{E}\bs{E}^\top \bs{1} = O(N^2\rhon^3) = o(N^2\rhon^2).
	\end{align*}
	Moreover, we have
	\begin{align*}
		S_{31} =  S_{311} +\text{remainder terms}
	\end{align*}
	where 
	\[
	S_{311} = N^{-2}\sum_{i_1\ne k_1 \ne l \ne i_2\ne k_2 }\xi_{i_1}E_{k_1i_1}\tilde{\gamma}^\prime_{k_1l}\xi_{i_2}E_{k_2i_2}\tilde{\gamma}^\prime_{k_2l},
	\]
	and the remainder terms are bounded up to a constant by
	\begin{align*}
		&\max_i |\xi_{i}| N^{-2}\sum_{i_1\ne k_1 \ne l \ne i_2\ne k_2 }E_{k_1i_1}Q_{k_1l}E_{k_2i_2}Q_{k_2l}\\
		\leq &   \max_i |\xi_{i}| N^{-2} \bs{1}^\top \bs{E}^\top \bs{Q}\bs{Q}^\top \bs{E} \bs{1} 
		= \max_i |\xi_{i}| N\rhon^2 = o\Big(\Lambdan^2+N^2\rhon^2\Big).
	\end{align*}
	
	For $S_{311}$, we have
	\begin{align*}
		S_{311} \lesssim & N^{-2}\sum_{i_1\ne k_1 \ne l \ne i_2\ne k_2 }|\xi_{i_1}|E_{k_1i_1}Q_{k_1l}|\xi_{i_2}|E_{k_2i_2}Q_{k_2l} \\
		\lesssim & N^{-2} (|\xi_i|)_i^\top \bs{E}^\top \bs{Q}\bs{Q}^\top \bs{E} (|\xi_i|)_i \lesssim N^{-2}\sum_i   \xi_i^2 (N\rhon)^2 = o\Big(\Lambdan^2+N^2\rhon^2\Big).
	\end{align*}
	As a consequence, we have
	\[
	N^{-2}\sum_{i_1\ne j_1\ne k_1 \ne l \ne i_2\ne j_2\ne k_2 }F_{j_1i_1}E_{k_1i_1}\tilde{\gamma}^\prime_{k_1l}F_{j_2i_2}E_{k_2i_2}\tilde{\gamma}^\prime_{k_2l} = o\Big(\Lambdan^2+N^2\rhon^2\Big).
	\]

	\textbf{Case 2:} $|\{i_1, j_1, k_1, l_1\}\cup \{i_2, j_2, k_2, l_2\}| = 6$. 
	
	\textbf{Case 2.1:} $\{k_1,l_1\}=\{k_2,l_2\}$.  This includes two cases: $(k_1,l_1)=(k_2,l_2)$ and $(k_1,l_1)=(l_2,k_2)$. 
	
	For $(k_1,l_1)=(k_2,l_2)$, the summation of the terms under this case is equal to
	\[
	S_{321} = N^{-2}\sum_{i_1\ne j_1\ne k_1 \ne l_1 \ne i_2\ne j_2}F_{j_1i_1}E_{k_1i_1}\tilde{\gamma}^\prime_{k_1l_1}F_{j_2i_2}E_{k_1i_2}\tilde{\gamma}^\prime_{k_1l_1}  = S_{3211}  + \text{remainder terms},
	\]
		where 
		\[
		S_{3211} = N^{-2}\sum_{i_1\ne k_1 \ne l_1 \ne i_2\ne j_2} \xi_{i_1} E_{k_1i_1}\tilde{\gamma}^\prime_{k_1l_1}F_{j_2i_2}E_{k_1i_2}\tilde{\gamma}^\prime_{k_1l_1},
		\]
		and the remainder terms are considered in the case of $|\{i_1, j_1, k_1, l_1\}\cup \{i_2, j_2, k_2, l_2\}| = 5$ in the following proof and have a magnitude of $o(\Lambdan^2 + N^2\rhon^2)$.
		
		For $S_{3211}$, we have
		\begin{align*}
			& S_{3211}-N^{-2}\sum_{i_1\ne k_1 \ne l_1 \ne i_2}\xi_{i_1} E_{k_1i_1}\tilde{\gamma}^\prime_{k_1l_1}\xi_{i_2}E_{k_1i_2}\tilde{\gamma}^\prime_{k_1l_1}  \\
			\lesssim &N^{-2}\sum_{i_1\ne k_1 \ne l_1 \ne i_2}|\xi_{i_1}| E_{k_1i_1}E_{k_1i_2}Q_{k_1l_1} + N^{-2}\sum_{i_1\ne k_1 \ne l_1 \ne i_2\ne j_2}E_{k_1i_1}Q_{k_1l_1}E_{j_2i_2}E_{k_1i_2}.
		\end{align*}
		As a consequence, we have
		\begin{align*}
			S_{3211} \lesssim & N^{-2}\sum_{i_1\ne k_1 \ne l_1 \ne i_2}|\xi_{i_1}| E_{k_1i_1}Q_{k_1l_1}|\xi_{i_2}|E_{k_1i_2}+ N^{-2}\sum_{i_1\ne k_1 \ne i_2}|\xi_{i_1}| E_{k_1i_1}E_{k_1i_2} + \\
			& N^{-2}\sum_{i_1\ne k_1 \ne l_1 \ne i_2\ne j_2}E_{k_1i_1}E_{j_2i_2}E_{k_1i_2}\\
			\lesssim & N^{-2}\sum_{i_1\ne k_1\ne i_2}|\xi_{i_1}| E_{k_1i_1}|\xi_{i_2}|E_{k_1i_2}+ o\Big(\Lambdan^2+N^2\rhon^2\Big)\\
			= &N^{-2} (|\xi_i|)_i^\top \bs{E}^\top \bs{E} (|\xi_i|)_i + o\Big(\Lambdan^2+N^2\rhon^2\Big) = o\Big(\Lambdan^2+N^2\rhon^2\Big).
		\end{align*}
		
		If $(k_1,l_1)=(l_2,k_2)$, we can verify in a similar way that
		\begin{align*}
			&N^{-2} \sum_{i_1\ne j_1\ne k_1 \ne l_1 \ne i_2\ne j_2} F_{j_1i_1}E_{k_1i_1}\tilde{\gamma}^\prime_{k_1l_1}F_{j_2i_2}E_{l_1i_2}\tilde{\gamma}^\prime_{l_1k_1}\\
			\lesssim & N^{-2} \sum_{i_1\ne k_1 \ne l_1 \ne i_2\ne j_2} |\xi_{i_1}|E_{k_1i_1}Q_{k_1l_1}|\xi_{i_2}|E_{l_1i_2}Q_{l_1k_1} + o\Big(\Lambdan^2+N^2\rhon^2\Big) \\
			= & o\Big(\Lambdan^2+N^2\rhon^2\Big),
		\end{align*}
		where the last equality is due to
		\begin{align*}
			& N^{-2} \sum_{i_1\ne k_1 \ne l_1 \ne i_2\ne j_2} |\xi_{i_1}|E_{k_1i_1}Q_{k_1l_1}|\xi_{i_2}|E_{l_1i_2}Q_{l_1k_1}
			\\ \lesssim &   N^{-2}(|\xi_i|)_i^\top \bs{E}^\top\bs{Q}\bs{E} (|\xi_i|)_i^\top  \lesssim  \rhon^2 \sum_i   \xi_i^2 =o\Big(\Lambdan^2+N^2\rhon^2\Big).
		\end{align*}
		
		\textbf{Case 2.2}: $\{l_1,j_1\}= \{l_2,j_2\}$, which includes two cases: $(l_1,j_1)=(l_2,j_2)$ and $(l_1,j_1)=(j_2,l_2)$. 
		
		For $(l_1,j_1)=(l_2,j_2)$, the summation under this case is equal to
		\begin{align*}
			&N^{-2}\sum_{i_1 \ne i_2 \ne j \ne l \ne k_1  \ne k_2 }F_{ji_1}E_{k_1 i_1}\tilde{\gamma}^\prime_{k_1l}F_{ji_2}E_{k_2 i_2}\tilde{\gamma}^\prime_{k_2l} \\
			\lesssim & N^{-2}\sum_{i_1 \ne i_2 \ne j \ne l \ne k_1  \ne k_2 }E_{ji_1}E_{k_1 i_1}Q_{k_1l}E_{ji_2}E_{k_2 i_2}Q_{k_2l}\\
			\leq & N^{-2}\sum_{i_1 \ne i_2 \ne j \ne l \ne k_1  \ne k_2 }E_{k_1 i_1}Q_{k_1l}E_{ji_2}E_{k_2 i_2}Q_{k_2l} \\
			\leq &N^{-2}\bs{1}^\top\bs{E}\bs{E}^\top\bs{Q}\bs{Q}^\top\bs{E} \bs{1}  =O(N^2\rhon^3) = o(N^2\rhon^2).
		\end{align*}
		
		The proof for the case of $(l_1,j_1)=(j_2,l_2)$ is similar, so we omit it.
		
		\textbf{Case 2.3} : $\{l_1,i_1\}=\{l_2,i_2\}$. 
		
		For $(l_1,i_1) = (l_2,i_2)$, the conclusion follows from 
		\begin{align*}
			& N^{-2}\sum_{i \ne j_1 \ne k_1 \ne l \ne j_2 \ne k_2 } F_{j_1i}E_{k_1i}\tilde{\gamma}^\prime_{k_1l}F_{j_2i}E_{k_2i}\tilde{\gamma}^\prime_{k_2l} \\
			=& N^{-2}\sum_{i\ne k_1 \ne l\ne k_2 } \xi_{i}E_{k_1i}\tilde{\gamma}^\prime_{k_1l}\xi_i E_{k_2i}\tilde{\gamma}^\prime_{k_2l} +  o\Big(\Lambdan^2+N^2\rhon^2\Big),
		\end{align*}
		and
		\begin{align*}
			&N^{-2}\sum_{i  \ne k_1 \ne l  \ne k_2 } \xi_{i}E_{k_1i}\tilde{\gamma}^\prime_{k_1l}\xi_i E_{k_2i}\tilde{\gamma}^\prime_{k_2l} \\
			\lesssim & \max_{i} \xi_i^2 N^{-2}\sum_{i  \ne k_1 \ne l \ne k_2 } Q_{k_1l} E_{k_2i}Q_{k_2l}\\
			\lesssim& \max_{i} \xi_i^2 N^{-2} \bs{1}\bs{Q}\bs{Q}^\top\bs{E}\bs{1} = o((N+N\Lambdan) N^{-2} N^2\rhon) \\
			= & o((N+N\Lambdan)\rhon) = o\Big(\Lambdan^2+N^2\rhon^2\Big).
		\end{align*}
		
		Similarly, we can verify the case of $(l_1,i_1) = (i_2,l_2)$.
		
		\textbf{Case 2.4} $\{l_1,k_1\}=\{l_2,i_2\}$.  
		
		For $(l_1,k_1)=(l_2,i_2)$,  the conclusion follows from
		\begin{align*}
			& N^{-2}\sum_{i_1 \ne j_1 \ne j_2 \ne l \ne k  \ne k_2 } F_{j_1i_1}E_{ki_1}\tilde{\gamma}^\prime_{kl}F_{j_2k}E_{k_2k}\tilde{\gamma}^\prime_{k_2l}\\ 
			\lesssim & N^{-2}\sum_{i_1\ne l\ne k\ne k_2 } |\xi_{i_1}|E_{ki_1}Q_{kl}|\xi_k| E_{k_2k}Q_{k_2l} +o\Big(\Lambdan^2+N^2\rhon^2\Big),
		\end{align*}
		and
		\begin{align*}
			&N^{-2}\sum_{i_1\ne l\ne k\ne k_2 } |\xi_{i_1}|E_{ki_1}Q_{kl}|\xi_k| E_{k_2k}Q_{k_2l}\\
			\lesssim & \max_i|\xi_{i}|^2 N^{-2}\sum_{i_1\ne l\ne k\ne k_2 } E_{ki_1}Q_{kl}Q_{k_2l} \\ 
			\lesssim & N^{-2}\max_i|\xi_{i}|^2 \bs{1}^\top\bs{E}^\top\bs{Q}\bs{Q}^\top \bs{1}
			=  \max_i|\xi_{i}|^2 (N\rhon) N^{-1} \\
			= & o\Big(\Lambdan^2+N^2\rhon^2\Big).
		\end{align*}

		For $(l_1,k_1)=(i_2,l_2)$, recall that $\delta_i = \sum_j Q_{ji}$ and by \Cref{a:Lindberg-condition-unadj},
		\[
		\max_i|\delta_i| = o(N^{1/2}).
		\]
		Then,
		\begin{align*}
			&N^{-2}\sum_{i_1\ne j_1 \ne k_1\ne l_1 \ne j_2 \ne k_2} F_{j_1i_1}E_{k_1i_1}\tilde{\gamma}^\prime_{k_1l_1}F_{j_2l_1}E_{k_2l_1}\tilde{\gamma}^\prime_{k_2k_1}\\
			\lesssim & N^{-2}\sum_{i_1 \ne k_1\ne l_1 \ne k_2} |\xi_{i_1}|E_{k_1i_1}Q_{k_1l_1}|\xi_{l_1} | E_{k_2l_1}Q_{k_2k_1} + \\
			& N^{-2}\sum_{i_1 \ne k_1\ne l_1  \ne k_2} |\xi_{i_1}|E_{k_1i_1}Q_{k_1l_1}E_{k_2l_1}Q_{k_2k_1} + o\Big(\Lambdan^2+N^2\rhon^2\Big) \\
			= & o\Big(\Lambdan^2+N^2\rhon^2\Big),
		\end{align*}
		where the last equality is due to 
		\begin{align*}
			&N^{-2}\sum_{i_1 \ne k_1\ne l_1 \ne k_2} |\xi_{i_1}|E_{k_1i_1}Q_{k_1l_1}|\xi_{l_1} | E_{k_2l_1}Q_{k_2k_1} \\
			\lesssim & N^{-2}\sum_{i_1 \ne k_1\ne l_1 \ne k_2} |\xi_{i_1}|E_{k_1i_1}Q_{k_1l_1}|\xi_{l_1} | Q_{k_2k_1} \\
			\lesssim & N^{-2} \max_i \delta_i \sum_{i_1 \ne k_1\ne l_1} |\xi_{i_1}|E_{k_1i_1}Q_{k_1l_1}|\xi_{l_1} | \lesssim  o(N^{-3/2}) (|\xi_i|)_i^\top \bs{E}\bs{Q} (|\xi_i|)_i \\
			= & o\Big(\Lambdan^2+N^2\rhon^2\Big),
		\end{align*}   
		and
		\begin{align*}
			&N^{-2}\sum_{i_1 \ne k_1\ne l_1  \ne k_2} |\xi_{i_1}|E_{k_1i_1}Q_{k_1l_1}E_{k_2l_1}Q_{k_2k_1} \\
			\lesssim & N^{-2}\max_i |\xi_{i}|\sum_{i_1 \ne k_1\ne l_1  \ne k_2} E_{k_1i_1}Q_{k_1l_1}E_{k_2l_1}Q_{k_2k_1} \\
			\lesssim & N^{-2}(\max_i N_i) \max_i |\xi_{i}|\sum_{ k_1\ne l_1  \ne k_2} Q_{k_1l_1}E_{k_2l_1}Q_{k_2k_1} \\
			\leq & N^{-2}(\max_i N_i) \max_i |\xi_{i}|\sum_{ k_1\ne l_1  \ne k_2} Q_{k_1l_1}Q_{k_2k_1}\\
			= & N^{-2}o(N^{3/2}\rhon)o(N^{1/2}\Lambdan^{1/2}+N^{1/2})O(N) 
			=o\Big(\Lambdan^2+N^2\rhon^2\Big),
		\end{align*}
		where the last equality is due to Young's inequality and  $N\rhon \geq C_{+}$.
		
		\textbf{Case 2.5}: $\{l_1,k_1\}=\{l_2,j_2\}$. 
		
		For $(l_1,k_1)=(j_2,l_2)$ We have
		\begin{align*}
			&N^{-2}\sum_{i_1\ne j_1 \ne k_1\ne l_1 \ne i_2 \ne k_2} F_{j_1i_1}E_{k_1i_1}\tilde{\gamma}^\prime_{k_1l_1}F_{l_1i_2}E_{k_2i_2}\tilde{\gamma}^\prime_{k_2k_1}  \\
			\lesssim  & N^{-2}\sum_{i_1\ne j_1 \ne k_1\ne l_1 \ne i_2 \ne k_2} |\xi_{i_1}|E_{k_1i_1}Q_{k_1l_1}E_{l_1i_2}E_{k_2i_2}Q_{k_2k_1} +o\Big(\Lambdan^2+N^2\rhon^2\Big).
		\end{align*}
		
		The conclusion follows from
		\begin{align*}
			&  N^{-2}\sum_{i_1\ne j_1 \ne k_1\ne l_1 \ne i_2 \ne k_2} |\xi_{i_1}|E_{k_1i_1}Q_{k_1l_1}E_{l_1i_2}E_{k_2i_2}Q_{k_2k_1} \\
			\lesssim & N^{-2}\sum_{i_1\ne j_1 \ne k_1\ne l_1 \ne i_2 \ne k_2} |\xi_{i_1}|E_{k_1i_1}Q_{k_1l_1}E_{l_1i_2}Q_{k_2k_1}\\
			\lesssim &   \max_i|\delta_i| N^{-2}\sum_{i_1\ne j_1 \ne k_1\ne l_1 \ne i_2 \ne k_2} |\xi_{i_1}|E_{k_1i_1}Q_{k_1l_1}E_{l_1i_2} 
			=   \max_i|\delta_i| N^{-2}(|\xi_i|)_i \bs{E}^\top\bs{Q}\bs{E}\bs{1} \\
			\lesssim &  \max_i|\delta_i| N^{-2}  N^{1/2}\|\bs{E}^\top\bs{Q}\bs{E}\|_{\oprtnorm}\| (|\xi_i|)_i \|_2 \\
			= & o(N\rhon^2\cdot(N^{1/2}+N^{1/2}\Lambdan^{1/2}))=o(N^2\rhon^2 + \Lambdan^2).
		\end{align*}
		
		Similarly, we can prove the case for $(l_1,k_1)=(l_2,j_2)$.
		
		\textbf{Case 2.6}: $\{l_1,i_1\}=\{l_2,j_2\}$. 
		
		For $(l_1,i_1)=(l_2,j_2)$, we have
		\begin{align*}
			&N^{-2}\sum_{i_1\ne j_1 \ne k_1\ne l_1 \ne i_2 \ne k_2} F_{j_1i_1}E_{k_1i_1}\tilde{\gamma}^\prime_{k_1l_1}F_{i_1i_2}E_{k_2i_2}\tilde{\gamma}^\prime_{k_2l_1} \\
			=&N^{-2}\sum_{i_1 \ne k_1\ne l_1 \ne i_2 \ne k_2} \xi_{i_1} E_{k_1i_1}\tilde{\gamma}^\prime_{k_1l_1}F_{i_1i_2}E_{k_2i_2}\tilde{\gamma}^\prime_{k_2l_1} + o(N^2\rhon^2 + \Lambdan^2).
		\end{align*}
		
		The conclusion follows from
		\begin{align*}
			&N^{-2}\sum_{i_1 \ne k_1\ne l_1 \ne i_2 \ne k_2} \xi_{i_1} E_{k_1i_1}\tilde{\gamma}^\prime_{k_1l_1}F_{i_1i_2}E_{k_2i_2}\tilde{\gamma}^\prime_{k_2l_1} \\
			\lesssim & N^{-2}\sum_{i_1 \ne k_1\ne l_1 \ne i_2 \ne k_2} |\xi_{i_1}| E_{k_1i_1}Q_{k_1l_1}E_{k_2i_2}Q_{k_2l_1} \\
			\leq & N^{-2}(|\xi_i|)_i^\top \bs{E}^\top \bs{Q}^\top\bs{Q} \bs{E}\bs{1} =O((N+N\Lambdan^{1/2})\rhon^2) = o(N^2\rhon^2 + \Lambdan^2).
		\end{align*}
		
		Similarly, we can prove the case for $(l_1,i_1)=(j_2,l_2)$.
		
		\textbf{Case 3:} $|\{i_1, j_1, k_1, l_1\}\cup \{i_2, j_2, k_2, l_2\}| = 5$. Let $k=\{i_2, j_2, k_2, l_2\}\backslash\{i_1, j_1, k_1, l_1\}$. We further divide Case $3$ into two cases. 
		
		\textbf{Case 3.1:} $\{k_1,l_1\}\ne \{k_2,l_2\}$. If $k = \{k_2,l_2\}\backslash \{k_1,l_1\}$, then we must have $k=k_2$, and therefore,
		\[ F_{j_1i_1}E_{k_1i_1}\tilde{\gamma}^\prime_{k_1l_1}F_{j_2i_2}E_{k_2i_2}\tilde{\gamma}^\prime_{k_2l_2} \lesssim E_{j_1i_1}E_{k_1i_1}Q_{k_1l_1}Q_{k_2l_2}.
		\]
		As $l_2\in \{i_1,j_1,k_1,l_1\}$, we have
		\begin{align*}
			&N^{-2}\sum_{i_1\ne j_1 \ne k_1 \ne l_1 \ne k_2} E_{j_1i_1}E_{k_1i_1}Q_{k_1l_1}Q_{k_2l_2} \\
			\lesssim & N^{-2}\max_i |\delta_i|\sum_{i_1\ne j_1 \ne k_1 \ne l_1} E_{j_1i_1}E_{k_1i_1}Q_{k_1l_1} \\
			\lesssim & N^{-2}\max_i |\delta_i| \bs{1}^\top\bs{E}\bs{E}^\top\bs{Q}\bs{1} = N^{-2}o(N^{1/2})O(N^3\rhon^2) = o(N^{3/2}\rhon^2).
		\end{align*}
		
		If $k\ne \{k_2,l_2\}\backslash \{k_1,l_1\}$, then $\{k_2,l_2\}\in \{i_1,j_1,k_1,l_1\}$. Recall that $\bar{\bs{E}} = \bs{E}+\bs{E}^\top$ and $\bar{\bs{Q}}= \bs{Q} + \bs{Q}^\top$. Since $\{k_2,l_2\}\ne \{k_1,l_1\}$, we have
		\begin{align*}
			F_{j_1i_1}E_{k_1i_1}\tilde{\gamma}^\prime_{k_1l_1}F_{j_2i_2}E_{k_2i_2}\tilde{\gamma}^\prime_{k_2l_2} \lesssim E_{j_1i_1}E_{k_1i_1}Q_{k_1l_1}Q_{k_2l_2}\bar{E}_{qk},~\text{for}~ q\in\{i_1,j_1,k_1,l_1\}.
		\end{align*}
		Therefore,
		\begin{align*}
			&N^{-2}\sum_{i_1\ne j_1 \ne k_1 \ne l_1 \ne k}E_{j_1i_1}E_{k_1i_1}Q_{k_1l_1}Q_{k_2l_2}\bar{E}_{qk} \\
			\lesssim & N^{-2}\max_i\{N_i,M_i\} \sum_{i_1\ne j_1 \ne k_1 \ne l_1}E_{j_1i_1}E_{k_1i_1}Q_{k_1l_1}Q_{k_2l_2}   \\
			\lesssim & N^{-2}o(N^{3/2}\rhon) \sum_{i_1\ne j_1 \ne k_1 \ne l_1} \bar{E}_{j_1i_1}\bar{E}_{k_1i_1}\bar{Q}_{k_1l_1}\bar{Q}_{k_2l_2}.
		\end{align*}
		
		We now bound $\sum_{i_1\ne j_1 \ne k_1 \ne l_1} \bar{E}_{j_1i_1}\bar{E}_{k_1i_1}\bar{Q}_{k_1l_1}\bar{Q}_{k_2l_2}$.  We consider five cases: $\{k_2,l_2\} = \{i_1,j_1\}$, $\{k_2,l_2\} = \{i_1,k_1\}$, $\{k_2,l_2\} = \{i_1,l_1\}$, $\{k_2,l_2\} = \{j_1,k_1\}$, $\{k_2,l_2\} = \{j_1,l_1\}$. Their corresponding magnitudes are as follows:
		\begin{align*}
			&\sum_{i_1\ne j_1 \ne k_1 \ne l_1} \bar{E}_{j_1i_1}\bar{E}_{k_1i_1}\bar{Q}_{k_1l_1}\bar{Q}_{i_1j_1} \lesssim \sum_{i_1\ne j_1 \ne k_1 \ne l_1} \bar{E}_{k_1i_1}\bar{Q}_{k_1l_1}\bar{Q}_{i_1j_1} \lesssim \bs{1}^\top \bar{\bs{Q}}\bar{\bs{E}}\bar{\bs{Q}} \bs{1} \lesssim N^2\rhon ,\\
			&\sum_{i_1\ne j_1 \ne k_1 \ne l_1} \bar{E}_{j_1i_1}\bar{E}_{k_1i_1}\bar{Q}_{k_1l_1}\bar{Q}_{i_1k_1} \lesssim \sum_{i_1\ne j_1 \ne k_1 \ne l_1} \bar{E}_{j_1i_1}\bar{Q}_{k_1l_1}\bar{Q}_{i_1k_1} \lesssim \bs{1}^\top \bar{\bs{E}}\bar{\bs{Q}}\bar{\bs{Q}} \bs{1} \lesssim N^2\rhon ,\\
			& \sum_{i_1\ne j_1 \ne k_1 \ne l_1} \bar{E}_{j_1i_1}\bar{E}_{k_1i_1}\bar{Q}_{k_1l_1}\bar{Q}_{i_1l_1} \lesssim \sum_{i_1\ne j_1 \ne k_1 \ne l_1} \bar{E}_{j_1i_1}\bar{Q}_{k_1l_1}\bar{Q}_{i_1l_1} \lesssim \bs{1}^\top \bar{\bs{E}}\bar{\bs{Q}}\bar{\bs{Q}} \bs{1} \lesssim N^2\rhon ,\\
			&\sum_{i_1\ne j_1 \ne k_1 \ne l_1} \bar{E}_{j_1i_1}\bar{E}_{k_1i_1}\bar{Q}_{k_1l_1}\bar{Q}_{j_1k_1} \lesssim \sum_{i_1\ne j_1 \ne k_1 \ne l_1} \bar{E}_{j_1i_1}\bar{Q}_{k_1l_1}\bar{Q}_{j_1k_1} \lesssim \bs{1}^\top \bar{\bs{E}}\bar{\bs{Q}}\bar{\bs{Q}} \bs{1} \lesssim N^2\rhon ,\\
			& \sum_{i_1\ne j_1 \ne k_1 \ne l_1} \bar{E}_{j_1i_1}\bar{E}_{k_1i_1}\bar{Q}_{k_1l_1}\bar{Q}_{j_1l_1} \lesssim \sum_{i_1\ne j_1 \ne k_1 \ne l_1} \bar{E}_{j_1i_1}\bar{Q}_{k_1l_1}\bar{Q}_{j_1l_1} \lesssim \bs{1}^\top \bar{\bs{E}}\bar{\bs{Q}}\bar{\bs{Q}} \bs{1} \lesssim N^2\rhon.
		\end{align*}
		Putting together the pieces, we have for  $k\ne \{k_2,l_2\}\backslash \{k_1,l_1\}$ and $\{k_1,l_1\}\ne \{k_2,l_2\}$ those summands are of magnitude $o(N^{3/2}\rhon^2)$.
		
		\textbf{Case 3.2:} $\{k_1,l_1\}=\{k_2,l_2\}$. There are two possibilities for $k$: $k=i_2$ and $k=j_2$.
		
		\textbf{Case 3.2.1:} $k=i_2$. In this case, $j_2 = i_1$ or $j_2 = j_1$. For $j_2=i_1$, we consider $(k_1,l_1)=(k_2,l_2)$ and $(k_1,l_1)=(l_2,k_2)$ separately.
		
		When $(k_1,l_1)=(k_2,l_2)$, we have
		\begin{align*}
			&N^{-2}\sum_{i_1\ne j_1 \ne k_1 \ne l_1 \ne i_2} F_{j_1i_1}E_{k_1i_1}\tilde{\gamma}^\prime_{k_1l_1}F_{i_1i_2}E_{k_2i_2}\tilde{\gamma}^\prime_{k_2l_2} \\
			\lesssim & N^{-2}\sum_{i_1\ne j_1 \ne k_1 \ne l_1 \ne i_2} E_{j_1i_1}Q_{k_1l_1}E_{i_1i_2}E_{k_1i_2} \lesssim N^{-2} \bs{1}^\top\bs{E}\bs{E}\bs{E}^\top\bs{Q}\bs{1} \lesssim N^2\rhon^3.
		\end{align*}
		
		When $(k_1,l_1)=(l_2,k_2)$, we have
		\begin{align*}
			&N^{-2}\sum_{i_1\ne j_1 \ne k_1 \ne l_1 \ne i_2} F_{j_1i_1}E_{k_1i_1}\tilde{\gamma}^\prime_{k_1l_1}F_{i_1i_2}E_{k_2i_2}\tilde{\gamma}^\prime_{k_2l_2} \\
			\lesssim & N^{-2}\sum_{i_1\ne j_1 \ne k_1 \ne l_1 \ne i_2} E_{j_1i_1}Q_{k_1l_1}E_{i_1i_2}E_{l_1i_2} \lesssim N^{-2} \bs{1}^\top\bs{E}\bs{E}\bs{E}^\top\bs{Q}^\top\bs{1} \lesssim N^2\rhon^3.
		\end{align*}

		For $j_2=j_1$, we have
		\begin{align*}
			&N^{-2}\sum_{i_1\ne j_1 \ne k_1 \ne l_1 \ne i_2} F_{j_1i_1}E_{k_1i_1}\tilde{\gamma}^\prime_{k_1l_1}F_{j_2i_2}E_{k_2i_2}\tilde{\gamma}^\prime_{k_2l_2} \\
			\lesssim & N^{-2}\sum_{i_1\ne j_1 \ne k_1 \ne l_1 \ne i_2} E_{j_1i_1}E_{k_1i_1}Q_{k_1l_1}E_{j_1i_2} \lesssim N^{-2}\bs{1}^\top\bs{E}^\top\bs{E}\bs{E}^\top\bs{Q}\bs{1} \lesssim N^2 \rhon^3.
		\end{align*}
		
		\textbf{Case 3.2.2:} $k=j_2$. In this case, $i_2=i_1$ or $i_2 = j_1$. For $i_2=i_1$, we have
		\begin{align*}
			&N^{-2}\sum_{i_1\ne j_1 \ne k_1 \ne l_1 \ne j_2} F_{j_1i_1}E_{k_1i_1}\tilde{\gamma}^\prime_{k_1l_1}F_{j_2i_2}E_{k_2i_2}\tilde{\gamma}^\prime_{k_2l_2} \\
			=& N^{-2}\sum_{i_1\ne j_1 \ne k_1 \ne l_1} F_{j_1i_1}E_{k_1i_1}\tilde{\gamma}^\prime_{k_1l_1}\xi_{i_1}E_{k_2i_1}\tilde{\gamma}^\prime_{k_2l_2} + \text{remainder terms},
		\end{align*}
		where the remainder terms are bounded by
		\begin{align*}
			N^{-2}\sum_{i_1\ne j_1 \ne k_1 \ne l_1} E_{j_1i_1}E_{k_1i_1}Q_{k_1l_1}E_{k_2i_1}Q_{k_2l_2} \lesssim N^{-2}\sum_{i_1\ne j_1 \ne k_1 \ne l_1} E_{j_1i_1}E_{k_1i_1}Q_{k_1l_1} \lesssim N\rhon^2.
		\end{align*}
		
		The conclusion follows from
		\begin{align*}
			& N^{-2}\sum_{i_1\ne j_1 \ne k_1 \ne l_1} F_{j_1i_1}E_{k_1i_1}\tilde{\gamma}^\prime_{k_1l_1}\xi_{i_1}E_{k_2i_1}\tilde{\gamma}^\prime_{k_2l_2} \\
			\lesssim & (\max_i|\xi_i|)N^{-2}\sum_{i_1\ne j_1 \ne k_1 \ne l_1} E_{j_1i_1}E_{k_1i_1}Q_{k_1l_1} = o(N^{3/2}\rhon^2 (\Lambdan^{1/2}+1)) = o\Big(\Lambdan^2+N^2\rhon^2\Big).
		\end{align*}
		
		The case of $k=j_2, \ i_2 = j_1$ is equivalent to $k=i_2$, $j_2=i_1$ by symmetry.
		
		\textbf{Case 4:} $|\{i_1, j_1, k_1, l_1\}\cup \{i_2, j_2, k_2, l_2\}| = 4$. The summation under this category is equal to 
		\begin{align*}
			& N^{-2}\sum_{i_1\ne j_1\ne k_1 \ne l_1} F_{j_1i_1}E_{k_1i_1}\tilde{\gamma}^\prime_{k_1l_1}F_{j_2i_2}E_{k_2i_2}\tilde{\gamma}^\prime_{k_2l_2} \\
			\lesssim & N^{-2}\sum_{i_1\ne j_1\ne k_1 \ne l_1} E_{j_1i_1}E_{k_1i_1} E_{k_1l_1}
			\lesssim  N^{-2} \bs{1}^\top \bs{E}\bs{E}^\top \bs{E} \bs{1} = O(N^2 \rhon^3) = o\Big(\Lambdan^2+N^2\rhon^2\Big).
		\end{align*}
		
		Combing Case 1 to Case 4, we have $$\Var(S_3) = o\Big(\Lambdan^2+N^2\rhon^2\Big).$$
		
		\textbf{Step 4}. For $S_4$, by \Cref{lem:bound-for-variance-efron-stein}, we have
		\begin{align*}
			\Var(S_4) \lesssim & N^{-2}\sum_{l} \Big(\sum_{ j\ne j^\prime \ne k} E_{jl}E_{j^\prime l}Q_{jj^\prime}Q_{j^\prime k} \Big)^2 + N^{-2}\sum_{l}\Big(\sum_{i\ne j^\prime \ne k} E_{li}E_{j^\prime i}Q_{lj^\prime}Q_{j^\prime k}\Big)^2+\\
			&N^{-2}\sum_{l}\Big(\sum_{i\ne j \ne k} E_{ji}E_{li}Q_{jl}Q_{l k}\Big)^2+ N^{-2}\sum_{l}\Big(\sum_{i\ne j\ne j^\prime }E_{ji}E_{j^\prime i}Q_{jj^\prime}Q_{j^\prime l}\Big)^2.
		\end{align*}
		Note that 
		\begin{align*}
			& N^{-2}\sum_{l}\Big(\sum_{ j\ne j^\prime \ne k} E_{jl}E_{j^\prime l}Q_{jj^\prime}Q_{j^\prime k}\Big)^2 \lesssim N^{-2}\sum_{l}\Big(\sum_{ j\ne j^\prime \ne k} E_{jl}Q_{jj^\prime}Q_{j^\prime k}\Big)^2  \\
			\lesssim  & N^{-2}\bs{1}^\top \bs{Q}^\top\bs{Q}^\top\bs{E}\bs{E}^\top\bs{Q}\bs{Q}\bs{1} \lesssim N\rhon^2,\\
			& N^{-2}\sum_{l}\Big(\sum_{i\ne j^\prime \ne k} E_{li}E_{j^\prime i}Q_{lj^\prime}Q_{j^\prime k}\Big)^2 \lesssim N^{-2}\sum_{l}\Big(\sum_{i\ne j^\prime \ne k} E_{li}Q_{lj^\prime}Q_{j^\prime k}\Big)^2\\
			\lesssim &\max_i N_i^2 N^{-2}\sum_{l}\Big(\sum_{j^\prime \ne k} Q_{lj^\prime}Q_{j^\prime k}\Big)^2 \lesssim o(N^3\rhon^2)N^{-2}\bs{1}^\top \bs{Q}^\top\bs{Q}^\top\bs{Q}\bs{Q}\bs{1} = o(N^2\rhon^2),\\
			& N^{-2}\sum_{l}\Big(\sum_{i\ne j\ne k} E_{ji}E_{li}Q_{jl}Q_{l k}\Big)^2 \lesssim N^{-2}\sum_{l}\Big(\sum_{i\ne j\ne k}E_{ji}Q_{jl}Q_{l k}\Big)^2 \\
			\lesssim & N^{-2}\sum_{l}\Big(\sum_{i\ne j}E_{ji}Q_{jl}\Big)^2 \lesssim N^{-2}\bs{1}^\top \bs{E}^\top\bs{Q}\bs{Q}^\top\bs{E}\bs{1}  = o(N\rhon^2)\\
			&N^{-2}\sum_{l}\Big(\sum_{i\ne j\ne j^\prime }E_{ji}E_{j^\prime i}Q_{jj^\prime}Q_{j^\prime l}\Big)^2 \lesssim N^{-2}\sum_{l}\Big(\sum_{i\ne j\ne j^\prime }E_{ji}Q_{jj^\prime}Q_{j^\prime l}\Big)^2 \\
			\lesssim & N^{-2}\bs{1}^\top \bs{E}^\top\bs{Q}\bs{Q}\bs{Q}^\top\bs{Q}^\top\bs{E}\bs{1} \lesssim N\rhon^2.
		\end{align*}
		Therefore, $$\Var(S_4) = o\Big(\Lambdan^2+N^2\rhon^2\Big).$$
		
		\textbf{Step 5}. For $S_5$, by \Cref{lem:bound-for-variance-efron-stein}, we have 
		\begin{align*}
			\Var(S_5) = & N^{-2}\sum_{l}\Big(\sum_{i\ne j\ne k} E_{ji}E_{l i}Q_{jk}Q_{l k}\Big)^2 +
			N^{-2}\sum_{l}\Big(\sum_{ j\ne j^\prime \ne k} E_{jl}E_{j^\prime l}Q_{jk}Q_{j^\prime k}\Big)^2+\\
			&N^{-2}\sum_{l}\Big(\sum_{i\ne j\ne j^\prime } E_{ji}E_{j^\prime i}Q_{jl}Q_{j^\prime l}\Big)^2.
		\end{align*}
		
		Since $\max_{i}\delta_i^2 = o(N)$, then we have
		\begin{align*}
			&N^{-2}\sum_{l}\Big(\sum_{i\ne j\ne k} E_{ji}E_{l i}Q_{jk}Q_{l k}\Big)^2 \\
			\lesssim & N^{-2}\sum_{l}\Big(\sum_{i\ne j\ne k} E_{ji}Q_{jk}Q_{l k}\Big)^2 
			\lesssim  N^{-2}\bs{1}^\top\bs{E}^\top\bs{Q}\bs{Q}^\top\bs{Q}\bs{Q}^\top\bs{E}\bs{1}\lesssim N \rhon^2,\\
			& N^{-2}\sum_{l}\Big(\sum_{ j\ne j^\prime \ne k} E_{jl}E_{j^\prime l}Q_{jk}Q_{j^\prime k}\Big)^2 \\
			\lesssim & N^{-2}\sum_{l}\Big(\sum_{ j\ne j^\prime \ne k}E_{j^\prime l}Q_{jk}Q_{j^\prime k}\Big)^2 \lesssim N^{-2}\bs{1}^\top\bs{Q}\bs{Q}^\top\bs{E}\bs{E}^\top\bs{Q}\bs{Q}^\top\bs{1} \lesssim N \rhon^2,\\
			& N^{-2}\sum_{l}\Big(\sum_{i\ne j\ne j^\prime } E_{ji}E_{j^\prime i}Q_{jl}Q_{j^\prime l}\Big)^2 \\
			\lesssim & N^{-2}\sum_{l}\Big(\sum_{i\ne j\ne j^\prime } E_{ji}Q_{jl}Q_{j^\prime l}\Big)^2 
			\lesssim  \max_{i}\delta_i^2 N^{-2}\sum_{l}\Big(\sum_{i\ne j } E_{ji}Q_{jl}\Big)^2\\
			\lesssim & \max_{i}\delta_i^2 N^{-2}\bs{1}^\top\bs{E}^\top\bs{Q}\bs{Q}^\top\bs{E}\bs{1} \lesssim N \rhon^2 \max_{i}\delta_i^2 = o(N^2\rhon^2).
		\end{align*}
		
		As a consequence, we have $$\Var(S_5) = o\Big(\Lambdan^2+N^2\rhon^2\Big).$$
		
		\textbf{Step 6}. For $S_6$, we have
		\begin{align*}
			\Var(S_6) =& N^{-2}\sum_{i_1\ne j_1\ne k_1 \ne m_1 \ne l_1}\sum_{i_2\ne j_2\ne k_2 \ne m_2 \ne l_2} E_{j_1i_1}E_{k_1 i_1} \tilde{\gamma}^\prime_{j_1m_1}\tilde{\gamma}^\prime_{k_1 l_1} E_{j_2i_2}E_{k_2 i_2} \tilde{\gamma}^\prime_{j_2m_2}\tilde{\gamma}^\prime_{k_2 l_2} \\
			&\Cov(\tilde{Z}_{m_1}\tilde{Z}_{l_1}Z_{j_1}Z_{k_1} Z_{i_1}, \tilde{Z}_{m_2}\tilde{Z}_{l_2}Z_{j_2}Z_{k_2} Z_{i_2}) \\
			\lesssim& \sum_{i_1\ne j_1\ne k_1 \ne m_1 \ne l_1}\sum_{i_2\ne j_2\ne k_2 \ne m_2 \ne l_2} E_{j_1i_1}E_{k_1 i_1} Q_{j_1m_1}Q_{k_1 l_1} E_{j_2i_2}E_{k_2 i_2} Q_{j_2m_2}Q_{k_2 l_2} \\
			&I(\Cov(\tilde{Z}_{m_1}\tilde{Z}_{l_1}Z_{j_1}Z_{k_1} Z_{i_1}, \tilde{Z}_{m_2}\tilde{Z}_{l_2}Z_{j_2}Z_{k_2} Z_{i_2})\ne 0) .
		\end{align*}
		We only need to consider the terms with $\Cov(\tilde{Z}_{m_1}\tilde{Z}_{l_1}Z_{j_1}Z_{k_1} Z_{i_1}, \tilde{Z}_{m_2}\tilde{Z}_{l_2}Z_{j_2}Z_{k_2} Z_{i_2})\ne 0$, i.e., $m_1,l_1\in\{i_2, j_2, k_2 , m_2 , l_2\}$, $m_2,l_2\in\{i_1, j_1, k_1 , m_1, l_1\}$. We group these terms into $4$ categories according to $|\{i_1, j_1, k_1 , m_1, l_1\}\cup \{i_2, j_2, k_2 , m_2, l_2\}|=q$, $5 \leq q \leq 8$, and deal with these $4$ categories one by one.
		
		\textbf{Case 1:} $|\{i_1, j_1, k_1 , m_1, l_1\}\cup \{i_2, j_2, k_2 , m_2, l_2\}|=8$. In this case, we have $m_1=m_2=m$ and $l_1=l_2=l$, (the case that $m_1=l_2,\ m_2=l_1$ is equivalent, due to the exchangeable role of $m_1$ and $l_1$). Thus,
		\begin{align*}
			&N^{-2}\sum_{i_1\ne j_1\ne k_1 \ne i_2\ne j_2\ne k_2\ne l\ne m } E_{j_1i_1}E_{k_1 i_1} Q_{j_1m}Q_{k_1 l} E_{j_2i_2}E_{k_2 i_2} Q_{j_2m}Q_{k_2 l}  \\
			\lesssim & N^{-2}\sum_{i_1\ne j_1\ne k_1 \ne i_2\ne j_2\ne k_2\ne l\ne m } E_{k_1 i_1} Q_{j_1m}Q_{k_1 l} E_{j_2i_2}E_{k_2 i_2} Q_{j_2m}Q_{k_2 l}\\
			\lesssim &N^{-2}\bs{1}\bs{E}^\top\bs{Q}\bs{Q}^\top\bs{E}\bs{E}^\top\bs{Q}\bs{Q}^\top\bs{1} \lesssim N^{-1}\|\bs{Q}\|^4_{\oprtnorm}\|\bs{E}\|^3_{\oprtnorm} \lesssim N^2\rhon^3.
		\end{align*}
		
		\textbf{Case 2:} $|\{i_1, j_1, k_1 , m_1, l_1\}\cup \{i_2, j_2, k_2 , m_2, l_2\}|=7$. Let $\{l_1,m_1,q\} = \{i_1, j_1, k_1 , m_1, l_1\}\cap \{i_2, j_2, k_2 , m_2, l_2\}$. For this category, we consider two cases.
		
		\textbf{Case 2.1:} $\{l_1,m_1,j_1\} = \{l_2,m_2,j_2\}$ or $\{l_1,m_1,j_1\} = \{l_2,m_2,k_2\}$, or $\{l_1,m_1,k_1\} = \{l_2,m_2,j_2\}$, or $\{l_1,m_1,k_1\} = \{l_2,m_2,j_2\}$. By the exchangeable role of $l_q$ and $m_q$, and $j_q$ and $k_q$, $q=1,2$, without loss of generality, we consider the case $\{l_1,m_1,j_1\} = \{l_2,m_2,j_2\}$.

		\textbf{Case 2.1.1}: $\{j_1,m_1\} = \{j_2,m_2\}$.
		
		For $(l_1,m_1,j_1)=(l_2,m_2,j_2)$, we have
		\begin{align*}
			&N^{-2}\sum_{i_1\ne i_2\ne j\ne k_1\ne k_2 \ne m \ne l}E_{ji_1}E_{k_1 i_1} Q_{j m}Q_{k_1 l} E_{ji_2}E_{k_2 i_2} Q_{jm}Q_{k_2 l}\\
			\lesssim&  N^{-2}\sum_{i_1\ne i_2 \ne j\ne k_1\ne k_2  \ne l}E_{k_1 i_1}Q_{k_1 l} E_{ji_2}E_{k_2 i_2} Q_{k_2 l}
			\lesssim  N^{-1}\|\bs{Q}\|^2_{\oprtnorm}\|\bs{E}\|^3_{\oprtnorm} 
			\lesssim N^2\rhon^3.
		\end{align*}
		
		For $(l_1,m_1,j_1)=(l_2,j_2,m_2)$, we have
		\begin{align*}
			&N^{-2}\sum_{i_1\ne i_2\ne j\ne m \ne k_1 \ne k_2 \ne l}E_{ji_1}E_{k_1 i_1} Q_{jm}Q_{k_1 l} E_{mi_2}E_{k_2 i_2} Q_{mj}Q_{k_2 l}\\
			\lesssim & N^{-2}\sum_{i_1\ne i_2\ne j\ne m \ne k_1 \ne k_2 \ne l} E_{k_1 i_1} Q_{k_1 l} E_{mi_2}E_{k_2 i_2} Q_{mj}Q_{k_2 l} \lesssim N^{-1}\|\bs{Q}\|^3_{\oprtnorm}\|\bs{E}\|^3_{\oprtnorm} \lesssim N^2\rhon^3.
		\end{align*}
		
		\textbf{Case 2.1.2}: $\{j_2,m_2\} = \{j_1,l_1\}$ or $\{j_2,m_2\} = \{m_1,l_1\}$. 
		
		In this case, we have
		\begin{align*}
			&N^{-2}\sum_{i_1 \ne j_1 \ne k_1 \ne m_1 \ne l_1 \ne i_2 \ne k_2}  E_{j_1i_1}E_{k_1 i_1} Q_{j_1m_1}Q_{k_1 l_1} E_{j_2i_2}E_{k_2 i_2} Q_{j_2m_2}Q_{k_2 l_2} \\
			\lesssim &N^{-2}\sum_{i_1 \ne j_1 \ne k_1 \ne m_1 \ne l_1 \ne i_2 \ne k_2}  E_{j_1i_1}E_{k_1 i_1} Q_{j_1m_1}Q_{k_1 l_1} E_{j_2i_2}Q_{j_2m_2}Q_{k_2 l_2}  \\
			\lesssim & (\max_i \delta_i) (\max_i N_i) N^{-2}\sum_{i_1 \ne j_1 \ne k_1 \ne m_1 \ne l_1}  E_{j_1i_1}E_{k_1 i_1} Q_{j_1m_1}Q_{k_1 l_1} Q_{j_2m_2} = o(N^2\rhon)\rhon,
		\end{align*}
		where the last equality holds because $N^{-2}\sum_{i_1 \ne j_1 \ne k_1 \ne m_1 \ne l_1}  E_{j_1i_1}E_{k_1 i_1} Q_{j_1m_1}Q_{k_1 l_1} Q_{j_2m_2}$, under $\{j_2,m_2\} = \{j_1,l_1\}$ and $\{j_2,m_2\} = \{m_1,l_1\}$, are bounded, respectively, by
		\begin{align*}
			&N^{-2}\sum_{i_1 \ne j_1 \ne k_1 \ne m_1 \ne l_1}  E_{j_1i_1}E_{k_1 i_1} Q_{j_1m_1}Q_{k_1 l_1} \bar{Q}_{j_1l_1} \\ \lesssim & N^{-2}\sum_{i_1 \ne j_1 \ne k_1 \ne m_1 \ne l_1}  E_{k_1 i_1} Q_{j_1m_1}Q_{k_1 l_1} \bar{Q}_{j_1l_1} \lesssim N^{-1}\|\bs{Q}\|^3_{\oprtnorm}\|\bs{E}\|_{\oprtnorm} \lesssim \rhon,
		\end{align*}
		and
		\begin{align*}
			&N^{-2}\sum_{i_1 \ne j_1 \ne k_1 \ne m_1 \ne l_1}  E_{j_1i_1}E_{k_1 i_1} Q_{j_1m_1}Q_{k_1 l_1} \bar{Q}_{m_1l_1} \\ \lesssim &  N^{-2}\sum_{i_1 \ne j_1 \ne k_1 \ne m_1 \ne l_1}  E_{k_1 i_1} Q_{j_1m_1}Q_{k_1 l_1} 
			\bar{Q}_{m_1l_1} \lesssim N^{-1}\|\bs{Q}\|^3_{\oprtnorm}\|\bs{E}\|_{\oprtnorm}  \lesssim \rhon.
		\end{align*}

		\textbf{Case 2.2}: $\{l_1,m_1,j_1\} \ne \{l_2,m_2,j_2\}$, $\{l_1,m_1,j_1\} \ne \{l_2,m_2,k_2\}$, $\{l_1,m_1,k_1\} \ne \{l_2,m_2,j_2\}$, and $\{l_1,m_1,k_1\} \ne \{l_2,m_2,j_2\}$. 
		
		In this case, we have $\{j_1,k_1\} = \{i_1,j_1,k_1,m_1,l_1\}\backslash \{i_2,j_2,k_2,m_2,l_2\}$ or $\{j_2,k_2\} = \{i_2,j_2,k_2,m_2,l_2\}\backslash $ $\{i_1,j_1,k_1,m_1,l_1\}$. By symmetry, without loss of generality, suppose that $\{j_2,k_2\} = \{i_2,j_2,k_2,m_2,l_2\}\backslash$ $\{i_1,j_1,k_1,m_1,l_1\}$. Then,

		\begin{align*}
			&N^{-2} \sum_{j_1\ne m_1\ne l_1 \ne i_1 \ne k_1 \ne j_2 \ne k_2} E_{j_1i_1}E_{k_1 i_1} Q_{j_1m_1}Q_{k_1 l_1} E_{j_2i_2}E_{k_2 i_2} Q_{j_2 m_2}Q_{k_2 l_2}  \\
			\lesssim &N^{-2} \sum_{j_1\ne m_1\ne l_1 \ne i_1 \ne k_1 \ne j_2 \ne k_2} E_{j_1i_1}E_{k_1 i_1} Q_{j_1m_1}Q_{k_1 l_1} Q_{j_2 m_2}Q_{k_2 l_2} \\
			\lesssim  & N^{-2} \max_i \delta_i^2\sum_{j_1\ne m_1\ne l_1 \ne i_1 \ne k_1 \ne j_2 \ne k_2} E_{j_1i_1}E_{k_1 i_1} Q_{j_1m_1}Q_{k_1 l_1} E_{j_2i_2}E_{k_2 i_2}   \\
			\lesssim & \max_i \delta_i^2 N^{-1}\|\bs{Q}\|^2_{\oprtnorm}\|\bs{E}\|^2_{\oprtnorm}  = o(N^2\rhon^2).
		\end{align*}

		\textbf{Case 3}: $|\{i_1, j_1, k_1 , m_1, l_1\}\cup \{i_2, j_2, k_2 , m_2, l_2\}|=6$. Let $q = \{i_2, j_2, k_2 , m_2, l_2\}\backslash$ $ \{i_1, j_1, k_1 , m_1, l_1\} $. 
		
		\textbf{Case 3.1}:  $\{j_2,m_2\}\ne \{j_1,m_1\}$ and   $\{j_2,m_2\}\ne \{k_1,l_1\}$; or $\{k_2,l_2\}\ne \{j_1,m_1\}$ and   $\{k_2,l_2\}\ne \{k_1,l_1\}$. Without loss of generality, let's consider $\{j_2,m_2\}\ne \{j_1,m_1\}$ and   $\{j_2,m_2\}\ne \{k_1,l_1\}$. 
		
		\textbf{Case 3.1.1}: $q\in \{j_2,m_2\}$.
		
		In this case, we have
		\begin{align*}
			&N^{-2}\sum_{i_1\ne j_1 \ne k_1\ne l_1 \ne m_1 \ne q}E_{j_1i_1}E_{k_1 i_1} \tilde{\gamma}^\prime_{j_1m_1}\tilde{\gamma}^\prime_{k_1 l_1} E_{j_2i_2}E_{k_2 i_2} \tilde{\gamma}^\prime_{j_2m_2}\tilde{\gamma}^\prime_{k_2 l_2}  \\
			\lesssim &\max_i |\delta_i| N^{-2}\sum_{i_1\ne j_1 \ne k_1\ne l_1 \ne m_1} \bar{E}_{j_1i_1}\bar{E}_{k_1 i_1} \bar{Q}_{j_1m_1}\bar{Q}_{k_1 l_1} \\
			= &
			\max_i |\delta_i| 
			N^{-1}\|\bs{Q}\|^2_{\oprtnorm}\|\bs{E}\|^2_{\oprtnorm} = o(N^{3/2}\rhon^2).
		\end{align*}
		
		\textbf{Case 3.1.2}: $q\not \in \{j_2,m_2\}$.
		
		In this case, we have
		\begin{align*}
			&N^{-2}\sum_{i_1\ne j_1 \ne k_1\ne l_1 \ne m_1 \ne q}E_{j_1i_1}E_{k_1 i_1} \tilde{\gamma}^\prime_{j_1m_1}\tilde{\gamma}^\prime_{k_1 l_1} E_{j_2i_2}E_{k_2 i_2} \tilde{\gamma}^\prime_{j_2m_2}\tilde{\gamma}^\prime_{k_2 l_2}  \\
			\lesssim &\max_{i}\{M_i,N_i\}N^{-2}\sum_{i_1\ne j_1 \ne k_1\ne l_1 \ne m_1 }\bar{E}_{j_1i_1}\bar{E}_{k_1 i_1} \bar{Q}_{j_1m_1}\bar{Q}_{k_1 l_1}\bar{Q}_{j_2m_2}.
		\end{align*}
		
		We consider five cases: $\{j_2,m_2\} = \{m_1,i_1\}$, $\{j_2,m_2\} = \{m_1,k_1\}$, $\{j_2,m_2\} = \{m_1,l_1\}$, $\{j_2,m_2\} = \{j_1,i_1\}$, $\{j_2,m_2\} = \{j_1,k_1\}$. We omit other cases due to symmetry. Under those cases, we can bound $N^{-2}\sum_{i_1\ne j_1 \ne k_1\ne l_1 \ne m_1 }\bar{E}_{j_1i_1}\bar{E}_{k_1 i_1} \bar{Q}_{j_1m_1}\bar{Q}_{k_1 l_1}\bar{Q}_{j_2m_2}$ as follows:
		\begin{align*}
			&N^{-2}\sum_{i_1\ne j_1 \ne k_1\ne l_1 \ne m_1 }\bar{E}_{j_1i_1}\bar{E}_{k_1 i_1} \bar{Q}_{j_1m_1}\bar{Q}_{k_1 l_1}\bar{Q}_{m_1 i_1}
			\\
			\lesssim & N^{-2}\sum_{i_1\ne j_1 \ne k_1\ne l_1 \ne m_1 }\bar{E}_{k_1 i_1} \bar{Q}_{j_1m_1}\bar{Q}_{k_1 l_1}\bar{Q}_{m_1 i_1}
			\lesssim  N^{-1}\|\bs{Q}\|^3_{\oprtnorm}\|\bs{E}\|_{\oprtnorm}\lesssim \rhon, \\
			&N^{-2}\sum_{i_1\ne j_1 \ne k_1\ne l_1 \ne m_1 }\bar{E}_{j_1i_1}\bar{E}_{k_1 i_1} \bar{Q}_{j_1m_1}\bar{Q}_{k_1 l_1}\bar{Q}_{m_1 k_1} \\
			\lesssim & N^{-2}\sum_{i_1\ne j_1 \ne k_1\ne l_1 \ne m_1 }\bar{E}_{j_1i_1} \bar{Q}_{j_1m_1}\bar{Q}_{k_1 l_1}\bar{Q}_{m_1 k_1}
			\lesssim  N^{-1}\|\bs{Q}\|^3_{\oprtnorm}\|\bs{E}\|_{\oprtnorm}\lesssim \rhon,\\
			&N^{-2}\sum_{i_1\ne j_1 \ne k_1\ne l_1 \ne m_1 }\bar{E}_{j_1i_1}\bar{E}_{k_1 i_1} \bar{Q}_{j_1m_1}\bar{Q}_{k_1 l_1}\bar{Q}_{m_1 l_1} \\
			\lesssim & N^{-2}\sum_{i_1\ne j_1 \ne k_1\ne l_1 \ne m_1 }\bar{E}_{k_1 i_1} \bar{Q}_{j_1m_1}\bar{Q}_{k_1 l_1}\bar{Q}_{m_1 l_1} 
			\lesssim  N^{-1}\|\bs{Q}\|^3_{\oprtnorm}\|\bs{E}\|_{\oprtnorm}\lesssim \rhon,\\
			& N^{-2}\sum_{i_1\ne j_1 \ne k_1\ne l_1 \ne m_1 }\bar{E}_{j_1i_1}\bar{E}_{k_1 i_1} \bar{Q}_{j_1m_1}\bar{Q}_{k_1 l_1}\bar{Q}_{j_1 i_1} \\
			\lesssim & N^{-2}\sum_{i_1\ne j_1 \ne k_1\ne l_1 \ne m_1 }\bar{E}_{k_1 i_1} \bar{Q}_{j_1m_1}\bar{Q}_{k_1 l_1}\bar{Q}_{j_1 i_1}
			\lesssim  N^{-1}\|\bs{Q}\|^3_{\oprtnorm}\|\bs{E}\|_{\oprtnorm}\lesssim \rhon,\\
			& N^{-2}\sum_{i_1\ne j_1 \ne k_1\ne l_1 \ne m_1 }\bar{E}_{j_1i_1}\bar{E}_{k_1 i_1} \bar{Q}_{j_1m_1}\bar{Q}_{k_1 l_1}\bar{Q}_{j_1 k_1} \\
			\lesssim & N^{-2}\sum_{i_1\ne j_1 \ne k_1\ne l_1 \ne m_1 }\bar{E}_{k_1 i_1} \bar{Q}_{j_1m_1}\bar{Q}_{k_1 l_1}\bar{Q}_{j_1 k_1}
			\lesssim   \max_i \{N_i,M_i\} N^{-1}\|\bs{Q}\|^3_{\oprtnorm}= o(\rhon N^{1/2}).
		\end{align*}
		Therefore, the order of the summation under this case is $o(N^{1/2}\rhon)o(N^{3/2}\rhon) = o(N^2\rhon^2)$.
		
		\textbf{Case 3.2}:  $\{j_2,m_2\}= \{j_1,m_1\}$ or   $\{j_2,m_2\} = \{k_1,l_1\}$; and $\{k_2,l_2\} = \{j_1,m_1\}$ or   $\{k_2,l_2\} = \{k_1,l_1\}$. Without loss of generality, let's consider $\{j_2,m_2\} = \{j_1,m_1\}$ and   $\{k_2,l_2\} = \{k_1,l_1\}$. 
		
		In this case, we have
		
		\begin{align*}
			&N^{-2}\sum_{i_1\ne j_1 \ne k_1\ne l_1 \ne m_1 \ne q}E_{j_1i_1}E_{k_1 i_1} \tilde{\gamma}^\prime_{j_1m_1}\tilde{\gamma}^\prime_{k_1 l_1} E_{j_2i_2}E_{k_2 i_2} \tilde{\gamma}^\prime_{j_2m_2}\tilde{\gamma}^\prime_{k_2 l_2}  \\
			\lesssim &N^{-2}\sum_{i_1\ne j_1 \ne k_1\ne l_1 \ne m_1 \ne q}E_{j_1i_1}E_{k_1 i_1} Q_{j_1m_1}^2 Q_{k_1 l_1}^2 E_{j_1i_2}E_{k_1 i_2}\\
			\lesssim & \big(\max_{j_1}\sum_{m_1} Q_{j_1m_1}^2\big) \big(\max_{k_1}\sum_{l_1} Q_{k_1 l_1}^2\big) N^{-2}\sum_{i_1\ne j_1 \ne k_1\ne l_1 \ne m_1 \ne q}E_{j_1i_1}E_{k_1 i_1} E_{j_1i_2}E_{k_1 i_2}\\
			\leq & \|\bs{Q}\|_{\oprtnorm}^4 N^{-1} \|\bs{E}\|_{\oprtnorm}^3 = O(N^2\rhon^3) = o(N^2\rhon^2).
		\end{align*}
		
		\textbf{Case 4}: $|\{i_1, j_1, k_1 , m_1, l_1\}\cup \{i_2, j_2, k_2 , m_2, l_2\}|=5$.
		
		In this case, we have
		\[
		E_{j_1i_1}E_{k_1 i_1} Q_{j_1m_1}Q_{k_1 l_1} E_{j_2i_2}E_{k_2 i_2} Q_{j_2m_2}Q_{k_2 l_2} \lesssim E_{j_1i_1}E_{k_1 i_1} Q_{j_1m_1}Q_{k_1 l_1},
		\] 
		and therefore, the summation of these terms is bounded by
		\[
		N^{-2}\sum_{i_1\ne j_1 \ne k_1 \ne m_1 \ne l_1} E_{j_1i_1}E_{k_1 i_1} Q_{j_1m_1}Q_{k_1 l_1} \lesssim N^{-1}\|\bs{Q}\|^2_{\oprtnorm}\|\bs{E}\|^2_{\oprtnorm} = O(N\rhon^2).
		\]
		
		As a consequence, we have $$\Var(S_6) = o\Big(\Lambdan^2+N^2\rhon^2\Big)$$.
		
		It follows that
		\[
		\hat{V}_{\ind} = \E \hat{V}_{\ind} + \op\Big(N^{-1}\Lambdan +\rhon\Big).
		\]

	\end{proof}
	
	\subsection{Proof of Theorem \ref{thm:variance-estimator-tot}}

	Recall that $\hat{\tau}^\prime_{\tot} = \hat{\tau}_{\dir} + \hat{\tau}^\prime_{\ind}$ with 
	\[
	\hat{\tau}^\prime_{\ind}  =  \frac{1}{N}\sum_{i,j} E_{ij} \Bigl\{\frac{Y_i^\prime Z_j}{r_1} - \frac{Y_i^\prime (1-Z_j)}{r_0}\Bigr\},
	\]
	where $Y_i^\prime = \alpha_i^\prime + \theta_i^\prime Z_i +\sum_j  \tilde{E}_{ij} {\gamma}_{ij}^\prime Z_j$ with $\tilde{\gamma}^\prime_{ij} = \tilde{E}_{ij} {\gamma}_{ij}^\prime$. We will prove a general result that encompasses \Cref{thm:variance-estimator-tot}(ii) as a special case. Denote $\sigma_{\tot}^{\prime~2} = \Var(\hat{\tau}^\prime_{\tot})$ with
	\begin{align*}
		\hat{V}^\prime_{\tot} = &\sum_i \biggl(Y_i+\sum_j   E_{ji}Y_{j}^\prime\frac{Z_j}{r_1}\biggr)^2 \frac{Z_i}{N^2r_1}+\sum_i \biggl(Y_i+\sum_j   E_{ji}Y_{j}^\prime \frac{1-Z_j}{r_0}\biggr)^2 \frac{r_0Z_i}{N^2r_1^2}  + \\
		& \sum_i \biggl(Y_i +\sum_j   E_{ji}Y_{j}^\prime 
		\frac{Z_j}{r_1}\biggr)^2 \frac{r_1(1-Z_i)}{N^2r_0^2} + \sum_i \biggl(Y_i +\sum_j   E_{ji}Y_{j}^\prime\frac{1-Z_j}{r_0}\biggr)^2 \frac{1-Z_i}{N^2r_0}.
	\end{align*}

	\begin{theorem}
		\label{thm:variance-tot-general}
		(i) $\E (2\hat{V}^\prime_{\tot}) \geq \Var(\hat{\tau}_{\tot}^\prime)$ and (ii) suppose that \Cref{a:bounded-parameter} holds for both $\{\alpha_i,\theta_i,{\gamma}_{ij}\}_{1 \leq i,j\leq N}$ and $\{\alpha_i^\prime,\theta_i^\prime,{\gamma}_{ij}^\prime\}_{1 \leq i,j\leq N}$, Assumptions~\ref{a:density-rho_N}--\ref{a:opnorm-EE^T} hold, then, we have
		\[\E \hat{V}^\prime_{\tot} = \Var(\hat{\tau}^\prime_{\tot}) + \frac{1}{N^2} \sum_i \Big(\theta_i +\sum_j  E_{ji}\theta_{j}^\prime\Big)^2 + O(\rhon).
		\]
	\end{theorem}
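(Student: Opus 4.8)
Expanding each square $(Y_i+T_i)^2=Y_i^2+2Y_iT_i+T_i^2$ in $\hat V^\prime_{\tot}$, where $T_i$ denotes the relevant treated- or control-weighted neighbour total, decomposes $\hat V^\prime_{\tot}=\hat V_{\dir}+\hat V^\prime_{\ind}+\hat V^{(b)}$: the $Y_i^2$-part collapses, since $1/(N^2r_1)+r_0/(N^2r_1^2)=1/(N^2r_1^2)$ and $r_1/(N^2r_0^2)+1/(N^2r_0)=1/(N^2r_0^2)$, to exactly the estimator $\hat V_{\dir}$ of Theorem~\ref{thm:variance-estimator-dir}; the $T_i^2$-part is exactly the estimator $\hat V^\prime_{\ind}$ of Theorem~\ref{thm:consistency-of-oracle-variance-estimator-general-ind}; and $\hat V^{(b)}=(2/N^2)\sumi\sumj E_{ji}Y_iY^\prime_j\,(\text{weights})$ is the cross term. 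For $\E\hat V_{\dir}$ I would quote \eqref{eq:E-V_dir}, giving $\E\hat V_{\dir}=\sigma^2_{\dir,1}+N^{-2}\sumi\theta_i^2+N^{-2}\sum_{i,j}\tilde\gamma_{ij}^2$; for $\E\hat V^\prime_{\ind}$ I would reuse the computation \eqref{eq:E-V_ind}--\eqref{eq:two-cauchy-schwarz-inequality}, which is purely algebraic in the HATE parameters and hence applies verbatim to the primed model, yielding $\E\hat V^\prime_{\ind}=\sum_{q=1}^5\tilde S^\prime_q+\mathcal B^\prime_{\ind}$ with $\mathcal B^\prime_{\ind}=O(\rhon)$, $\tilde S^\prime_5$ the pure-square component of $\sigma^{\prime~2}_{\ind,2}$, and $\sum_{q=1}^4\tilde S^\prime_q=\sigma^{\prime~2}_{\ind,1}+N^{-2}\sumi(\sumj E_{ji}\theta^\prime_j)^2+N^{-2}\sumi(\sumj E_{ji}\gamma^\prime_{ji})^2$. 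The only genuinely new work is $\E\hat V^{(b)}$ together with part (i).

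\textbf{The cross term.} For each ordered pair $i\ne j$ with $E_{ji}=1$, inverse-probability weighting turns the four summands in $\hat V^{(b)}$ into $\E(Y_iY^\prime_j\mid Z_i=z_i,Z_j=z_j)$ weighted by $c_{z_iz_j}=(1,r_0/r_1,r_1/r_0,1)$ over $(z_i,z_j)\in\{(1,1),(1,0),(0,1),(0,0)\}$. Writing $\E(Y_i\mid Z_i=z_i,Z_j=z_j)=Y_{Z_i=z_i}+(z_j-r_1)\tilde\gamma_{ij}$, $\E(Y^\prime_j\mid Z_i=z_i,Z_j=z_j)=Y^\prime_{Z_j=z_j}+(z_i-r_1)\tilde\gamma^\prime_{ji}$, and $\E(Y_iY^\prime_j\mid\cdot)=\E(Y_i\mid\cdot)\E(Y^\prime_j\mid\cdot)+r_1r_0\sum_{k\notin\{i,j\}}\tilde\gamma_{ik}\tilde\gamma^\prime_{jk}$, the product splits into four pieces. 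Summing each against $c_{z_iz_j}$ (using $1+r_0/r_1+r_1/r_0+1=(r_1r_0)^{-1}$ and analogous cancellations) produces, per pair, $(r_0Y_{Z_i=1}+r_1Y_{Z_i=0})(r_1Y^\prime_{Z_j=1}+r_0Y^\prime_{Z_j=0})/(r_1r_0)$ from the $Y_{Z_i=z_i}Y^\prime_{Z_j=z_j}$ piece, a bounded multiple of $\tilde\gamma^\prime_{ji}$ from the $(z_i-r_1)\tilde\gamma^\prime_{ji}Y_{Z_i=z_i}$ piece, $\tilde\gamma_{ij}\theta^\prime_j$ from the $(z_j-r_1)\tilde\gamma_{ij}Y^\prime_{Z_j=z_j}$ piece, and $0$ from the $(z_i-r_1)(z_j-r_1)\tilde\gamma_{ij}\tilde\gamma^\prime_{ji}$ piece, while the covariance piece contributes $\sum_{k\notin\{i,j\}}\tilde\gamma_{ik}\tilde\gamma^\prime_{jk}$. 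Multiplying by $2E_{ji}/N^2$ and summing over $i,j$: the first piece rewrites, via the conditional-mean identity \eqref{eqn:equality1} ($r_1Y^\prime_{Z_j=1}+r_0Y^\prime_{Z_j=0}$ equals the primed ``$\tilde S$''-expression for the pair $(j,i)$ minus $(r_0-r_1)\tilde\gamma^\prime_{ji}$), as $2\Cov(B_{\dir,1},B^\prime_{\ind,1})$ plus an $O(\rhon)$ remainder; the remaining pieces are each $O(\rhon)$ by Cauchy--Schwarz together with the quadratic-form bounds $\tr(\bs{E}\bs{Q})=O(N^2\rhon)$, $\bs{1}^\top\bs{E}^\top\bs{E}\bs{1}=O(N^3\rhon^2)$, $\sumi\sumj Q_{ij}=N$, exactly as $\mathcal B_{\ind}$ was controlled in the proof of Theorem~\ref{thm:variance-estimator-ind}. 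Hence $\E\hat V^{(b)}=2\Cov(B_{\dir,1},B^\prime_{\ind,1})+O(\rhon)$.

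\textbf{Assembling (ii).} Using the decompositions \eqref{eq:decompose-direct-effect} and \eqref{eq:decompose-indirect-effect}, $\hat\tau^\prime_{\tot}-\tau^\prime_{\tot}=(B_{\dir,1}+B^\prime_{\ind,1})+(B_{\dir,2}+B^\prime_{\ind,2})=:B^\prime_{\tot,1}+B^\prime_{\tot,2}$, and since covariances between a linear and a quadratic form in $(Z_i-r_1)$ vanish, $\sigma^{\prime~2}_{\tot,1}=\sigma^2_{\dir,1}+\sigma^{\prime~2}_{\ind,1}+2\Cov(B_{\dir,1},B^\prime_{\ind,1})$ and $\sigma^{\prime~2}_{\tot,2}=\sigma^2_{\dir,2}+\sigma^{\prime~2}_{\ind,2}+2\Cov(B_{\dir,2},B^\prime_{\ind,2})=\sigma^{\prime~2}_{\ind,2}+O(\rhon)$, the last step by $\sigma^2_{\dir,2}=O(N^{-1})$ (Proposition~\ref{prop:order-of-tau-dir}), $\sigma^{\prime~2}_{\ind,2}=O(\rhon)$ (Proposition~\ref{prop:order-and-formula-of-tau-ind}), and Cauchy--Schwarz. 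Adding $\E\hat V_{\dir}+\E\hat V^\prime_{\ind}+\E\hat V^{(b)}$, the terms $\sigma^2_{\dir,1}$, $\sigma^{\prime~2}_{\ind,1}$, $2\Cov(B_{\dir,1},B^\prime_{\ind,1})$ combine to $\sigma^{\prime~2}_{\tot,1}$; $\tilde S^\prime_5$ together with the $O(\rhon)$-small bilinear component rebuild $\sigma^{\prime~2}_{\tot,2}$; the term $N^{-2}\sumi(\sumj E_{ji}\theta^\prime_j)^2$ survives; and $N^{-2}\sumi\theta_i^2$, $N^{-2}\sum_{i,j}\tilde\gamma_{ij}^2$, $N^{-2}\sumi(\sumj E_{ji}\gamma^\prime_{ji})^2$, $\mathcal B^\prime_{\ind}$ are all $O(\rhon)$ under Assumptions~\ref{a:density-rho_N}--\ref{a:opnorm-EE^T} (using $\sumi\theta_i^2\le CN$ and $N^{-1}\le\rhon/C_+$). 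Finally, $N^{-2}\sumi(\sumj E_{ji}\theta^\prime_j)^2=N^{-2}\sumi(\theta_i+\sumj E_{ji}\theta^\prime_j)^2-2N^{-2}\sumi\theta_i\sumj E_{ji}\theta^\prime_j-N^{-2}\sumi\theta_i^2$, with $N^{-2}\sumi\theta_i\sumj E_{ji}\theta^\prime_j\le C^2N^{-2}\sumi\sumj E_{ji}=C^2\rhon=O(\rhon)$, which yields the stated identity.

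\textbf{Part (i) and the main obstacle.} For the conservativeness claim I would use the decomposition $\E\hat V^\prime_{\tot}=\sum_{q=1}^5\tilde S^\prime_{q,\tot}+\mathcal B_{\tot}$ produced above ($\mathcal B_{\tot}$ collecting the off-diagonal $\tilde\gamma^\prime$ cross-products), show $\mathcal B_{\tot}\ge 0$ by the same Cauchy--Schwarz step as in \eqref{eqn:EVlarger} (each such cross-product is dominated by the nonnegative $\sum_k\tilde\gamma^{\prime~2}$-total it is paired with), and note that $\sum_{q=1}^4\tilde S^\prime_{q,\tot}=\sigma^{\prime~2}_{\tot,1}+(\text{nonnegative})$, so $\E\hat V^\prime_{\tot}\ge\sigma^{\prime~2}_{\tot,1}+\tilde S^\prime_{5,\tot}$ with $\tilde S^\prime_{5,\tot}$ the pure-square component of $\sigma^{\prime~2}_{\tot,2}$. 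Since the bilinear component of $\sigma^{\prime~2}_{\tot,2}$ is bounded in modulus by $\tilde S^\prime_{5,\tot}$ (Cauchy--Schwarz on its symmetric kernel), $\sigma^{\prime~2}_{\tot,2}\le 2\tilde S^\prime_{5,\tot}$, hence $\E(2\hat V^\prime_{\tot})\ge 2\sigma^{\prime~2}_{\tot,1}+2\tilde S^\prime_{5,\tot}\ge\sigma^{\prime~2}_{\tot,1}+\sigma^{\prime~2}_{\tot,2}=\Var(\hat\tau^\prime_{\tot})$. I expect the main obstacle to be the bookkeeping in $\E\hat V^{(b)}$: confirming that after collecting the four weighted inverse-probability expectations the leading term is \emph{exactly} $2\Cov(B_{\dir,1},B^\prime_{\ind,1})$, and that each of the several residual sums (products of $\theta$'s, of $\tilde\gamma$'s, and the mixed $\theta$--$\tilde\gamma$ and $Y$--$\tilde\gamma$ terms) is genuinely $O(\rhon)$ --- which works only because every residual retains a spare $\sumj E_{ji}$ (worth $\|\bs{E}\|_{\oprtnorm}=O(N\rhon)$ inside a quadratic form) or a normalized row sum $\sumj Q_{ij}=1$ to absorb the degree growth, so Assumptions~\ref{a:density-rho_N}--\ref{a:opnorm-EE^T} are used to the hilt.
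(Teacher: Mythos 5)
Your proposal is correct and follows essentially the same route as the paper: decompose $\hat V^\prime_{\tot}$ into $\hat V_{\dir}+\hat V^\prime_{\ind}$ plus the cross term (the paper's $S_4+\cdots+S_7$), evaluate the cross term by conditional expectations under inverse-probability weighting, use the identity $r_1a^2+r_0b^2=(r_1a+r_0b)^2+r_1r_0(a-b)^2$ to surface the nonnegative $N^{-2}\sum_i(\theta_i+\sum_j E_{ji}\theta^\prime_j)^2$ term, absorb the remaining sums into $O(\rhon)$ via the operator-norm bounds of Assumptions~\ref{a:density-rho_N}--\ref{a:opnorm-EE^T}, and prove part (i) by Cauchy--Schwarz-dominating the bilinear part of $\sigma^{\prime\,2}_{\tot,2}$ by its pure-square part. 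The only cosmetic difference is that you package the leading cross-term contribution as $2\Cov(B_{\dir,1},B^\prime_{\ind,1})$ whereas the paper writes out the equivalent sum $S_{\tot,1}$ explicitly.
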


	\begin{proof}[Proof of \Cref{thm:variance-tot-general}]
		Recall the definition of $\hat{V}_{\ind}^\prime$ as $\hat{V}_{\ind}$ with $Y_i$ replaced by $Y_i^\prime$.
		We make the following decomposition:
		\begin{align}
			\label{eq:V-tot-V-ind-V-dir}
			\E\big(\hat{V}_{\tot}^\prime-\hat{V}_{\ind}^\prime -\hat{V}_{\dir} \big) = S_4 + S_5 + S_6 + S_7,
		\end{align}
		where
		\begin{align*}
			&S_4 = \frac{2}{N^2} \E \sum_i   Y_i Z_i\sum_j   E_{ji}Y_j^\prime \frac{Z_j}{r_1^2},\quad S_5  = \frac{2}{N^2} \E \sum_i   Y_i (1-Z_i)\sum_j   E_{ji}Y_j^\prime \frac{Z_j}{r_1^2},\\
			&S_6 = \frac{2}{N^2} \E \sum_i   Y_i Z_i\sum_j   E_{ji}Y_j^\prime \frac{1-Z_j}{r_0^2},\quad S_7 =  \frac{2}{N^2} \E \sum_i   Y_i (1-Z_i)\sum_j   E_{ji}Y_j^\prime \frac{1-Z_j}{r_0^2}.
		\end{align*}
		
		Since
		\begin{align*}
			&\E Y_i Y_{j}^\prime \frac{I(Z_i=z,Z_j=z^\prime)}{\E I(Z_i=z,Z_j=z^\prime)} \\
			=& \E(Y_i \mid Z_i=z,Z_j=z^\prime) Y_{Z_i=z}^{\prime Z_j=z^\prime} + \Cov(Y_i,Y_j^\prime\mid Z_i=z,Z_j=z^\prime) \\
			=&\E(Y_i\mid Z_i=z,Z_j=z^\prime) Y_{Z_j=z^\prime}^{\prime Z_i=z} + r_1 r_0\sum_{k}\tilde{\gamma}_{ik}\tilde{\gamma}_{jk}^\prime \\
			=& \{Y_{Z_i=z} +(z^\prime-r_1)\tilde{\gamma}_{ij}\} Y_{ Z_j=z^\prime}^{\prime Z_i=z} + r_1 r_0\sum_{k}\tilde{\gamma}_{ik}\tilde{\gamma}_{jk}^\prime,
		\end{align*}
		then, we have
		\begin{align*}
			\sum_{q=4}^7 S_q =& 
			\frac{2}{N^2}\sum_{i,j}  E_{ji} Y_{Z_i=1} Y_{Z_j=1}^{\prime Z_i=1} + \frac{2r_0}{r_1N^2}\sum_{i,j}  E_{ji}Y_{Z_i=1} Y_{ Z_j=0}^{\prime Z_i=1}+\\
			&\frac{2r_1}{r_0 N^2}\sum_{i,j}  E_{ji} Y_{Z_i=0} Y_{Z_j=1}^{\prime Z_i=0} + \frac{2}{N^2}\sum_{i,j}  E_{ji}Y_{Z_i=0} Y_{Z_j=0}^{\prime Z_i=0} + \frac{2}{N^2} \sum_{i,j}  \sum_{k}E_{ji}\tilde{\gamma}_{ik}\tilde{\gamma}_{jk}^\prime +\\
			&\frac{2r_0}{N^2}\sum_{i,j}  E_{ji} (Y_{Z_j=1}^{\prime Z_i=1}-Y_{ Z_j=0}^{\prime Z_i=1})\tilde{\gamma}_{ij} +\frac{2r_1}{N^2}\sum_{i,j}  E_{ji} (Y_{Z_j=1}^{\prime Z_i=0}-Y_{Z_j=0}^{\prime Z_i=0})\tilde{\gamma}_{ij}.
		\end{align*}
		
		Since
		\begin{align*}
			&\sum_{i,j}  \sum_{k}E_{ji}\tilde{\gamma}_{ik}\tilde{\gamma}_{jk}^\prime+ r_0\sum_{i,j}  E_{ji} (Y_{Z_j=1}^{\prime Z_i=1}-Y_{ Z_j=0}^{\prime Z_i=1})\tilde{\gamma}_{ij} +r_1\sum_{i,j}  E_{ji} (Y_{Z_j=1}^{\prime Z_i=0}-Y_{Z_j=0}^{\prime Z_i=0})\tilde{\gamma}_{ij}\\
			= &\sum_{i,j}  E_{ji}\big(\sum_{k}\tilde{\gamma}_{ik}\tilde{\gamma}_{jk}^\prime+\tilde{\gamma}_{ij} \theta_j^\prime \big) = \sum_{i,j}  E_{ji}\theta_j^\prime \tilde{\gamma}_{ij} +\sum_{i,j} \sum_{k} E_{ji}\tilde{\gamma}_{ik}\tilde{\gamma}_{jk}^\prime \\
			= &\sum_{i,j}  E_{ji}\theta_j^\prime \tilde{\gamma}_{ij} +\sum_{i,j,k}  E_{ki}\tilde{\gamma}_{ij} \tilde{\gamma}_{kj}^\prime = \sum_{i,j}  \tilde{\gamma}_{ij} ( E_{ji}\theta_j^\prime + E_{ki}\sum_{k}\tilde{\gamma}_{kj}^\prime),
		\end{align*}
		then, we have
		\begin{align}
			\nonumber
			\sum_{q=4}^7 S_q = &\frac{2}{N^2}\sum_{i,j}  E_{ji} Y_{Z_i=1} Y_{Z_j=1}^{\prime Z_i=1} + \frac{2r_0}{r_1N^2}\sum_{i,j}  E_{ji}Y_{Z_i=1} Y_{ Z_j=0}^{\prime Z_i=1} + \frac{2r_1}{r_0 N^2}\sum_{i,j}  E_{ji} Y_{Z_i=0} Y_{Z_j=1}^{\prime Z_i=0}+\\
			& \frac{2}{N^2}\sum_{i,j}  E_{ji}Y_{Z_i=0} Y_{Z_j=0}^{\prime Z_i=0}+\frac{2}{N^2} \sum_{i,j}  E_{ji}\sum_{i,j}  \tilde{\gamma}_{ij} ( E_{ji}\theta_j^\prime + E_{ki}\sum_{k}\tilde{\gamma}_{kj}^\prime)
			\label{eq:expression-of-S4-to-S7}.
		\end{align}
		Denote 
		\begin{align*}
			S_{\tot} = S_{\tot,1} + S_{\tot,2},
		\end{align*}
		where
		\begin{align*}
			S_{\tot,1}  :=& \frac{1}{N^2}\sum_i  \biggl(Y_{Z_i=1} +\sum_j   E_{ji}Y_{Z_j=1}^{\prime Z_i=1} \biggr)^2+\frac{r_0}{N^2r_1}\sum_i  \biggl(Y_{Z_i=1} +\sum_j   E_{ji}Y_{ Z_j=0}^{\prime Z_i=1}\biggr)^2  + \\
			& \frac{r_1}{N^2r_0}\sum_i  \biggl(Y_{Z_i=0} +\sum_j   E_{ji}Y_{Z_j=1}^{\prime Z_i=0}\biggr)^2  + \frac{1}{N^2}\sum_i  \biggl(Y_{Z_i=0} +\sum_j   E_{ji}Y_{Z_j=0}^{\prime Z_i=0}\biggr)^2,\\
			S_{\tot,2}  := & \frac{1}{N^2}  \sum_{i\ne j} (\tilde{\gamma}_{ji} +E_{ij}\theta_{i}^\prime+\sum_k  E_{kj}\tilde{\gamma}_{ki}^\prime)^2.
		\end{align*}
		Similar to \eqref{eq:E-V_ind}, we have
		\begin{align}
			\label{eq:E-V_ind-prime}
			\E\hat{V}_{\ind}^\prime &= \sum_{q=1}^5\tilde{S}_q^\prime +\mathcal{B}_{\ind}^\prime,
		\end{align}
		where $\tilde{S}_q^\prime$ and $\mathcal{B}_{\ind}^\prime$ are defined similarly to $\tilde{S}_q$ and $\mathcal{B}_{\ind}$ with $Y_i$ replaced by $Y_i^\prime$. Note that
		\begin{align*}
			\mathcal{B}_{\ind}^\prime =& \frac{1}{r_1r_0N^2}\sum_{i,j}  E_{ji}(r_0r_1Y_{ Z_j=0}^{\prime Z_i=1}+r_0^2Y_{Z_j=1}^{\prime Z_i=1} + r_1^2Y_{Z_j=0}^{\prime Z_i=0}+r_0r_1Y_{Z_j=1}^{\prime Z_i=0})^2+\\
			&+\frac{1}{N^2}\sum_{i,j}  E_{ji} \sum_{k} \tilde{\gamma}_{jk}^{\prime 2}+\frac{1}{N^2}\sum_{i}  \sum_{j^\prime\ne j} E_{ji}E_{j^\prime i} \tilde{\gamma}^\prime_{jj^\prime } \tilde{\gamma}^\prime_{j^\prime j} \geq 0.
		\end{align*}
		
		By \eqref{eq:V-tot-V-ind-V-dir} and summating \eqref{eq:V-tot-V-ind-V-dir}, \eqref{eq:E-V_dir} and \eqref{eq:E-V_ind-prime}, we have
		\begin{align}
			\E \hat{V}_{\tot}^\prime &= S_{\tot}  + \mathcal{B}_{\ind}^\prime \geq  S_{\tot} . \label{eq:hat-V-tot-expectation}
		\end{align}

		Since  $\Var(\hat{\tau}^\prime_{\tot}) = \sigma_{\tot,1}^{\prime~2}+ \sigma_{\tot,2}^{\prime~2}$ with
		\begin{align*}
			\sigma_{\tot,2}^{\prime~2} = &  \frac{1}{N^2}  \sum_{ i\ne j }  \Big(\tilde{\gamma}_{ij} +E_{ji}\theta_{j}^\prime+\sum_k  E_{ki}\tilde{\gamma}_{kj}^\prime\Big)\Big(\tilde{\gamma}_{ji}+E_{ij}\theta_{i}^\prime+\sum_k  E_{kj}\tilde{\gamma}_{ki}^\prime\Big) \\
			&+ \frac{1}{N^2}  \sum_{ i\ne j } \Big(\tilde{\gamma}_{ji}+E_{ij}\theta_{i}^\prime +\sum_k  E_{kj}\tilde{\gamma}_{ki}^\prime\Big)^2,
		\end{align*}
		then mimicking the way we derive \eqref{eq:two-cauchy-schwarz-inequality}, we have
		\[
		S_{\tot,1} = \sigma_{\tot,1}^{\prime~2} + \frac{1}{N^2} \sum_i \left(\theta_i  +\sum_j  E_{ji}\theta_{j}^\prime\right)^2+\frac{1}{N^2} \sum_i  \Bigl(\sum_j   E_{ji}\gamma_{ji}^\prime\Bigr)^2.
		\]
		It follows that
		\begin{align*}
			S_{\tot}  =& \Var(\hat{\tau}_{\tot}^\prime) + \frac{1}{N^2} \sum_i \left(\theta_i +\sum_j  E_{ji}\theta_{j}^\prime\right)^2+\frac{1}{N^2} \sum_i  \Bigl(\sum_j   E_{ji}\gamma_{ji}^\prime\Bigr)^2-\\
			&\sum_{i\ne j}\frac{1}{N^2} (\tilde{\gamma}_{ij} +\theta_{j}^\prime E_{ji}+\sum_k  E_{ki}\tilde{\gamma}_{kj}^\prime)(\tilde{\gamma}_{ji}+\theta_{i}^\prime E_{ij}+\sum_k  E_{kj}\tilde{\gamma}_{ki}^\prime).
		\end{align*}
		Therefore, 
		\begin{align*}
			S_{\tot}  \geq & \Var(\hat{\tau}_{\tot}^\prime) -\sum_{i\ne j}\frac{1}{N^2} (\tilde{\gamma}_{ij}^\prime +\theta_{j}E_{ji}+\sum_k  E_{ki}\tilde{\gamma}_{kj})(\tilde{\gamma}_{ji}^\prime+\theta_{i}E_{ij}+\sum_k  E_{kj}\tilde{\gamma}_{ki})\\
			\geq& \Var(\hat{\tau}_{\tot}^\prime) - S_{\tot,2} \geq \Var(\hat{\tau}_{\tot}^\prime) - S_{\tot}.
		\end{align*}
		Combining with \eqref{eq:hat-V-tot-expectation}, we have $$\E (2\hat{V}^\prime_{\tot}) \geq \Var(\hat{\tau}_{\tot}^\prime). $$
		
		Moreover, under Assumptions~\ref{a:bounded-parameter}--\ref{a:opnorm-EE^T} for both $\{\alpha_i,\theta_i,{\gamma}_{ij}\}_{1 \leq i,j\leq N}$ and $\{\alpha_i^\prime,\theta_i^\prime,{\gamma}_{ij}^\prime\}_{1 \leq i,j\leq N}$, we have 
		\begin{align*}
			S_{\tot}  =& \Var(\hat{\tau}_{\tot}^\prime) + \frac{1}{N^2} \sum_i \left(\theta_i +\sum_j  E_{ji}\theta_{j}^\prime\right)^2 + O(\rhon).
		\end{align*}

		Similar to the way we derive $\mathcal{B}_{\ind} = O(\rhon)$ in the proof of Theorem \ref{thm:variance-estimator-ind}(i), we have $\mathcal{B}_{\ind}^\prime = O(\rhon)$ and then,
		\begin{align*}
			\E \hat{V}_{\tot}^\prime =& S_{\tot}  + O(\rhon)= \Var(\hat{\tau}^\prime_{\tot}) + \frac{1}{N^2} \sum_i \left(\theta_i +\sum_j  E_{ji}\theta_{j}^\prime\right)^2 + O(\rhon).  
		\end{align*}

	\end{proof}
	
	Next, we prove \Cref{thm:consistency-of-oracle-variance-estimator-general-tot} below. \Cref{thm:variance-estimator-tot}(ii) is a special case of \Cref{thm:consistency-of-oracle-variance-estimator-general-tot}.
	
	\begin{theorem}
		\label{thm:consistency-of-oracle-variance-estimator-general-tot}
		Suppose that both $\{\alpha_i,\theta_i,{\gamma}_{ij}\}_{1 \leq i,j\leq N}$ and $\{\alpha_i^\prime,\theta_i^\prime,{\gamma}_{ij}^\prime\}_{1 \leq i,j\leq N}$ satisfy \Cref{a:bounded-parameter}. Under Assumptions~\ref{a:density-rho_N}--\ref{a:Lindberg-condition-unadj}, and for a nonrandom sequence $\Lambdan$,
		\begin{align*}
			& N^{-1} \max\Big\{\sum_i \Big(\sum_j  E_{ji}Y_{Z_j=1}^\prime\Big)^2, \sum_i \Big(\sum_j  E_{ji}Y_{Z_j=0}^\prime\Big)^2\Big\} = O(\Lambdan ),\\
			&N^{-1} \max\Big\{\maxi\Big(\sum_j  E_{ji}Y_{Z_j=1}^\prime\Big)^2, \maxi\Big(\sum_j  E_{ji}Y_{Z_j=0}^\prime\Big)^2\Big\} = o(\Lambdan ),
		\end{align*}
		we have $\hat{V}_{\tot}^\prime-\operatorname{E} (\hat{V}_{\tot}^\prime)  = \op(N^{-1}\Lambdan + \rhon)$.
	\end{theorem}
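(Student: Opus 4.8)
The plan is to reduce the statement to the two consistency results already in hand, namely \Cref{thm:variance-estimator-dir}(ii) for $\hat V_{\dir}$ and \Cref{thm:consistency-of-oracle-variance-estimator-general-ind} for $\hat V_{\ind}^\prime$, together with a short bounded-differences estimate for the remaining cross terms. The first step is purely algebraic: I would expand every square in $\hat V_{\tot}^\prime$. Writing $A_i = r_1^{-1}\sum_j E_{ji}Y_j^\prime Z_j$ and $B_i = r_0^{-1}\sum_j E_{ji}Y_j^\prime(1-Z_j)$, the coefficient of $Y_i^2$ collected across the four summands of $\hat V_{\tot}^\prime$ collapses to $Z_i/(N^2 r_1^2) + (1-Z_i)/(N^2 r_0^2)$, so the $Y_i^2$ pieces reassemble exactly into $\hat V_{\dir}$; the $A_i^2$ and $B_i^2$ pieces reassemble exactly into $\hat V_{\ind}^\prime$ (the estimator $\hat V_{\ind}$ with $Y_j$ replaced by $Y_j^\prime$); and what is left is a sum of finitely many cross terms, each equal, up to an absolute multiplicative constant, to
\begin{equation*}
\tilde S \;=\; \frac{1}{N^2}\sum_{i\ne j} E_{ji}\,Y_i\,Y_j^\prime\,P(Z_i,Z_j),
\end{equation*}
where $P$ is an indicator-type factor with $|P|\le 1$ (this is the pointwise identity underlying \eqref{eq:V-tot-V-ind-V-dir}). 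Now \Cref{thm:variance-estimator-dir}(ii) gives $\hat V_{\dir} - \E\hat V_{\dir} = \op(N^{-1})$, which is $\op(N^{-1}\Lambdan+\rhon)$ because \Cref{a:density-rho_N} forces $N\rhon\ge C_+$, i.e. $N^{-1} = O(\rhon)$; and \Cref{thm:consistency-of-oracle-variance-estimator-general-ind}, whose hypotheses coincide with those of the present theorem, gives $\hat V_{\ind}^\prime - \E\hat V_{\ind}^\prime = \op(N^{-1}\Lambdan+\rhon)$. It therefore suffices to prove $\tilde S - \E\tilde S = \op(N^{-1}\Lambdan+\rhon)$ for each cross term.

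For this I would invoke \Cref{lem:bound-for-variance-efron-stein}. Each $\tilde S$ is a polynomial of degree at most four in $\bs Z$; by \Cref{a:bounded-parameter} applied to both parameter arrays, $|Y_i(\bs z)| + |Y_j^\prime(\bs z)| = O(1)$ uniformly and $|\tilde\gamma_{ij}| + |\tilde\gamma_{ij}^\prime| \lesssim Q_{ij}\le 1$. Flipping the coordinate $Z_m$ alters $\tilde S$ only through the terms with $i=m$, $j=m$, $\tilde\gamma_{im}\ne0$, or $\tilde\gamma_{jm}^\prime\ne0$, so the bounded-difference coefficient satisfies
\begin{equation*}
c_m \;\lesssim\; \frac{1}{N^2}\Bigl(N_m + M_m + \sum_i M_i Q_{im} + \sum_j N_j Q_{jm}\Bigr),
\end{equation*}
with $N_m = \sum_j E_{mj}$ the out-degree and $M_m = \sum_j E_{jm}$ the in-degree. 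Summing squares, $\sum_m N_m^2 = \|\bs E\bs 1\|_2^2 \le N\|\bs E\|_\opnorm^2$, $\sum_m M_m^2 = \|\bs E^\top\bs 1\|_2^2 \le N\|\bs E\|_\opnorm^2$, and $\sum_m(\sum_i M_i Q_{im})^2 = \|\bs Q^\top\bs E^\top\bs 1\|_2^2 \le \|\bs Q\|_\opnorm^2\|\bs E\|_\opnorm^2 N$, and similarly for the last piece; \Cref{a:opnorm-EE^T} then yields $\sum_m c_m^2 = O(N^{-4}\cdot N^3\rhon^2) = O(N^{-1}\rhon^2)$. Hence $\Var(\tilde S) \le \tfrac12\sum_m c_m^2 = O(N^{-1}\rhon^2) = o(\rhon^2)$, and Chebyshev's inequality gives $\tilde S - \E\tilde S = \op(\rhon)\subseteq\op(N^{-1}\Lambdan+\rhon)$. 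Adding the three contributions yields $\hat V_{\tot}^\prime - \E\hat V_{\tot}^\prime = \op(N^{-1}\Lambdan+\rhon)$.

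The only ingredient not already contained in the preceding proofs is the cross-term estimate, and there the difficulty is mild: because every cross term pairs a single ego factor $Y_i$ with a single neighbor-sum factor (rather than squaring a neighbor sum as in $\hat V_{\ind}^\prime$), the Efron--Stein coefficients feel only the crude degree counts $N_m,M_m$ and the products $\bs Q^\top\bs E^\top\bs 1$, so the bound already lands at $o(\rhon^2)$ \emph{without} appealing to the sharper $\Lambdan$-type control on the aligned sums $\sum_j E_{ji}Y_{Z_j=z}^\prime$ that was essential in the analysis of $\hat V_{\ind}^\prime$. The main bookkeeping task is simply to verify that every matrix product appearing after squaring and summing $c_m$ has operator norm $O(N\rhon)$, which follows from \Cref{a:opnorm-EE^T}(i)--(ii), and to carry this through for all of the cross terms; the substitutions $Z_\cdot\leftrightarrow 1-Z_\cdot$ that distinguish them leave every one of these bounds unchanged.
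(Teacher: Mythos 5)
Your proposal is correct and follows essentially the same route as the paper: the paper likewise decomposes $\hat{V}_{\tot}^\prime$ as $\hat{V}_{\dir}+\hat{V}_{\ind}^\prime$ plus four cross terms of the form $\frac{2}{N^2}\sum_i Y_i I(Z_i=z)\sum_j E_{ji}Y_j^\prime I(Z_j=z^\prime)/(r_z r_{z^\prime})$, invokes Theorem~\ref{thm:variance-estimator-dir}(ii) and Theorem~\ref{thm:consistency-of-oracle-variance-estimator-general-ind}, and controls each cross term via Lemma~\ref{lem:bound-for-variance-efron-stein} with bounded-difference coefficients matching yours (the paper's $\sum_{i,j}\bar{E}_{ji}(Q_{ik}+Q_{jk})+\sum_i E_{ki}+\sum_j E_{jk}$ is exactly your $N_m+M_m+\sum_i M_iQ_{im}+\sum_j N_jQ_{jm}$), arriving at the same $\Var=O(N^{-1}\rhon^2)=o(\rhon^2)$ bound. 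Your observation that the cross terms need only the crude degree/operator-norm bounds and not the $\Lambdan$-control is also consistent with the paper's argument.
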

	
	\begin{proof}[Proof of \Cref{thm:consistency-of-oracle-variance-estimator-general-tot}]
		Recall that
		\begin{align*}
			\hat{V}_{\tot}^\prime-\hat{V}_{\ind}^\prime-\hat{V}_{\dir} = S_4 + S_5 + S_6 + S_7.
		\end{align*}
		where
		\begin{align*}
			&S_4 = \frac{2}{N^2}\sum_i   Y_i Z_i\sum_j   E_{ji}Y_j^\prime \frac{Z_j}{r_1^2},\quad S_5  = \frac{2}{N^2}\sum_i   Y_i (1-Z_i)\sum_j   E_{ji}Y_j^\prime \frac{Z_j}{r_1^2},\\
			&S_6 = \frac{2}{N^2}\sum_i   Y_i Z_i\sum_j   E_{ji}Y_j^\prime \frac{1-Z_j}{r_0^2},\quad S_7 =  \frac{2}{N^2}\sum_i   Y_i(1-Z_i)\sum_j   E_{ji}Y_j^\prime  \frac{1-Z_j}{r_0^2}.
		\end{align*}
		
		Define $f(\bs{Z}) = f(Z_1,\ldots,Z_n) : = S_4$. Let $\bs{z} = (z_1,\ldots,z_k=1,\ldots,z_n)$, $\bs{z}^\prime = (z_1,\ldots,z_k=0,\ldots,z_n)$ and $\bs{z}_{(-k)} =(z_1,\ldots,z_{k-1},z_{k+1}\ldots,z_n) $, then, we have
		\begin{align*}
			&\max_{\bs{z}_{(-k)}\in\{0,1\}^{N-1}}|f(\bs{z})-f(\bs{z}^\prime)| \\
			\lesssim & \frac{2}{N^2}\sum_{i,j: i \ne k, j \ne k}  \Big|Y_i(\bs{z}^\prime)   E_{ji}Y_j^\prime(\bs{z}^\prime)-Y_i(\bs{z})   E_{ji}Y_j^\prime(\bs{z}) \Big| \\
			&+ \frac{2}{N^2}\sum_j   \Big|Y_k(\bs{z}^\prime)   E_{jk}Y_j^\prime(\bs{z}^\prime) -Y_k(\bs{z})   E_{jk}Y_j^\prime(\bs{z}) \Big| \\
			&+ \frac{2}{N^2}\sum_i   \Big|Y_i(\bs{z}^\prime)   E_{ki}Y_k^\prime(\bs{z}^\prime) -Y_i(\bs{z})   E_{ki}Y_k^\prime(\bs{z}) \Big|.
		\end{align*}
		
		Since both $\{\alpha_i,\theta_i,{\gamma}_{ij}\}_{1 \leq i,j\leq N}$ and $\{\alpha_i^\prime,\theta_i^\prime,{\gamma}_{ij}^\prime\}_{1 \leq i,j\leq N}$ satisfy \Cref{a:bounded-parameter},  we have  
		$\max_{\bs{z}\in \{0,1\}^N}|Y_i(\bs{z})|=O(1)$, $\max_{\bs{z}\in \{0,1\}^N}|Y_i^\prime(\bs{z})|=O(1)$, and for $k\not\in\{i,j\}$,
		\[
		|Y_i(\bs{z}^\prime)   E_{ji}Y_j^\prime (\bs{z}^\prime)-Y_i(\bs{z})   E_{ji}Y_j^\prime(\bs{z})| \lesssim \bar{E}_{ji}(Q_{ik}+Q_{jk}).
		\]
		It follows that
		\begin{align*}
			\max_{\bs{z}_{(-k)}\in\{0,1\}^{N-1}}|f(\bs{z})-f(\bs{z}^\prime)|\lesssim \sum_{i,j} \bar{E}_{ji}(Q_{ik}+Q_{jk}) + \sum_i  E_{ki} + \sum_j  E_{jk},
		\end{align*}
		and
		\begin{align*}
			\Var(S_4) \lesssim & N^{-4} \sum_{k}\Big(\sum_i   E_{ki}\Big)^2 + N^{-4}\sum_{k}\Big(\sum_j   E_{jk} \Big)^2 +N^{-4} \sum_{k}\Big(\sum_i  \sum_j  E_{ji}Q_{ik}\Big)^2 \\
			&+ N^{-4} \sum_{k}\Big(\sum_i  \sum_j  E_{ji}Q_{jk}\Big)^2\\
			\lesssim & N^{-4} \bs{1}\bs{E}^\top\bs{E}\bs{1} + N^{-4} \bs{1}\bs{E}\bs{E}^\top\bs{1} + N^{-4}\bs{1}\bs{E}\bs{Q}\bs{Q}^\top\bs{E}^\top\bs{1} + N^{-4}\bs{1}\bs{E}^\top\bs{Q}\bs{Q}^\top\bs{E}\bs{1} \\
			\lesssim & N^{-1}\rhon^2 = o(\rhon^2).
		\end{align*} 
		Therefore, $S_4 = \E S_4 + \op(\rhon)$. Similarly, we can prove that, for all $6\leq j\leq 7$, $S_j = \E S_j + \op(\rhon)$. Together with Theorem~\ref{thm:variance-estimator-dir}(ii) and Theorem~\ref{thm:consistency-of-oracle-variance-estimator-general-ind}, we have
		\[
		\hat{V}_{\tot}^\prime = \E \hat{V}_{\tot}^\prime  +  \op(N^{-1}\Lambdan +\rhon).
		\]
	\end{proof}

	\begin{proof}[Proof of \Cref{thm:variance-estimator-tot}]

		Theorem \ref{thm:variance-estimator-tot}(i) follows from  \Cref{thm:variance-tot-general} with $Y_i^\prime \equiv Y_i$. Theorem \ref{thm:variance-estimator-tot}(ii) follows from \Cref{thm:consistency-of-oracle-variance-estimator-general-tot} with $\Lambdan \equiv N^2\rhon^2$ and $Y_i^\prime \equiv Y_i$.

	\end{proof}

	\section{Proof for the asymptotic results of the eigenvector-adjusted estimator}\label{sec:E}
	
	\subsection{Preliminary results}
	
	\begin{lemma}
		\label{lem:consistency-of-beta-hat}
		Under Assumptions~\ref{a:bounded-parameter}, \ref{a:opnorm-EE^T}, and \ref{a:bounded-eigenvectors}, we have $\hat{\bs{\beta}}_{z}-\bs{\beta}_{z,\ora} = \Op(N^{-1/2})$, for $z=0,1$.
	\end{lemma}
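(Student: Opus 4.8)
The plan is to write the OLS estimator in closed form and show it concentrates around the oracle coefficient at the $\sqrt{N}$ rate. Set $\hat{\bs{\Sigma}}_z = N^{-1}\sumi \bs{W}_i\bs{W}_i^\top I(Z_i=z)$ and $\hat{\bs{b}}_z = N^{-1}\sumi \bs{W}_i Y_i I(Z_i=z)$, and write $r_z = \Pr(Z_i=z)$ (so $r_z=r_1$ if $z=1$ and $r_z=r_0$ if $z=0$). On the event that $\hat{\bs{\Sigma}}_z$ is invertible, $\hat{\bs{\beta}}_z = \hat{\bs{\Sigma}}_z^{-1}\hat{\bs{b}}_z$. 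Because the $Z_i$ are i.i.d.\ and, by the normalization, $N^{-1}\sumi \bs{W}_i\bs{W}_i^\top = \bs{I}_{K+1}$, we have $\E\hat{\bs{\Sigma}}_z = r_z\bs{I}_{K+1}$ and, using $\E[Y_iI(Z_i=z)] = r_zY_{Z_i=z}$, also $\E\hat{\bs{b}}_z = r_z\bs{\beta}_{z,\ora}$; thus $\bs{\beta}_{z,\ora}$ is exactly the design-expectation analogue of $\hat{\bs{\beta}}_z$. First I would record the identity
\[
\hat{\bs{\beta}}_z-\bs{\beta}_{z,\ora} = \hat{\bs{\Sigma}}_z^{-1}\bigl(\hat{\bs{b}}_z-\hat{\bs{\Sigma}}_z\bs{\beta}_{z,\ora}\bigr) = \hat{\bs{\Sigma}}_z^{-1}\cdot\frac{1}{N}\sumi \bs{W}_i e_i I(Z_i=z),
\]
where the last step uses that $e_iI(Z_i=z) = (Y_i-\bs{\beta}_{z,\ora}^\top\bs{W}_i)I(Z_i=z)$ for the oracle residual $e_i$. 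The same computation $\E[Y_iI(Z_i=z)] = r_zY_{Z_i=z}$ gives $\E\bigl[N^{-1}\sumi\bs{W}_ie_iI(Z_i=z)\bigr] = r_z\bigl(N^{-1}\sumi\bs{W}_iY_{Z_i=z}-\bs{\beta}_{z,\ora}\bigr) = 0$, so the target is a centered statistic times $\hat{\bs{\Sigma}}_z^{-1}$.

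Second, I would control the design matrix. Each entry of $\hat{\bs{\Sigma}}_z-r_z\bs{I}_{K+1}$ equals $N^{-1}\sumi W_{ik}W_{il}\{I(Z_i=z)-r_z\}$, a sum of independent mean-zero terms bounded in absolute value by $C_{\bs{W}}^2$ by Assumption~\ref{a:bounded-eigenvectors}; hence it has variance $O(N^{-1})$, and since $K$ is fixed, $\|\hat{\bs{\Sigma}}_z-r_z\bs{I}_{K+1}\|_{\opnorm} = \Op(N^{-1/2})$. In particular $\hat{\bs{\Sigma}}_z$ is invertible with probability tending to one and $\hat{\bs{\Sigma}}_z^{-1} = \Op(1)$ (defining $\hat{\bs{\beta}}_z$ arbitrarily on the negligible complement event). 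Along the way I would also note that $\bs{\beta}_{z,\ora}^\top\bs{W}_i = O(1)$ uniformly: $\|\bs{\beta}_{z,\ora}\|\le N^{-1}\sumi\|\bs{W}_i\|\,|Y_{Z_i=z}|$, and $|Y_{Z_i=z}|$ is bounded by Assumption~\ref{a:bounded-parameter} (the term $\sum_j\tilde{\gamma}_{ij}r_1$ is bounded since $\tilde N_i|\tilde\gamma_{ij}|\le C$), while $\|\bs{W}_i\|$ is bounded by Assumption~\ref{a:bounded-eigenvectors}; consequently $e_i$ is uniformly bounded.

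Third — and this is the step I expect to be the main obstacle, because $e_i$ depends on the whole vector $\bs{Z}$ through $\sumj\tilde{\gamma}_{ij}Z_j$, so the summands are not independent — I would bound $N^{-1}\sumi\bs{W}_ie_iI(Z_i=z)$ coordinatewise via the Efron–Stein inequality (Lemma~\ref{lem:bound-for-variance-efron-stein}). Fix $k$ and let $f(\bs{Z}) = \sumi W_{ik}e_iI(Z_i=z)$. Flipping $Z_j$ changes $e_i$ by $\pm\tilde{\gamma}_{ij}$ for every $i$ and additionally toggles $I(Z_j=z)$, so, using $|W_{ik}|\le C_{\bs{W}}$, the boundedness of $e_j$, and $|\tilde{\gamma}_{ij}|\le CQ_{ij}$ (a consequence of Assumption~\ref{a:bounded-parameter}), the bounded-difference constant of $f$ in coordinate $j$ is $\lesssim 1+\sumi Q_{ij}$. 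Then Lemma~\ref{lem:bound-for-variance-efron-stein} gives
\[
\Var\{f(\bs{Z})\}\lesssim\sumj\Bigl(1+\sumi Q_{ij}\Bigr)^2\lesssim N+\bs{1}^\top\bs{Q}\bs{Q}^\top\bs{1}\le N+N\|\bs{Q}\|_{\opnorm}^2 = O(N)
\]
by Assumption~\ref{a:opnorm-EE^T}(ii). Since $\E f(\bs{Z}) = 0$, this yields $N^{-1}f(\bs{Z}) = \Op(N^{-1/2})$, and summing over the $K+1$ coordinates, $N^{-1}\sumi\bs{W}_ie_iI(Z_i=z) = \Op(N^{-1/2})$. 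Plugging this and $\hat{\bs{\Sigma}}_z^{-1} = \Op(1)$ into the displayed identity gives $\hat{\bs{\beta}}_z-\bs{\beta}_{z,\ora} = \Op(N^{-1/2})$, as claimed.
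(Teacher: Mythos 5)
Your proof is correct, and the overall architecture matches the paper's (closed-form OLS, show the Gram matrix concentrates at $r_z\bs{I}_{K+1}$, show the numerator deviates from its target by $\Op(N^{-1/2})$), but the treatment of the numerator is genuinely different. The paper expands $(Nr_1)^{-1}\sum_{i:Z_i=1}\bs{W}_iY_i - \bs{\beta}_{1,\ora}$ explicitly into two linear forms in $(Z_i-r_1)$ plus a quadratic form $\frac{1}{Nr_1}\sum_{i,j}(Z_i-r_1)(Z_j-r_1)\bs{W}_i\tilde{\gamma}_{ij}$ and bounds the variance of each piece separately (using, for the interference term, the same bound $\bs{1}^\top\bs{Q}\bs{Q}^\top\bs{1}\le N\|\bs{Q}\|_{\opnorm}^2$ that you use). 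You instead center the whole numerator in one stroke via the identity $\hat{\bs{b}}_z-\hat{\bs{\Sigma}}_z\bs{\beta}_{z,\ora}=N^{-1}\sum_i\bs{W}_ie_iI(Z_i=z)$ with the oracle residual, verify it has mean zero from the normal equations $N^{-1}\sum_i\bs{W}_i\bs{W}_i^\top=\bs{I}_{K+1}$, and apply Efron--Stein (the paper's Lemma on bounded differences) to the aggregate statistic. Your route avoids the term-by-term decomposition and packages the dependence of $e_i$ on the full assignment vector into a single bounded-difference constant $1+\sum_iQ_{ij}$; the paper's route is more explicit about which pieces are $\op(N^{-1/2})$ versus $\Op(N^{-1/2})$, information that is not needed for the lemma's conclusion. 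Your intermediate observation that $e_i$ is uniformly bounded under Assumptions~\ref{a:bounded-parameter} and \ref{a:bounded-eigenvectors} alone (without invoking Assumption~\ref{a:bounded-parameter-adjusted}) is correct and keeps the argument within the lemma's stated hypotheses.
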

	\begin{proof}
		By definition, the adjustment coefficient is
		\[
		\hat{\bs{\beta}}_z= \argmin_{\bs{\beta}} \sum_i  (Y_i-\bs{\beta}^\top \bs{W}_i)^2I(Z_i=z) = \Big(\sum_{i:Z_i=1} \bs{W}_i\bs{W}_i^\top \Big)^{-1}\sum_{i:Z_i=1} \bs{W}_iY_i.
		\]
		Simple calculation yields 
		\begin{align*}
			& \frac{1}{Nr_1}\sum_{i:Z_i=1} \bs{W}_i\bs{W}_i^\top - \frac{1}{N}\sum_i  \bs{W}_i\bs{W}_i^\top = \frac{1}{Nr_1}\sum_i  (Z_i-r_1) \bs{W}_i\bs{W}_i^\top 
		\end{align*}
		and
		\begin{align*}
			& \frac{1}{Nr_1}\sum_{i:Z_i=1} \bs{W}_iY_i- \frac{1}{N}\sum_i  \bs{W}_iY_{Z_i=1} \\
			=& \frac{1}{Nr_1}\sum_i  Z_i\bs{W}_iY_i- \frac{1}{N}\sum_i  \bs{W}_iY_{Z_i=1}\\
			=&\frac{1}{Nr_1} \sum_i  (Z_i-r_1)\bs{W}_i(\alpha_i+\theta_i+r_1h_i) + \frac{1}{N} \sum_i  (Z_i-r_1)\bs{W}_j \Big(\sum_j  \tilde{\gamma}_{ji}\Big) + \\
			& \frac{1}{Nr_1} \sum_{i,j} (Z_i-r_1)(Z_j-r_1)\bs{W}_i\tilde{\gamma}_{ij}.  
		\end{align*}
		By \Cref{a:bounded-eigenvectors} and $N^{-1}\sum_i  \bs{W}_i\bs{W}_i^\top = \bs{I}_{K+1}$, we have
		\[
		(Nr_1)^{-1}\sum_{i:Z_i=1} \bs{W}_i\bs{W}_i^\top = \bs{I}_{K+1} + \Op(N^{-1/2}).
		\]
		Let ${W}_{ik}$ be the $k$th element of $\bs{W}_i$. By \Cref{a:bounded-parameter} and \Cref{a:bounded-eigenvectors}, we can verify that
		\begin{align*}
			&\frac{1}{Nr_1} \sum_i  (Z_i-r_1){W}_{ik}(\alpha_i+\theta_i+r_1h_i) = \op(N^{-1/2}), \\
			&\frac{1}{Nr_1} \sum_{i,j} (Z_i-r_1)(Z_j-r_1){W}_{ik}\tilde{\gamma}_{ij} = \op(N^{-1/2}),
		\end{align*}
		\begin{align*}
			\Var\Big\{  \frac{1}{N} \sum_i  (Z_i-r_1){W}_{jk} \Big(\sum_j  \tilde{\gamma}_{ji}\Big) \Big\} \lesssim \frac{1}{N^2} \sum_i  \Big(\sum_j  Q_{ji}\Big)^2 = \frac{1}{N^2} \bs{1}^\top \bs{Q}\bs{Q}^\top \bs{1} = O(N^{-1}).
		\end{align*}
		As a consequence, we have
		\[
		\frac{1}{Nr_1}\sum_{i:Z_i=1} \bs{W}_iY_i- \frac{1}{N}\sum_i  \bs{W}_iY_{Z_i=1} =\Op(N^{-1/2}).
		\]
		Putting together the pieces, we have $(\hat{\bs{\beta}}_{1}-\bs{\beta}_{1,\ora}) = \Op(N^{-1/2})$. Similarly, we have $(\hat{\bs{\beta}}_{0}-\bs{\beta}_{0,\ora}) = \Op(N^{-1/2})$.
	\end{proof}

	\begin{theorem}
		\label{thm:CLT-EV-adj-tdl}
		Under Assumptions~\ref{a:bounded-parameter}--\ref{a:Lindberg-condition-unadj} and \ref{a:bounded-parameter-adjusted}--\ref{a:lind-berg-condition-for-adjusted-estimator}, we have $(\tilde{\tau}_{\ind}^{\ev}-\tau_{\ind})/\Var(\tilde{\tau}_{\ind}^{\ev})^{1/2}\xrightarrow{d}\mathcal{N}(0,1)$ and $(\tilde{\tau}_{\tot}^{\ev}-\tau_{\tot})/\Var(\tilde{\tau}_{\tot}^{\ev})^{1/2}\xrightarrow{d}\mathcal{N}(0,1)$.
	\end{theorem}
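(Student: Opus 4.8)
\textbf{Proof proposal for \Cref{thm:CLT-EV-adj-tdl}.}

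The plan is to reduce the CLT for the oracle EV-adjusted estimators $\tilde{\tau}_{\ind}^{\ev}$ and $\tilde{\tau}_{\tot}^{\ev}$ to the already-proved general result \Cref{thm:CLT-general} by choosing the auxiliary parameter family $\{\alpha_i',\theta_i',\gamma_{ij}'\}$ to be the ``residualized'' parameters that generate the oracle residual $e_i$. Concretely, I would set $\alpha_i' = \alpha_{\bsw,i} - r_1 h_{\mid\bs{W},i}$, $\theta_i' = \theta_{\bsw,i}$, and $\tilde\gamma_{ij}' = \tilde\gamma_{ij}$, so that $Y_i' \equiv e_i$ by the identity $e_i = (\alpha_i-\bs{\beta}_{0,\ora}^\top\bs{W}_i) + \theta_{\bsw,i}Z_i + \sum_j\tilde\gamma_{ij}Z_j$ derived in the discussion preceding \Cref{a:bounded-parameter-adjusted}. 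Then $\hat\tau'_{\ind} \equiv \tilde\tau_{\ind}^{\ev}$ and $\hat\tau'_{\tot} \equiv \tilde\tau_{\tot}^{\ev}$ by construction, and $\tau'_{\ind} = \tau_{\ind}$, $\tau'_{\tot} = \tau_{\tot}$ because the residualization does not change $\sum_j\tilde\gamma_{ij}$ (hence $\tau_{\ind}$) and because $\hat\tau_{\dir}$ contributes $\tau_{\dir}$ unchanged while $\theta_{\bsw,i}$ has the same population mean-adjustment bookkeeping (the term $\bs{\beta}_{1,\ora}^\top\bs{W}_i - \bs{\beta}_{0,\ora}^\top\bs{W}_i$ in $\theta'_i$ is an OLS artifact that is absorbed, so the estimand identity holds as in the main text where $\tilde\tau_{\ind}^{\ev}$ is introduced as an unbiased estimator of $\tau_{\ind}$).

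Next I would verify the hypotheses of \Cref{thm:CLT-general} with the choice $\Lambdan \equiv \Deltan$. First, \Cref{a:bounded-parameter} must hold for the primed family: this is exactly \Cref{a:bounded-parameter-adjusted} (bounding $|\alpha_{\bsw,i}|$, $|\theta_{\bsw,i}|$, $|h_{\bsw,i}|$) combined with \Cref{a:bounded-parameter} for the original parameters and \Cref{a:bounded-eigenvectors} (which controls $h_{\mid\bs{W},i} = \bs{\beta}_h^\top\bs{W}_i$ via $\|\bs{W}_i\|_\infty \le C_{\bs{W}}$ and $\|\bs{\beta}_h\| = O(1)$). Second, Assumptions~\ref{a:density-rho_N}--\ref{a:Lindberg-condition-unadj} are assumed directly. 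Third, the two ``$\Lambdan$'' conditions on $\sum_i(\sum_j E_{ji}Y'_{Z_j=z})^2$ and its max become, since $Y'_{Z_j=z} = \operatorname{E}(e_j\mid Z_j=z) = e_{Z_j=z}$, precisely the definition of $\Deltan$ (for the $O(\Deltan)$ bound) and \Cref{a:lind-berg-condition-for-adjusted-estimator} (for the $o(\Deltan)$ bound). Fourth, the nondegeneracy condition ``$\liminf \sigma_{\ind,1}^{\prime~2}/(N^{-1}\Lambdan+\rhon) > 0$ or $\liminf \sigma_{\ind,2}^{\prime~2}/(N^{-1}\Lambdan+\rhon) > 0$'' is exactly \Cref{a:control-the-order-of-two-component}(i) once we identify $\sigma_{\ind,1}^{\prime~2} = (\sigma_{\ind,1}^{\ev})^2$ and $\sigma_{\ind,2}^{\prime~2} = (\sigma_{\ind,2}^{\ev})^2$ (which holds by the definitions of $(\sigma^{\ev}_{\ind,\cdot})^2$ via substitution of $e$ and $\theta_{\bsw}$, already spelled out in the proof of \Cref{prop:order-of-adjusted-estimators}), and similarly \Cref{a:control-the-order-of-two-component}(ii) for the total-effect case. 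With all hypotheses checked, \Cref{thm:CLT-general}(i) gives $(\tilde\tau_{\ind}^{\ev}-\tau_{\ind})/\Var(\tilde\tau_{\ind}^{\ev})^{1/2} \xrightarrow{d}\mathcal N(0,1)$ and \Cref{thm:CLT-general}(ii) gives the analogous statement for $\tilde\tau_{\tot}^{\ev}$.

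The only genuinely delicate point — and the step I expect to be the main obstacle — is confirming that the estimand identities $\tau'_{\ind}=\tau_{\ind}$ and $\tau'_{\tot}=\tau_{\tot}$ hold under this reparametrization, i.e. that replacing $(\alpha_i,\theta_i,\gamma_{ij})$ by the residualized triple leaves the targets invariant. For the indirect effect this is immediate since $\tau_{\ind} = N^{-1}\sum_{i\ne j}\tilde\gamma_{ij}$ depends only on $\tilde\gamma$, which is unchanged. For the total effect the subtlety is that $\tilde\tau_{\tot}^{\ev} = \hat\tau_{\dir} + \tilde\tau_{\ind}^{\ev}$ keeps the \emph{unadjusted} $\hat\tau_{\dir}$, not $\hat\tau'_{\dir}$; so I should not literally invoke $\hat\tau'_{\tot}$ from the general setup but instead note $\operatorname{E}(\tilde\tau_{\tot}^{\ev}) = \operatorname{E}(\hat\tau_{\dir}) + \operatorname{E}(\tilde\tau_{\ind}^{\ev}) = \tau_{\dir} + \tau_{\ind} = \tau_{\tot}$ (using $\operatorname{E}\tilde\tau_{\ind}^{\ev} = \tau_{\ind}$ established earlier) and that the linear-plus-quadratic decomposition $\tilde\tau_{\tot}^{\ev} = B_{\tot,1}' + B_{\tot,2}'$ used in \Cref{thm:CLT-general}(ii) still applies verbatim because adding $\hat\tau_{\dir}$ merely modifies the linear coefficients $f_{\tot,1}$ (absorbing $r_0Y_{Z_i=1}+r_1Y_{Z_i=0}$ as in the unadjusted $\sigma_{\tot,1}^2$) and the quadratic coefficients $f_{\tot,2}$ (absorbing $\tilde\gamma_{ij}$), exactly as in the $\sigma_{\tot}^2$ bookkeeping of the main text — so the proof of \Cref{thm:CLT-general}(ii), which is stated to be ``almost the same'' as part (i), goes through. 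Once this bookkeeping is pinned down, the remaining verifications are routine matching of definitions.
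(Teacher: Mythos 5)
Your proposal matches the paper's own proof, which is exactly the one-line application of \Cref{thm:CLT-general} with $\Lambdan \equiv \Deltan$ and $Y_i^\prime \equiv e_i$; your verification of the hypotheses (identifying the two $\Lambdan$-conditions with the definition of $\Deltan$ and \Cref{a:lind-berg-condition-for-adjusted-estimator}, and the nondegeneracy condition with \Cref{a:control-the-order-of-two-component}) is the intended reading. The delicate point you flag about the total effect is already absorbed by the paper's definition $\hat{\tau}^\prime_{\tot} = \hat{\tau}_{\dir} + \hat{\tau}^\prime_{\ind}$, which keeps the unadjusted direct-effect estimator, so $\hat{\tau}^\prime_{\tot} \equiv \tilde{\tau}^{\ev}_{\tot}$ and the centering $\operatorname{E}(\tilde{\tau}^{\ev}_{\tot}) = \tau_{\dir}+\tau_{\ind} = \tau_{\tot}$ works out exactly as you describe.
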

	
	\begin{proof}
		The asymptotic normality of $\tilde{\tau}^{\ev}_{\tot}$ and $\tilde{\tau}^{\ev}_{\ind}$ follows from \Cref{thm:CLT-general} with $\Lambdan \equiv \Deltan$ and $Y_i^\prime  \equiv e_i$.
	\end{proof}

	\subsection{Proof of \Cref{thm:CLT-EV-adj}}    
	
	\begin{proof}
		Simple calculation yields
		\begin{align*}
			e_i-\hat{e}_i =  Z_i(\hat{\bs{\beta}}_{1}-\bs{\beta}_{1,\ora})^\top \bs{W}_i + (1-Z_i)(\hat{\bs{\beta}}_{0}-\bs{\beta}_{0,\ora})^\top \bs{W}_i.
		\end{align*}
		Then, we have
		\begin{align*}
			\tilde{\tau}_{\ind}^{\ev}-\hat{\tau}_{\ind}^{\ev} =& \frac{1}{N}\sum_{i,j} E_{ij} \Bigl\{\frac{(\hat{\bs{\beta}}_{1}-\bs{\beta}_{1,\ora})^\top \bs{W}_j Z_j}{r_1} - \frac{(\hat{\bs{\beta}}_{1}-\bs{\beta}_{0,\ora})^\top \bs{W}_j(1-Z_j)}{r_0}\Bigr\}\\
			=& \Op(N^{-1/2})\Op(N^{1/2}\rhon) = \Op(\rhon) = \op(\rhon^{1/2}),
		\end{align*}
		where the second equality is due to \Cref{lem:consistency-of-beta-hat} and the last equality is due to $\rhon \rightarrow 0$ by \Cref{a:density-rho_N}. 
		
		By \Cref{a:control-the-order-of-two-component}(i),  \Cref{thm:CLT-EV-adj-tdl}, and Slutsky's Theorem, we have
		$(\hat{\tau}_{\ind}^{\ev}-\tau_{\ind})/\Var(\hat{\tau}_{\ind}^{\ev})^{1/2}$ $\xrightarrow{d}\mathcal{N}(0,1)$. The proof for the asymptotic normality of $\hat{\tau}_{\tot}^{\ev}$ is similar, so we omit it.
		
	\end{proof}

	\subsection{Proof of \Cref{thm:asymptotic-variance-estimator-adjusted}}
	\begin{proof}
		{\bf Step 1.} We prove (i). Simple calculation yields
		\begin{align*}
			&N^{-2}\sum_i \biggl(\sum_j  E_{ji}\hat{e}_{j}\frac{Z_j}{r_1}\biggr)^2 \frac{Z_i}{r_1} \\
			=&  N^{-2}\sum_i \biggl\{\sum_j  E_{ji}({e}_{j}+\hat{e}_{j}-e_j)\frac{Z_j}{r_1}\biggr\}^2 \frac{Z_i}{r_1}\\
			=& N^{-2}\sum_i \biggl(\sum_j  E_{ji}{e}_{j}\frac{Z_j}{r_1}\biggr)^2 \frac{Z_i}{r_1} + N^{-2}\sum_i \biggl\{\sum_j  E_{ji}(\hat{e}_{j}-e_j)\frac{Z_j}{r_1}\biggr\}^2 \frac{Z_i}{r_1} +\\
			& N^{-2} \sum_i \biggl\{\sum_j  E_{ji}(\hat{e}_{j}-e_j)\frac{Z_j}{r_1}\biggr\}\biggl(\sum_j  E_{ji}{e}_{j}\frac{Z_j}{r_1}\biggr) \frac{Z_i}{r_1} \\
			=: & S_1+S_2+S_3.
		\end{align*}
		By \Cref{a:bounded-eigenvectors}, we have
		\begin{align*}
			S_2 =& \sum_i \biggl\{\sum_j  E_{ji}(\hat{\bs{\beta}}_{1}-\bs{\beta}_{1,\ora})^\top \bs{W}_j\frac{Z_j}{r_1}\biggr\}^2 \frac{Z_i}{N^2r_1}\\
			=&\sum_i  (\hat{\bs{\beta}}_{1}-\bs{\beta}_{1,\ora})^\top \biggl\{\sum_j  E_{ji}\bs{W}_j\frac{Z_j}{r_1}\biggr\}\biggl\{\sum_j  E_{ji}\bs{W}_j^\top\frac{Z_j}{r_1}\biggr\}(\hat{\bs{\beta}}_{1}-\bs{\beta}_{1,\ora}) \frac{Z_i}{N^2r_1}\\
			=& \Op(N^{-1})\Op(N\rhon^2) = \Op(\rhon^2) = \op(\rhon).
		\end{align*}
		Therefore, $S_2 = \op(N^{-1}\Deltan +\rhon)$. By the proof of \Cref{thm:consistency-of-oracle-variance-estimator-general-ind} with $Y_i^\prime \equiv e_i$ and  $\Lambdan\equiv \Deltan$, we have
		$$
		S_1 = \E S_1 + \op(N^{-1}\Deltan +\rhon).
		$$
		By Cauchy--Schwarz inequality, we have $S_3 \lesssim (S_1S_2)^{1/2} = \op\Big(N^{-1}\Deltan +\rhon\Big)$. Therefore,
		\[
		N^{-2}\sum_i \biggl(\sum_j  E_{ji}\hat{e}_{j}\frac{Z_j}{r_1}\biggr)^2 \frac{Z_i}{N^2r_1} = \E S_1 + \op(N^{-1}\Deltan +\rhon).
		\]

		Define $\tilde{V}_{\ind}^{\ev}$ by replacing $Y_i$ in $\hat{V}_{\ind}$ with the oracle residual $e_i$.     Mimicking the above proof for the other terms of $\hat{V}_{\ind}^{\ev}$, we can obtain that 
		\begin{equation}\label{eqn:vindev}
			\hat{V}_{\ind}^{\ev} = \tilde{V}_{\ind}^{\ev} + \op\Big(N^{-1}\Deltan +\rhon\Big) = \E\tilde{V}_{\ind}^{\ev}+ \op\Big(N^{-1}\Deltan +\rhon\Big),
		\end{equation}
		where the last equality is due to \Cref{thm:consistency-of-oracle-variance-estimator-general-ind} with $\Lambdan \equiv \Deltan$ and $Y_i^\prime  \equiv e_i$.
		Similarly, 
		$$\hat{V}_{\tot}^{\ev} = \tilde{V}_{\tot}^{\ev} + \op\Big(N^{-1}\Deltan +\rhon\Big) = \E\tilde{V}_{\tot}^{\ev}+ \op\Big(N^{-1}\Deltan +\rhon\Big).$$
		
		By the proof of \Cref{thm:variance-estimator-ind} with $Y_i$ replaced by $e_i$, we have $\E (2\tilde{V}^{\ev}_{\ind}) -\Var(\tilde{\tau}^{\ev}_{\ind}) \geq 0$ and 
		\begin{align}\label{eqn:vindev2}
			\E \tilde{V}^{\ev}_{\ind} &=  \Var(\tilde{\tau}^{\ev}_{\ind}) + \frac{1}{N^2} \sum_i  \Bigl(\sum_j   E_{ji}\theta_{\bsw,j}\Bigr)^2+O(\rhon).
		\end{align}
		Combined with \Cref{prop:order-of-adjusted-estimators}, we have
		$$
		\hat{V}_{\ind}^{\ev} = \Op\Big(N^{-1}\Deltan +\rhon\Big).
		$$
		Similarly, we have $\E (2\tilde{V}^{\ev}_{\tot}) -\Var(\tilde{\tau}^{\ev}_{\tot}) \geq 0$, 
		$$
		\E \tilde{V}^{\ev}_{\tot} =  \Var(\tilde{\tau}^{\ev}_{\tot}) + \frac{1}{N^2} \sum_i  \Bigl( \theta_i + \sum_j   E_{ji}\theta_{\bsw,j}\Bigr)^2+O(\rhon),
		$$
		$$
		\hat{V}_{\tot}^{\ev} = \Op\Big(N^{-1}\Deltan +\rhon\Big).
		$$

		{\bf Step 2.} We prove (ii). By \eqref{eqn:vindev} and $\E (2\tilde{V}^{\ev}_{\ind}) -\Var(\tilde{\tau}^{\ev}_{\ind}) \geq 0$, we have
		\begin{align*}
			2 \hat{V}_{\ind}^{\ev} =&  2 \E \tilde{V}_{\ind}^{\ev} + \op\Big(N^{-1}\Deltan +\rhon\Big) \\
			=& \Var(\tilde{\tau}^{\ev}_{\ind})  + R_N^{(1)} + \op\Big(N^{-1}\Deltan +\rhon\Big), 
		\end{align*}
		where $R_N^{(1)} = \E (2\tilde{V}^{\ev}_{\ind}) -\Var(\tilde{\tau}^{\ev}_{\ind}) \geq 0$.
		
		Similarly, we have
		\begin{align*}
			2 \hat{V}_{\tot}^{\ev} =&  2 E \tilde{V}_{\tot}^{\ev} + \op\Big(N^{-1}\Deltan +\rhon\Big) \\
			=& \Var(\tilde{\tau}^{\ev}_{\tot})  + R_N^{(2)} + \op\Big(N^{-1}\Deltan +\rhon\Big), 
		\end{align*}
		where $R_N^{(2)} = \E (2\tilde{V}^{\ev}_{\tot}) -\Var(\tilde{\tau}^{\ev}_{\tot}) \geq 0$.
		
		{\bf Step 3.} We prove (iii). Combining \eqref{eqn:vindev} and \eqref{eqn:vindev2}, we have
		$$
		\hat{V}_{\ind}^{\ev} =  \Var(\tilde{\tau}^{\ev}_{\ind}) + \frac{1}{N^2} \sum_i  \Bigl(\sum_j   E_{ji}\theta_{\bsw,j}\Bigr)^2+O(\rhon) + \op\Big(N^{-1}\Deltan +\rhon\Big).
		$$
		If $\liminf N^{-1}\Deltan/\rhon \rightarrow \infty$, then, $O(\rhon) = o(N^{-1}\Deltan)$. Therefore, 
		$$
		\hat{V}_{\ind}^{\ev} =  \Var(\tilde{\tau}^{\ev}_{\ind}) + \frac{1}{N^2} \sum_i  \Bigl(\sum_j   E_{ji}\theta_{\bsw,j}\Bigr)^2 + \op\Big(N^{-1}\Deltan \Big).
		$$
		
		The proof for
		$$
		\tilde{V}^{\ev}_{\tot} =  \Var(\tilde{\tau}^{\ev}_{\tot}) + \frac{1}{N^2} \sum_i  \Bigl( \theta_i + \sum_j   E_{ji}\theta_{\bsw,j}\Bigr)^2+ \op\Big(N^{-1}\Deltan \Big) 
		$$
		is similar, so we omit it.
		
	\end{proof}

	\section{Proof for the results in \Cref{sec:application-in-some-networks}} \label{sec:F}

	\subsection{Proof of \Cref{cor:variance-estimator-adj-stratified}}
	
	\begin{proof}
		Applying \Cref{prop:order-and-formula-of-tau-ind} and \Cref{prop:order-of-tau-tot} with $\rhon = M\binom{N_M}{2}/\binom{N}{2}= O(M^{-1})$, the conclusion follows. 

	\end{proof}

	\subsection{Proof of \Cref{cor:variance-estimator-adj-stratified2}}
	Note that if we use $(\bs{V}_1,\ldots,\bs{V}_K)$ in the regression, where $\bs{V}_1,\ldots,\bs{V}_K$ are the top $K$ eigenvectors, the corresponding $\Deltan = O(\lambda_{K+1})$, where $\lambda_{K+1}$ is the $(K+1)$-th largest eigenvalues of $\bs{E}\bs{E}^\top$.
	
	Assume that we can further stratify the $M$ groups into $K$ strata: $\mathcal{M}_1\cup\cdots \cup \mathcal{M}_K = \{1,\ldots,M\}$, $\mathcal{M}_{k_1}\cap \mathcal{M}_{k_2} = \emptyset$ for $k_1 \ne k_2$. We assume that there is no between-group heterogeneity in each stratum; see \Cref{a:SM-no-between-group-heterogeneity} below.
	\begin{assumption}
		\label{a:SM-no-between-group-heterogeneity}
		If $m,m^\prime \in \mathcal{M}_k$ for $1\leq k \leq M$, we have $N_M^{-1} \sum_{i:C_i=m} Y_{Z_i=z} = N_M^{-1}$ $\sum_{i:C_i=m^\prime} Y_{Z_i=z}$, for $z=0,1$. Moreover, there exist constants $0 < \underline{c}\leq \bar{c}$, such that $ \underline{c} M \leq |\mathcal{M}_k| \leq \bar{c} M$.
	\end{assumption}
	
	Under \Cref{a:SM-no-between-group-heterogeneity}, we can merge the homogeneous groups into a stratum. Let ${\bs{W}}_i = (I(C_i\in \mathcal{M}_1),\ldots,I(C_i\in \mathcal{M}_K))$  be the vector of merged group indicators. We use $\hat{e}_i$ from the following regression in the eigenvector-adjusted IATE and GATE estimators:
	\begin{align}
		\label{eq:regression-in-merged-group-indicator}
		Y_{i}\sim  I(Z_{i}=0){\bs{W}}_i + I(Z_{i}=1){\bs{W}}_i.
	\end{align}
	
	Note that regression \eqref{eq:regression-reduced-group-indicator} in the main text under \Cref{a:no-heterogeneity-between-groups} is a special case of \eqref{eq:regression-in-merged-group-indicator} under \Cref{a:SM-no-between-group-heterogeneity}. With a slight abuse of notation, we still denote the resulting eigenvector-adjusted IATE and GATE estimators as $\hat{\tau}^{\ev}_{\ind}$ and $\hat{\tau}^{\ev}_{\tot}$, respectively. \Cref{cor:consistency-for-merged-group-regression} below is an extension of \Cref{cor:variance-estimator-adj-stratified2}. 
	
	\begin{corollary}
		\label{cor:consistency-for-merged-group-regression}
		Under Assumptions~\ref{a:bounded-parameter}--\ref{a:Lindberg-condition-unadj}, \ref{a:bounded-parameter-adjusted}--\ref{a:stratified-interference}, and \ref{a:SM-no-between-group-heterogeneity}, we have $\hat{\tau}_{\ind}^{\ev} - \tau_{\ind} = \Op(M^{-1/2})$ and $\hat{\tau}_{\tot}^{\ev} - \tau_{\tot} =  \Op(M^{-1/2})$.
	\end{corollary}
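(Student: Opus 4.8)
The plan is to reduce \Cref{cor:consistency-for-merged-group-regression} (of which \Cref{cor:variance-estimator-adj-stratified2} is the special case $K=1$, $\mathcal M_1=\{1,\dots,M\}$, under \Cref{a:no-heterogeneity-between-groups}) to three facts, each obtained by specializing results already in the excerpt to the partial-interference network: (i) the oracle residual vectors are orthogonal to the whole leading $M$-dimensional eigenspace of $\bs E\bs E^\top$, so $N^{-1}\Deltan=O(N^{-1})$; (ii) consequently the oracle EV-adjusted estimators $\tilde\tau_{\ind}^{\ev}$ and $\tilde\tau_{\tot}^{\ev}$ have variance $O(M^{-1})$; and (iii) the gap between the feasible estimators and their oracle counterparts is $\Op(\rhon)=\Op(M^{-1})$, which is negligible against $M^{-1/2}$.

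\textbf{Step 1 (spectrum and the size of $\Deltan$).} Under \Cref{a:stratified-interference} the adjacency matrix is the symmetric block-diagonal matrix $\bs E=\diag(\{\bs J_{N_M}-\bs I_{N_M}\}_{m=1}^M)$, whose eigenvalues are $N_M-1$ with multiplicity $M$ (eigenvectors $\{\bs e_{m,M}\otimes\bs 1_{N_M}\}_{m=1}^M$) and $-1$ with multiplicity $N-M$; hence $\bs E\bs E^\top$ has eigenvalues $(N_M-1)^2$ (multiplicity $M$, same eigenvectors) and $1$ (multiplicity $N-M$). I would then argue that, under \Cref{a:SM-no-between-group-heterogeneity}, the OLS projection of $(Y_{Z_i=z})_{i=1}^N$ onto the merged-stratum indicators $\bs W_i=(I(C_i\in\mathcal M_1),\dots,I(C_i\in\mathcal M_K))^\top$ coincides with its projection onto the full vector of $M$ group indicators: the latter projection replaces each coordinate by the mean of $Y_{Z_\cdot=z}$ over $i$'s group, and within-stratum homogeneity forces these group means to be constant across groups of the same stratum, hence equal to the stratum mean. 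Therefore each oracle residual vector $(e_{Z_i=z})_{i=1}^N$, $z\in\{0,1\}$, is orthogonal to the span of the $M$ group indicators, i.e.\ to the leading $M$-dimensional eigenspace of the symmetric $\bs E$; since $\bs E$ has only the two eigenvalues $N_M-1$ and $-1$, this means $(e_{Z_i=z})_i$ lies in the $(-1)$-eigenspace, so $\sum_j E_{ji}e_{Z_j=z}=-e_{Z_i=z}$. With $\max_i|e_{Z_i=z}|\le C_e$ from \Cref{a:bounded-parameter-adjusted} we get $\Deltan=N^{-1}\max_z\sum_i(\sum_j E_{ji}e_{Z_j=z})^2\le C_e^2=O(1)$, hence $N^{-1}\Deltan=O(N^{-1})$; this is precisely the heuristic of \Cref{rmk:estimate-of-deltan-eigenvalue} applied with $K=M$, $\lambda_{M+1}=1$.

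\textbf{Step 2 (oracle variance).} I would verify the remaining hypotheses of \Cref{prop:order-of-adjusted-estimators} for this network: $\rhon=(N_M-1)/N=O(M^{-1})=o(1)$ and $N\rhon=N_M-1\ge C_+$, giving \Cref{a:density-rho_N}; $\|\bs E\|_\opnorm=N_M-1=O(N\rhon)$ and $\|\bs Q\|_\opnorm=1$ because each diagonal block of $\bs Q$ equals $(\bs J_{N_M}-\bs I_{N_M})/(N_M-1)$, which has unit operator norm, giving \Cref{a:opnorm-EE^T}; and, after the normalization $N^{-1}\sum_i\bs W_i\bs W_i^\top=\bs I$, the entries of $\bs W_i$ are of order $\{N/(|\mathcal M_k|N_M)\}^{1/2}$, bounded because $\underline c M\le|\mathcal M_k|\le\bar c M$ and $MN_M=N$, giving \Cref{a:bounded-eigenvectors}. \Cref{prop:order-of-adjusted-estimators} then yields $\Var(\tilde\tau_{\ind}^{\ev})=O(N^{-1}\Deltan+\rhon)=O(N^{-1}+M^{-1})=O(M^{-1})$ and likewise $\Var(\tilde\tau_{\tot}^{\ev})=O(M^{-1})$; since both estimators are unbiased (their expectations are $\tau_{\ind}$ and $\tau_{\tot}$ by the remark following \Cref{prop:order-of-adjusted-estimators}), Chebyshev's inequality gives $\tilde\tau_{\ind}^{\ev}-\tau_{\ind}=\Op(M^{-1/2})$ and $\tilde\tau_{\tot}^{\ev}-\tau_{\tot}=\Op(M^{-1/2})$.

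\textbf{Step 3 (feasible vs.\ oracle) and conclusion.} From \Cref{lem:consistency-of-beta-hat} one has $\hat{\bs\beta}_z-\bs\beta_{z,\ora}=\Op(N^{-1/2})$, and repeating the bound in the proof of \Cref{thm:CLT-EV-adj} gives $\hat\tau_{\ind}^{\ev}-\tilde\tau_{\ind}^{\ev}=\Op(N^{-1/2})\cdot\Op(N^{1/2}\rhon)=\Op(\rhon)=\Op(M^{-1})=\op(M^{-1/2})$, and similarly for $\hat\tau_{\tot}^{\ev}$. Combining with Step 2 by the triangle inequality yields the stated rates. The main obstacle is Step 1: the argument that within-stratum homogeneity of the conditional-mean \emph{group averages} (not pointwise values) is enough to make the fine-basis (group-indicator) projection agree with the coarse-basis (stratum-indicator) projection, and hence to upgrade orthogonality from the $K$-dimensional stratum space to the full $M$-dimensional leading eigenspace of $\bs E\bs E^\top$; once this is in place, everything else is routine verification of the hypotheses of \Cref{prop:order-of-adjusted-estimators} and \Cref{lem:consistency-of-beta-hat} together with $\rhon=O(M^{-1})$.
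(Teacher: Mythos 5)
Your proposal is correct and follows essentially the same route as the paper's proof: show that under the within-stratum homogeneity assumption the oracle residuals for the merged stratum indicators coincide with those for the full set of $M$ group indicators, deduce $\Deltan=\tilde{\Delta}_{\scaleobj{0.8}{N}}=O(1)$ from the spectrum of the block-diagonal adjacency matrix, and combine with $\rhon=O(M^{-1})$ via the general EV-adjustment theory. Your explicit computation $\sum_j E_{ji}e_{Z_j=z}=-e_{Z_i=z}$ and the Chebyshev/oracle-gap bookkeeping merely spell out details the paper leaves implicit in "the conclusion follows."
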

	
	\begin{proof}\
		Let $\tilde{\bs{W}}_i \equiv (I(C_i=1),\ldots,I(C_i=M))$. We denote $\tilde{e}_{Z_i=z}$ and ${e}_{Z_i=z}$ as the projection residual of $Y_{Z_i=z}$ on $\tilde{\bs{W}}_i$ and ${\bs{W}}_i$, respectively.
		Let $\mathcal{M}(i)$ denote the stratum unit $i$ belongs to, i.e., $C_i \in \mathcal{M}(i)$. Then,
		\begin{align*}
			&\tilde{e}_{Z_i=z} = Y_{Z_i=z} - N_M^{-1}\sum_{j:C_j=C_i} Y_{Z_j=z},\\
			&{e}_{Z_i=z} = Y_{Z_i=z} - (|\mathcal{M}(i)|N_M)^{-1}\sum_{j:C_j\in \mathcal{M}(i)} Y_{Z_j=z}.
		\end{align*}
		By \Cref{a:SM-no-between-group-heterogeneity}, we have $\tilde{e}_{Z_i=z} = {e}_{Z_i=z}$. We denote $\Deltan$ and $\tilde{\Delta}_{N}$ as the quantities corresponding to $\bs{W}_i$ and $\tilde{\bs{W}}_i$, respectively.
		Then, we have
		\begin{align*}
			\Deltan = \tldDeltan.
		\end{align*}
		Since $\tilde{\bs{W}}_i$ is the top $M$ eigenvectors of $\bs{E}=\diag(\{\bs{J}_{N_M}-\bs{I}_{N_M}\}_{m=1}^M)$ and $\lambda_{M+1}(\bs{E}) = -1$, it follows that $\tldDeltan = O(1)$. Using $\rhon = M\binom{N_M}{2}/\binom{N}{2}= O(M^{-1})$, $N \geq M$, we have 
	
		\[
		\Op(\rhon + \tldDeltan/N) = \Op(M^{-1}).
		\]
		The conclusion follows.
	\end{proof}
	
	\subsection{Proof of \Cref{cor:local-interference-for-marketplace}}
	\begin{proof}
		Applying \Cref{prop:order-and-formula-of-tau-ind} and \Cref{prop:order-of-tau-tot} with $\rhon = \big\{N_R\binom{N_C}{2}+ N_R\binom{N_C}{2}\big\}/N^2 = O(N_R^{-1}+N_C^{-1})$, the conclusion follows. 
	
	\end{proof}

	\subsection{Proof of \Cref{cor:local-interference-for-marketplace-adj}}
	
	Assume that we can group the rows and columns into $K_1$ and $K_2$ groups, such that \Cref{a:SM-no-between-group-heterogeneity-row-and-column} below holds: $\mathcal{R}_1\cup \cdots \cup \mathcal{R}_{K_1} = \{1,\ldots,N_R\}$ and $\mathcal{C}_1\cup \cdots \cup \mathcal{C}_{K_2} = \{1,\ldots,N_C\}$,  $\mathcal{R}_{k}\cap \mathcal{R}_{k^\prime} = \emptyset$ for $k \ne k^\prime$ and $\mathcal{C}_{k}\cap \mathcal{C}_{k^\prime} = \emptyset$ for $k \ne k^\prime$.
	\begin{assumption}
		\label{a:SM-no-between-group-heterogeneity-row-and-column}
		If $r,r^\prime \in \mathcal{R}_k$ for $1\leq k \leq K_1$, we have $N_C^{-1} \sum_{i:R_i=r} Y_{Z_i=z} = N_C^{-1}\sum_{i:R_i=r^\prime}$ $ Y_{Z_i=z}$, $z=0,1$. If $c,c^\prime \in \mathcal{C}_k$ for $1\leq k \leq K_2$, we have $N_R^{-1} \sum_{i:C_i=c} Y_{Z_i=z} = N_R^{-1}\sum_{i:C_i=c^\prime} Y_{Z_i=z}$, $z=0,1$. Moreover, there exist constants $0 < \underline{c}\leq \bar{c}$, such that $ \underline{c} N_{R} \leq |\mathcal{R}_k| \leq \bar{c} N_R$ for $1 \leq k \leq K_1$ and $  \underline{c} N_{C} \leq |\mathcal{C}_k| \leq \bar{c} N_C$ for $1 \leq k \leq K_2$.
	\end{assumption}
	\Cref{a:SM-no-between-group-heterogeneity-row-and-column} assumes seller homogeneity and buyer homogeneity within groups, which is a generalization of \Cref{a:local-interference-for-marketplace}.  With a slight abuse of notation, let $\tilde{\bs{W}}_i \equiv ( I(R_i=1),\ldots,I(R_i=N_R),I(C_i=1),\ldots,I(C_i=N_C))$ and  $\bs{W}_i\equiv (I(R_i\in\mathcal{R}_{1}),\ldots,I(R_i\in\mathcal{R}_{K_1}),I(C_i\in \mathcal{C}_1),\ldots,I(C_i\in \mathcal{C}_{K_2}))$. We denote $\tilde{e}_{Z_i=z}$ and ${e}_{Z_i=z}$ as the residuals of the linear regression of $Y_{Z_i=z}$ on $\tilde{\bs{W}}_i$ and ${\bs{W}}_i$, respectively.
	
	Under \Cref{a:SM-no-between-group-heterogeneity-row-and-column}, we extend regression \eqref{eq:regression-reduced-group-indicator} into the following regression:
	\begin{align}
		\label{eq:regression-in-two-way-merged-group-indicator}
		Y_{i}\sim  I(Z_{i}=0){\bs{W}}_i + I(Z_{i}=1){\bs{W}}_i. 
	\end{align}
	The above regression is an interacted two-way effect regression, which adjusts the effect of homogeneous buyer and seller groups under both treatment and control groups. Although ${\bs{W}}_i$ is rank deficient, the regression residual $\hat{e}_i$ is well-defined as the residual of $Y_i$ projected into the column space of $(I(Z_i=1){\bs{W}}_i,I(Z_i=0){\bs{W}}_i)$. When $K_1=K_2 =1$, the residual obtained by \Cref{eq:regression-in-two-way-merged-group-indicator} is equal to that of \Cref{eq:regression-reduced-group-indicator} in the main text. Hence, \Cref{cor:local-interference-for-marketplace-adj-two-way-effect-adjustment} below is a generalization of \Cref{cor:local-interference-for-marketplace-adj}. With a slight abuse of notation, we still denote the regression \eqref{eq:regression-in-two-way-merged-group-indicator} adjusted IATE and GATE estimators as $\hat{\tau}^{\ev}_{\ind}$ and $\hat{\tau}^{\ev}_{\tot}$, respectively.
	
	\begin{corollary}
		\label{cor:local-interference-for-marketplace-adj-two-way-effect-adjustment}
		Under Assumptions~\ref{a:bounded-parameter}--\ref{a:Lindberg-condition-unadj}, \ref{a:bounded-parameter-adjusted}--\ref{a:bounded-eigenvectors}, \ref{a:local-interference-for-marketplace}, and \ref{a:SM-no-between-group-heterogeneity-row-and-column}, we have $\hat{\tau}_{\ind}^{\ev} - \tau_{\ind} = \Op(N_R^{-1/2}+N_C^{-1/2})$ and $\hat{\tau}_{\tot}^{\ev} - \tau_{\tot} = \Op(N_R^{-1/2}+N_C^{-1/2})$. 
	\end{corollary}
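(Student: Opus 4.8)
The plan is to mirror the proof of Corollary~\ref{cor:local-interference-for-marketplace-adj}, upgrading its single‑group argument to the grouped one exactly as Corollary~\ref{cor:consistency-for-merged-group-regression} upgrades Corollary~\ref{cor:variance-estimator-adj-stratified2}. Write $\tilde{\bs{W}}_i\equiv(I(R_i=1),\ldots,I(R_i=N_R),I(C_i=1),\ldots,I(C_i=N_C))$ for the full row/column indicators and $\bs{W}_i$ for the grouped indicators used in \eqref{eq:regression-in-two-way-merged-group-indicator}; after discarding the coordinates made redundant by $\sum_k I(R_i\in\mathcal{R}_k)=\sum_k I(C_i\in\mathcal{C}_k)=1$, each is a genuine full‑rank regressor vector of fixed dimension, and $\hat e_i$ is the projection residual of $Y_i$ onto the column space of $(I(Z_i=1)\bs{W}_i^\top,I(Z_i=0)\bs{W}_i^\top)$, so the EV‑adjusted estimators are well defined. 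The first task is to show that under Assumption~\ref{a:SM-no-between-group-heterogeneity-row-and-column} the oracle residual $e_{Z_i=z}$ obtained from $\bs{W}_i$ coincides with the one obtained from $\tilde{\bs{W}}_i$: the oracle fit on $\tilde{\bs{W}}_i$ assigns to unit $i$ the interacted two‑way mean $N_C^{-1}\sum_{j:R_j=R_i}Y_{Z_j=z}+N_R^{-1}\sum_{j:C_j=C_i}Y_{Z_j=z}$ minus the overall mean, and the within‑group homogeneity in Assumption~\ref{a:SM-no-between-group-heterogeneity-row-and-column} forces each partial mean to be constant across rows (columns) in a common row‑group (column‑group), hence equal to the group‑level fit on $\bs{W}_i$. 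Consequently the density parameter $\Deltan$ attached to $\bs{W}_i$ equals the $\tldDeltan$ attached to $\tilde{\bs{W}}_i$.

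Next I would bound $\tldDeltan$ through the spectrum of $\bs{E}$. From the displayed decomposition of $\bs{E}$ in Section~\ref{sec:local} — a sum of rank‑one outer products of the $N_R$ row‑indicator vectors $\bs{e}_{r,N_R}\otimes\bs{1}_{N_C}$ and the $N_C$ column‑indicator vectors $\bs{1}_{N_R}\otimes\bs{e}_{c,N_C}$, minus $2\bs{I}_N$ — the matrix $\bs{E}+2\bs{I}_N$ has rank $N_R+N_C-1$ (the two blocks intersect only in the all‑ones direction), its range is $\mathrm{span}(\tilde{\bs{W}}_i)$, and it is $\bs{E}$‑invariant, so $\bs{E}$ acts as $-2\bs{I}$ on $\mathrm{span}(\tilde{\bs{W}}_i)^\perp$. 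Since $e_{Z_i=z}$ is orthogonal to $\mathrm{span}(\tilde{\bs{W}}_i)$, the vector $\bs{v}_z=(e_{Z_1=z},\ldots,e_{Z_N=z})^\top$ lies in that $(-2)$‑eigenspace, whence $\sum_i(\sum_j E_{ji}e_{Z_j=z})^2=\|\bs{E}\bs{v}_z\|^2=4\|\bs{v}_z\|^2\le 4NC^2$ using $\max_i\max_{z}|e_{Z_i=z}|\le C$ (which follows from Assumptions~\ref{a:bounded-parameter} and \ref{a:bounded-parameter-adjusted}). Therefore $N^{-1}\Deltan=N^{-1}\tldDeltan=O(N^{-1})$, matching the heuristic $\Deltan=O(\lambda_{N_R+N_C}(\bs{E}\bs{E}^\top))=O(1)$ of Remark~\ref{rmk:estimate-of-deltan-eigenvalue}.

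Finally I would combine the magnitudes. Under Assumption~\ref{a:local-interference-for-marketplace}, $\rhon=\{N_R\binom{N_C}{2}+N_C\binom{N_R}{2}\}/N^2=O(N_R^{-1}+N_C^{-1})$, and since $N_R^{-1}+N_C^{-1}\ge 2(N_RN_C)^{-1/2}=2N^{-1/2}$ one has $N^{-1}\Deltan=O(N^{-1})=o(\rhon)$, so $N^{-1}\Deltan+\rhon=O(N_R^{-1}+N_C^{-1})$. Proposition~\ref{prop:order-of-adjusted-estimators} then gives $\Var(\tilde{\tau}_{\ind}^{\ev})=\Var(\tilde{\tau}_{\tot}^{\ev})=O(N_R^{-1}+N_C^{-1})$, and Theorem~\ref{thm:CLT-EV-adj} (all its hypotheses being among those assumed, with Assumption~\ref{a:bounded-eigenvectors} consistent with the group‑size bounds $|\mathcal{R}_k|\asymp N_R$, $|\mathcal{C}_k|\asymp N_C$ in Assumption~\ref{a:SM-no-between-group-heterogeneity-row-and-column}) yields $\hat{\tau}_{\star}^{\ev}-\tau_{\star}=\Op\big(\Var(\tilde{\tau}_{\star}^{\ev})^{1/2}\big)=\Op(N_R^{-1/2}+N_C^{-1/2})$ for $\star\in\{\ind,\tot\}$. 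The main obstacle is the first step — checking cleanly that the interacted two‑way grouped regression reproduces the oracle residual of the full row/column indicator regression, with the extra care forced by the rank deficiency of $\bs{W}_i$ and of the associated design matrix; everything downstream is an application of the general theorems together with the elementary spectral bookkeeping for $\bs{E}$.
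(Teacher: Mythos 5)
Your proposal is correct and follows essentially the same route as the paper's proof: identify the oracle residual from the grouped two-way regression with that from the full row/column indicator regression via the balanced two-way-mean formula under Assumption~\ref{a:SM-no-between-group-heterogeneity-row-and-column}, use the rank-$(N_R+N_C-1)$ structure of $\bs{E}+2\bs{I}_N$ to conclude that the residual vector lies in the $(-2)$-eigenspace so that $\Deltan=O(1)$, and then combine $N^{-1}\Deltan=O(N^{-1})=o(\rhon)$ with $\rhon=O(N_R^{-1}+N_C^{-1})$ and Proposition~\ref{prop:order-of-adjusted-estimators} (plus Theorem~\ref{thm:CLT-EV-adj}, whose hypotheses are all assumed) to get the stated rate. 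The only cosmetic difference is that you compute $\|\bs{E}\bs{v}_z\|^2=4\|\bs{v}_z\|^2$ directly rather than invoking Remark~\ref{rmk:estimate-of-deltan-eigenvalue}.
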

	
	\begin{proof}
		Denote $Y_{z,\mathcal{R}_k} =( N_C|\mathcal{R}_k|)^{-1}\sum_{i:R_i \in \mathcal{R}_k} Y_{Z_i=z}$ for $1 \leq k \leq K_1$ and $Y_{z,\mathcal{C}_k} =( N_C|\mathcal{C}_k|)^{-1}$ $\sum_{i:R_i \in \mathcal{C}_k} Y_{Z_i=z}$ for $1 \leq k \leq K_2$. Denote $Y_{z,r} =N_C^{-1}\sum_{i:R_i = r} Y_{Z_i=z}$ for $1 \leq r \leq N_R$ and $Y_{z,c} = N_R^{-1}\sum_{i:C_i =c} Y_{Z_i=z}$ for $1 \leq c \leq N_C$. Let $Y_{z} = N^{-1}\sum_{i} Y_{Z_i=z}$. Let $\bs{U}_i = (I(R_i \in \mathcal{R}_1),\ldots,I(R_i \in \mathcal{R}_{K_1}))^\top$ and $\bs{V}_i = (I(C_i \in \mathcal{C}_1),\ldots,I(C_i \in \mathcal{C}_{K_2}))^\top$. Let $\mathcal{R}(i)$ and $\mathcal{C}(i)$ denote the seller group and buyer group unit $i$ belongs to, respectively, i.e., $C_i \in \mathcal{C}(i)$ and $R_i \in \mathcal{R}(i)$. 
		
		The projection residual of $Y_{Z_i=z}$ on $\bs{U}_i$ is equal to $Y_{Z_i=z}-Y_{z,\mathcal{R}(i)}$. The projection residual of $\bs{V}_i$ on $\bs{U}_i$ is equal to $\bs{V}_{i,\backslash \bs{U}} = (I(C_i \in \mathcal{C}_1)-|\mathcal{C}_1|/N_C,\ldots,I(C_i \in \mathcal{C}_{K_2})-|\mathcal{C}_{K_2}|/N_C)^\top$.
		Hence, ${e}_{Z_i=z}$, the residual of $Y_{Z_i=z}$ on $(\bs{U}_i^\top,\bs{V}_i^\top)$ is equal to the residual of  $Y_{Z_i=z}-Y_{z,\mathcal{R}(i)}$ on $\bs{V}_{i,\backslash \bs{U}}$, and equal to that of $Y_{Z_i=z}-Y_{z,\mathcal{R}(i)}$ on $\bs{V}_i$, which is equal to $Y_{Z_i=z}-Y_{z,\mathcal{R}(i)}-Y_{z,\mathcal{C}(i)}+Y_{z}$. Similarly, 
	$\tilde{e}_i = Y_{Z_i=z}-Y_{z,R_i}-Y_{z,C_i}+Y_{z}$. Under \Cref{a:SM-no-between-group-heterogeneity-row-and-column}, we have $\tilde{e}_{Z_i=z}={e}_{Z_i=z}$. 
		
		We denote by $\Deltan$ and $\tilde{\Delta}_{N}$ the quantities corresponding to $\bs{W}_i$ and $\tilde{\bs{W}}_i$, respectively. It is not difficult to show that
		\begin{align}
			\label{eq:eigen-decomposition-of-SMR}
			\bs{E} = \sum_{r=1}^{N_R}(\bs{e}_{r,N_R}\otimes\bs{1}_{N_C})^\top (\bs{e}_{r,N_R}\otimes\bs{1}_{N_C})+ \sum_{c=1}^{N_C} (\bs{1}_{N_R}\otimes \bs{e}_{c,N_C})^\top(\bs{1}_{N_R}\otimes \bs{e}_{c,N_C}) -2\bs{I}_N.
		\end{align}
		Therefore, the dominant eigenvectors of $\bs{E}$ are in the space spanned by $\{\bs{e}_{r,N_R}\otimes\bs{1}_{N_C}\}_{r=1}^{N_R}\cup\{\bs{1}_{N_R}\otimes \bs{e}_{c,N_C}\}_{c=1}^{N_c}$, which is $\tilde{\bs{W}}_i$. The eigenvalues of the eigenvectors orthogonal to $\tilde{\bs{W}}_i$ are $-2$. As a consequence, following the argument of Remark \ref{rmk:estimate-of-deltan-eigenvalue},
		$$
		\tilde{\Delta}_{N} = O(1),
		$$
		which yields
		\[
		\Deltan = \tldDeltan = O(1).
		\]
		
		On the other hand, we have $\rhon =\big\{N_R\binom{N_C}{2}+ N_R\binom{N_C}{2}\big\}/N^2 = O(N_R^{-1}+N_C^{-1})$. Therefore, $\Op(\rhon + N^{-1}\Deltan) = \Op(N_R^{-1}+N_C^{-1})$. The conclusion follows.
	\end{proof}
	
	\subsection{Proof of \Cref{prop:graphon-model}}
	
	\begin{proof}
		Let $\tilde{\lambda}_k$, $k=1,\ldots,N$, be the eigenvalues of $\bs{E}$ in a decreasing order. Let $\bs{\psi}_k = (\psi_k(U_i))_{i=1}^n$, $k=1,\ldots, r$, and define
		\begin{align}
			\label{eq:definition-of-G}
			\bs{G} := ( G(U_i,U_j))_{1\leq i,j \leq N} = \sum_{k=1}^r \lambda_k \bs{\psi}_k \bs{\psi}_k^\top.
		\end{align}
		Note that $\bs{G}_{ii} = \sum_{k=1}^r \lambda_k \psi_k(U_i)\psi_k(U_i)$ and is therefore not necessarily $0$. Let $\lambda^\star_1\geq \ldots\geq \lambda^\star_{K}\geq  0 \geq \lambda^\star_{K+1}\geq \ldots\geq \lambda^\star_{N} $ be the eigenvalues of $(\rho^\star_N)\bs{G}$, where $K$ is the number of positive eigenvalues of $\bs{G}$. Since, by \Cref{eq:definition-of-G}, $\operatorname{rank}(\bs{G}) = r$, there are at most $r$ eigenvalues in $\lambda^\star_{i}$ that are not equal to $0$.  By Weyl's inequality, we have, for $i=1,\ldots,N$
		\[
		|\tilde{\lambda}_{i}-{\lambda}_{i}^\star| \leq \|(\rho^\star_N)\bs{G}-\bs{E}\|_{\oprtnorm} = \Op ((\log N)^2 \sqrt{N\rhon^\star}),
		\]
		where the last equality follows from Lemma 25 of \cite{li2022random}. Let $|\tilde{\lambda}_{(r+1)}| $ be the $(r+1)$-th largest eigenvalue of $
		|\tilde{\lambda}_{k}|$, $k=1,\ldots,N$. Then, we have 
		\begin{align*}
			|\tilde{\lambda}_{(r+1)}| =& \max_{\mathcal{K}\in [N]:|\mathcal{K}|=r+1}\min_{k\in \mathcal{K}} |\tilde{\lambda}_{k}| \\
			\leq & \max_{\mathcal{K}\in [N]:|\mathcal{K}|=r+1}\min_{k\in \mathcal{K}} |{\lambda}_{k}^\star| + \|(\rho^\star_N)\bs{G}-\bs{E}\|_{\oprtnorm} \\
			=& 0 + \Op ((\log N)^2 \sqrt{N\rhon^\star}) =  \Op ((\log N)^2 \sqrt{N\rhon^\star}).
		\end{align*}

		Note that $|\tilde{\lambda}_{(r+1)}|^2=\Op ((\log N)^4 N\rhon^\star)$ is the $(r+1)$-th largest eigenvalues of $\bs{E}^2$, thus, $N^{-1}\Deltan = O(N^{-1} \tilde{\lambda}_{(r+1)}^2)=\Op ((\log N)^4 \rhon^\star) = \op ((\log N)^{4+\delta} \rhon^\star), $ for any $\delta > 0$. 
		
		By definition, $\rhon = \sum_{i,j} E_{ij}/N^2$. Then, $N^2 \E\rhon = N(N-1) \E (E_{ij})$.  On the other hand, we have
		\begin{align*}
			\Var(\rhon) = N^{-4} \sum_{i,j} \Cov(E_{ij}) + N^{-4} \sum_{i,j,k} \Cov(E_{ij},E_{ik}).
		\end{align*}
		By Lemma 16 of \cite{li2022random}, we have $\E (E_{ij})/\{\rhon^\star\E G(U_1,U_2)\} \rightarrow 1$ and $\E E_{ij}E_{ik} /$ $\{(\rhon^\star)^2\E G(U_1,U_2) G(U_2,U_3)\} \rightarrow 1 $. As a consequence, 
		\begin{align*}
			&\Cov(E_{ij}) = \E E_{ij} -(\E E_{ij})^2 = O(\E E_{ij}) = O(\rhon^\star),\\
			&\Cov(E_{ij},E_{ik}) = \E E_{ij}E_{ik} - (\E E_{ij})^2 = O((\rhon^\star)^2),
		\end{align*}
		which yield that $\rhon/\rhon^\star = \{\E\rhon + \Op(N^{-1}(\rhon^\star)^{1/2}+N^{-1/2}\rhon^\star)\}/\rhon^\star = \E G(U_1,U_2)+\op(1),$ where the last equality follows from \Cref{a:regularity-conditons-for-graphon-model} (iv).
	\end{proof}

\end{document}